\theoremstyle{definition}
\newtheorem{exmp}{Example}[section]
\theoremstyle{plain}
\newtheorem{algo}{Algorithm}
\newcommand{\fra}[2]{\frac{\displaystyle #1}{\displaystyle #2}}
\newcommand{\bc}{\begin{center}}
	\newcommand{\ec}{\end{center}}
\newcommand{\be}{\begin{equation}}
\newcommand{\ee}{\end{equation}}
\newcommand{\bea}{\begin{eqnarray}}
\newcommand{\eea}{\end{eqnarray}}
\newcommand{\bean}{\begin{eqnarray*}}
	\newcommand{\eean}{\end{eqnarray*}}
\newcommand{\bt}{\begin{tabular}}
	\newcommand{\et}{\end{tabular}}
\newcolumntype{x}[1]{>{\centering\arraybackslash\hspace{0pt}}p{#1}}
\tikzset%
{% these sytles provide the underbrace and the text below
	myunderbrace/.style={decorate,decoration={brace,raise=4mm,amplitude=2pt,mirror}},
	mytext/.style={below,midway,yshift=-4mm}
}
\newtheorem{theorem}{Theorem}
\newtheorem{assumption}[theorem]{Assumption}
\newtheorem{remark}{Remark}[section]
\newcommand{\argmin}{\operatorname*{argmin}}
\newcounter{saveeqn}
\newcommand\numberthis{\addtocounter{equation}{1}\tag{\theequation}}
\numberwithin{equation}{section}
\def\@fnsymbol#1{\ensuremath{\ifcase#1\or \dagger\or \ddagger\or
		\mathsection\or \mathparagraph\or \|\or **\or \dagger\dagger
		\or \ddagger\ddagger \else\@ctrerr\fi}}
\begin{document}

% The spacing above equations
\setlength{\abovedisplayskip}{5pt}
% The spacing below equations
\setlength{\belowdisplayskip}{5pt}

%\title{ \vspace{-1.5cm} \bf Forecasting with Prior Wisdom on Group Structure in Panel Date Models}
\title{ \vspace{-2cm} \bf Incorporating Prior Knowledge of Latent Group Structure in Panel Data Models\footnote{Please check  \href{https://www.dropbox.com/s/pjz8awsd3fiyim0/CBG_latest.pdf?dl=0}{here} for the latest version.}}

\author{
	Boyuan Zhang\thanks{Amazon.com, Seattle, WA Email: \href{mailto:zhang.boyuan@hotmail.com}{zhang.boyuan@hotmail.com}. I am extremely grateful to my advisors Francis X. Diebold and Frank Schorfheide, and my dissertation committee, Xu Cheng and Minchul Shin, for their invaluable guidance and support. I would also like to thank Karun Adusumilli, Siddhartha Chib, Wayne Yuan Gao, Philippe Goulet Coulombe, and Daniel Lewis for helpful comments and suggestions. I further benefited from many helpful discussions with Econometrics Lunch seminar participants at the University of Pennsylvania, as well as participants at the 2022 North American Summer Meeting of the Econometric Society, the 2022 IAAE Annual Conference, the 42nd International Symposium on Forecasting, the 16th International Symposium on Econometric Theory and Applications, the 2022 Asian Meeting of the Econometric Society, the 2022 Australasia Meeting of the Econometric Society, the NBER-NSF Seminar on Bayesian Inference in Econometrics and Statistics, Philly Fed Young Scholars Conference on Machine Learning in Economics and Finance. This paper and its contents are not
		related to Amazon and do not reflect the position of the company and its subsidiaries. All remaining errors are my own.} \\
}
% I further benefited from many helpful discussions with xxx,..., and seminar participants at...., as weell as conference participants at
%\date{ \today}

\date{\vspace{-0.5cm}
	\textit{Amazon}\\[2ex]%
	% small footnote for the latest version
%	\stepcounter{footnote}\footnotetext{Please check  \href{https://www.dropbox.com/s/pjz8awsd3fiyim0/CBG_latest.pdf?dl=0}{here} for the latest version.}
	\small
	First Version: June 14, 2022 \\
	This Version: \today \\
	\vspace{0.4cm}
%	{\color{brown!125}
%	Preliminary and Incomplete. \\
%	Please do not circulate without permission.}
	\large
	%{} \\
}

\maketitle
%\vspace{-1.3cm}
\begin{abstract}
	The assumption of group heterogeneity has become popular in panel data models. We develop a constrained Bayesian grouped estimator that exploits researchers' prior beliefs on groups in a form of pairwise constraints, indicating whether a pair of units is likely to belong to a same group or different groups. We propose a prior to incorporate the pairwise constraints with varying degrees of confidence. The whole framework is built on the nonparametric Bayesian method, which implicitly specifies a distribution over the group partitions, and so the posterior analysis takes the uncertainty of the latent group structure into account. Monte Carlo experiments reveal that adding prior knowledge yields more accurate estimates of coefficient and scores predictive gains over alternative estimators. We apply our method to two empirical applications. In a first application to forecasting U.S. CPI inflation, we illustrate that prior knowledge of groups improves density forecasts when the data is not entirely informative. A second application revisits the relationship between a country’s income and its democratic transition; we identify heterogeneous income effects on democracy with five distinct groups over ninety countries.
\end{abstract}

\noindent JEL CLASSIFICATION: C11, C14, C23, E31

\noindent KEY WORDS: Grouped Heterogeneity; Bayesian Nonparametrics; Dirichlet Process Prior; Density Forecast; Inflation Rate Forecasting; Democracy and Development.

\thispagestyle{empty}
\setcounter{page}{0}
\newpage

%\tableofcontents
%\newpage

\newpage

%====================================================================================================
%====================================================================================================
% Introduction
%====================================================================================================
%====================================================================================================

\section{Introduction} \label{sec:intro}

%*** (MS's suggestions) Tell the story in a new way. We begin with \citet{bonhomme2015}. It's a good paper that introduces the concept of group heterogeneity. Their proposed BGFE estimator employs the \textit{KMeans} algorithm to determine group structure and model selection to select the number of groups. This can fail if the data cannot reach an ideal group structure, and most importantly, we as economists can use constraints or free knowledge that is informational to the latent group. This encourages us to think about incorporating prior knowledge about the group into the model. The Bayesian approach is a natural framework in this regard. Following this lead, we first create the Bayesian counterpart of the BGFE estimator that does not rely on model selection. The Bayesian estimator is then modified to allow prior knowledge to enter the model as hard pairwise constraints. This is simple to accomplish, but it has a significant limitation: it imposes our prior knowledge. We then create a new prior based on the DP prior to allow for greater flexibility and the use of soft constraints. Rather than imposing constraints, the prior with soft constraints shrinks the group structure toward our prior knowledge, which is more realistic. The framework with soft constraints is similar to the \citet{lu2004} and constrained \textit{KMeans} algorithms. Finally, we apply our proposed framework to the examples of NK forecasting, teacher evaluation, and BM.

Numerous studies have examined and demonstrated the important role of panel data models in empirical research throughout the social and business sciences, as the availability of panel data has increased. Using fixed-effects, panel data permits researchers to model unobserved heterogeneity across individuals, firms, regions, and countries as well as possible structural changes over time. As individual heterogeneity is often empirically relevant, fixed-effects are an objective of interest in numerous empirical studies. For example, teachers' fixed-effects are viewed as a measure of teacher quality in the literature on teacher valued-added \citep{rockoff2004, chetty2014}; the heterogeneous coefficients are crucial for panel data forecasting \citep{liu2020, pesaran2022}. In practice, however, researchers may face a short panel where $N$ is large and $T$ is short and fixed. When applying the least squares estimator for the fixed-effects, a significant number of noisy estimates are produced. To alleviate this issue, a popular and parsimonious assumption that has recently been used is to introduce a group pattern into the individual coefficients, so that units within each group have identical coefficients (\citet[BM hereafter]{bonhomme2015}, \citet{su2016}, \citet{bonhomme2022}). 

To recover the group pattern, we essentially face a clustering problem, e.g., dividing $N$ units into several unknown groups. All existing methods for estimating group heterogeneity solve a clustering problem by assuming that units are exchangeable and treating all units equally \textit{a priori}. In a cross-country application of evaluating the impact of climate change on economic growth \citep{hsiang2016, henseler2019,kahn2021}, countries in different climatic zones are assumed to have equal probabilities of being grouped together. The assumption of exchangeability might not be reasonable since correlations are common between observations at proximal locations and researchers could have knowledge of the underlying group structure based on theories or empirical findings. For instance, Sweden and Finland, which share a border, an economic structure, and weather conditions, may have a higher chance of being in the same group than African countries. In such a scenario, it is preferable to use additional information to break the exchangeability between countries to facilitate grouping as opposed to clustering based solely on observations in the sample. The availability of this information drives us to formalize such prior knowledge, which we wish to leverage to improve model performance.

%In this paper, we focus on the group heterogeneity in the linear panel data model and improve the associated clustering procedure by leveraging available prior knowledge of the group structure in a nonparametric Bayesian framework. Specifically, we suggest an intuitive prior to incorporate prior knowledge, which is based on a typical nonparametric Bayesian prior. Prior knowledge is considered as additional information that does not enter the likelihood function. The prior knowledge aids in clustering units into groups and sharpens the inference of group-specific parameters, particularly when units are not well-separated.

In this paper, we focus on the group heterogeneity in the linear panel data model and develop a nonparametric Bayesian framework to incorporate prior knowledge of groups, which is considered additional information that does not enter the likelihood function. The prior knowledge aids in clustering units into groups and sharpens the inference of group-specific parameters, particularly when units are not well-separated.

The whole framework is built on the nonparametric Bayesian method, where we do not impose a restriction on the number of groups, and model selection is not required. The baseline model is a linear panel data model with an unknown group pattern in fixed-effects, slope coefficients, and cross-sectional error variances. We estimate the model using the stick-breaking representation \cite{sethuraman1994} of the Dirichlet process (DP) \cite{ferguson1973, ferguson1974} prior, a standard prior in nonparametric Bayesian inference. In this framework, the number of groups is considered a random variable and is subject to posterior inference. The number of groups and group membership are estimated together with the heterogeneous coefficients. Moreover, since the DP prior implicitly defines a prior distribution on the group partitionings, the posterior analysis takes the uncertainty of the latent group structure into account. 

The derivation of the proposed prior starts with summarizing prior knowledge in the form of pairwise constraints, which describe a bilateral relationship between any two units. Inspired by the work of \citet{wagstaff2000}, we consider two types of constraints: \textit{positive-link} and \textit{negative-link} constraints, representing the preference of assigning two units to the same group or distinct groups. Instead of imposing these constraints dogmatically, each constraint is given a level of accuracy that shows how confident the researchers are in their choice. There is a hyperparameter that controls the overall strength of the prior knowledge: a small value partially recovers the exchangeability assumption on units, whereas a large value confines the prior distribution of group partitioning around group structure based on prior knowledge. We choose the optimal value for the hyperparameter by maximizing the marginal data density. Summarizing prior knowledge in the form of pairwise constraints is practical and flexible since it eliminates the need to predetermine the number of groups and focuses on the bilateral relationships within any \textit{subset} of units.

%The derivation of the proposed prior starts from the pairwise constraints, which is the key structure that represents our prior knowledge. Pairwise constraints describe the bilateral relationship between any two units. We consider two types of constraints: \textit{positive-link} and \textit{negative-link} constraints \citep{wagstaff2000}. These constraints describe whether two units must be members of the same group or distinct groups. In general, supervision in the form of pairwise constraints is relatively practical and flexible since it eliminates the need to predetermine the number of groups and focuses on the bilateral relationships within any \textit{subset} of units. Instead of imposing constraints, each constraint is given a level of accuracy that shows how confident the researchers are in their choice. 

The aforementioned pairwise constraints are used to modify the standard DP prior. In particular, the pairwise constraints are combined with the prior distribution of the group partitioning, shrinking the distribution toward my prior knowledge. We refer to the estimator using the proposed prior as the Bayesian group fixed-effects (BGFE) estimator.

We derive a posterior sampling algorithm for the framework with the modified DP prior. Adopting conjugate priors on group-specific coefficients allows for drawing directly from posteriors using a computationally efficient Gibbs sampler. With the newly proposed prior, it can be shown that, compared to the framework that uses a standard DP prior, all that is needed to implement pairwise constraints is a simple modification to the posterior of the group indices.

%We derive a posterior sampling algorithm for nonparametric estimation under pairwise constraints. In the current framework, adopting conjugate priors enables us to draw directly from posteriors using a computationally efficient Gibbs sampler. With the newly proposed prior, it can be shown that, relative to the framework using a standard Dirichlet process prior, a simple modification to the posterior of the group indices is sufficient to account for the implementation of pairwise constraints. According to the results of the simulation, this modification has little effect on computation expenses. %Hence, the computing advantage of employing a conjugate prior remains unchanged.

The pairwise constraint-based framework is closely related and applicable to other models where group structure plays a role. Although we concentrate primarily on the panel data model, the DP prior with pairwise constraints applies to models without the time dimension, such as the standard clustering problem and the estimation of heterogeneous treatment effects. The framework is also applicable to estimating panel VARs \citep{holland1983}, which involves multiple dependent variables. The group structure is used to overcome overparameterization and overfitting issues by clustering the VAR coefficients into groups, and pairwise constraints add additional information to the highly parameterized model. Moreover, the proposed Gibbs sampler with pairwise constraints is connected to the \textit{KMeans}-type algorithm, motivating a frequentist's counterpart of our estimator with a fixed $K$. Essentially, the assignment step in the \textit{Pairwise Constrained-KMeans} algorithm \cite{basu2004}, a constrained version of the \textit{KMeans} algorithm \cite{macqueen1967}, is remarkably similar to the step of drawing a group membership indicator from its posterior. The same exact equivalence can be achieved by applying small-variance asymptotics to the posterior densities under certain conditions. To obtain the frequentist's analog of our pairwise constrained Bayesian estimators, one can utilize the same approach in BM with the \textit{Pairwise Constrained-KMeans} algorithm.

We compare the performance of the BGFE estimator to alternative estimators using simulated data. The Monte Carlo simulation demonstrates that the BGFE estimator generates more accurate estimates of the group-specific parameters and the number of groups than the BGFE estimator without including any constraints. The improved performance is mostly attributable to the precise group structure estimation. The BGFE estimator clearly dominates the estimators that omit the group structure by assuming homogeneity or full heterogeneity. We also evaluate the performance of one-step ahead point, set, and density forecasts. Unsurprisingly, the accurate estimates translate into the predictive power of the underlying model; the BGFE estimator outperforms the rest of the estimators.

We apply the proposed method to two empirical applications. An application to forecasting the inflation of the U.S. CPI sub-indices demonstrates that the suggested predictor yields more accurate density predictions. The better forecasting performance is mostly attributable to three key characteristics: the nonparametric Bayesian prior, prior belief on group structure, and grouped cross-sectional heteroskedasticity. In a second application, we revisit the relationship between a country's income and its democratic transition. This question was originally studied by \citet{acemoglu2008}, who demonstrate that the positive income effect on democracy disappears if country fixed effects are introduced into the model. The proposed framework recovers a group structure with a moderate number of groups. Each group has a clear and distinct path to democracy. In addition, we identify heterogeneous income effects on democracy and, contrary to the initial findings, show that a positive income effect persists in some groups of countries, though quantitatively small.

\noindent \textsc{Literature}. This paper relates to the econometric literature on clustering in panel data models. Early contributions include \citet{sun2005} and \citet{buchinsky2005}. \citet{hahn2010} provide economic and theoretical foundations for fixed effects with a finite support. Most recent works focus on linear\footnote{See \citet{wang2021, bonhomme2022}, among others, for procedures to identify latent group structures in nonlinear panel data models.} panel data models with discrete unobserved group heterogeneity. \citet{lin2012} and \citet{sarafidis2015} apply the \textit{KMeans} algorithm to identify the unobserved group structure of slope coefficients. \citet{bonhomme2015} also use the \textit{KMeans} algorithm to recover the group pattern, but they assume group structure in the additive fixed effects. \citet{bonhomme2022} modify this method and split the procedure into two steps. They first classify individuals into groups using \textit{KMeans} algorithm and then estimate the coefficients. \citet{ando2016} improved on BM's approach by allowing for group structure among the interactive fixed effects. The underlying factor structure in the interactive fixed effects is the key to forming groups. \citet {su2016} develop a new variant of Lasso to shrink individual slope coefficients to unknown group-specific coefficients. This method is then extended by \citet{su2018} and \citet{su2019}. \citet{freeman2022} consider two-way grouped fixed effects that allow for different group patterns in time and cross-sectional dimensions. \citet{okui2021} and \citet{lumsdaine2022} identify structure breaks in parameters along with grouped patterns. From the Bayesian perspective, \citet{kim2019}, \citet{zhang2020}, and \citet{liu2022} adopt the Dirichlet process prior to estimate grouped heterogeneous intercepts in linear panel data models in the semiparametric Bayesian framework. \citet{moon2023} incorporate a version of a spike and slab prior to recover one core group of units. Alternative methods, such as binary segmentation \citep{wang2018} and assumptions, such as multiple latent groups structure \citep{cheng2019, cytrynbaum2020} have also been explored to flourish group heterogeneity literature. 

Our work concerns prior knowledge. \citet{bonhomme2015}'s grouped fixed-effects (GFE) estimator is able to include prior knowledge, but it is plagued by practical issues to some extent. They add a collection of individual group probabilities as a penalty term in the objective function, which is a $N$ by $K$ matrix describing the probability of assigning each unit to all potential groups. This additional penalty term balances the respective weights attached to prior and data information in estimation. The main challenge is providing the set of individual group probabilities for each potential value of $K$ as the underlying \textit{KMeans} algorithm requires model selection. It is rather cumbersome to assess these probabilities for each possible $K$ and to adjust for changes in reallocating probabilities across $K$. 

None but \citet{aguilar2022} explore heterogeneous error variance, and they extend BM's GFE estimator to allow for group-specific error variances. They modify the objective function to avoid the singularity issue in pseudo-likelihood. Despite the fact that their work paves the way for identifying groups in the error variance, their framework is not yet ready to satisfactorily incorporate prior knowledge because they face the same issue as BM. Building on these works, we investigate the value of prior knowledge of group structure.

This paper also relates to the literature of constraint-based semi-supervised clustering in statistics and computer science. Pairwise constraints have been widely implemented in numerous models and have been shown to improve clustering performance. In the past two decades, various pairwise constrained \textit{KMeans} algorithms using prior information have been suggested \cite{wagstaff2001, basu2002, basu2004, bilenko2004, davidson2005, pelleg2007, yoder2017}. Prior information is also introduced in the model-based method. \citet{shental2003} develop a framework to incorporate prior information for the density estimation with Gaussian mixture models. The Dirichlet process mixture model with pairwise constraints has been discussed in \citet{vlachos2008}, \citet{vlachos2009}, \citet{orbanz2008}, \citet{vlachos2010}, \citet{ross2013}. \citet{lu2004}, \citet{lu2007} and \citet{lu2007_2} assume the knowledge on constraints is incomplete and penalize the constraints in accordance with their weights. \citet{law2004} extents \citet{shental2003} to allow for soft constraints in the mixture model by adding another layer of latent variables for the group label. \citet{nelson2007} propose a new framework that samples pairwise constraints given a set of probabilities related to the weights of constraints.

%*** relate to stochastic block model (SBM) \citep{holland1983}. See \citet{lee2019} for a survey. Assortative SBM to achieving community detection. A model is called weakly assortative if $P_{i i}>P_{i j}$ whenever $i \neq j$ : each diagonal entry is only required to dominate the rest of its own row and column.

Our paper is closely related to \citet{paganin2021}, who address a similar problem using a novel Bayesian framework. Their proposed method shrinks the prior distribution of group partitioning toward a \textit{full} target group structure, which is an initial clustering of \textit{all} units provided by experts. This is demanding since not every application can have a full target group structure, as their birth defect epidemiology study did. Our framework circumvents this problem by using pairwise constraints, which are flexibly assigned to any two units. In addition, the induced shrinkage of their framework is produced by the distance function defined by Variation of Information \citep{meilua2007}. It can be demonstrated that a partition can readily become caught in local modes, preventing it from ever shrinking toward the prior partition. The use of pairwise relationships in this paper circumvents this issue as well. By fixing the group indices of other pairs, our framework makes sure that the partition with a specific pair that fits our prior belief has a higher prior probability than the partition with a pair that goes against our prior belief.

%\citet{paganin2021}, on the other hand, provide a novel Bayesian framework for avoid model selection, however the method is inflexible for a broad range of applications. Their proposed method shrinks the prior distribution of group partitioning toward a \textit{full} target group structure, which is an initial clustering of \textit{all} units provided by experts. This is demanding since not every application can have a full target group structure, as their birth defects epidemiology study did. This approach cannot handle an initial clustering in a subset of units or multiple plausible prior partitions.

\noindent \textsc{Outline}. In section \ref{sec:model_prior}, we present the specification of the dynamic panel data model with group pattern in slope coefficients and error variances and provide details on nonparametric Bayesian priors without prior knowledge, which are then extended to accommodate soft pairwise constraints. Section \ref{sec:post} focuses on the posterior analysis, where the posterior sampling algorithm is provided. We also highlight the posterior estimate of group structure and discuss the connection to constrained \textit{KMeans} models. We briefly discuss the extensions of the baseline model in section \ref{subsec:ext}. In section \ref{sec:emp_result}, we present empirical analysis in which we forecast the inflation rate of the U.S. CPI sub-indices and estimate the country's income effect on its democracy. Finally, we conclude in section \ref{sec:conclusion}. Monte Carlo simulations, additional empirical results, and proofs are relegated to the appendix.

%We conduct various Monte Carlo experiments in section \ref{sec:mcmc} to examine the performance of the proposed estimator in a controlled environment, and discuss point, set, and density forecasts.  

%Throughout the paper, the index $i$ always runs over observations, $i=1, \ldots, N$, and index $k$ runs over components, $k=1, \ldots, K$. Generally, variables that play no role in conditional distributions are dropped from the condition for simplicity.

%====================================================================================================
%====================================================================================================
% Model
%====================================================================================================
%====================================================================================================

\section{Model and Prior Specification} \label{sec:model_prior}

%{\color{blue} *** Careful with terminology: fixed effect = arbitrary correlation between regressors and heterogenenous coefficients; correlated random effects entail a specific model of the dependence between heterogeneous coefficients and regressors; random effects means they are independent. Think carefully about what this all means in your framework. Probability of group membership could depend on x.}

We begin our analysis by setting up a linear panel data model with group heterogeneity in intercepts, slope coefficients, and cross-sectional innovation variance. We then elaborate a nonparametric Bayesian prior for the unknown parameters that takes prior beliefs in the group pattern into account. We briefly highlight several key concepts of a standard nonparametric Bayesian prior as our proposed prior inherits some of its properties.

\subsection{A Basic Linear Panel Data Model}

We consider a panel with observations for cross-sectional units $i=1, \ldots, N$ in periods $t=1, \ldots, T$. Given the panel data set $(y_{it}, x'_{it})$, a basic linear panel data model with grouped heterogeneous slope coefficients and grouped heteroskedasticity takes the following form:
\begin{align} \label{simple_model}
	y_{it} &= \alpha'_{g_{i}} x_{it} + \varepsilon_{i t}, \quad \varepsilon_{i t} \sim N \left(0, \sigma_{g_{i}}^{2}\right),
\end{align}
where $x_{i t}$ are a $p \times 1$ vector of covariates, which may contain intercept, lagged $y_{it}$, other informative covariates. $\alpha_{g_{i}}$ denote the group-specific slope coefficients (including intercepts). $\sigma^2_{g_{i}}$ are the group-specific variance.  $g_{i} \in \{1,..., K \}$ is the latent group index with an unknown number of groups $K$. $\varepsilon_{i t}$ are the idiosyncratic errors that are independent across $i$ and $t$ conditional on $g_i$. They feature zero mean and grouped heteroskedasticity $\sigma_{g_{i}}^{2}$, with cross-sectional homoskedasticity being a special case where $\sigma_{g_{i} }^{2}=\sigma^{2}$. This setting leads to a heterogeneous panel with group pattern modeled through both $\alpha_{g_{i}}$ and $\sigma_{g_{i}}^{2}$.

%They are uncorrelated with $\varepsilon_{i t}$ conditional on $g_i$ but is allowed to be arbitrarily correlated with $\alpha_{g_{i}}$.

%*** [FXD] Block diagonal variance-covariance matrix (cluster-robust standard errors by \citet{liang1986}, LZ in Stata). Covariance in epsilon, blocked covariance. Consider it will improve efficiency, good for beta inference. But not importance for forecasting. Blocking corresponding to clusters, no restriction on blocks. It's great in terms efficiency incremental but not that useful in forecasting.  \citet{abadie2017} argue that if you include fixed effects, you don't need to worry about clustering as much. (although there has been pushback on this, eg Arellano (1987), Wooldridge (2003), Cameron \& Miller (2015)).

It is convenient to reformulate the model in (\ref{simple_model}) in matrix form by stacking all observations for unit $i$:
\begin{align}
	\boldsymbol{y_{i}} = \boldsymbol{x_i} \alpha_{g_{i}} + \boldsymbol{\varepsilon_{i}}, \quad \boldsymbol{\varepsilon_{i}} \sim  N\left(\mathbf{0},\sigma_{g_{i}}^{2} I_T \right),
\end{align}
where $\boldsymbol{y_{i}} = \left[y_{i 1}, y_{i 2}, \ldots, y_{i T}\right]^{\prime}$, $ \boldsymbol{x_i} = \left[x_{i 1}, x_{i 2}, \ldots, x_{i T}\right]^{\prime}$, $\boldsymbol{\varepsilon_{i}} = \left[\varepsilon_{i 1}, \varepsilon_{i 2}, \ldots, \varepsilon_{i T}\right]^{\prime}$, and $G = \left[ g_{1}, \ldots, g_{N}\right]$ is a vector of group indices.

Group structure is the key element in our approach. It can be either represented as a vector of group indices $G$ describing to which group each unit belongs or as a collection of disjoint blocks $\mathcal{B} = \left\{B_{1}, B_{2}, \ldots, B_{K}\right\}$ induced by $G$, where $B_{k}$ contains all the units in the $k$-th group and $K$ is the number of groups in the sample of size $N$. $|B_k|$ denotes the cardinality of the set $B_k$ with $\sum_{k=1}^K |B_k| = N$.

\begin{remark}
	Identification issues may arise with certain specifications. If the grouped fixed-effects in $\alpha_{g_{i}}$ are allowed to vary over time, for example, $\sigma_{g_{i}}^{2}$ cannot be identified when the group $g = g_i$ contains only one unit. \citet{aguilar2022} propose a solution, but this problem is beyond the scope of this work. They suggest using the square-root objective function rather than the pseudo-log-likelihood function as the objective function, which replaces the logarithm of $\sigma^2_i$ with the square root of $\sigma^2_i$, to avoid the singularity problem. 
\end{remark}

% Moreover, the intercepts and the additional categorical variables cannot be distinguished when all units in one group have the same level for one or more categorical variables in $x_i$. By controlling for categorical variables but not assuming they have group effects, one may effectively avoid this issue. {\color{red} ***Suppose you had region specific intercepts. Then this model is the same as one in which groups happen to be regions, right?}

%{\color{blue} ***This seems to be a common problem in mixture models. You could designate one observation as group and if you can perfectly predict it you can set sigma-i to zero and the likelihood becomes unbounded. The next sentence is unclear. What is the solution?} 

Following \citet{sun2005}, \citet{lin2012} and BM, we assume that the composition of groups does not change over time. In addition, for any group $k \neq k'$, we assume that they have different slope coefficients, e.g., $\alpha_{k} \neq \alpha_{k'}$, and no single unit can simultaneously belong to these two groups: $B_k \bigcap B_{k'} = \emptyset$. Note that these assumptions are used to simplified the prior construction and are not necessary to incorporate prior knowledge. As we show in Section \ref{subsec:ext}, both assumptions can be relaxed by using slightly different priors.

The primary objective of this paper is to estimate the group-specific slope coefficients $\alpha_{g_i}$, group-specific variance $\sigma_{g_{i}}^{2}$, group membership $G$ as well as the unknown number of groups $K$ using full sample and prior knowledge of the group structure. Given estimates of group-specific coefficients, we are able to offer the point, set, and density forecasts of $y_{i t+h}$ for each unit $i$. Throughout this paper, we will concentrate on the one-step ahead forecast where $h=1$. For multiple-step forecasting, the procedure can be extended by iterating $y_{i T+h}$ in accordance with (\ref{simple_model}) given the estimates of parameters or estimating the model in the style of direct forecasting. The method proposed in this paper is applicable beyond forecasting. In certain applications, the heterogeneous parameters themselves are the objects of interest. For example, the technique developed here can be adapted to infer group-specific heterogeneous treatment effects.

\subsection{Nonparametric Bayesian Prior with Knowledge on $G$} \label{subsec:prior_with_knowledge}

We propose a nonparametric Bayesian prior for the unknown parameters with prior beliefs on the group pattern. Figure \ref{fig:graphic_group_assignment_soft} provides a preview of the procedure for introducing prior knowledge into the model. We propose to use pairwise constraints to summarize researchers' prior knowledge, with each constraint accompanied by a hyperparameter $W$ indicating the researchers' levels of confidence in their choice. The $W$ is then incorporated directly in the prior distribution of the group partition $G$, which is induced from a standard nonparametric Bayesian prior, yielding a new prior. We will elaborate the details throughout this subsection and highlight the clustering properties of the underlying nonparametric Bayesian priors in Section \ref{subsec:prior}.

%Both the PL and NL constraints are associated with accuracy - how confidence we are when assigning a particular constraint. The information about constraints is then summarized in hyperparameters $W$. 

%\color{blue} ***Just as in the presentation, it's probably good if this comes sooner (done)}

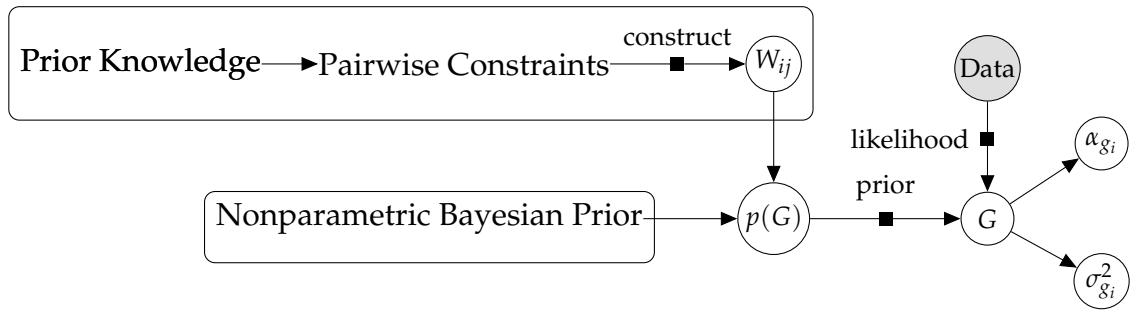
\begin{figure}[htp]
	\centering
	\caption{Graphical Representation of the Proposed Prior} 
	\label{fig:graphic_group_assignment_soft}
	%		\resizebox{13cm}{3.7cm}{%
		\begin{tikzpicture} % [x = 2.5cm, y = 1.2cm]
			% Nodes
			\node[latent]    (alpha)  {$\alpha_{g_i}$} ;
			\node[latent, below=of alpha, yshift= -0.1cm]    (sigma)  {$\sigma^2_{g_i}$} ;
			
			\node[latent, left=of alpha, xshift=0.2cm, yshift=-1cm] (G)         {$G$} ;
			\node[latent, left=of G , xshift=-1cm]    (prior-G) {$p(G)$};
			
			\node[latent, above=of prior-G , xshift=0cm, yshift=0.2cm]  (W)  {$W_{ij}$};
			\node[const, left=of W, xshift=-0.8cm]       (crst)         {Pairwise Constraints};
			\node[const, left=of crst, xshift= 0.25cm]    (priorknow) {Prior Knowledge};
			
			\node[const, left=of prior-G , xshift=-0.25cm] (NB)  {Nonparametric Bayesian Prior};

			\node[obs, above=of G, yshift=0.2cm]        (data)  {Data};

			\node[const, left=of crst, xshift= 0.25cm]    (priorknow) {Prior Knowledge};
			
			% Factors
			\factor[right=of crst, xshift=+0.4cm]     {c-f}     {construct} {} {} ;
			
			%				\factor[left=of pi, xshift=-0.5cm]    {pi-f}     {Construct} {} {} ;

			\factor[left=of G, xshift= -0.5cm]     {prior-name}     {prior} {} {};
			
			\factor[below=of data]     {data-f}     {left:likelihood} {} {} ;
			
			% edge
			\edge {prior-G, data} {G};
			
			\edge {W} {prior-G};
			
			\edge {G} {alpha};
			
			\edge {G} {sigma};
			
			\edge {NB} {prior-G};
			
			\edge {priorknow} {crst};
			
			\factoredge {crst}  {c-f}  {W}

			% plate
			\plate {plate1} { %
				(crst)(c-f)(W)(c-f-caption)(priorknow) %
			}{}; %
			
			\plate {plate2} { %
				%				(a)
				(NB)
				%				(hpa1)
				%				(hpa2)
			} {}; %
			
		\end{tikzpicture}
		%		}
\end{figure}

\subsubsection{A New Prior with Soft Pairwise Constraints} \label{sec:soft_constraint}

The derivation of the proposed prior starts from summarizing prior knowledge in the form of pairwise constraints, which describe a bilateral relationship between any two units. Inspired by the literature on semi-supervised learning \cite{wagstaff2000},\footnote{We essentially follow the same idea of the pairwise constraints in \citet{wagstaff2000}. To better demonstrate the beliefs on constraints, we use different names:  positive-link and negative-link, rather than must-link and cannot-link. } we consider two types of pairwise constraints: (1) positive-link (PL) constraints, $\mathcal{P}$, and (2) negative-link (NL) constraints, $\mathcal{N}$. A positive-link constraint specifies that two units are more likely to be assigned to the same group, whereas a negative-link constraint indicates that the units are prone to be assigned to different groups. 

Instead of imposing these constraints dogmatically, the constraint between units $i$ and $j$ is given a hyperparameter $W_{ij}$ which describes how confident the researchers are in their choice for different types of constraints. $W_{ij}$ is continuously valued on the real line, as depicted in Figure \ref{fig:weights}. On the one hand, the sign of $W_{ij}$ specifies the constraint type, with a positive (negative) value indicating a PL (NL) constraint between $i$ and $j$. On the other hand, the absolute value of $W_{ij}$ reflects the strength of the prior belief. We become increasingly confident in our prior belief on units $i$ and $j$ as $|W_{ij}| \to \infty$. If $|W_{ij}| = \infty$, we essentially impose the constraint, which is known as a \textit{hard} PL/NL constraint. Otherwise, it's a \textit{soft} PL/NL constraint with a nonzero and finite $W_{ij}$. $W_{ij} = 0$ if there is no prior belief in units $i$ and $j$. 

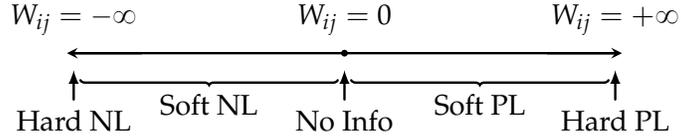
\begin{figure}[ht]
	\centering
	\caption{Relationship Between $W_{ij}$ and Pairwise Constraints}
	\label{fig:weights}
	\scalebox{0.9}{
		\begin{tikzpicture}[line cap=round,line width=0.35mm]
			\draw (0,0) -- (8,0);
			\draw[myunderbrace]   (0.1,0) -- (3.9,0) node[mytext] {\strut Soft NL};
			\draw[myunderbrace]   (4.1,0) -- (7.9,0) node[mytext] {\strut Soft PL};
			
			\foreach\i in {4}
			\fill (\i,0) circle (0.5mm);
			\draw [stealth-](-0.1,0) -- (1,0);
			\draw [-stealth](7,0) -- (8.1,0);
			
			%				\draw[line width=1mm] (3,0) -- (3.7,0);
			%				\draw[line width=1mm] (4,0) -- (5,0);
	
			\node (psi0) at (0,0.5)    {$W_{i j} = -\infty$};
			\node (hardCL) at (0,-1)    {Hard NL};
			\node (psi05) at (4,0.5)    {$W_{i j} = 0$};
			\node (random) at (4,-1)    {No Info};
			\node (psi1) at (8,0.5)    {$W_{i j} = +\infty$};
			\node (hardML) at (8,-1)    {Hard PL};
			\draw[-latex] (hardCL) --++ (0,0.75);
			\draw[-latex] (random) --++ (0,0.75);
			\draw[-latex] (hardML) --++ (0,0.75);
		\end{tikzpicture}
	}
\end{figure}

\vspace{0.5cm}

We assume the weight $W_{ij}$ is a logit function of two user-defined hyperparameters\footnote{This parametric form is related to the penalized probabilistic clustering proposed by \citet{lu2004, lu2007_2}. See detailed discussion in Appendix \ref{appendix:LuAndLeen}.}, accuracy $\psi_{i j}$ and type $T_{ij}$:
\begin{align} \label{eq:weight_construction}
	W_{i j} = T_{ij} \ln \left( \frac{\psi_{ij}}{1-\psi_{ij}}\right).
\end{align}
Accuracy, $\psi_{i j}  \in [0.5, 1)$, describes the user-specified probability of assigning a constraint for unit $i$ and $j$ being correct given our prior preference. Specifically, $\psi_{i j} = 1$ implies the constraint between $i$ and $j$ must be imposed since we confident that it is accurate, while specifying $\psi_{i j} = 0.5$ is equivalent to a random guess or no information is provided. $\psi_{i j} $ is bounded below by 0.5, following the assumption that leaving the pair unrestricted is more rational than setting a less likely constraint.  The type of constraints is denoted by $T_{i j}$. $T_{i j} = 1$ if unit $i$ and $j$ are specified to be positive-linked, and $T_{i j} = -1$ for a NL constraints. If the pair $(i,j)$ doesn't involve any constraint, we assume $T_{i j} = 0$. % With $\left(\psi, T\right)$, we are able to describe the relationship between any pair of units with any degree of confidence and construct the associated weight $W$ using (\ref{eq:weight_construction}).

%{\color{blue} *** I am not sure why you separate subsections 2.3.2 and 2.3.1 (2.3.3?): I now combine them into one subsection}

%Contrary to the hard pairwise constraints, soft constraints do not impose the constraints in the model. Instead, soft constraints shrink the group structure to the partition that aligns with out prior knowledge. We consider the uncertainty associated with prior knowledge about pairs of units that should or should not belong to the same group. One important advantage of adopting soft constraints is its robustness. It is usually difficult to obtain definitive statements on the properties of patterns in real world applications. Thus a practical framework using constraint information should tolerate noisy constraints. This feature alleviates the criticism that hard pairwise constraints are restrictive and hard to relaxed in a standard way.

To incorporate these constraints into the prior, we propose modifying the exchangeable partition probability function (EPPF) or the prior distribution of group indices, $p(G)$, of the baseline Dirichlet process, which we will highlight in the Section \ref{subsec:prior}. The resulting group partition will receive a strictly higher (lower) probability if it is (in)consistent with pairwise constraints. As a result, the induced prior on the group indices $G$ directly depend on the characteristics of user-specific pairwise constraints and is able to increase or decrease the likelihood of a certain $G$.

In the presence of soft constraints, we modify the EPPF by multiplying a function of characteristics of constraints,
\begin{align} \label{eq:prior_G_soft}
	p(G | \psi, T) \; \propto \; p (G) \pi(\psi, T| G) = p (G) \prod_{i, j}  \left( \fra{\psi_{i j}}{1 - \psi_{i j}} \right)^{c T_{i j} \delta_{ij}(G)},
\end{align}
where $\psi_{i j}/ (1 - \psi_{i j} )$ is the prior odds for the constraint between unit $i$ and $j$, $\delta_{i j}(G)$ is a transformed Kronecker delta function such that
\begin{align*} \label{eq:delta}
	\delta_{ij}(G) &=
	\begin{cases}
		1 &  \text{ if } g_{i} = g_j \\
		-1 & \text{ if } g_{i} \ne g_j
	\end{cases}, \numberthis
\end{align*}
and $c$ is a positive number that controls the overall strength of prior belief. For $c \rightarrow 0$, $p(G | \psi, T)$ corresponds to the baseline EPPF $p(G)$, while for $c \rightarrow \infty$, $p\left(G = G^* | \psi, T \right) \rightarrow 1$, where $G^*$ satisfies all pairwise constraints.

%{\color{blue} *** Note that nothing in this section requires  further info on p(G); Your notation p() suggests that the second part is a properly normalized distribution of (psi,T) conditional on psi(G). Is that true? Note that it does not have to be. See also Del Negro and Schorfheide (2008, JME). (Exactly, it doesn't have to be a normalized distribution, I use pi instead.)}

\begin{remark}
	Due to the presence of pairwise constraints, the partition probability function presented in (\ref{eq:prior_G_soft}) no longer satisfies the exchangeable assumption as we now distinguish units within each group.
\end{remark}

\begin{remark}
	The current framework enables us to \textit{impose} some constraints. It is an extreme case of soft constraint and thus handy to implement, requiring only setting $\psi_{i j} \to 1$ for the pair $(i,j)$.  Intuitively, any group partition \textit{violating} the pairwise constraint between $i$ and $j$  (i.e., $T_{i j} \delta_{ij}(G) = -1$) will have zero probability, since for such a partition,
	\begin{align*}
		\left( \fra{\psi_{i j}}{1 - \psi_{i j}} \right)^{c T_{i j} \delta_{ij} (G)} \to \left(\frac{1}{\infty}\right) ^ {c} = 0 \text{, this imples } p(G | \psi, T) = 0,
	\end{align*}
	and hence the constraint on $(i,j)$ is imposed and referred to as a \textit{hard constraint} as opposed to soft constraint. By assigning proper $\psi_{i j}$ for the pairs $(i,j)$, we can flexibly combine soft and hard constraints inside a single specification.
\end{remark}

\begin{remark}
	Soft pairwise constraints solve the transitivity issue that might be a problem for hard pairwise constraints. For instance, if we have $(1,2) \in \mathcal{P}$ and $(2,3) \in \mathcal{P}$, we can still have $(1,3) \in \mathcal{N}$ in the framework of soft pairwise constraints since it preserves the possibility of violating any of these constraints. This is not the case in hard pairwise constraints, as $(1,2) \in \mathcal{P}$ and $(2,3) \in \mathcal{P}$ implies $(1,3) \in \mathcal{P}$ by transitivity.
\end{remark}

With the definition of $W_{ij}$, we rewrite the partition probability function defined in (\ref{eq:prior_G_soft}) in terms of $W_{ij}$ to ease notation,
\begin{align} \label{eq:prior_G_soft2}
	p(G | \psi, T) = p(G | W) = \mathcal{C}(W,G,c)^{-1} p (G) \exp \left[ c \sum_{i, j} W_{i j} \delta_{i j} (G) \right],
\end{align}
where
\begin{align}
	\mathcal{C}(W,G,c)  = \sum_{G'} p (G') \exp \left[ c \sum_{i, j} W_{i j} \delta_{i j} (G') \right],
\end{align}
is a normalization constant and we will use the prior $p(G | W)$ hereinafter. In practice, we will first specify $(T_{ij}, \psi_{ij}) = (\text{type, accuracy})$ for the constraint between unit $i$ and $j$ and then construct the corresponding weight $W_{i j}$ via the equation (\ref{eq:weight_construction}).

%{\color{blue} ***Your notation should also clarify how the exp term depends on G; You should write out the normalization constant of this prior as it depends on c and the psi-s. This matters for the MDD computation when you select c. Be sure to only drop the normalization constant when its arguments are fixed. (added)}

\begin{remark}
	In the particular case where we don't have any constraint information, $\exp  \left( c \sum_{i, j}  W_{i j} \delta_{i j}\right)$ reduces to 1 as $W_{i j} = 0$ for all $i$ and $j$, and recovers the original DP prior. Hence, our method can cater to all levels of supervision, ranging from hard constraints to a complete lack of constraints.
\end{remark}

\subsubsection{The Effect of Constraints and Scaling Constant on Group Partitioning} \label{subsec:scaling_factor}

%{\color{blue} ***The title of the subsection does not make clear that it is between the interplay of prior hyperparameters and number of groups. This is very important and gets lost a bit. (I added a bit more in the title but revised this subsection a lot. I don't have the number of groups in this section anymore)}
 
The function $\pi(\psi, T | G)$ in Equation (\ref{eq:prior_G_soft}) is crucial in shifting the prior probability of $G$. By design, $T_{i j}\delta_{ij} = 1$ when the constraint between $i$ and $j$ is met in a group partitioning defined by $G$. The prior probability for $G$ is therefore increased since $\left[\psi_{i j}/ (1 - \psi_{i j} )\right]^c > 1$. Similarly, if a group partitioning $G$ violates the constraint between $i$ and $j$, then $T_{i j}\delta_{ij} = -1$ and the prior probability for $G$ drops due to $\left[\psi_{i j}/ (1 - \psi_{i j} )\right]^{-c} < 1$. Therefore, with $\pi(\psi, T | G)$, the resulting group partition is shrunk toward our prior knowledge without imposing any constraint.

To fix ideas, consider a simplified scenario with $N = 2$ units where there are at most two groups. For illustrative purposes, we set the concentration parameter to $a = 1$ so that $\Pr (g_1 = g_2) = \Pr (g_1 \ne g_2) = 0.5$\footnote{\citet{antoniak1974} provides analytical formulas for probabilities of more general events with larger $N$. In this example, $\Pr (g_1 = g_2)$ = $\frac{1}{a+1}$.} if no constraint exists. When $N = 2$, listing all partitions $G$ is possible, $G \in \{(1,1), (1,2), (2,1), (2,2)\}$, and we can calculate the probabilities for each $G$ using (\ref{eq:prior_G_soft2}). As a result, we are able to derive the probability of units 1 and 2 belonging to the same or different groups, i.e., analytical formulae for $\Pr(g_1 = g_2)$ and $\Pr(g_1 \ne g_2)$, which neatly demonstrate the effect of $\psi$ and $c$ on group partitioning.

It is straightforward to show $\Pr(g_1 = g_2)$ as a function of  $\psi_{12}, T_{12}$, and $c$:
\begin{align}\label{eq:pr_vs_psi}
	\Pr(g_1 = g_2) = \frac{1}{1 + \exp(-4cW_{12})}  = \frac{1}{1 + \left(\frac{\psi_{12}}{1-\psi_{12}} \right)^{-4cT_{12}}}.
\end{align}

Figure \ref{fig:soft_weight_diff_c} traces out the equation (\ref{eq:pr_vs_psi}) for a range of $c$ values. The left panel (a) displays the curve for a PL constraint. Firstly, observe that when $c = 0$, $\Pr(g_1 = g_2)$ remains unchanged at $0.5$ regardless of the value of $\psi$. This is the situation in which $c$ eliminates the constraint's effect on the prior. Next, given a particular $c$, $\Pr(g_1 = g_2)$ increases in $\psi$, which means that a stronger soft PL constraint between units 1 and 2 leads to higher chance of assigning both units to the same group. When $\psi$ is fixed, increasing $c$ easily results in a higher $\Pr(g_1 = g_2)$, indicating that a larger $c$ value magnifies the effect of the PL constraint. In contrast, panel (b) depicts the curve with a NL constraint. $\psi_{12}$ and $c$ clearly have the opposite effect on $Pr (g_1 = g_2)$: $\Pr(g_1 = g_2)$ drops significantly as $\psi_{12}$ or $c$ increases. Notably, even with a large $\psi_{12}$, the soft constraint framework maintains the possibility of breaching the constraint, which is another important feature that preserves the chance of correctly assigning group indices even if the constraint is erroneous. 

%{\color{blue} ***Do you really need both Fig 3 and Fig 4? (No, I removed the simulation)}

\begin{figure}[htp]
	\caption{$\Pr(g_1 = g_2)$ as a Function of $\psi_{12}$ and $c$}
	\label{fig:soft_weight_diff_c}
	\centering
	\begin{subfigure}[b]{0.45\textwidth}
		\centering
		\includegraphics[width=1\textwidth]{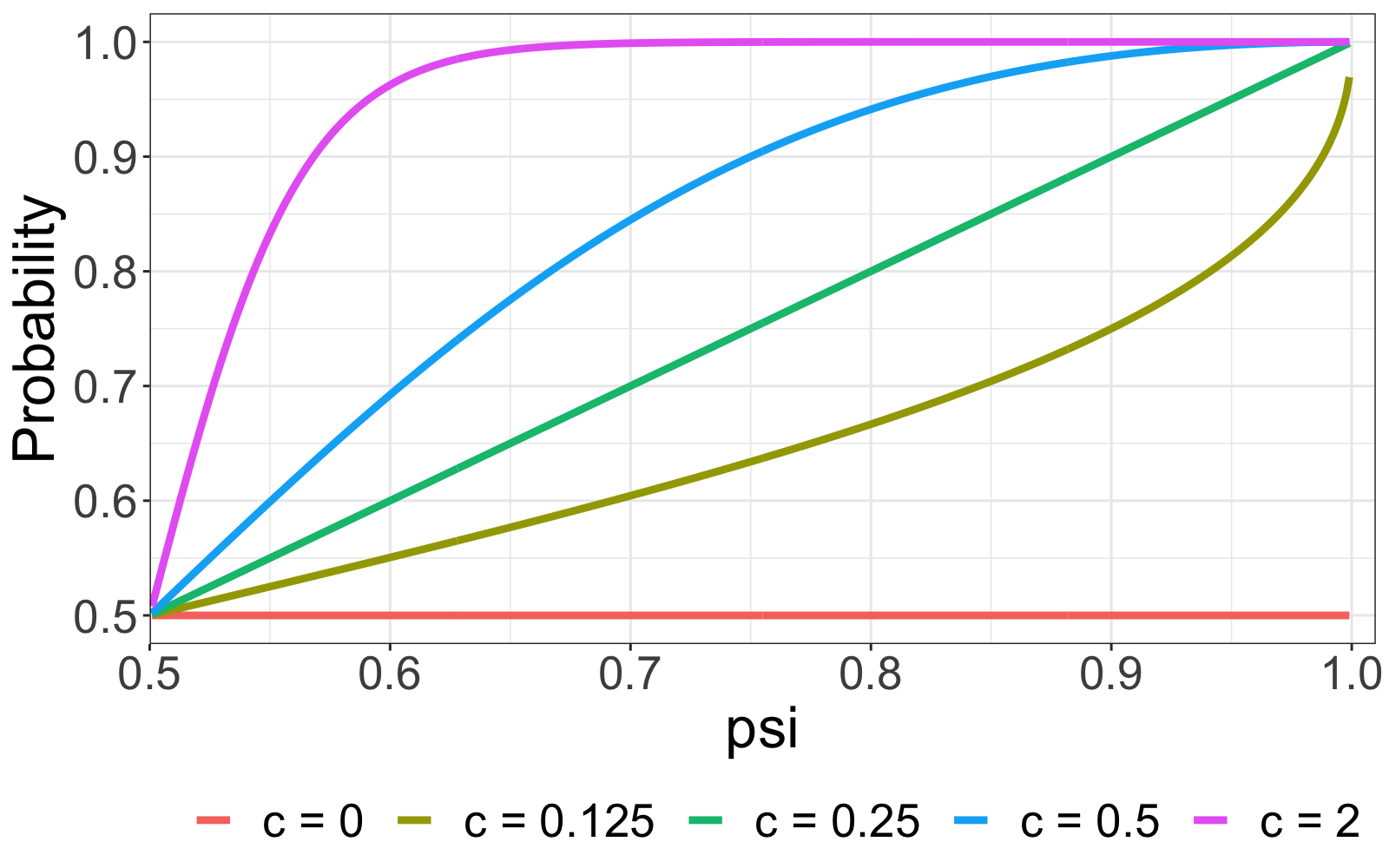}
		\caption{Positive-Link ($T_{12} = 1$)}
	\end{subfigure}
	\begin{subfigure}[b]{0.45\textwidth}
		\centering
		\includegraphics[width=\textwidth]{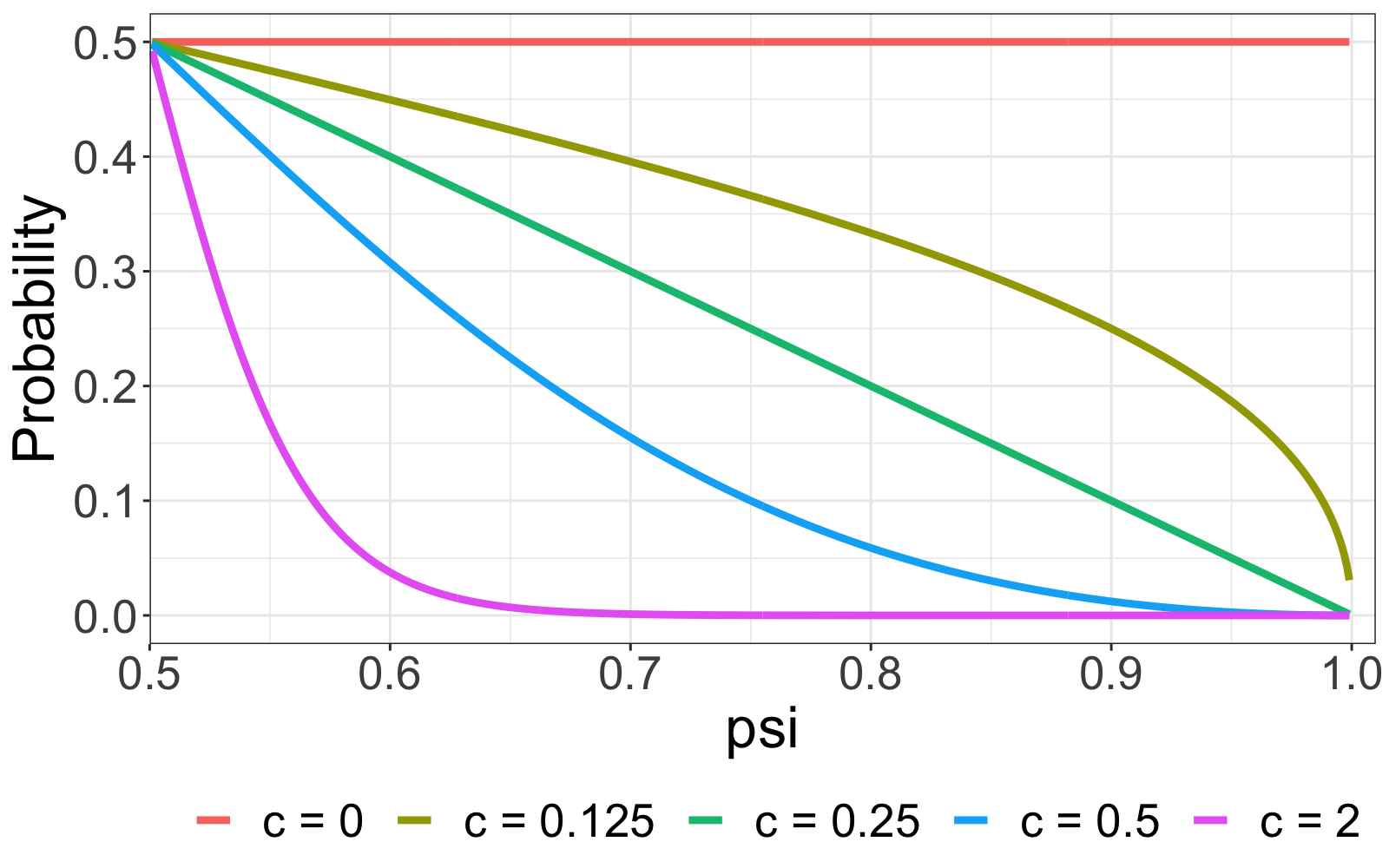}
		\caption{Negative-Link ($T_{12} = -1$)}
	\end{subfigure}
\end{figure}

% Lastly, PL and NL constraints have symmetric effect on group partitioning.

In the general case where numerous PL and NL constraints are enforced, $c$ concurrently affects all constraints. In other words, the value of $c$ determines the overall ``strength" of the prior belief of $G$. If the prior belief is coherent with the real group partition, it would be preferable to have a large $c$ to intensify the effect on constraints, allowing prior information to take precedence over data information, and vice versa.

\begin{remark}
	We propose to find the optimal $c$ that maximizes marginal data density using grid search, see details in Appendix \ref{app:det_c}. Alternatively, the scaling constant $c$ can be pair-specific and data-driven. \citet{basu2004mrf}, for instance, assume that $c_{ij}$ is a function of observables, i.e., $c_{ij} = \psi (x_i, x_j; T_{ij})$. $\psi (x_i, x_j; T_{ij})$ is monotonically increasing (decreasing) in the distance between units $i$ and $j$ if they are involved in positive-link (negative-link) constraints. This reflects the belief that if two more distant units are assumed to be positive-linked (negative-linked), their constraint should be given more (less) weight.
\end{remark}

\subsubsection{Specification of Soft Pairwise Constraints}  \label{subsec:specify_constraints}

In reality, it is practical to establish soft pairwise constraints based on existing information on group, even if it is not the genuine group partitioning. In the empirical analysis, for instance, we use the official expenditure categories of CPI sub-indices to construct soft pairwise constraints. When information on group partitioning is insufficient, especially when the number of units is large, these official expenditure categories may serve as a trustworthy starting point. Before formalizing the idea, we first introduce the prior similarity matrix $\underline{\pi}^S$ which is a $N \times N$ symmetric matrix describing the prior probability of any two units belonging to the same group, i.e., $\underline{\pi}^S_{ij} = \Pr(g_i = g_j)$ conditional on all hyperparameters in the prior. 

%{\color{blue} *** This is a little confusing. What happens if some units are neither positive-link nor negative link? (set T to 0)}

The general idea is to derive soft pairwise constraints using the existing information on a preliminary group partitioning $\underline{G}$, which is allowed to involve only a subset of units. We start with the type of constraints $T_{ij}$ between any two units. Given the preliminary group structure, such as expenditure categories, we specify PL constraints for all pairs of units within the same group and NL constraints for all pairs of units from different groups. This means that we believe the preliminary group structure is correct \textit{a priori}. Despite the fact that more elaborate and subtle constraints might be implemented, this rough specification is usually a great starting point.

The accuracy $\psi_{ij}$ for constraints is then specified. When our prior knowledge is limited or the number of units is large, we cannot specify $\psi_{ij}$ for all pairs with solid knowledge of them. Instead, one desirable yet simple choice is to assume $\psi_{ij}$ again based on preliminary group partitioning $\underline{G}$. More specifically, all units in the same group are positive-linked with identical $\psi^{PL}_{ij} $, i.e., for units $i$ and $j$ from the group $\underline{g}_i = \underline{g}_j = \underline{g}$, we have $\psi^{PL}_{ij} = c_{\underline{g}}$. Units from different groups are assumed to be negative-linked with identical $\psi^{NL}_{ij}$, i.e., for units $i$ and $j$ from distinct groups, we assume $\psi^{NL}_{ij} = c_{\underline{g}_i\underline{g}_j}$ and  $c_{\underline{g}_i\underline{g}_j} = c_{\underline{g}_j\underline{g}_i}$. Following this strategy, $\psi_{ij}$ depends solely on $\underline{G}$ and hence two units from the same group would have identical soft pairwise constraints with other units. Notice that the number of possible distinct $\psi_{ij}$ reduces from $N(N-1)/2$ to $\widebar{K}(\widebar{K}+1)/2$, where $\widebar{K}$ is the number of groups in $\underline{G}$ and $\widebar{K} \ll N$. 

%{\color{blue} ***What happens with units that we don't want to link or prevent from linking a priori? (must be total free to constraints)}

This framework permits no prior belief in certain units. If at least one unit in a pair $(i,j)$ is not included in $\underline{G}$, we assume that this pair of units is free of constraints and we set $T_{ij}$ to 0 or $\psi_{ij}$ to 0.5 in the prior. Note that the absence of a constraint does not ensure that the units $i$ and $j$ are completely unrelated. Instead, if both $i$ and $j$ are involved in constraints with a third unit $k$, or are connected through a series of $l$ constraints, $i \leftrightarrow k_1 \leftrightarrow  \cdots \leftrightarrow k_l \leftrightarrow j$, then the prior probability of $i$ and $j$ belonging to the same group differs from the prior probability without any constraints. If we wish to prevent two units from linking \textit{a priori}, they must not be subject to any constraints with the remaining units.

The aforementioned specification strategy induces a block prior similarity matrix, i.e., for an unit $i$, $\underline{\pi}^S_{ij}  = \underline{\pi}^S_{ik}$ if $\underline{g}_j = \underline{g}_k$. Intuitively, if two units have identical soft pairwise constraints and hence posit an identical relationship with all other units, they are equivalent and exchangeable. As a result, these units should have an equal prior probability of sharing the same group index with any other units. More formally,
\begin{theorem}[Stochastic Equivalence] \label{thm:equal_prior_prob}
	Given two units $j,k$ from the same prior group, if $\psi_{j m} = \psi_{k m}$ for all $m = 1,2,..,N$, then $\Pr (g_i = g_j) = \Pr (g_i = g_k)$ for all unit $i$ in the prior, given weights $W$.
\end{theorem}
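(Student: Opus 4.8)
The plan is to exhibit a probability-preserving bijection on the space of group-assignment vectors that carries the event $\{g_i = g_j\}$ onto $\{g_i = g_k\}$. The natural candidate is the transposition $\sigma = (j\,k)$ acting on unit labels, which induces the map $G \mapsto G^{\sigma}$ defined by $(G^{\sigma})_j = g_k$, $(G^{\sigma})_k = g_j$, and $(G^{\sigma})_m = g_m$ for every other $m$. Since $\sigma$ is an involution, this map is a bijection on the finite set of assignment vectors, so it suffices to establish two facts: that it preserves the prior mass $p(G \mid W)$, and that it sends $\{g_i = g_j\}$ to $\{g_i = g_k\}$ for the relevant $i$.

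First I would record that the hypotheses deliver the matrix identity $W_{jm} = W_{km}$ for every $m$. The accuracy condition gives $\psi_{jm} = \psi_{km}$; because $j$ and $k$ lie in the same prior group, the specification strategy of Section~\ref{subsec:specify_constraints} assigns them the same constraint type toward every other unit (positive-link within their own group, negative-link toward any other group), so $T_{jm} = T_{km}$ and hence by (\ref{eq:weight_construction}) $W_{jm} = W_{km}$. Combined with the symmetry $W_{ab} = W_{ba}$, this makes the entire weight matrix invariant under the simultaneous relabeling by $\sigma$, i.e.\ $W_{\sigma(a)\sigma(b)} = W_{ab}$ for all $a,b$ (checking the diagonal-block entry $W_{jk} = W_{kj}$ separately).

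Next I would verify the invariance of $p(G \mid W)$. From (\ref{eq:prior_G_soft2}) the normalizing factor $\mathcal{C}$ does not depend on the argument partition, so it cancels when comparing $G$ with $G^{\sigma}$. The baseline EPPF $p(G)$ is exchangeable, being induced by the Dirichlet process, hence $p(G^{\sigma}) = p(G)$. For the interaction term I would use the elementary identity $\delta_{ab}(G^{\sigma}) = \delta_{\sigma(a)\sigma(b)}(G)$, reindex the double sum via $(a,b) \mapsto (\sigma(a),\sigma(b))$, and invoke the $\sigma$-invariance of $W$ to conclude $\sum_{a,b} W_{ab}\,\delta_{ab}(G^{\sigma}) = \sum_{a,b} W_{ab}\,\delta_{ab}(G)$. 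Multiplying the two invariances yields $p(G^{\sigma} \mid W) = p(G \mid W)$.

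Finally I would push the event through the bijection. Fixing $i \notin \{j,k\}$ (the intended case, since for $i \in \{j,k\}$ one side is trivially equal to $1$), note that $(G^{\sigma})_i = g_i$ and $(G^{\sigma})_k = g_j$, so $G$ satisfies $g_i = g_j$ exactly when $G^{\sigma}$ satisfies $g_i = g_k$. Summing the preserved mass over the event then gives
\begin{align*}
	\Pr(g_i = g_j) &= \sum_{G:\, g_i = g_j} p(G \mid W) = \sum_{G:\, g_i = g_j} p(G^{\sigma} \mid W)\\
	&= \sum_{G':\, g_i = g_k} p(G' \mid W) = \Pr(g_i = g_k).
\end{align*}
The one genuinely delicate step is the third paragraph: getting the reindexing of the double sum right and confirming that the stated hypothesis $\psi_{jm} = \psi_{km}$, routed through the shared constraint types, really upgrades to full $\sigma$-invariance of $W$. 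Everything else is bookkeeping about an involution.
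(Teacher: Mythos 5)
Your proposal is correct and follows essentially the same route as the paper's own proof: the transposition $G \mapsto G^{\sigma}$ is exactly the paper's swap map $s_{jk}$ between $\mathcal{G}_{ij}$ and $\mathcal{G}_{ik}$, the invariance of the EPPF under the swap is the paper's observation that $A(G)=A(s_{jk}(G))$ depends only on block sizes, and the preservation of $\sum_{m,n}W_{mn}\delta_{mn}$ via $W_{jm}=W_{km}$, symmetry, and $W_{nn}=0$ is the paper's key computation. Your explicit derivation of $W_{jm}=W_{km}$ from $\psi_{jm}=\psi_{km}$ together with the shared constraint types is a slightly more careful treatment of a step the paper takes as given, but the argument is the same.
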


Theorem \ref{thm:equal_prior_prob} echos the concept of \textit{stochastic equivalence} \citep{nowicki2001} in stochastic block model\footnote{For a more comprehensive review of the stochastic block model, see \citet{lee2019}.} (SBM) \citep{holland1983}. In less technical terms, for nodes $p$ and $q$ in the same group, $p$ has the same (and independent) probability of connecting with node $r$, as $q$ does. Interestingly, this relationship is not coincidental. The prior draw of group membership with the aforementioned specification of $T_{ij}$ and $\psi_{ij}$ can be viewed as a simulation of a simple SBM. In a simple SBM, there are two essential components: a vector of group memberships and a block matrix, each element of which represents the edge probability of two nodes, given their group memberships. In our case, the preliminary group structure serves as the group membership in SBM. The DP prior and the weight (or $T_{ij}$ and $\psi_{ij}$) of each constraint induce a prior similarity probability comparable to the block matrix.

Let's consider an example of 90 units. The preliminary group structure divides units into 3 groups, with groups 1, 2 and 3 containing 25, 30 and 35 units, respectively. Figure \ref{fig:psm_demo} shows the prior similarity matrix, which is based on the aforementioned specification strategy, so it becomes a block matrix with equal entries in each of nine blocks. Units within the same group are stochastically equivalent, as their prior probabilities of being grouped not only with each other but also with units from other groups are the same. As a result, the similarity probability of each pair depends solely on their preliminary membership (and $\psi$).

\begin{figure}[h]
	\caption{Prior Similarity Matrix under Stochastic Equivalence}
	\label{fig:psm_demo}
	\centering
	\includegraphics[scale= 0.5]{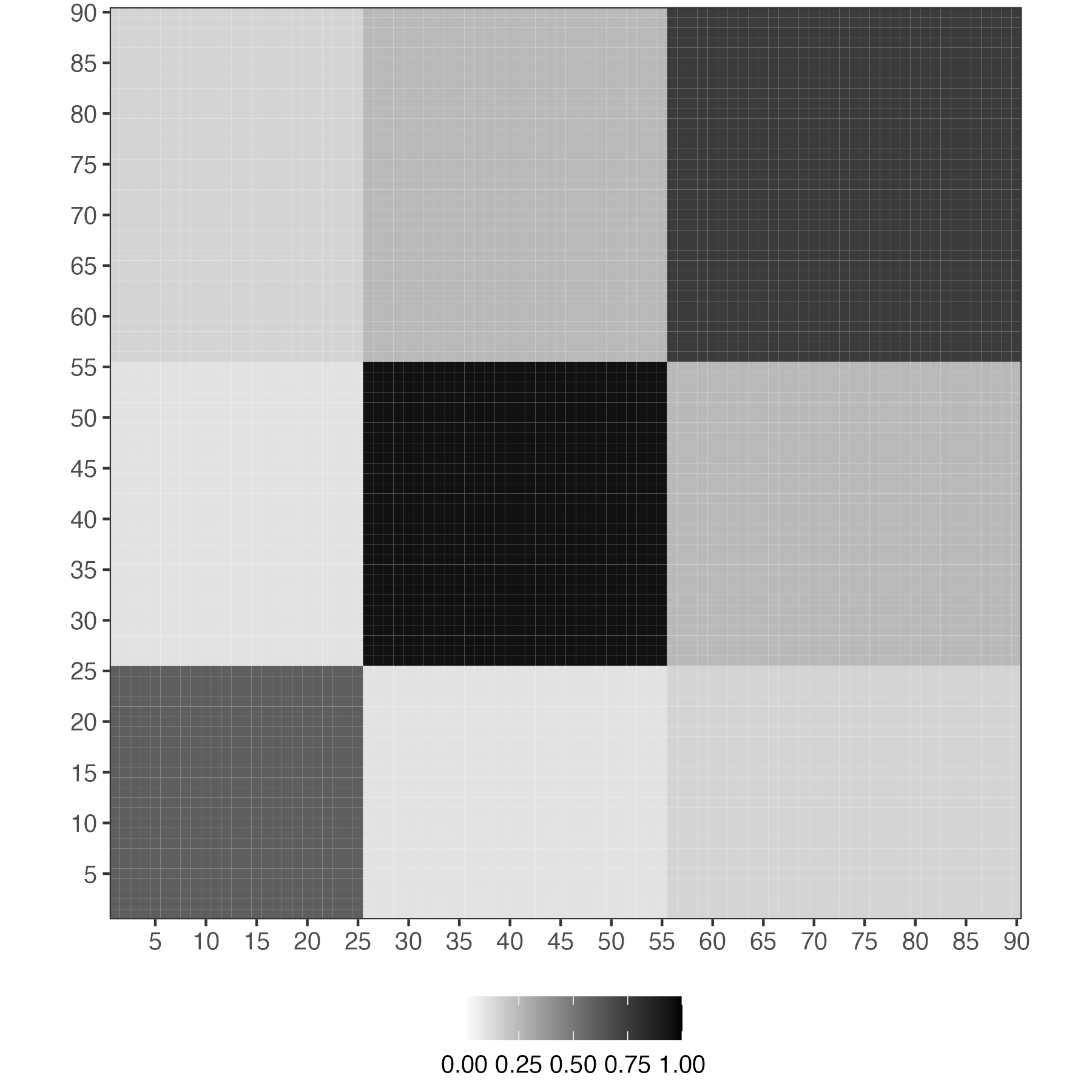}
\end{figure}

%{\color{blue} ***Overall, this subsection seems important but it is a bit unclear.}

\subsubsection{Comparison to Existing Methods} \label{subsec:drawback_F&B}

\citet{bonhomme2015} incorporates prior knowledge of group membership by adding a penalty term to the objective function. They assume that prior information is in the form of probabilities which describe the prior probability of unit $i$ belonging to group $k$ with at most $K$ groups as $\omega^{(K)}_{ik}$. Consequently, the estimated group index is given by:
\begin{align}
	\widehat{g}_{i} (\beta, \alpha)=\underset{k \in\{1, \ldots, K\}}{\operatorname{argmin}} \sum_{t=1}^{T}\left(y_{i t} - \beta^{(K)'} x_{i t}-\alpha^{(K)}_{k t}\right)^{2}-C \ln \omega^{(K)}_{i k},
\end{align}
where $C>0$ is a hyperparameter need to be tuned further and $K$ is the predetermined number of groups. 

%{\color{blue} ***I would start by explaining what you are doing and then point to its advantages when making comparisons to existing approaches. (I use this subsection to compare different approaches.)}

The penalty determines the weights assigned to prior and data information in estimation. Due to the fact that $K$ is frequently unknown in advance, this method requires model selection to determine the ideal number of groups. Assume we have $n_K$ alternative options for $K$. We have a $N \times K$ matrix $\omega^{(K)} = \left \{\omega^{(K)}_{ik} \right\}$ for prior information for a given $K$. As a result, in order to pick a model, we must therefore provide $n_K$ sets of prior probability matrix $\omega^{(K)}$, which is cumbersome and inconvenient. For instance, if $K$ has values ranging from $3,5,10$ and $N = 200$, there are 3,600 entries for $\omega$, none of which can be missing or undefined. In addition, the information criteria may be unreliable in the finite-sample results, necessitating further care when selecting an appropriate variant for empirical application.

% {\color{blue} ***Are papers that use/build on BM actually using the penalty function approach? (No, actually, no paper tries to add prior knowledge.)}

Summarizing prior knowledge through pairwise constraints is often more practical than the penalty function approach in BM and solves the aforementioned practical issues. Pairwise relationships can be derived intuitively from researchers' input without requiring in-depth knowledge of the underlying groups; researchers do not need to fix the number of groups $K$ or group membership \textit{a priori}.  One only needs to focus on a pair of units each time and specify the preference of assigning them to the same or different groups. Moreover, since the pairwise constraints are incorporated into the DP prior, which implicitly defines a prior distribution on the group partitions, model selection is not required and the posterior analysis also takes the uncertainty of the latent group structure into account.

\citet{paganin2021} offer a statistical framework for including concrete prior knowledge on the partition. Their proposed method aims to shrink the prior distribution towards a \textit{complete} prior clustering structure, which is an initial clustering involving \textit{all} units provided by experts. Specifically, they suggest a prior on group partition that is proportional to a baseline EPPF of the DP prior multiplied by a penalization term,
\begin{align} \label{eq:paganin}
	p\left(G | G_{0}, m \right) \; \propto \; p(G) e^{- m d \left(G, G_{0}\right)}
\end{align}
with $m>0$ is a penalization parameter, $d\left(G, G_{0}\right)$ is a suitable distance measuring how far $G$ is from $G_{0}$, and $p(G)$ indicates the same EPPF as in (\ref{eq:prior_G_soft}). Because of the penalization term, the resulting group indices $G$ shrink toward the initial target $G_0$.

This framework is parsimonious and easy to implement, but it comes with a cost. The method is incapable of coping with an initial clustering in a subset of the units under study or multiple plausible prior partitions; otherwise, the distance measure is not well-defined. In addition, the authors suggest utilizing Variation of Information \citep{meilua2007} as the distance measure. It can be shown that the resulting partition can easily become trapped in local modes, leading the partition to never shrink toward $G_{0}$. They also argue that other available distance methods have flaws. As a result, the penalty term does not function as anticipated.

Our proposed framework with pairwise constraints is more flexible than adopting a target partition. Actually, the target partition in \citet{paganin2021} can be viewed as a special case of the pairwise constants, in which every unit must be involved in at least one PL constraint. Our framework could manage partitions involving arbitrary subsets of the units by tactically specifying the bilateral relationships. Most importantly, when the group indices of other pairs are fixed, our framework ensures that the partition containing a specific pair that is consistent with our prior belief receives a strictly greater prior probability than the partition that is inconsistent with our prior belief. This guarantees that the generated $G$ shrinks in the direction of our prior belief.

\subsection{Nonparametric Bayesian Prior} \label{subsec:prior}

%In a Bayesian nonparametric model, the prior and posterior distributions are not parametric distributions, but stochastic processes. The fact that the Dirichlet distribution is a probability distribution on the simplex of sets of non-negative numbers that sum to one makes it a good candidate to model distributions over distributions or distributions over functions. Additionally, the nonparametric nature of this model makes it an ideal candidate for clustering problems where the distinct number of clusters is unknown beforehand.

%Following \citet{liu2022}, we assume heterogeneous parameters $\alpha_{g_{i} t}$ and $\sigma_{g_i}$ are conditionally independent with respect to the conditioning set that include initial value of each unit $y_{i0}$, the initial values of predetermined variables, and the entire history of exogenous variables. This assumption ensures that $\alpha_{g_{i} t}$ and $\sigma_{g_i}$ can be modeled separately and parallelly.

The baseline model contains the parameters listed below: $(\alpha, \sigma^2, \pi, \xi, a, \phi)$. We rely mostly on nonparametric Bayesian models.\footnote{For a more comprehensive review of the nonparametric Bayesian literature, see \citet{ghosal2017} and \citet{mueller2018}.} Bayesian nonparametric models have emerged as rigorous and principled paradigms to bypass the model selection problem in parametric models by introducing a nonparametric prior distribution on the unknown parameters. The prior assumes that a collection of $\alpha$ and $\sigma^2$ is drawn from the Dirichlet process prior.\footnote{There have been some empirical works that use Dirichlet process model with panel data. Dirichlet process mixture prior is specified for either the distribution of the innovations \citep{hirano2002} or intercepts \citep{fisher2022}.} $\pi$ is a vector of mixture probabilities in Dirichlet process that is produced by the stick-breaking approach with stick length $\xi$. $a$ is the concentration parameter in the Dirichlet process, whereas $\phi$ is a collection of hyperparameters in the base measure $B_{0}$. We consider prior distributions in the partially separable form,\footnote{The joint prior includes $\xi$ but not $\pi$. Because the stick-breaking formulation of $\xi$ is a deterministic transformation of $\xi$, knowing $\xi$ is identical to knowing $\pi$.}
\begin{align*}
	p(\alpha, \sigma^2 | a, \phi, \xi) p(\xi | a) p(a).
\end{align*}

We tentatively focus on the random coefficients model where, conditional on $G$, $\alpha_{g_{i}}$ and $\sigma^2_{g_i}$ are independent to the conditional set that includes initial value of each unit $y_{i0}$, the initial values of predetermined variables, and the whole history of exogenous variables. The assumption guarantees that $\alpha_{g_{i}}$ and $\sigma_{g_i}$ can be sampled separately and simplifies the inference of the underlying distribution of $\alpha_{g_{i}}$ and $\sigma_{g_i}$ to an unconditional density estimation problem, therefore lowering computational complexity. The joint distribution of heterogeneous parameters as a function of the conditioning variables can then be modeled to extend the model to the correlated random coefficient model, which is briefed in Section \ref{subsec:corr_rand_model}. A full explanation and derivation for the correlated random coefficient model are provided in the online appendix. 

%{\color{blue} *** You might need some of that in the main text because empirical folks are always worried about this correlation.}

%For the prior specification, we primarily rely on the nonparametric Bayesian prior\footnote{For a more comprehensive review of the Bayesian nonparametric literature, see \citet{ghosal2017}} - Dirichlet Process (DP) prior - for grouped effects $\alpha_{g_i t}$ and variance $\sigma^2_{g_{i}}$, the stick-breaking process for group probabilities $\pi_k$, a Gamma hyperprior for the concentration parameter $a$ (the hyperparameter in DP prior), and a normal prior for the common parameter $\rho$ and heterogeneous slope $\beta_i$.

\subsubsection{Prior on Group-Specific Parameters}

%{\color{blue} *** This part should follow the outline in the slides, with that schematic and all. You don't want to start with a review. Focus on what's new in your paper. (I moved this subsection to the back)}

In the nonparametric Bayesian literature, the Dirichlet Process (DP) prior \cite{ferguson1973, ferguson1974, sethuraman1994} is a canonical choice, notable for its capacity to construct group structure and accommodate an infinite number of possible group components.\footnote{See Appendix \ref{app:DP_and_related} for a brief overview of the DP and Appendix \ref{app:CRP} for its clustering properties.} The DP mixture is also known as a ``infinite" mixture model due to the fact that the data indicate a finite number of components, but fresh data can uncover previously undetected components \cite{neal2000}. When the model is estimated, it chooses automatically an appropriate subset of groups to characterize any finite data set. Therefore, there is no need to determine the ``proper" number of groups.

The DP prior can be written as an infinite mixture of point mass with the probability mass function:
\begin{align} \label{prior:a}
	\left( \alpha_i, \sigma_i^2\right) \sim \sum_{k=1}^{\infty} \pi_k \delta_{\left(\alpha_{k}, \sigma_{k}^2\right)} \text{ with } \left(\alpha_{k}, \sigma_{k}^2\right) \sim B_{0}(\phi),
\end{align}
where $\delta_{x}$ denotes the Dirac-delta function concentrated at $x$ and $B_{0}$ is the base distribution. We adopt an Independent Normal Inverse-Gamma (INIG) distribution for the base distribution $B_0$:		
\begin{align}
	B_0(\phi) :=  INIG \left( \mu_{\alpha}, \Sigma_{\alpha}, \frac{\nu_{\sigma}}{2}, \frac{\delta_{\sigma}}{2} \right),
\end{align} 
with a set of hyperparameters $\phi =  \left( \mu_{\alpha}, \Sigma_{\alpha}, \frac{\nu_{\sigma}}{2}, \frac{\delta_{\sigma}}{2} \right)$. 

The group probabilities $\pi_k$ are constructed by an infinite-dimensional stick-breaking process \cite{sethuraman1994} governed by the concentration parameter $a$,
\begin{align} \label{main_text_pi}
	\pi_k &\equiv \xi_k \prod_{j<k} (1-\xi_j) \text{ for } k > 1,  \text{ and } \pi_1 = \xi_1,
\end{align}
where stick lengths $\xi_k$ are independent random variables drawn from the beta distribution,\footnote{Recall that a $Bata(m, n)$ distribution is supported on the unit interval and has mean $m/(m+n)$.} $Beta(1, a)$. The group probability will be random but still satisfy $\sum_{k=1}^{\infty} \pi_{k}=1$ almost surely.

%the base distribution $B_0$ governs component value $\left(\alpha_k, \sigma_k^2\right)$ while $\pi$ denotes group probabilities. 

%When deciding the probability of the first group, a stick of length 1 is broken at the location determined by the Beta random variable $\left(\pi_{1}\right)$. The probability that each observation is assigned to the first cluster is set to be the length of the broken stick, $\pi_{1}$. Next, we break the remaining stick of length $1-\pi_{1}$ again at the place $\pi_{2}$ within the remaining stick. The length of the second broken stick $\pi_{2}\left(1-\pi_{1}\right)$ is used as the probability of each observation being assigned to the second cluster. After setting the assignment probability of the second cluster, we continue to break the remaining stick following the same procedure an unlimited number of times. 

Equation (\ref{main_text_pi}) is essential to understanding how the DP prior controls the number of groups. The building of group probabilities is compared to the breaking of a stick of unit length sequentially, in which the length of each break is assigned to the current value of $\pi_k$. As the number of groups increases, the probability created by the stochastic process decreases because the remaining stick becomes shorter with each break. In practice, the number of groups does not increase as fast as $N$ due to the characteristic of the stick-breaking process that leads the group probability to soon approach zero.

Although in principle we do not restrict the maximum number of groups and allow the number to rise as $N$ increases, a finite number of instances will only occupy a finite number of $K$ components. The concentration parameter $a$ in the prior of $\xi_k$ determines the degree of discretization -- the complexity of the mixture and, consequently, $K$, as also revealed in (\ref{eq:Polya_urn}). As $a \to 0$, the realizations are all concentrated at a single value, however as $a \to \infty$, the realizations become continuous-valued as its based distribution. Specifically, \citet{antoniak1974} derives the relationship between $a$ and the number of unique groups,
\begin{align*}
	E\left( K | a \right) \approx a \log \left(\frac{a+N}{a}\right) \quad \text { and } \quad \operatorname{Var}\left(K | a \right) \approx \alpha \left[ \log \left(\frac{\alpha+N}{\alpha}\right)-1 \right],
\end{align*}
that is, the expected number of unique groups is increasing in both $a$ and the number of units $N$. 

%{\color{blue} *** So, this a is really important. In Laura's work a is such that realizations are essentially continuous valued, whereas you like a to be small to get a small number of groups. (No, Laura uses Dirichlet process mixture prior, so continuous)}.

\citet{escobar1995} highlights the importance of specifying $a$ when imposing prior smoothness on an unknown density and demonstrates that the number of estimated groups under a DP prior is sensitive to $a$. This suggests that a data-driven estimate of $a$ is more reasonable. Moreover, \citet{ascolani2022} emphasizes the importance of introducing a prior for $a$ as it is crucial for learning the true number of groups as $N$ increases and hence establishing the posterior consistency. We define a gamma hyperprior for $a$ and update it based on the observed data in order to alter the smoothness level. This step generates a posterior estimate of $a$, which indirectly determines the number of groups $K$ without reestimating the models with different group sizes. Essentially, this represents ``automated" model selection.

%*** incorporates a hyperprior for the scale parameter and updates it via the observed data, hence let the data choose the complexity of the mixture approximation, which

%Since we are regarding $\alpha$ as random with a gamma distribution, our complete prior can be viewed as a mixture of Dirichlet processes \cite{antoniak1974}

%*** Dirichlet process prior allows mixture models to have a potentially infinite number of mixture components but lets a small number of components be occupied by observations through penalizing the number of occupied components. Smaller values lead to more penalization (a smaller number of clusters).

Collectively, we specify a DP prior for $\left(\alpha_{i}, \sigma^2_{i}\right)$. The DP prior is a mixture of an infinite number of possible point masses, which can be constructed through the stick-breaking process. The discretization of the underlying distribution is governed by the concentration parameter $a$. With a hyerprior on $a$, we permit the data to determine the number of groups $K$ present in the data, which can expand unboundedly along with the data.

\subsubsection{Prior on Group Partitions}  \label{subsec:partition}

In a formal Bayesian formulation, a prior distribution is specified to partition $\mathcal{B}$ with associated indices $G$. Despite the fact that DP prior does not specify this prior distribution explicitly, we can characterize it using the exchangeable partition probability function (EPPF) \citep{pitman1995}. As we briefly mentioned in the last subsection, the EPPF plays a significant role in connecting the prior belief on group structure to the DP prior, which is included as part of our proposed prior distribution in Equation (\ref{eq:prior_G_soft}). 

The EPPF characterizes the distribution of a partition $\mathcal{B} = \left\{B_{1}, B_{2}, \ldots, B_{K}\right\}$ induced by $G$. As the generic Dirichlet process assumes units are exchangeable, any permutation has no effect on the joint probability distribution of $G$; hence, the EPPF is determined entirely by the number of groups and the size of each group. \citet{pitman1995} demonstrates that the EPPF of the Dirichlet process has the closed form,
\begin{align} \label{eq:EPPF}
	p(G) = \frac{\Gamma(a)}{\Gamma(a+N)} a^{K} \prod_{k=1}^{K} \Gamma\left( |B_k| \right), %\; \propto \; a^{K} \prod_{k=1}^{K} \Gamma\left( |B_k| \right),
\end{align} 
where $a$ is the concentration parameter and $\Gamma(x) = (x-1)!$ denotes the Gamma function. Note that the partition $\mathcal{B}$ is conceived as a random object and hence the group number $K$ is not predetermined, but rather is a function of $G$, $K = K(G)$. 

\citet{sethuraman1994} and \citet{pitman1996} constructively show that group indices/partitions can be drawn from the EPPF for DP using the stick-breaking process defined in (\ref{main_text_pi}). As a result, the EPPF does not explicitly appear in the posterior analysis in the current setting so long as the priors for the stick lengths are included.

\subsubsection{Correlated Random Coefficients Model} \label{subsec:corr_rand_model}
As suggested by \citet{chamberlain1980}, allowing the individual effects to be correlated with the initial condition can eliminate the omitted variable bias. This subsection presents the first attempt to introduce dependence between grouped effects and covariates, under the presence of group structure in both heterogeneous slope coefficients and cross-sectional variance. The underlying theorems, such as posterior consistency, and the performance of the framework are left for future study. 

%We provide a prototype of the Gibbs sampler that works for correlated random coefficients model. 

We primarily follow the proposed framework in \citet{liu2022} and utilize Mixtures of Gaussian Linear Regressions (MGLRx) for the group-specific parameters. MGLRx prior is discussed in \citet{pati2013} and can be viewed as a Dirichlet Process Mixture (DPM) prior that takes the dependence of covariates into account. Notice that the correlated random coefficients model requires a DPM-type prior for $\alpha_i$ and $\sigma_i^2$. This is because $\alpha_i$ and $\sigma_i^2$ are assumed to be correlated with covariates of each individual, and as such, they are not identical within a group.

%\vspace{0.5cm}
%\noindent{\bf Model} \\
%Under correlated random coefficients model, we assume the grouped intercepts are dependent to covariates. We rewrite the model in equation (\ref{simple_model}) to distinguish intercepts from slope coefficients, which is easier to model the dependence:
%\begin{align} 
%	y_{it} &= [\alpha_{0, g_{i}} \; \alpha'_{1, g_{i}}] [1 \; x_{it}']' + \varepsilon_{i t} = \alpha_{0, g_{i}} + \alpha'_{1, g_{i}} x_{it} + \varepsilon_{i t}  \quad \varepsilon_{i t} \sim N \left(0, \sigma_{g_{i}}^{2}\right).
%\end{align}
%where $\alpha_{0, g_{i}}$ are group-specific intercepts, $ \alpha'_{1, g_{i}}$ are group-specific slope coefficients and $x_{i t}$ are now a $(p-1) \times 1$ vector of covariates, which contain lagged $y_{it}$ and other covariates. $\sigma^2_{g_{i}}$ are the group-specific variance that are independent to covariates.

%\vspace{0.5cm}
%\noindent{\bf Dirichlet Process Prior with Correlated Random Effects} \\
Following \citet{liu2022}, we first transform $\sigma_i^2$ and define $l_i=\log \frac{\bar{\sigma}^2\left(\sigma_i^2-\underline{\sigma}^2\right)}{\bar{\sigma}^2-\sigma_i^2}$, where $\underline{\sigma}^2\left(\bar{\sigma}^2\right)$ is some small (large) positive number. This transformation simplifies the prior for $\sigma_i^2$, which is now dependent on covariates, and ensures that a similar prior structures can be applied to both $\alpha_i$ and $l_i$.

In the correlated random coefficients model, the DPM prior for $\alpha_{i}$ or $\sigma_i^2$ is an infinite mixture of Normal densities with the probability density function:
\begin{align} \label{prior:a_CRE}
	\alpha_{i} \sim \sum_{k=1}^{\infty} \pi_k (w_{i0}) N \left(\mu^{\alpha}_{k} [1 \; w'_{i0}]', \; \Omega^{\alpha}_{k} \right), \\
	\sigma_i^2 \sim \sum_{k=1}^{\infty} \pi_k (w_{i0}) N \left(\mu^{\sigma}_{k} [1 \; w'_{i0}]', \; \Omega^{\sigma}_{k} \right), 
\end{align}
where $w_{i0} = [1,  y_{i0}, x_{i,0:T}]'$ is the conditioning set at period 0, which includes initial value of each unit $y_{i0}$, the initial values of predetermined variables, and the whole history of exogenous variables. Notice that $\alpha_{i}$ and $\sigma_i^2$ share the same set of group probabilities $\pi_k (w_{i0})$.

Similar but not identical to the DP prior, it is the component parameters $\left(\mu^{\alpha}_{k}, \Omega^{\alpha}_{k} \right)$ or $\left(\mu^{\sigma}_{k}, \Omega^{\sigma}_{k} \right)$ that are directly drawn from the base distribution $G_0$. $G_0$ is assumed to be a conjugate Matricvariate-Normal-Inverse-Wishart distribution.

%\text{ with } \left(\mu_{k}, \alpha_{1k}, \sigma_{k}^2\right) \sim B_{0}(\phi),

%\vspace{0.5cm}
%\noindent{\bf Probit Stick-Breaking Process} \\
The group probabilities are now characterized by a \textit{probit} stick-breaking process \citep{rodriguez2011},
\begin{align} \label{prior:grp_prob_CRE}
	\pi_k (w_{i0}) = \Phi \left(\zeta_{k}\left(w_{i 0}\right)\right) \prod_{j<k}\left(1-\Phi\left(\zeta_{j}\left(w_{i 0}\right)\right)\right),
\end{align}
where the stochastic function $\zeta_{k}$ is drawn from a Gaussian process, $\zeta_{k} \sim G P\left(0, V_{k}\right)$ for $k=1,2, \cdots$. The Gaussian process is assumed to have zero mean and the covariance function $V_{k}$. defined as follows,
\begin{align}
	V_{k}(x, x') = \tau_{v}\exp \left( - A_k \| x - x' \|^2_2 \right),
\end{align}
where $\tau_{v} \sim IG(\frac{\nu_{v}}{2}, \frac{\nu_{v}}{2})$ and $A_k$ has its own hyperprior, see details in \citet{pati2013}.

%Construction of the prior for $A_k$ is elaborated in Remark 5.12 of \citet{pati2013}. Particularly, they assume $A_k = c_k B_k$. $B^d_k$ ($d$-th power of $B_k$) follows an exponential distribution, i.e., $p(B^d_k) = \lambda \exp \left( - \lambda B^d_k\right)$ and
%\begin{align}
%	C_k = \left( k^{\frac{3d+2}{\eta}} \log k \right)^{-\frac{1}{d}},
%\end{align}
%with $\eta \in (0,1)$, $d > 0$ and $d\eta_1 \ge \dim (w_{i0})$ for some $\eta_1 \in (0,1)$. In practice, as \citet{liu2022} suggests, to ensure that $V_{k}(x, x')$ would not decay too fast to an identity matrix as $k$ increases, we can set $d$ to be very large, and $\eta$ to be smaller than but very close to 1. Then, $C_{k}$ would be close to $k^{-3}$ essentially.

%{\color{blue} ***Overall, this is too much of a review and it is not clear whether the various factoids are used later on. (reduce from 4.5 pages to 2 / 2.5 pages)}

% ------------------------------------------------------------------------------------------------------------------- %
\section{Posterior Analysis} \label{sec:post}

This section describes the procedure for analyzing posterior distributions for the baseline model described in (\ref{simple_model}) with the priors specified in Section \ref{sec:soft_constraint}. The joint posterior distribution of model parameters is
\begin{align*}\label{eq:posterior}
	& p(\alpha, \sigma^2, \Xi, a, G | Y, X, W, \phi) \\
	\propto \; & p(Y | X, \alpha, \sigma^2, G) p(\alpha, \sigma^2|\phi) p(G | \Xi) p(W | G) p(\Xi | a) p(a), \numberthis
\end{align*}
where $ p(Y | X, \alpha, \sigma^2, G)$ is the likelihood function given by equation (\ref{simple_model}) for an i.i.d. model conditional on group indices $G$, and $p(W | G)$ is the additional term of pairwise constraints with the form $p(W | G) = \prod_{i=1}^{N} \prod_{j=1}^{N}\exp\left(c W_{ij} \delta_{ij}\right)$.

\subsection{Posterior Sampling} \label{subsec:post}

Draws from the joint posterior distribution can be obtained by using blocked Gibbs sampling. The algorithm is derived from \citet{ishwaran2001} and \citet{walker2007}. Due to the use of a finite-dimensional prior and truncation, the method described in \citet{ishwaran2001} cannot truly address our demand for estimating the number of groups without a predetermined value or upper bound. We employ the slice sampler \citep{walker2007}, which is the exact block Gibbs sampler for the posterior computation in infinite-dimensional Dirichlet process models, modifying the block Gibbs sampler of \citet{ishwaran2001} to avoid truncation approximations. \citet{walker2007} augments the posterior distribution with a set of auxiliary variables consisting of i.i.d. standard uniform random variables, i.e., $u_i \stackrel{iid}{\sim} U(0,1)$ for $i = 1,2,..,N$. The augmented posterior is then represented as
\begin{align*}
	& p(\alpha, \sigma^2, \Xi, a, G, u | Y, X, W, \phi) \\
	\propto \; & p(Y | X, \alpha, \sigma^2, G) p(\alpha, \sigma^2|\phi) p(W | G) p(\Xi | a) p(a) \prod_{i} \mathbf{1}(u_i \le \pi_{g_i}). \numberthis
\end{align*}
where $\prod_{i} \mathbf{1}(u_i \le \pi_{g_i})$ is substituted for $p(G | \Xi)$ in the equation (\ref{eq:posterior}). 

To roughly see how slice sampling works, recall that the group probabilities are constructed in a sequential manner, following a stick-breaking procedure. The leftover of the stick after each break gets smaller and smaller. Given the finite number of units, we can always find the smallest $K^*$ such that for all groups $k \ge K^*$, the minimum of $u_i$ among all units is larger than $\pi_k$, which is bounded above by the length of the leftover after $k$ breaks, $1-\sum_{j = 1}^{k} \pi_j$. More formally,
\begin{align} \label{def:K_star}
	K^* = \min_k \left\{ u^* > 1 - \sum_{j=1}^k \pi_j \right\}, \text{ with } u^* = \min_{1 \le i \le N} u_i.
\end{align}
As a result, all units receive strictly zero probability of being assigned to any group $k = K^* + 1, K^* + 2,...,N$ since the indicator function $\mathbf{1}(u_i \le \pi_{k})$ is zero.

%{\color{blue} *** It's hard to get the intuition of the slice sampler from your description. (add a paragraph to show how it works)}

There are two advantages to incorporating the auxiliary variable $u$ into the model. First and foremost, $u$ directly determines the largest possible number of groups in each sampling iteration. This reduces the support of $G$ and $\boldsymbol{\Xi}$ to a finite space, enabling us to solve a problem of finite dimensions without truncation. Furthermore, $u$ have no effect on the joint posterior of other parameters because the original posterior can be restored by integrating out $u_i$ for $i = 1,2,...,N$.

The Gibbs sampler are used to simulate the joint posterior distribution of $\left(\alpha, \sigma^2, \Xi, u, a, G\right)$. We break this vector into blocks and sequentially sampling for each block conditional on the current draws for the other parameters and the data. The full conditional distributions for each block are easily derived using the conjugate priors specified in Section \ref{sec:model_prior}. 

For the group-specific parameters, we directly draw samples from their posterior densities as we adapt conjugate priors. The posterior inference with respect to $(\alpha, \sigma^2)$ becomes standard once we condition on the latent group indices $G$. It is essentially a Bayesian panel data regression for each group. The conditional posterior for the stick length $\Xi$ is a beta distribution given $G$, and hence direct sampling is possible. 

We follow \citet{walker2007} to derive the posterior of auxiliary variable $u$. As $u$ are standard uniformly distributed, the posterior is a uniform distribution defined on $\left(0, \pi_{g_i}\right)$, conditional on the group probabilities and group indices. In terms of the concentration parameter $a$, we use a 2-step procedure proposed by \citet{escobar1995}. Following their approach, we first draw a latent variable $\eta$ from $Beta(a+1, N)$. Then, given $\eta$ and number of groups $K^a$ in the current iteration, we directly draw $a$ from a mixture of two Gamma distribution.

It is worth noting that the steps for implementing the DP prior with or without soft pairwise constraints are the same for all parameter besides the group indices $G$. This is due to the fact that soft pairwise constraints only affect other parameters through the group indices. It is handy to sample group indices with soft pairwise constraints conditional on other parameters. The posterior probability of assigning unit $i$ to group $k$ includes additional term $p(W_i | G) = \prod_{j \ne i, g_j = k} \exp \left( 2 c W_{i j} \delta_{i j}\right) $ to rewards (penalizes) the abidance (violation) of constraints, 
\begin{align*} \label{post_G_soft}
	\bar{\pi}_k = p \left(g_{i} = k | \alpha, \sigma^2, G^{(i)}, u, Y, X, W\right) 
	= \frac{p\left(y_{i} | \alpha_{k}, \sigma_{k}^{2}, Y, X\right) \mathbf{1} \left(u_{i} \le \pi_{k} \right) p(W_i | G)}{\sum_{k' = 1}^{K^*} p\left(y_{i} | \alpha_{k'}, \sigma_{k'}^{2}, Y, X\right) \mathbf{1} \left(u_{i} \le \pi_{k'} \right) p(W_i | G)}, \numberthis
\end{align*}
where $K^*$ is the maximal number of groups after generating potential new group-specific slope coefficients and variance. We then draw the group index for unit $i$ from a multinomial distribution:
\begin{align*}
	g_i = k, \text{ with probability } \bar{\pi}_k. \numberthis
\end{align*}

%{\color{red} ***Is your point here that we are basically just computing a bunch of (un-normalized) probabilities and then sample from a multinomial distribution?}

Algorithm \ref{algo:RC_GH} below summarizes the algorithm for the proposed Gibbs sampling. For illustrative purposes, we focus primarily on the posterior densities of major parameters and omit details on step (\ref{algo:RC_GH:new_group}). In short, step (\ref{algo:RC_GH:new_group}) creates potential groups by sampling new $(\alpha_k, \sigma_k^2)$ from the prior if the latest $K^*$ based on newly drawn $u_i$ and $\pi_k$ is larger than previous $K^*$, which indicate the current iteration permits more groups. This Detailed derivations and explanation of each step are provided in Appendix \ref{appendix:post_soft}. 

\begin{algo}\label{algo:RC_GH}(Gibbs Sampler for Random Coefficients Model with Soft Pairwise Constraints)\\
	For each iteration $s = 1,2,..,N_{sim}$, \vspace{-0.25cm}
	\begin{enumerate}[(i)]
		\item Calculate number of active groups: $K^a = \max_{1 \le i \le N} g_i^{(s-1)}.$  \label{algo:RC_GH:new_K}
				
		\item Group-specific slope coefficients: draw $\alpha_k^{(s)}$ from $p \left(\alpha_k | \sigma^{2(s-1)}, G^{(s-1)},Y, X \right)$ for $k = 1,2,...,K^a$. 
		
		\item Group-specific variance: draw $\sigma_{k}^{2^{(s)}}$ from $p \left(\sigma_{k}^2 | \alpha^{(s)}, G^{(s-1)}, Y, X\right)$ for $k = 1,2,...,K^a$.
			
		\item Group ``stick length'': draw $\xi^{(s)}_k$ from $p \left(\xi_k | a^{(s-1)}, G^{(s-1)}\right)$ for $k = 1,2,...,K^a$ and update group probability in accordance to the stick-breaking procedure.
		
%		\item Label Switching.  \label{algo:RC_GH:new_label_switch}
				
		\item Auxiliary variable: draw $u^{(s)}_i$ from $p \left( u_i | \Xi^{(s)}, G^{(s-1)} \right)$ for $i = 1,2,...,N$ and calculate $u^* = \min_{1 \le i \le N} u_i$.
	
		\item DP concentration parameter: draw a latent variable $\eta$ from $\textit{Beta} (a+1, N)$ and draw $a^{(s)}$ from $p \left( a | \eta, K^a \right)$.
		
		\item Generate potential groups based on $u^*$ and find the maximal number of groups $K^*$. \label{algo:RC_GH:new_group}
		
		\item[(xi)] Group indices: draw $g_i$ from $p\left(g_{i} = k | \alpha^{(s)}, \sigma^{2(s)}, G^{(i)}, u, Y, X, W\right)$ for $i = 1,2,...,N$ and $k = 1,2,...,K^*$.  \label{algo:RC_GH:group}
	\end{enumerate}
\end{algo}

\subsection{Determining Partition} \label{subsec:det_partition}

In contrast to popular algorithms such as agglomerative hierarchical clustering or the \textit{KMeans} algorithm, which return a single clustering solution, Bayesian nonparametric models provide a posterior over the entire space of partitions, enabling the assessment of statistical properties, such as the uncertainty on the number of groups. 

However, when the group structure is part of the major conclusion of an empirical analysis, the point estimate of group structure becomes crucial. \citet{wade2018} discuss in detail an appropriate point estimate of the group partitioning based on the posterior draws. From the decision theory, the point estimate $G^{*}$ minimizes the posterior expected loss,
\begin{align*}
	G^{*}=\underset{\widehat{G}}{\operatorname{argmin}} \; \mathbb{E}\left[L(G, \widehat{G}) | Y \right] = \underset{\widehat{G}}{\operatorname{argmin}} \sum_{G} L(G, \widehat{G}) p \left(G | Y\right),
\end{align*}
where the loss function $ L(G, \widehat{G})$ is the variation of information by \citet{meilua2007}, which measures the amount of information lost and gained in changing from partition $G$ to $\widehat{G}$.\footnote{Another possible loss function is the 0-1 loss function $L(G, \widehat{G}) = \mathbf{1} (G = \widehat{G})$, which leads to $G^*$ being the posterior mode. This loss function is undesirable since it ignores similarity between two partitions. For instance, a partition that deviates from the truth in the allocation of only one unit is penalized the same as a partition that deviates from the truth in the allocation of numerous units. Furthermore, it is generally recognized that the mode may not accurately reflect the distribution's center.% , i.e., assigning units to the groups with the highest posterior probability. 
} The Variation of Information is based on the Shannon entropy $H(\cdot)$, and can be computed as
\begin{align*}
	\begin{aligned}
		\mathrm{VI}\left(G, \widehat{G}\right) &=-H(G)-H\left( \widehat{G} \right)+2 H\left(G \wedge \widehat{G}\right) \\
		&=\sum_{j=1}^K \frac{\lambda_j}{N} \log \left(\frac{\lambda_j}{N}\right)+\sum_{l=1}^{K^{\prime}} \frac{\lambda_l^{\prime}}{N} \log \left(\frac{\lambda_l^{\prime}}{N}\right)-2 \sum_{j=1}^K \sum_{l=1}^{K^{\prime}} \frac{\lambda_{j l}^{\wedge}}{N} \log \left(\frac{\lambda_{j l}^{\wedge}}{N}\right),
	\end{aligned}
\end{align*}
where $\log$ denotes $\log$ base 2, $\lambda_j=\left|B_j\right|$ is the cardinality of the group $j$, and $\lambda_{j l}^{\wedge}$ the size of blocks of the intersection $G \wedge \widehat{G}$ and hence the number of indices in block $j$ under partition $G$ and block $l$ under $\widehat{G}$.

\citet{wade2018} show that the optimal group partitioning can be identified based on the posterior similarity matrix,
\begin{align} \label{eq:VI_emp}
	g^* = \underset{\widehat{g}}{\operatorname{argmin}} \sum_{i=1}^{N} \log \left(\sum_{j=1}^{N} \mathbf{1}\left(\widehat{g}_{j}=\widehat{g}_{i}\right)\right)-2 \sum_{i=1}^{N} \log \left(\sum_{j=1}^{N} P \left(g_{j}=g_{i} | Y,X,W\right)  \mathbf{1}\left(\widehat{g}_{j}=\widehat{g}_{i}\right)\right)
\end{align}
where $P \left(g_{j}=g_{i} | Y,X,W\right)$ is the $(i,j)$ entry of the posterior similarity matrix. We refer to \citet{wade2018} for additional properties and empirical evaluations.

%\begin{remark}
%	{\color{red} *** Cut by 50\% and move to footnote?} Another way to find group memberships is to assign units to the groups with the highest posterior probability. This is equivalent to using the 0-1 loss function $L(G, \widehat{G}) = \mathbf{1} (G = \widehat{G})$ and the resulting point estimate is the posterior mode. Though it looks natural, this is in fact flawed. A partition that deviates from the truth in the allocation of only one observation is penalized the same as a partition that deviates from the truth in the allocation of numerous observations, making this loss function undesirable since it ignores similarity between two partitions. Furthermore, it is generally recognized that the mode may not accurately reflect the distribution's center. As a result, we don't go in that direction.
%\end{remark}

% ------------------------------------------------------------------------------------------------------------------- %
\subsection{Connection to Constrained KMeans Algorithm} \label{sec:connect_kmeans}

The procedure of Gibbs sampling with soft constraints in Algorithm \ref{algo:RC_GH} is closely related to constrained clustering in the computer science literature. In this parallel literature, constrained clustering refers to the process of introducing prior knowledge to guide a clustering algorithm. For a subset of the data, the prior knowledge takes the form of constraints that supplement the information derived from the data via a distance metric. As we shall see below, under several simplifying assumptions, our framework could be reduced to a deterministic method for estimating group heterogeneity using a constrained \textit{KMeans} algorithm. Though this deterministic method may address the practical issues in BM, it only works for certain restricted models and hence is not as general as our proposed framework.

%Indeed, we will demonstrate the connection between our framework and relevant clustering algorithm in section \ref{sec:connect_kmeans}. In the end, we aim to seek a solution that balances information derived from data with information derived from researchers' constraints. This framework can also be useful in collections where data points have a clear grouping and traditional algorithms are biased toward that grouping, which researchers find uninteresting.

We start with a brief review of the Pairwise Constrained KMeans (\textit{PC-KMeans}) clustering algorithm by \citet{basu2004}, which is a well-known clustering algorithm in the field of semi-supervised machine learning.  It's a pairwise constrained variant of the standard \textit{KMeans} algorithm in which an augmented objective function is used in the assignment step. Given a collection of observations $\left(y_{1}, y_{2}, \ldots, y_{N}\right)$, a set of positive-link constraints $\mathcal{P}$, a set of negative-link constraints $\mathcal{N}$, the cost of violating constraints $w = \{w^p_{i j}, w^n_{i j} \}$ and the number of groups $K$, the \textit{PC-KMeans} algorithm divides $N$ observations into $K$ groups (the \textit{assignment} step) so as to minimize the following objective function,
\begin{align}\label{eq:obj_pckmeans_main}
	& \underbrace{ \frac{1}{2}\sum_{k=1}^{K} \sum_{i \in B_k} \left\|y_i - \mu_{k}\right\|^{2}}_{\text{within-cluster sum of squares}} + \underbrace{\sum_{\left(i,j\right) \in \mathcal{P}} w^p_{ij} \mathbf{1}\left( g_{i} \neq g_j\right) + \sum_{\left(i,j\right) \in \mathcal{N}} w^n_{i j} \mathbf{1}\left( g_{i}=g_j\right) }_{\text{cost of violation}},
\end{align}
where $\mu_{k}$ is the centroid of group $k$, i.e., $\mu_{k}=\frac{1}{\left|B_k\right|} \sum_{i \in B_k} y_i$, $B_k$ is the set of units assigned to group $k$, and $\left|B_k\right|$ is the size of group $k$. The first part is the objective function for the conventional \textit{KMeans} algorithm, while the second part accounts for the incurred cost of violating either PL constraints ($w^p_{i j}$) or NL constraints ($w^n_{i j}$).

Similar to \textit{KMeans},  \textit{PC-KMeans} alternates between reassigning units to groups and recomputing the means. In the assignment step, it determines a disjoint $K$ partitioning that minimizes  (\ref{eq:obj_pckmeans_main}). Then the update step of the algorithm recalculates centroids of observations assigned to each cluster and updates $\mu_{k}$ for all $k$.

By applying asymptotics to the variance of distributions within the model, we demonstrate linkages between the posterior sampler of our constrained BGFE estimator and \textit{KMeans}-type algorithms in Theorem \ref{thm:pc_kmean}. We investigate small-variance asymptotics for posterior densities, motivated by the asymptotic connection between the Gibbs sampling algorithm for the Dirichlet process mixture model and \textit{KMeans} \cite{kulis2011}, and demonstrate that the Gibbs sampling algorithm for the CBG estimator with soft constraints encompasses the constrained clustering algorithm \textit{PC-KMean} in the limit.

%{\color{blue} ***As was pointed out in the presentation, you should emphasize that your method is more general than constrained Kmeans and only reduces to it when you impose all kinds of restrictions. Overall, this works in your favor. (added two sentences in the first paragraphs)}

\begin{theorem} \label{thm:pc_kmean} {\normalfont (Equivalency between BGFE with Soft Constraints and \textit{PC-KMeans})} \\
	If the following conditions hold,
	\begin{enumerate}[(i)]
		\item group pattern is in fixed-effects but not in slope coefficients, i.e., $x_{it} = 1$. Other covariates might be introduced, but they cannot have grouped effects on $y_{it}$; \label{soft_as0}
		\item The number of group is fixed at $K$; \label{soft_as1}
		\item Homoscedasticity: $\sigma^2_k = \sigma^2$ for all $k = 1,2,...,K$; \label{soft_as2}
		\item Constraint weights is scaled by the variance of errors: $W_{ij} \to W_{ij} / \sigma^2$; \label{soft_as3}
		%		\item Marginalized over auxiliary variable $u_i$
	\end{enumerate}
	then the proposed Gibbs sampling algorithm for the BGFE estimator with soft constraint embodies the \textit{PC-KMeans} clustering algorithm in the limit as $\sigma^2 \to 0$. In particular, the posterior draw of group indices $G$ is the solution to the \textit{PC-KMeans} algorithm.
\end{theorem}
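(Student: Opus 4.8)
The plan is to start from the group-index full conditional (\ref{post_G_soft}) and take the small-variance limit $\sigma^2\to0$ under conditions (i)--(iv), showing that the random multinomial draw for $g_i$ collapses onto the deterministic $\argmin$ that defines the assignment step of \textit{PC-KMeans}. First I would simplify the data-fit term. Under (i) we have $x_{it}=1$, so $\alpha_k$ is a scalar group intercept, and under (iii) $\sigma_k^2=\sigma^2$, giving $\log p(\boldsymbol{y_i}\mid\alpha_k,\sigma^2)=\text{const}-\frac{1}{2\sigma^2}\sum_{t=1}^T(y_{it}-\alpha_k)^2$. Writing $\bar y_i=\frac1T\sum_t y_{it}$ and using $\sum_t(y_{it}-\alpha_k)^2=\sum_t(y_{it}-\bar y_i)^2+T(\bar y_i-\alpha_k)^2$, the within-unit term is constant across $k$, so the assignment is governed by the squared distance $T(\bar y_i-\alpha_k)^2$ between the unit mean and the group centroid $\alpha_k$ -- exactly the \textit{KMeans} distance on the unit means.

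Next I would take logs of the unnormalized assignment weight in (\ref{post_G_soft}) and bookkeep orders in $\sigma^2$. Under (ii) the number of groups is fixed at $K$, so the slice-sampler indicator $\mathbf{1}(u_i\le\pi_k)$ and the stick-breaking weights are not needed; together with the EPPF factor and $\log\pi_k$, all such terms are $O(1)$. The constraint factor $\log p(W_i\mid G)=2c\sum_{j:g_j=k}W_{ij}$ is, by contrast, promoted under the scaling in (iv), $W_{ij}\to W_{ij}/\sigma^2$, to $\frac{2c}{\sigma^2}\sum_{j:g_j=k}W_{ij}$, which is $O(1/\sigma^2)$ -- the same order as the data fit. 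Hence, up to an additive constant, the log-weight for $g_i=k$ equals
\[ -\frac{1}{2\sigma^2}\left[\,\sum_{t=1}^T(y_{it}-\alpha_k)^2-4c\sum_{j:g_j=k}W_{ij}\,\right]+O(1), \]
and as $\sigma^2\to0$ the normalized multinomial probabilities $\bar\pi_k$ concentrate on the $k$ minimizing the bracket, so the posterior draw of $g_i$ converges to that $\argmin$.

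The remaining step is to identify this bracket with the \textit{PC-KMeans} objective (\ref{eq:obj_pckmeans_main}). Since $W_{ij}=T_{ij}\ln(\psi_{ij}/(1-\psi_{ij}))$ from (\ref{eq:weight_construction}) with $T_{ij}=+1$ for positive-link and $T_{ij}=-1$ for negative-link pairs, the penalty $-4c\sum_{j:g_j=k}W_{ij}$ lowers the objective when $i$ joins the group of a PL partner and raises it when $i$ joins the group of an NL partner. After a common positive rescaling this reproduces $\sum_{(i,j)\in\mathcal P}w^p_{ij}\mathbf{1}(g_i\ne g_j)+\sum_{(i,j)\in\mathcal N}w^n_{ij}\mathbf{1}(g_i=g_j)$ with the identification $w^p_{ij}=w^n_{ij}=2c\ln(\psi_{ij}/(1-\psi_{ij}))$ (the exact constant depending on the factor-of-two convention in $p(W_i\mid G)$). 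To close the correspondence I would note that drawing $\alpha_k$ from its conjugate Normal posterior, whose variance is $O(\sigma^2)\to0$ and whose mean is the group mean $\frac{1}{T|B_k|}\sum_{i\in B_k}\sum_t y_{it}$, degenerates to the \textit{KMeans} centroid update; alternating this with the degenerate assignment draw is precisely one iteration of \textit{PC-KMeans}.

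The main obstacle will be making the limit rigorous rather than heuristic: I must show that the normalized weights $\bar\pi_k$ converge to a point mass on the $\argmin$, which requires controlling every $O(1)$ contribution ($\log\pi_k$, the EPPF, the auxiliary indicator) uniformly in $k$ and invoking a genericity (no-tie) assumption so the minimizer is unique. The conceptual crux is that the joint scaling in (iv) is exactly what keeps the constraint penalty at the same $O(1/\sigma^2)$ order as the likelihood, so that these two terms, and only these two, survive; without (iv) the constraints would be $O(1)$ and wash out, leaving plain \textit{KMeans}. The accompanying algebra -- dropping $k$-independent constants and pinning down the constant relating $(w^p,w^n)$ to $(c,\psi)$ -- is delicate but routine once this order-of-magnitude structure is in place.
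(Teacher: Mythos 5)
Your proposal is correct and follows essentially the same route as the paper's proof: small-variance asymptotics in which the rescaled constraint penalty under condition (iv) stays at the same $O(1/\sigma^2)$ order as the Gaussian log-likelihood, all prior contributions ($\log\pi_k$, EPPF, slice indicators) remain $O(1)$ and wash out, the assignment draw degenerates onto the \textit{PC-KMeans} argmin, and the conjugate posterior for $\alpha_k$ collapses to the group sample mean (the centroid update). The only cosmetic differences are that the paper argues via the joint conditional of the full partition $G$ concentrating on the global minimizer $G^*$ (by showing the log-posterior gap $l(G^*)-l(G)=V/\sigma^2+O(1)\to+\infty$) rather than coordinate-wise, and it does not perform your within-unit variance decomposition, but neither changes the substance of the argument.
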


We return to the world of grouped fixed-effects models. In fact, the clustering algorithm is essential for BM and \citet{bonhomme2022}, who use the \textit{KMeans} algorithm to reveal the group pattern in the fixed-effects. With the theorem described above, it motivates a \textit{constrained} version of BM's GFE estimator. We show that it is straightforward to incorporate prior knowledge in the form of soft paired restrictions into the GFE estimator. The \textit{soft pairwise constrained} grouped fixed-effects (SPC-GFE) estimator is defined as the solution to the following minimization problem given the number of groups $K$:
{\small
\begin{align} \label{eq_pcgfe}
	\left( \widehat{\theta}, \widehat{\alpha}, \widehat{G} \right) = \underset{\theta, \alpha, G}{\operatorname{argmin}} \sum_{i=1}^{N} \sum_{t=1}^{T}\left(y_{i t} - x'_{it} \theta  - \alpha_{g_{i} t}\right)^{2}  + c \left[\sum_{(i,j) \in \mathcal{P}} w^p_{i j} \mathbf{1} \left(g_i \ne g_j\right) +  \sum_{(i,j) \in \mathcal{N}}   w^n_{i j} \mathbf{1} \left(g_i = g_j\right) \right],
\end{align}
}where the minimum is taken over all possible partitions $G$ of the $N$ units into $K$ groups, common parameters $\theta$, and group-specific time effects $\alpha$. $w^p_{i j}$ and $w^n_{i j}$ are the user-specified costs on PL and NL constraints.

For given values of $\theta$ and $\alpha$, the optimal group assignment for each individual unit is
{\small
\begin{align} \label{eq_pcgfe_g}
	\widehat{g}_{i}(\theta, \alpha)  = \underset{g \in\{1, \ldots, K\}}{\operatorname{argmin}} \sum_{i=1}^{N} \sum_{t=1}^{T}\left(y_{i t} - x'_{it} \theta  - \alpha_{g_{i} t}\right)^{2}  + c \left[\sum_{(i,j) \in \mathcal{P}} w^p_{i j} \mathbf{1} \left(g_i \ne g_j\right) +  \sum_{(i,j) \in \mathcal{N}}   w^n_{i j} \mathbf{1} \left(g_i = g_j\right) \right],
\end{align}
}where we essentially apply the \textit{PC-KMeans} algorithm to get the group partition. The SPC-GFE estimator of $(\theta, \alpha)$ in (\ref{eq_pcgfe}) can then be written as
\begin{align} \label{eq_pcgfe_para}
	(\widehat{\theta}, \widehat{\alpha})=\underset{\theta, \alpha}{\operatorname{argmin}} \sum_{i=1}^{N} \sum_{t=1}^{T}\left(y_{i t} - x'_{it} \theta  - \alpha_{\widehat{g}_{i} t}\right)^{2},
\end{align}
where $\widehat{g}_{i} = \widehat{g}_{i}(\theta, \alpha)$ is given by (\ref{eq_pcgfe_g}). $\theta$ and $\alpha$ are computed using an OLS regression that controls for interactions of group indices and time dummies. The SPC-GFE estimate of $g_{i}$ is then simply $\widehat{g}_{i}(\widehat{\theta}, \widehat{\alpha})$.

\begin{remark}
	While the SPC-GFE estimator implements soft constraints, it still requires a predetermined number of group $K$ and model selection.
\end{remark}

%====================================================================================================
% Theorem
%====================================================================================================

%====================================================================================================
%====================================================================================================
% Empirical results
%====================================================================================================
%====================================================================================================

\section{Empirical Applications} \label{sec:emp_result}

We apply our panel forecasting methods to the following two empirical applications: inflation of the U.S. CPI sub-indices and the income effect on democracy. The first application focuses mostly on predictive performance, whereas the second application focuses primarily on parameter estimation and group structure.

%{\color{blue} *** It would be important for folks who want to use your methods to discuss the various hyperparameter and tuning choices step by step. This should be done somewhere in a separate subsection.}

%{\color{red} *** I think you should structure the section differently. First go through the estimation carefully, and once that's done, talk about the forecasting including evaluation criteria and competitors. In choosing your benchmark models maybe draw from the lessons in the Faust and Wright handbook chapter. There is obviously a large literature on forecasting inflation.}

%{\color{blue} ***After going through the exposition of the prior, I would want to learn how to estimate the model in practice rather than first having to read about all kind of forecast evaluation statistics.}
 
\subsection{Model Estimation and Measures of Forecasting Performance}

To accommodate richer assumptions on model, we use variants of the baseline model in Equation (\ref{simple_model}) in this section, either by adding common regressors or allowing for time-variation in the intercept. We use the conjugate prior for all parameters, see details in Appendix \ref{appendix:prior}.

\subsubsection{Specification of Constraints} \label{app:constraints}

In both applications, the prior knowledge of the latent group structure or the pre-grouping structure covers all units. In the first application, CPI sub-indices can be clustered by expenditure category, whereas countries in the second application may be grouped according to their geographic regions. We build positive-link and negative-link constraints given the prior knowledge: all units within the same group are presumed to be positive-linked, while units from different categories are believed to be negative-linked. In terms of the accuracy of constraints, $\psi^{PL}_{ij}$ and $\psi^{NL}_{ij}$ are equal for all constraints with the same type, following the strategy described in Section \ref{subsec:specify_constraints}. In the applications below, we fix $\psi^{PL}_{ij} = 0.65$ and $\psi^{NL}_{ij} = 0.55$ to reflect the belief that PL constraints (attracting forces) play slightly more important role than the NL constraints (repelling forces) and NL constraints cannot be ignored. Finally, we construct weights $W_{ij}$ using (\ref{eq:weight_construction}). Notice that these assumptions on prior and hyperparameters are an example to showcase how the proposed framework works with real data. In practice, we may specify constraints for a subset of units with different levels of weights, either in a data-driven manner (for instance, highly correlated units may fall into the same group with a high level of confidence) or in a model-based manner (i.e., $W$ is a function of covariates). 

%The official expenditure categories are used to build PL and NL constraints: all units within the same categories are presumed to be positive-linked, while units from different categories are believed to be negative-linked. We specify equal accuracy for all PL and NL constraints, i.e., $\psi^{PL}_{ij} = 0.65$ and $\psi^{NL}_{ij} = 0.6$, 

\subsubsection{Determining the Scaling Constant $c$} \label{app:det_c}

%*** However, given the heavy computation involved with calibrating the CP prior (i.e., tracing out the values of $\psi$ corresponding to different combinations of $q$ and $\left.\delta^{*}\right)$, I wonder what the trade-off is between investing time to get the prior "exactly right" vs. letting $\psi$ be an estimated model parameter? Are there significant computational challenges associated with adding a step to the sampler which, say, cycles through a grid of possible $\psi$ values? Within the context of the paper's empirical application, integrating over the uncertainty in $\psi$ should lead to improved estimates of the regression coefficients and could even help guard against misspecification of $c_{0}$.

Given that the dimension of the space of group partitions increases exponentially with the number of units $N$, attention must be given while selecting $c$ across analyses with different $N$. As suggested by \citet{paganin2021}, calibrating the modified prior is computationally intensive. We are facing a trade-off between investing time to get the prior ``exactly right" and letting the constant $c$ be an estimated model parameter. As such, we propose to find the optimal $c$ that maximizes marginal data density using grid search.  

In the Monte Carlo simulation, the value of $c$ is fixed for simplicity, but in the empirical applications, $c$ is determined by marginal data density (MDD). We calculate MDD using the harmonic mean estimator \citep{newton1994}, which defined as
\begin{align}
	\hat{m}^{H M}(y) = \left[ \frac{1}{S} \sum_{j=1}^{S} \frac{1}{p\left(y | \theta^{(j)}\right)} \right]^{-1},
\end{align}
given a sample $\theta^{(j)}$ from the posterior $p(\theta | y)$. The simplicity of the harmonic mean estimator is its main advantage over other more specialized techniques. It uses only within-model posterior samples and likelihood evaluations, which are often available anyway as part of posterior sampling. We finally choose the optimal value for $c$ that maximizes MDD.

\subsubsection{Estimators}

%\noindent{\bf Estimators:} 
We consider six estimators in the section. The first three estimators are our proposed Bayesian grouped fixed-effects (BGFE) estimator with different assumptions on cross-sectional variance and pairwise constraints. The last three estimator ignore the group structure. 

\begin{enumerate}[(i)]
		
	\item \textit{BGFE-he-cstr}: group-specific slope coefficients and heteroskedasticity \textit{with} constraints. 
	
	\item \textit{BGFE-he}: group-specific slope coefficients and heteroskedasticity \textit{without} constraints. 
	
	\item \textit{BGFE-ho}: homoskedastic version of \textit{BGFE-he}.
	
	\item \textit{Pooled OLS}: fully homogeneous estimator
	% that views $\alpha_i$ as a common parameter and ignore heteroskedasticity.
	
	\item \textit{AR-he}: flat-prior estimator that assumes $p(\alpha_i) \; \propto \; 1$ corresponds to standard AR model with additional regressor $u_t$ in this environment. 
	
	\item \textit{AR-he-PC}: \textit{AR-he} with the lagged value of the first principal component as additional regressor. 
	
\end{enumerate}

In the first application, we focus on inflation forecasting. For the most recent advances in this topic, \citet{faust2013} provide a comprehensive overview of a large set of traditional and recently developed forecasting methods. Among many candidate methods, we choose the AR model as the benchmark and exclusively include it as an alternative estimator in this exercise. This is because the AR is relatively hard to beat and, notably, other popular methods, such as the Atkeson–Ohanian version random walk model \citep{atkeson2001}, UCSV \citep{stock2007}, and TVP-VAR \citep{primiceri2005}, generally do as reasonably well as the AR model, according to \citet{faust2013}.

%\subsection{Posterior Predictive Densities and Performance Evaluation}

\subsubsection{Posterior Predictive Densities}

%{\color{red} *** You should explain the role that the estimation of the heterogeneous coefficients plays for forecasting.}

% from Liu, Moon, and Schorfheide (WP)
We generate one-step ahead forecasts of $y_{i, T+1}$ for $i = 1,...,N$ conditional on the history of observations
\begin{align*}
	Y &= [y_1, y_2, ..., y_N ], y_i = [y_{i1},y_{i2},...,y_{iT}]', \\
	X &= [x_1, x_2, ..., x_N ], x_i = [x_{i1},x_{i2},...,x_{iT}]',
\end{align*}
and newly available variables $x_{i T+1}$ at $T+1$. %For illustration purpose, we drop $X$ and $x_{i T+1}$ from notations but we always condition on these variables.

The posterior predictive distribution for unit $i$ is given by
\begin{align}
	p(y_{i T+1} | Y, X) = \int p(y_{i T+1} | Y, X, \Theta) p(\Theta | Y, X) d \Theta,
\end{align}
where $\Theta$ is a vector of parameters $\Theta = \left( \alpha_{g_i}, \sigma^2_{g_i}, g_i \right)$. This density is the posterior expectation of the following function:
\begin{align}
	p(y_{i T+1} | Y, X, \Theta) = \sum_{k=1}^{K(G)} \mathbf{1}(g_i = k) p\left(y_{i T+1}| Y,  X, \Theta \right),
\end{align}
which is invariant to relabeling the components of the mixture and $K(G)$ is the number of groups in $G$. Given $S$ posterior draws, the posterior predictive distribution estimated from the MCMC draws is
\begin{align}
	\hat{p}(y_{i T+1} | Y,X) = \frac{1}{S} \sum_{j=1}^{S} \left[ \sum_{k=1}^{K^{(j)}(G)} \mathbf{1}(g_i = k) p\left(y_{i T+1}| Y, X, \Theta^{(j)} \right) \right].
\end{align}
where
\begin{align}
	p\left(y_{i T+1}| Y, X, \Theta^{(j)} \right) = \phi \left(y_{i T+1}; \alpha_{g_i}^{(j)'} x_{it+1} + \gamma^{(j)} z_{it+1}, \sigma^{(j)^2}_{g_i} \right).
\end{align}

We can therefore draw samples from $\hat{p}(y_{i T+1} | Y,X)$ by simulating (\ref{simple_model}) forward conditional on the posterior draws of $\Theta $ and observations. Note that MCMC exhibits the true Bayesian predictive distribution, implicitly integrating over the entire underlying parameter space.

\subsubsection{Point Forecasts}
% from Liu, Moon, and Schorfheide (WP)
We evaluate the point forecasts via the real-time recursive out-of-sample Root Mean Squared Forecast Error (RMSFE) under the quadratic compound loss function averaged across units. Let $\hat{y}_{i T+1 | T} $ represent the predicted value conditional on the observed data up to period $T$, the loss function is written as
\begin{align}
	L\left(\widehat{y}_{1:N, T+1| T}, y_{1: N, T+1}\right) = \frac{1}{N} \sum_{i=1}^{N} \left(\hat{y}_{i T+1 | T} - y_{i T+1}\right)^{2} =  \frac{1}{N} \sum_{i=1}^{N} \hat{\varepsilon}_{i T+1 | T}^2,
\end{align}
where $y_{i, T+1}$ is the realization at $T+1$ and $\hat{\varepsilon}_{i T+1 | T}$ denote the forecast error.

The optimal posterior forecast under quadratic loss function is obtain by minimizing the posterior risk,
\begin{align*}
	\hat{y}_{1:N, T+1 | T} & = \argmin_{\hat{y}  \in \mathbb{R}^N} \int_{-\infty}^{\infty} L\left(\hat{y} , y_{1: N, T+1}\right) p(y_{1: N, T+1} | Y,X) d y_{1: N, T+1} \\
	& = \argmin_{\hat{y} \in \mathbb{R}^N} \frac{1}{N} \sum_{i=1}^{N} E \left[\left(\hat{y} - y_{i T+1}\right)^{2} | Y,X \right]. \numberthis
\end{align*}
This implies optimal posterior forecast is the posterior mean,
\begin{align}
	\hat{y}_{i, T+1 | T}  = E \left(y_{i T+1} | Y,X \right), \text{ for } i=1, \ldots, N.
\end{align}

Conditional on posterior draws of parameters, the mean forecast can be approximated by the Monte Carlo averaging,
\begin{align}
	\hat{y}_{i, T+1 | T}  \approx \frac{1}{S}\sum_{j=1}^{S} \hat{y}_{i T+1 | T}^{(j)} = \frac{1}{S}\sum_{j=1}^{S} \hat{\alpha}^{(j)'}_{g_i} x_{i T+1}.
\end{align}

Finally, the RMSFE across units is given by
\begin{align}
	RMSFE_{T+1} = \sqrt{ \frac{1}{N} \sum_{i=1}^{N} \left(y_{i, T+1} -\hat{y}_{i, T+1 | T}\right)^2 }.
\end{align}

%\subsubsection{Set Forecasts}
%We construct set forecasts $CS_{i T+1}$ from the posterior predictive distribution of each unit. In particular, we adopt a Bayesian approach and report the highest posterior density interval (HPDI), which is the narrowest connected interval with coverage probability of $1-\alpha$. In other words, it requires that the probability of $y_{i T+1} \in CS_{iT+1}$ conditional on having observed the history $Y$ be at least $1-\alpha$, i.e.,
%\begin{align}
%	P ( y_{i T+1} \in CS_{i T+1} ) \geq 1-\alpha, \quad \text {for all } i,
%\end{align}
%and this interval is the shortest among all possible single connected candidate sets. Let $\delta^{l}$ be the lower bound and $\delta_{u}$ be the upper bound, then $CS_{i T+1} = \left[\delta_{i}^{l}, \delta_{i}^{u}\right]$.
%
%The assessment of set forecasts in simulation studies and empirical applications is based on two metrics: (1) the cross-sectional coverage frequency,
%\begin{align}
%	Cov_{T+1} = \frac{1}{N} \sum_{i=1}^{N} \mathbf{1} \left( y_{i T+1} \in CS_{i T+1} \right),
%\end{align}
%and (2) the average length of the sets $C_{i T+1}$,
%\begin{align}
%	AvgL_{T+1} = \frac{1}{N} \sum_{i=1}^{N} \left(\delta_{i}^{u} - \delta_{i}^{l} \right).
%\end{align}

\subsubsection{Density Forecasts}
To compare the performance of density forecasts for various estimators, we report the average log predictive scores (LPS) to assess the performance of the density forecast from the view of the probability distribution function. As suggested in \citet{geweke2010}, the LPS for a panel reads as,
\begin{align}
	LPS_{T+1}  =  &  - \frac{1}{N} \sum_{i=1}^{N} \ln \int p\left(y_{i T+1} | Y, X, \Theta\right) p(\Theta | Y, X) d \Theta,
\end{align}
where the expectation can be approximated using posterior draws:
\begin{align}
	\int p\left(y_{i T+1} | Y, X, \Theta\right) p(\Theta | Y, X) d \Theta & \approx \frac{1}{S} \sum_{j = 1} ^{S} p \left(y_{i T+1} | Y, X, \Theta^{(j)} \right).
\end{align}

The following results are also robust to other metrics such as the continuous ranked probability score \citep{matheson1976, hersbach2000}.

\subsection{Inflation of the U.S. CPI Sub-Indices} \label{sec:emp_result_cpi}
%\citet{hubrich2005}: aggregate individual forecasts not necessarily.

Policymakers and market participants are very interested in the abilities to reliably predict the future disaggregated inflation rate. Central banks predict future inflation trends to justify interest rate decisions, control and maintain inflation around their targets. The Federal Reserve Board forecasts disaggregated price categories for short-term inflation forecasting \citep{bernanke2007}. They rely primarily on the bottom-up approach that focuses on estimating and forecasting price behavior for the various categories of goods and services that make up the aggregate price index. Moreover, investors in fixed-income markets in the private sector wish to forecast future sectoral inflation in order to anticipate future trends in discounted real returns. Some private firms also need to predict specific inflation components in order to forecast price dynamics and reduce risks accordingly.

In this section, we demonstrate the use of constrained BGFE estimators with prior knowledge on the group pattern to forecast inflation rates for the sub-indices of U.S. Consumer Expenditure Index (CPI). We focus primarily on the one-step ahead point and density forecast. Due to space constraints, we only report the group partitioning for the most recent month in the main text. 

\subsubsection{Model Specification and Data}

%\vspace{0.5cm}

\noindent{\bf Model:} We start by exploring the out-of-sample forecast performance of a simple, generic Phillips curve model. It is a panel autoregressive distributed lag (ADL) model with a group pattern in the intercept, coefficients, as well as error variance. The model is given by
\begin{align}
	y_{it+1} = & \; \alpha_{g_{i}} + \sum_{j=0}^{p-1} \rho_{g_{i, j}} y_{it-j} +  \beta_{g_i} u_{t} + \varepsilon_{it+1}, \quad \varepsilon_{it+1} \sim N (0, \sigma_{g_i}^2).
\end{align}
where $y_{it}$ is year-over-year inflation rate, i.e., $y_{it} = \log(\text{price}_{it} / \text{price}_{it-12})$, and $u_t$ is the slack measure for the labor market, the unemployment gap. We fix $p$ at 3 because the benchmark AR model would have the best predictive performance.

%Using the same lag order across all the variables and countries help reduce the possible adverse effects of data mining that could accompany the use of country and variable specific lag order selection procedures such as Akaike or Schwarz criteria. 

\vspace{0.5cm}

\noindent{\bf Data:} 
We collect the sub-indices of CPI for all urban consumer (CPI-U) that include food and energy. The raw data is obtained from the U.S. Bureau of Labor Statistics (BLS),  which is recorded on a monthly basis from January 1947 to August 2022. The CPI-U is a hierarchical composite index system that partitions all consumer goods and services into a hierarchy of increasingly detailed categories. It consists of eight major expenditure categories (1) Apparel; (2) Education and Communication; (3) Food and Beverages; (4) Housing; (5) Medical Care; (6) Recreation; (7) Transportation; (8) Other Goods and Services. Each category is composed of finer and finer sub-indexes until the most detailed levels or “leaves” are reached. This hierarchical structure can be represented as a tree structure, as shown in Figure \ref{fig:CPI_tree}. It is important to note that the parent series and its child series may be highly correlated and readily form a group due to the fact that parent series are generated from child series. For instance, the \textit{Energy Services} is expected to correlated with its child series \textit{Utility gas service} and \textit{Electricity}. Due to our focus on group structure, it is vital to eliminate all parent series in order to prevent not just double-counting but also dubious grouping results. More details regarding the data are provided in Appendix \ref{appendix:data_cpi}.

\begin{figure}[h]
	\centering
	\caption{Hierarchical Structure of CPI} \label{fig:CPI_tree}
	\begin{center}
		\resizebox*{.99\linewidth}{!}{%
			\begin{forest}
				forked edges,
				for tree={
					grow = east,
					align = left,
					%					draw,
					font=\sffamily,
				}
				[All Items
				[Housing 
				[Shelter
				[Rent of primary residence]
				[Owners' equivalent rent of residences]
				]
				[Fuels and utilities
				[Water sewer and trash]
				[Household energy
				[Fuel oil and other fuels
				[Fuel oil]
				[Propane kerosene and firewood]]
				[Energy services
				[Electricity]
				[Utility gas service]]]
				]
				]
				[Transportation
				[...]
				]
				[Food and beverages
				[...]
				]
				]
			\end{forest}
		}
	\end{center}
\end{figure}
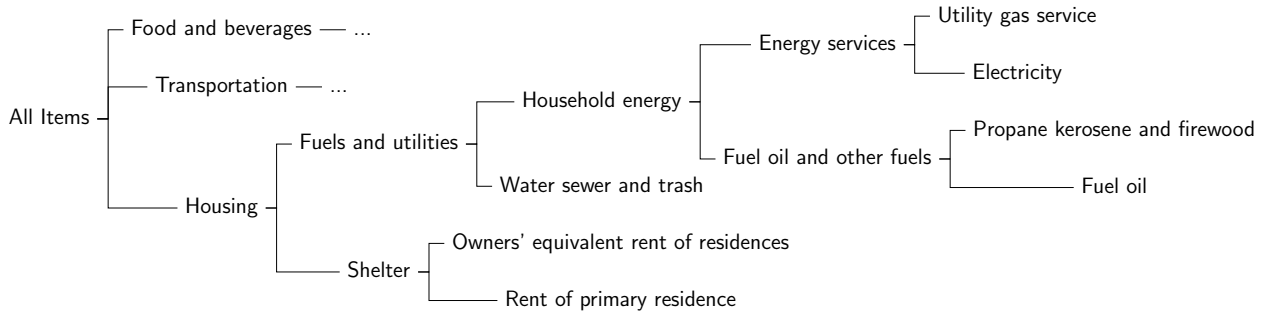

\noindent{\bf Pre-grouping:} The official expenditure categories are used to build PL and NL constraints: all units within the same categories are presumed to be positive-linked, while units from different categories are believed to be negative-linked. 

%We specify equal accuracy for all PL and NL constraints, i.e., $\psi^{PL}_{ij} = 0.65$ and $\psi^{NL}_{ij} = 0.6$, following the method described in Section \ref{subsec:specify_constraints}.

We focus on the CPI sub-indices after January 1990 for two reasons: (1) the number of sub-indices before 1990 was relatively small, diminishing the importance of the group structure; and (2) the consumption has been changed and more expenditure series were introduced in the 1990s as a result of the popularity of electronic products, food care, etc. After the elimination of all parent nodes, the unbalanced panel consists of 156 sub-indices in eight major expenditure categories. We employ rolling estimation windows of 48 months\footnote{The benchmark \textit{AR-he} model scores the best overall performance with a window size of 48.} and require each estimation sample to be balanced, removing individual series lacking a complete set of observations in a given window. Finally, we generate 329 samples with the first forecast produced for April 1995.

% \footnote{Rolling window scheme is used to alleviate the issue of parameter instability, see }

%*** The complete list of items subject to quality change adjustments in US CPI- U is available in https://www.bls.gov/cpi/ quality-adjustment/. There is no detailed list pairing this item list and the date of quality adjustment introduction. We collected this information from the following sources: Wells and Restieaux (2014), Stewart and Reed (1999) and BLS (2020). Moreover, the list of items subject to quality adjustments are coded by the BLS Consumer Price Index Entry Level Items (ELIs), whereas CPI-U has its own coding, which is not as disaggregated. We have paired them based on the definitions of CPI-U categories.

\subsubsection{Results}

We begin the empirical analysis by comparing the performance of point and density forecasts across 329 samples. Throughout the analysis, the \textit{AR-he} estimator serves as the benchmark as it essentially assumes individual effects. 

In Figure \ref{fig:app_cpi_RMSE}, we present the frequency of each estimator with the lowest RMSFE in the panel (a) and the boxplot\footnote{The boundaries of the whiskers is based on the 1.5 IQR value.  All other points outside the boundary of the whiskers are plotted as outliers in red crosses.} of the ratio of RMSFE relative to the \textit{AR-he} estimator in the panel (b). First, the AR-he and AR-he-PC estimators, which rely only on an individual's own past data, are not competitive in point forecasts and perform considerably worse than the others. This implies that it is highly advantageous to explore cross-sectional information to improve point forecasts. Moreover, the BGRE-he-cstr estimator scores the highest frequency of being the best estimator despite the fact that BGFE-he-cstr, BGFE-he, BGFE-ho, and pooled OLS estimators all utilize cross-sectional information. Examining the box plot, we find that the BGFE-ho and pooled OLS estimators, which overlook heteroskedasticity, can achieve greater performance in some samples, but make poorer forecasts more often than the other estimators. BGFE-he-cstr and BGFE-he estimator, on the other hand, typically outperform the others and the benchmark in terms of median RMSFE and the ability to produce forecasts with the lowest RMSFE without also increasing the risk of generating the least accurate forecasts.

%Moreover, despite the fact that BGFE-he-cstr and BGFE-he estimators are largely based on the same algorithm, the use of prior knowledge on group pattern further enhances the performance, resulting in the BGFE-he-cstr estimator scoring the best model with the highest frequency. Examining the box plot, we observe that the BGFE-ho-cstr and pooled estimator, which neglect heteroskedasticity, never outperform the benchmark and other estimator for any sample, highlighting the significance of modeling heteroskedasticity. In fact, AR and AR-PC are extremely competitive in point forecasts as they have similar median values to our BGFE estimators, but they are only one of the best predictors approximately 35\% of the time.

%the BGFE-he-cstr and BGFE-he estimators collectively outperform the remaining estimator in over 60\% samples. This indicates that it is greatly beneficial to investigate cross-sectional information and discover group patterns. 

\begin{figure}[hp]
	\caption{RMSFE - All Samples}
	\label{fig:app_cpi_RMSE}
	\centering
	\begin{subfigure}[b]{0.45\textwidth}
		\centering
		\caption{Freq. lowest RMSFE}
		\includegraphics[scale= 0.35]{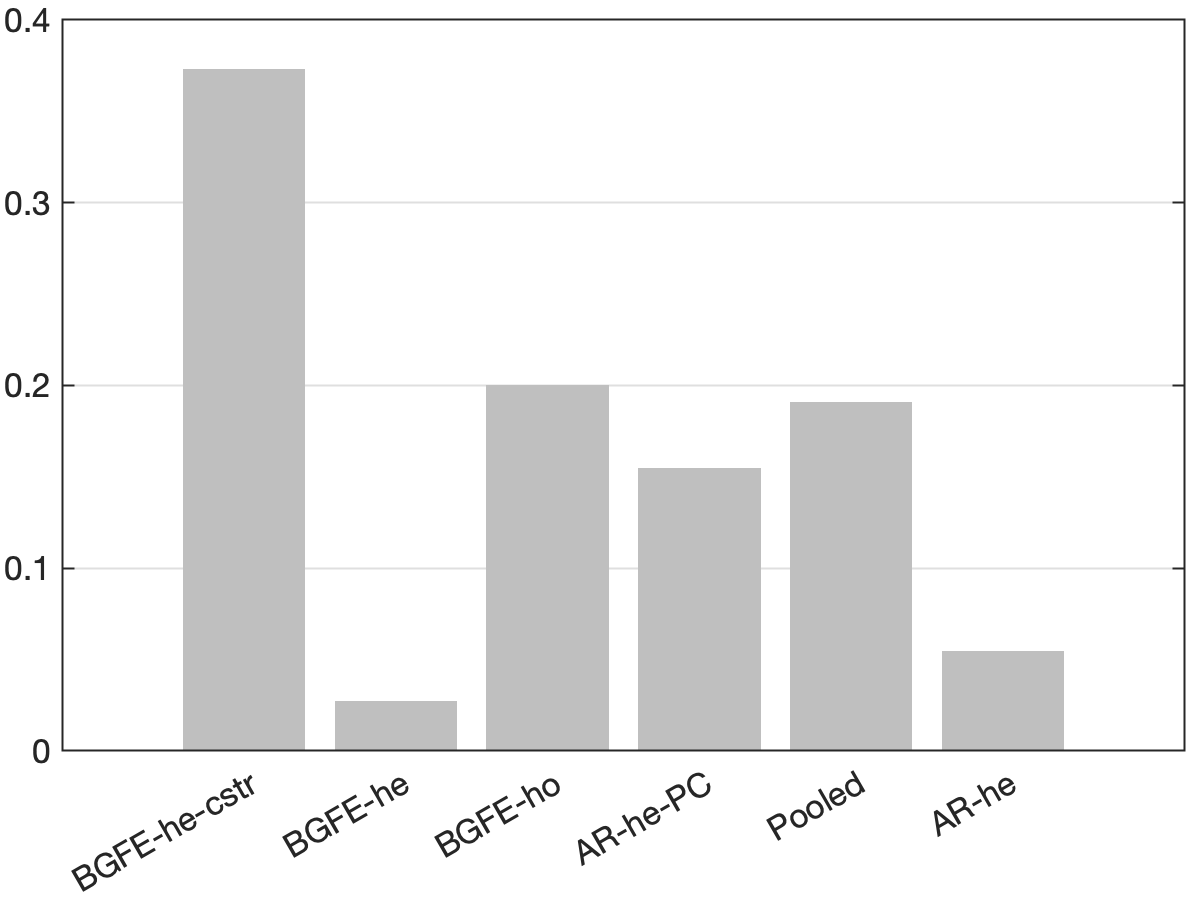}
	\end{subfigure}
	\begin{subfigure}[b]{0.45\textwidth}
		\centering
	    \caption{Boxplot: $\text{RMSE}_{m}$ / $\text{RMSE}_{AR-he}$}
		\includegraphics[scale= 0.35]{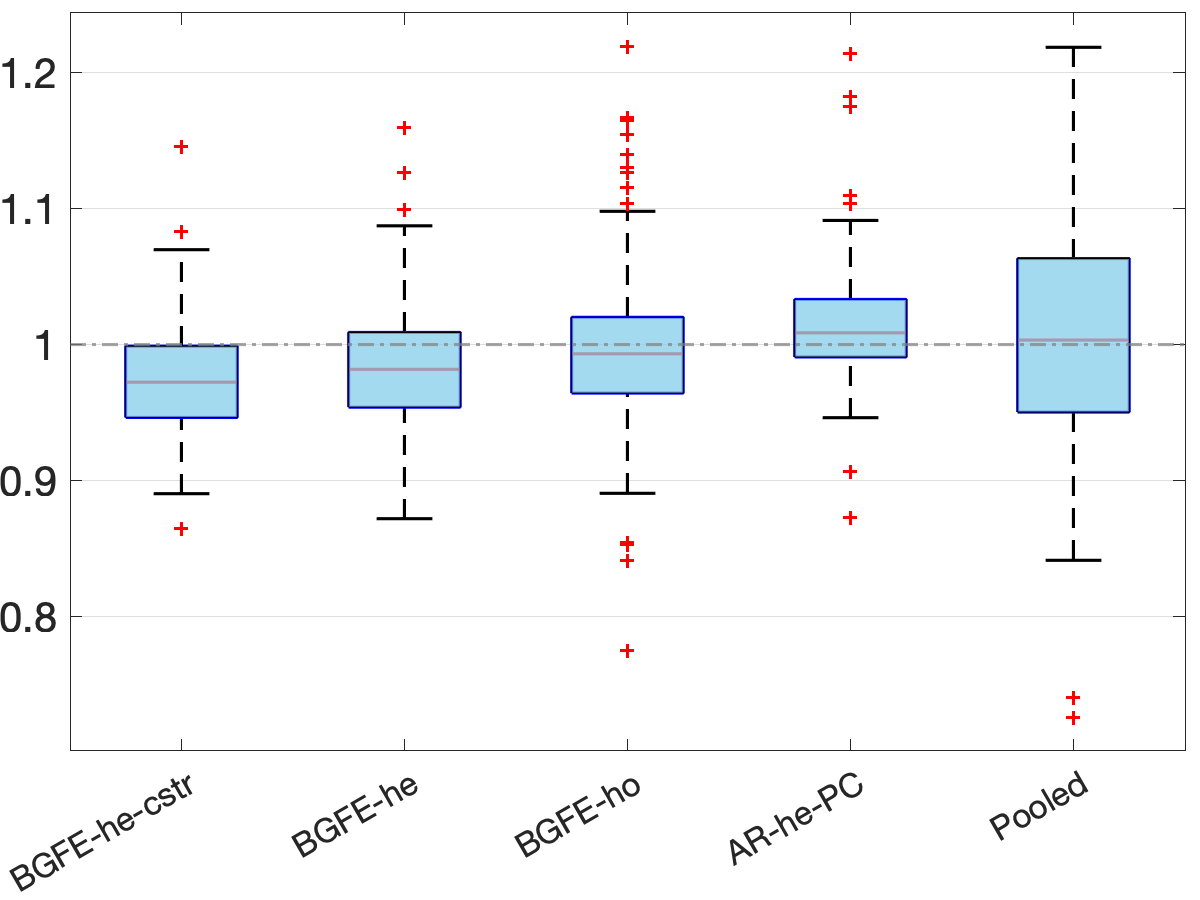}
	\end{subfigure}
\end{figure}

%{\color{blue} *** The most important comparison for your paper seems to be BGFE-he-cstr vs. BGFE-he because it highlights the benefit of prior information. But somehow it gets lost among all the results.}

The revealing patterns of the density forecast are significantly distinct from those of the point forecast. Figure \ref{fig:app_cpi_LPS} depicts the log predictive score (LPS) for density forecast. The most notable pattern from the panel (a) is that the BGFE-he-cstr estimator, which incorporates prior knowledge, is dominating and outperforms the rest in over 80\% of the samples. It emerges as the apparent winner in this case. Furthermore, when generating density forecast, the BGFE-ho and pooled OLS are not as accurate as they are in point forecast: they never have the lowest LPS across samples. This also confirms that the heteroscedasticity\footnote{We provide more results in Section \ref{subsec:hetero_vs_homo} to explore the importance of heteroskedasticity in density forecast for the inflation.} is a well-known feature of the inflation time series \citep{clark2015}. In the boxplot, we ignore BGFE-ho and pooled OLS and show the differences in LPS between the respective estimators and the AR-he estimator. As LPS differences represent percentage point differences, BGFE-he-cstr can provide density forecasts that are up to 22\% more accurate compared to the benchmark model. Finally, despite the fact that the BGFE-he-cstr and BGFE-he estimators are mainly based on the same algorithm, the use of prior knowledge on group pattern further enhances the performance, resulting in the BGFE-he-cstr estimator having a lower LPS and scoring the best model with the highest frequency. 

\begin{figure}[htp]
	\caption{Log Predictive Scores - All Samples}
	\label{fig:app_cpi_LPS}
	\centering
	\begin{subfigure}[b]{0.45\textwidth}
		\centering
		\caption{Freq. lowest LPS}
		\includegraphics[scale= 0.35]{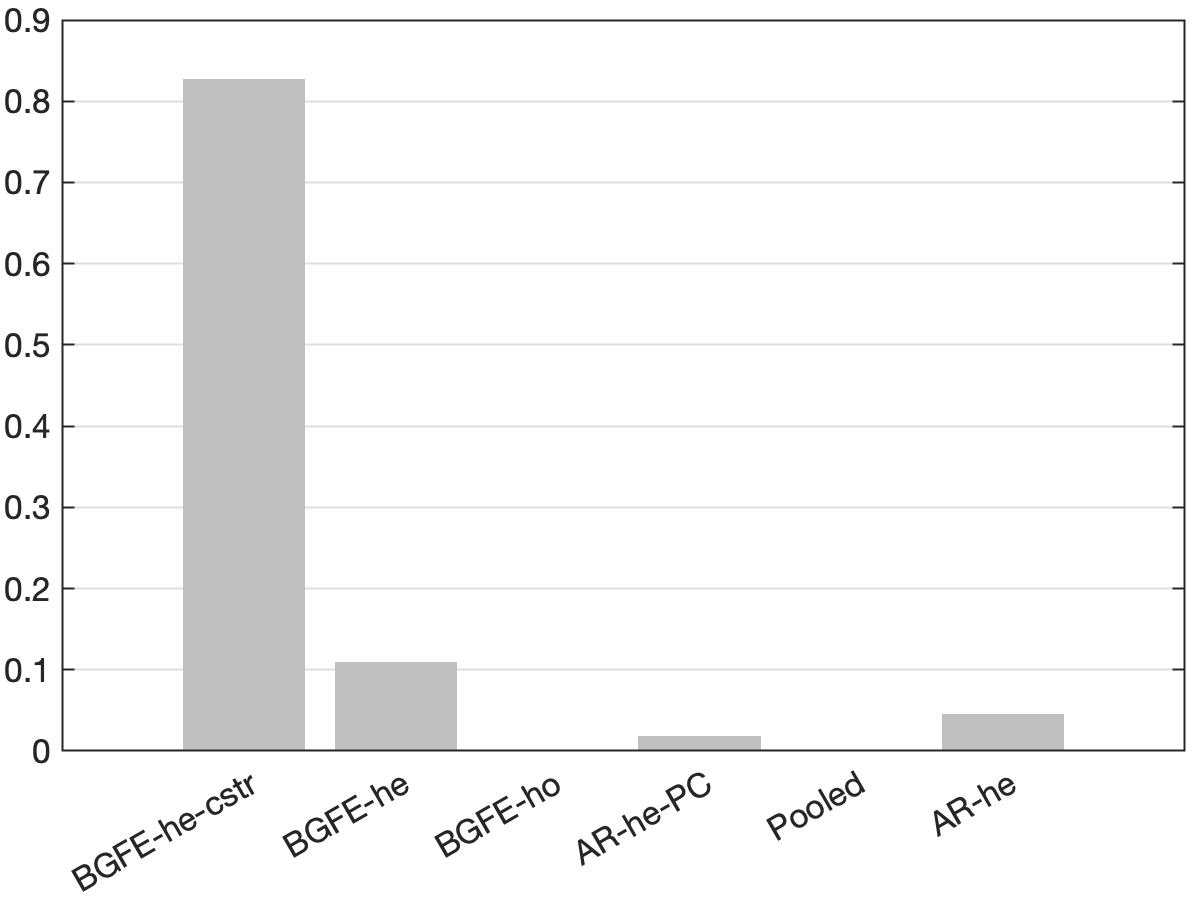}
	\end{subfigure}
	\begin{subfigure}[b]{0.45\textwidth}
		\centering
		\caption{Boxplot: $\text{LPS}_{m}$ - $\text{LPS}_{AR}$}
		\includegraphics[scale= 0.35]{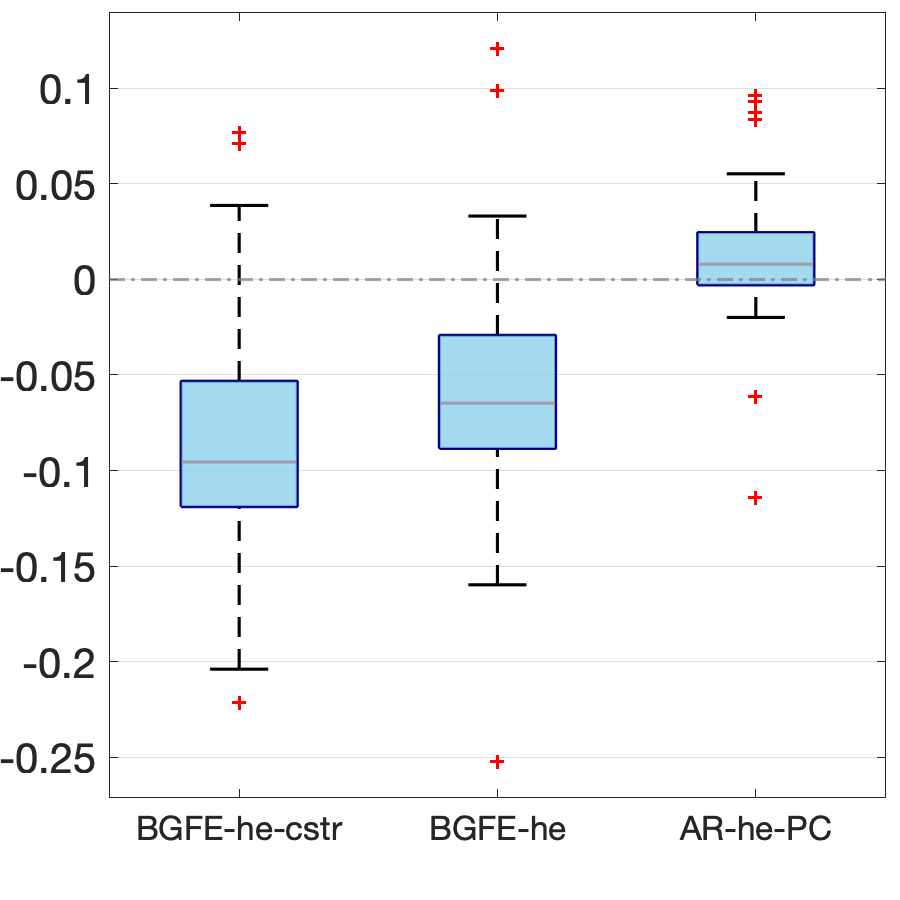}
	\end{subfigure}
\end{figure}

Next, we assess the value of adding prior information about groups by comparing the performance of the BGFE-he-cstr and BGFE-he estimators exclusively. The solid black line in Figure \ref{fig_app:app_cpi_relative_RMSE} represents the ratio of RMSE between BGFE-he-cstr and BGFE-he. The periods during which BGFE-he-cstr, BGFE-he, and all other estimators achieve the lowest RMSE are indicated by pink, blue, and green shaded areas, respectively. Though the BGFE-he-cstr estimator is not always the best across samples, the prior information improves the performance of the Bayesian grouped estimator. The BGFE-he-cstr estimator performs better than the BGFE-he estimator in most samples, with an average improvement of 2\%.

\begin{figure}[h]
	\caption{Relative RMSE, BGFE-he-cstr}
	\label{fig_app:app_cpi_relative_RMSE} \vspace{-0.5cm}
	\begin{center}
		\includegraphics[scale=0.45]{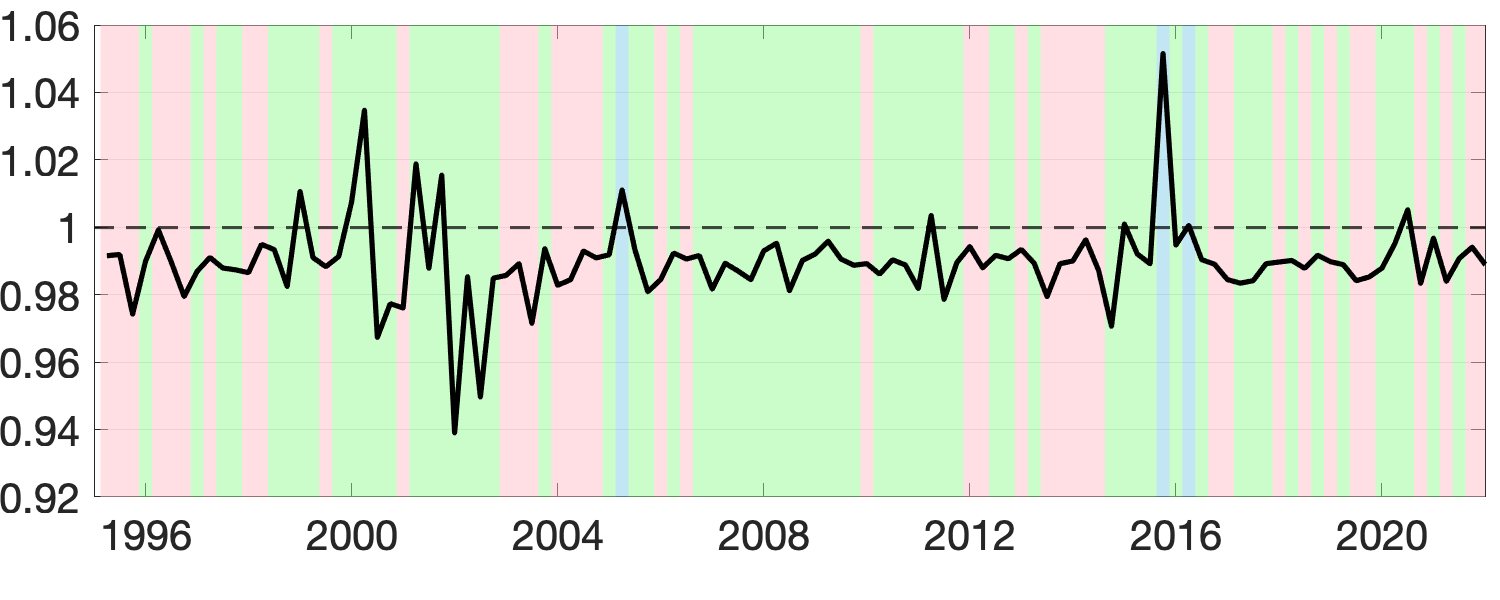}
	\end{center}
\end{figure}

Adding prior information on groups substantially improves the accuracy of density forecasts. Figure \ref{fig_app:app_cpi_relative_LPS} shows the comparison between BGFE-he-cstr and BGFE-he in terms of the difference in LPS. We find the prior information valuable as BGFE-he-cstr outperforms BGFE-he in more than 98\% of the samples. Clearly, the majority of the figure is covered by a pink background, showing that BGFE-he-cstr is typically the best choice. All of these facts demonstrate that adding prior informatio is favorable and essential, especially for density forecasting.

\begin{figure}[h]
	\caption{Relative LPS, BGFE-he-cstr}
	\label{fig_app:app_cpi_relative_LPS} \vspace{-0.5cm}
	\begin{center}
		\includegraphics[scale=0.45]{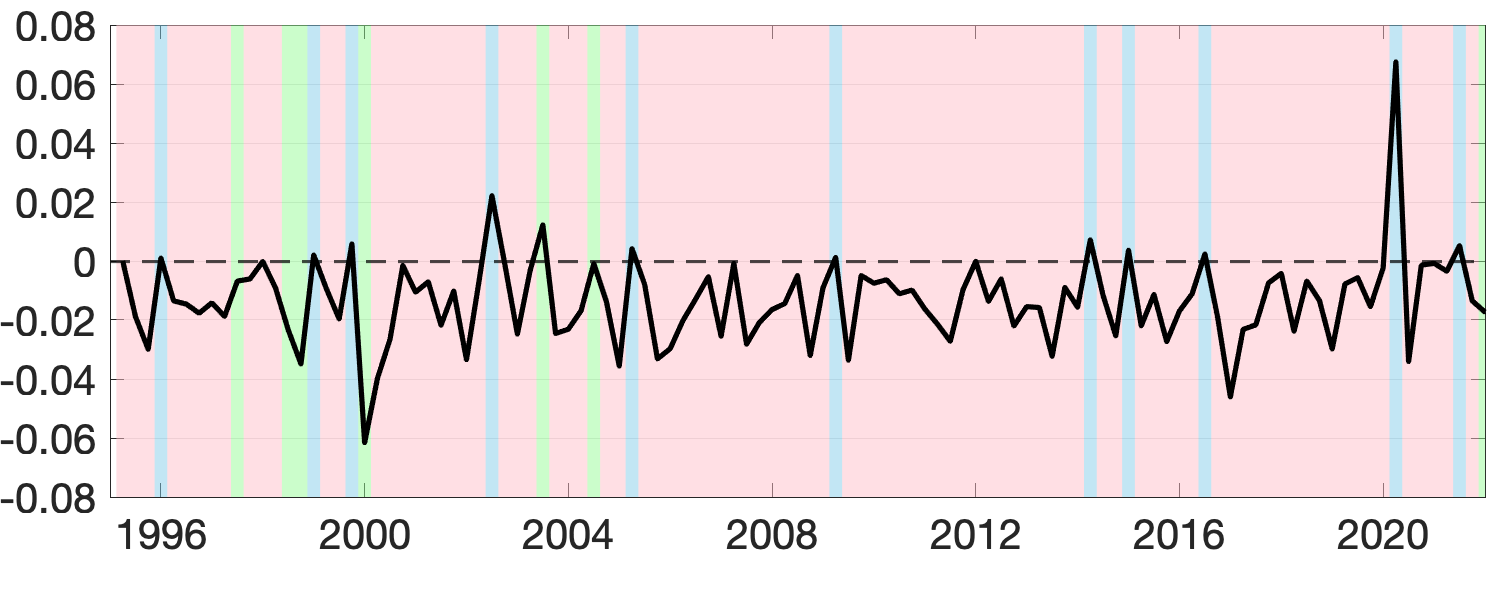}
	\end{center}
\end{figure}

%{\color{blue} ***This is obviously important but gets a bit lost.}

Having specified pairwise constraints across sub-indices, we provide a prior on $G$ that shrinks the group structure toward the eight expenditure categories with equal accuracy for all pairs inside each category. As Theorem \ref{thm:equal_prior_prob} suggested, our prior specification essentially assumes that the prior probability of any two units that come from the same expenditure category being in the same group is equal, and the prior group pattern is actually the expenditure category. We now examine the posterior of group structure to demonstrate how the distribution of $G$ gets updated by data. In order to accomplish this, we construct a posterior similarity matrix (PSM), whose $(i,j)$-element records the posterior probabilities of units $i$ and $j$ being in the same group. For illustrative purposes, we present the results for the last sample, in which we forecast CPI in August 2022. Figure \ref{fig:app_cpi_psm} depicts the PSM generated by BGFE-he-crst for the series in the categories of \textit{Food and Beverages} and \textit{Transportation}. A darker block indicates a higher posterior probability of being in one group. A common pattern emerges: even though some sub-indices are joined together frequently, as shown in the dark diagonal blocks, it is extremely unlikely that all series within the same category belong to the same group. Some series have relatively low or zero probabilities of being grouped together, as suggested by the white and gray off-diagonal blocks. This indicates that the group structure based on official expenditure categories is not optimal, which may result in inaccurate forecasting. Instead, our suggested framework uses information from both prior beliefs and data to reinvent the group pattern, leading to improved forecasting performance.

\begin{figure}[htp]
	\caption{Posterior Similarity Matrices for Selected Categories}
	\label{fig:app_cpi_psm}
	\centering
	\begin{subfigure}[b]{0.49\textwidth}
		\centering
		\caption{Food and beverages (Category 3)}
			\includegraphics[scale= 0.45]{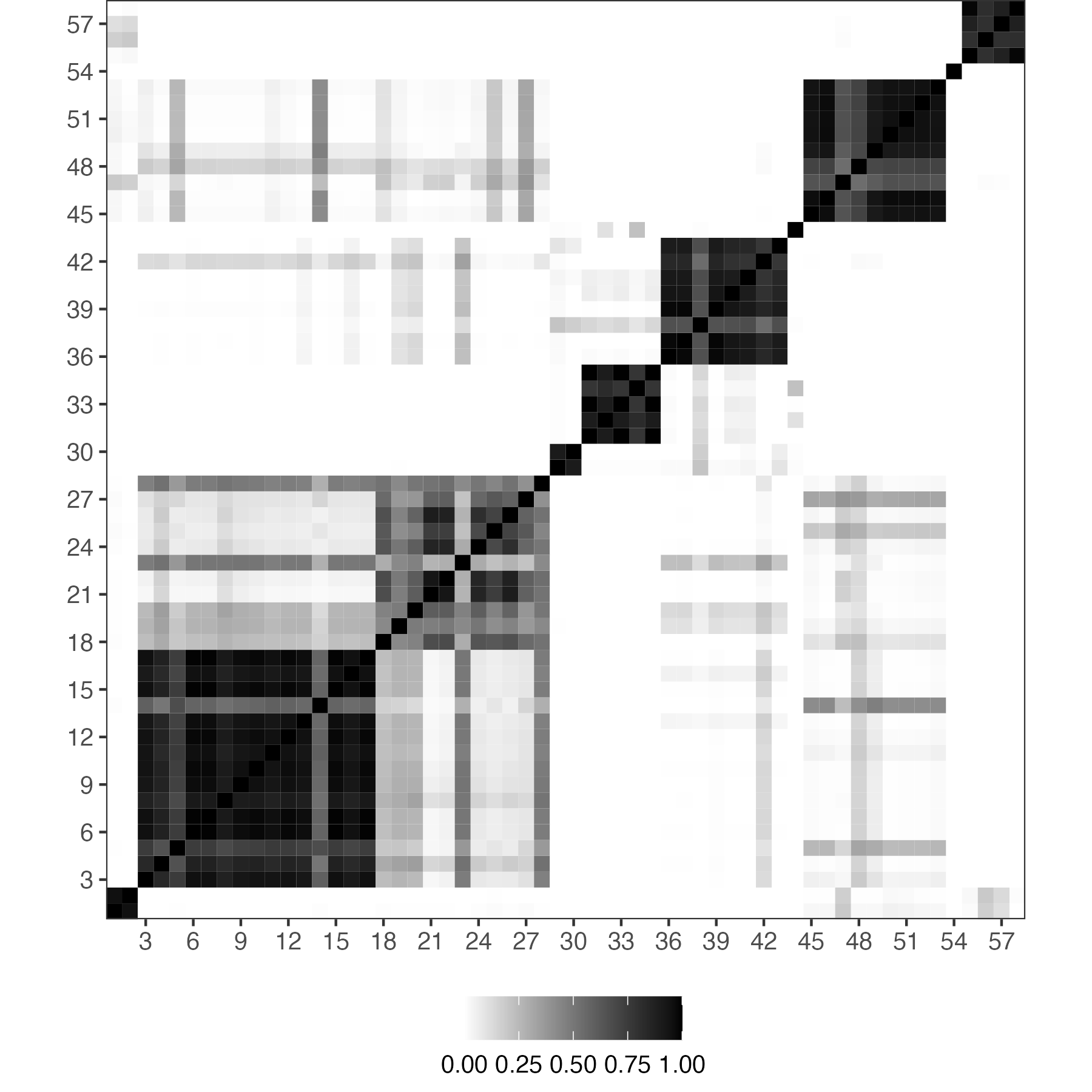}
	\end{subfigure}
	\begin{subfigure}[b]{0.49\textwidth}
		\centering
		\caption{Transportation (Category 7)}
		\includegraphics[scale= 0.45]{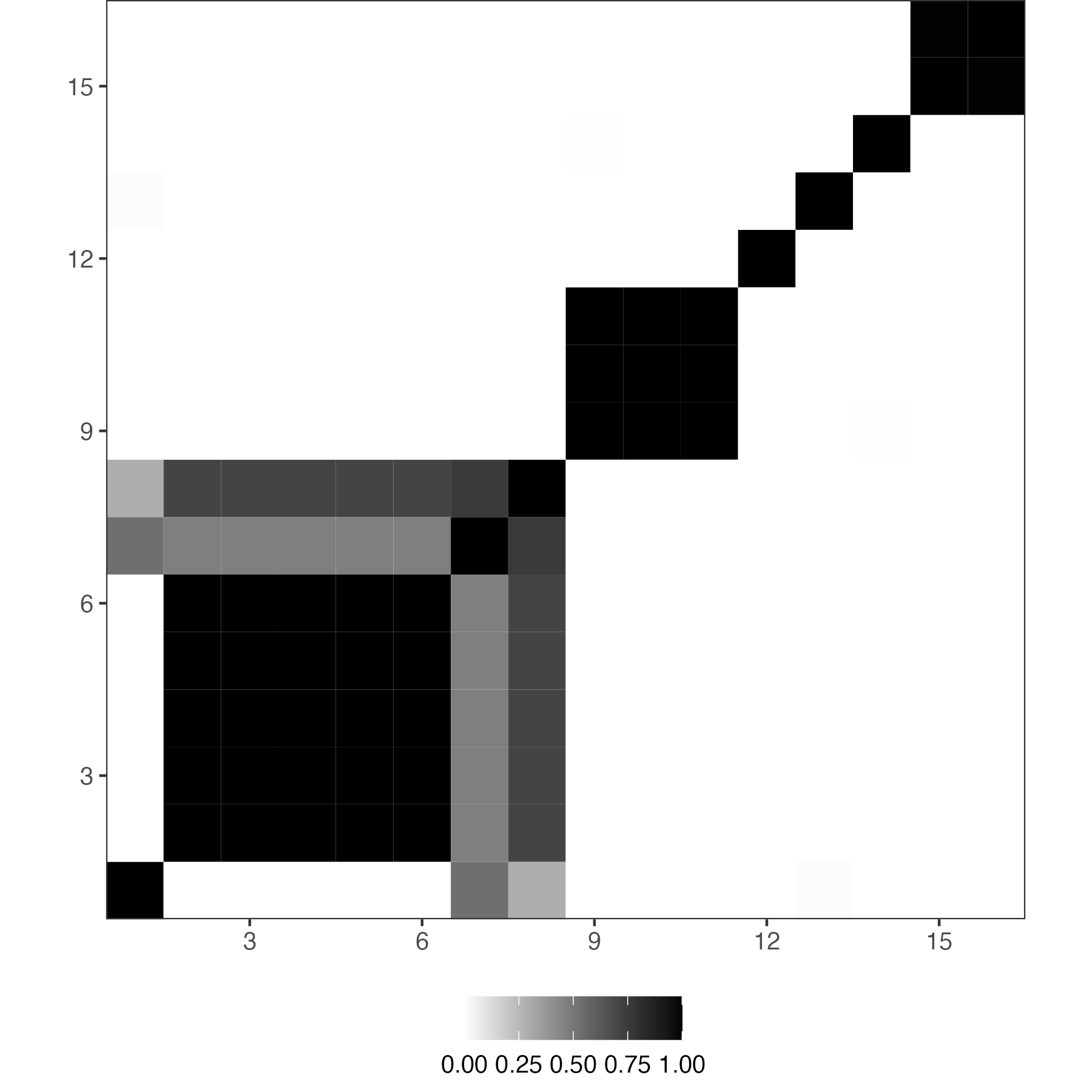}
	\end{subfigure}
	{\footnotesize \raggedright \textit{Notes: This is not the direct output of the algorithm, the ordering of series is changed so that cluster are lying on the diagonals.}\par}
\end{figure}

Finally, we restrict our analysis to the point estimate of group partition, i.e., the single grouping solution, rather than the posterior over the whole universe of partitions.  Figure \ref{fig:app_cpi_grp_202203} depicts the posterior point estimate of $G$ for the last sample ended in August 2022, derived using the approach described in Section \ref{subsec:det_partition}. Eight expenditure categories are divided into twelve groups of varied sizes. Two different forms of groups are generated based on the arrangement of their components. Groups 2, 3, 4, and 5 contain sub-indices from a variety of categories, with no clear dominance. In contrast, the majority of the series in groups 1 and 8, for example,  belong to a certain category. Group 1 may refer to a \textit{Food} group, whereas group 8 is a \textit{Transportation} group. The detailed group 8 components are depicted in Figure \ref{fig:app_cpi_grp_202203_grp12}. There are seven sub-indices from \textit{Transportation}, including car and truck rentals, gasoline (regular, midgrade, and premium), other motor fuels, airline fares, and ship fares, and one series from \textit{Housing} - fuel oil (for residential heating). Clearly, all sub-indices share a common trend and have a close relationship with energy and oil prices, which have increased since the Pandemic. This is an example demonstrating that our proposed algorithm exploits cross-sectional information, not limited to our prior knowledge, and forms meaningful groups for forecasting.

\begin{figure}[h]
	\caption{Posterior Point Estimate of the Group Partition, August 2022}
	\label{fig:app_cpi_grp_202203}
	\centering
	\includegraphics[scale= 0.35]{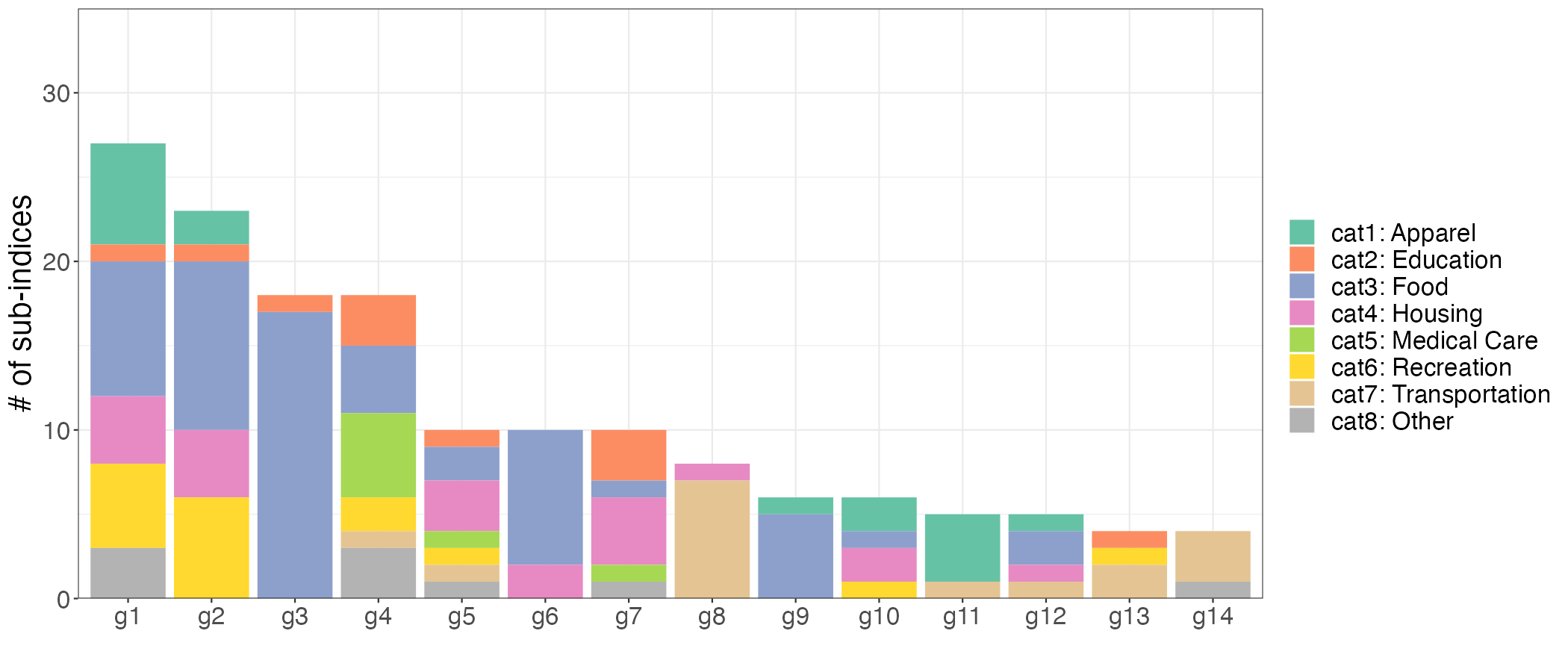}
\end{figure}

\begin{figure}[h]
	\caption{Components in the Group \#9, August 2022}
		\label{fig:app_cpi_grp_202203_grp12}
	\centering
	\includegraphics[scale= 0.35]{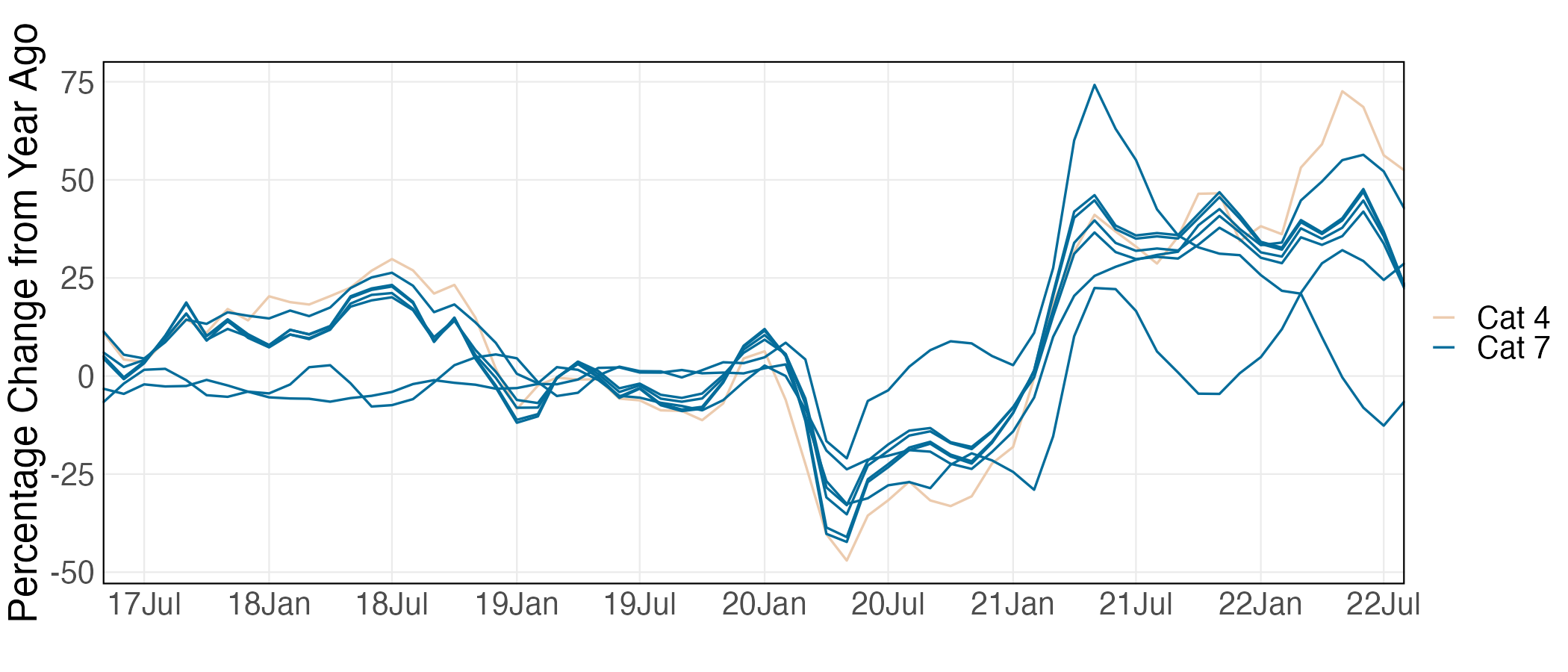}
\end{figure}

\subsubsection{Impact of the Accuracy of Constraints}

We examine how the accuracy of pairwise constraints influences the point estimate of group partitioning. For demonstration purposes, we restrict our analysis to PL constraints solely by setting $\psi^{NL}_{ij} = 0.5$ and changing $\psi^{PL}_{ij}$. We do not select the constant $c$ in the setup since it would balance the impact of the PL restrictions with a different level of accuracy. We set $c$ to 0.5. Again, PL constraints are derived from the official expenditure categories, with the assumption that all units within the same category are positive-linked with equal probability of being in the same group.

\begin{figure}[htp]
	\caption{Impact of the Strength of Constraints}
	\label{fig:app_cpi_strength_ML}
	\centering
	\begin{subfigure}[b]{0.47\textwidth}
		\centering
		\caption{Weaker PL $\left(\psi^{PL}_{ij} = 0.55\right)$}
		\includegraphics[scale= 0.3]{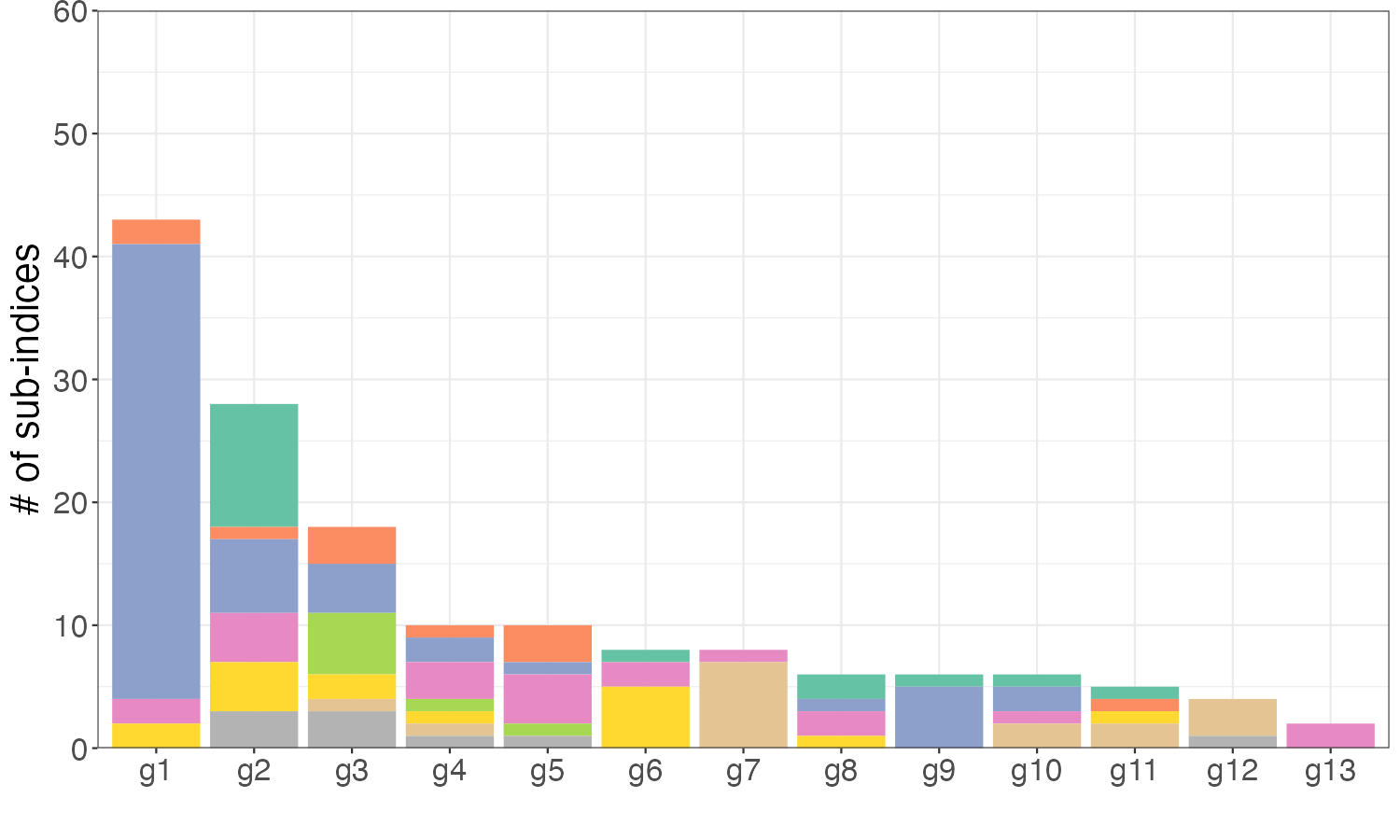}
	\end{subfigure}
	\begin{subfigure}[b]{0.47\textwidth}
		\centering
		\caption{Stronger PL $\left(\psi^{PL}_{ij} = 0.95\right)$}
		\includegraphics[scale= 0.3]{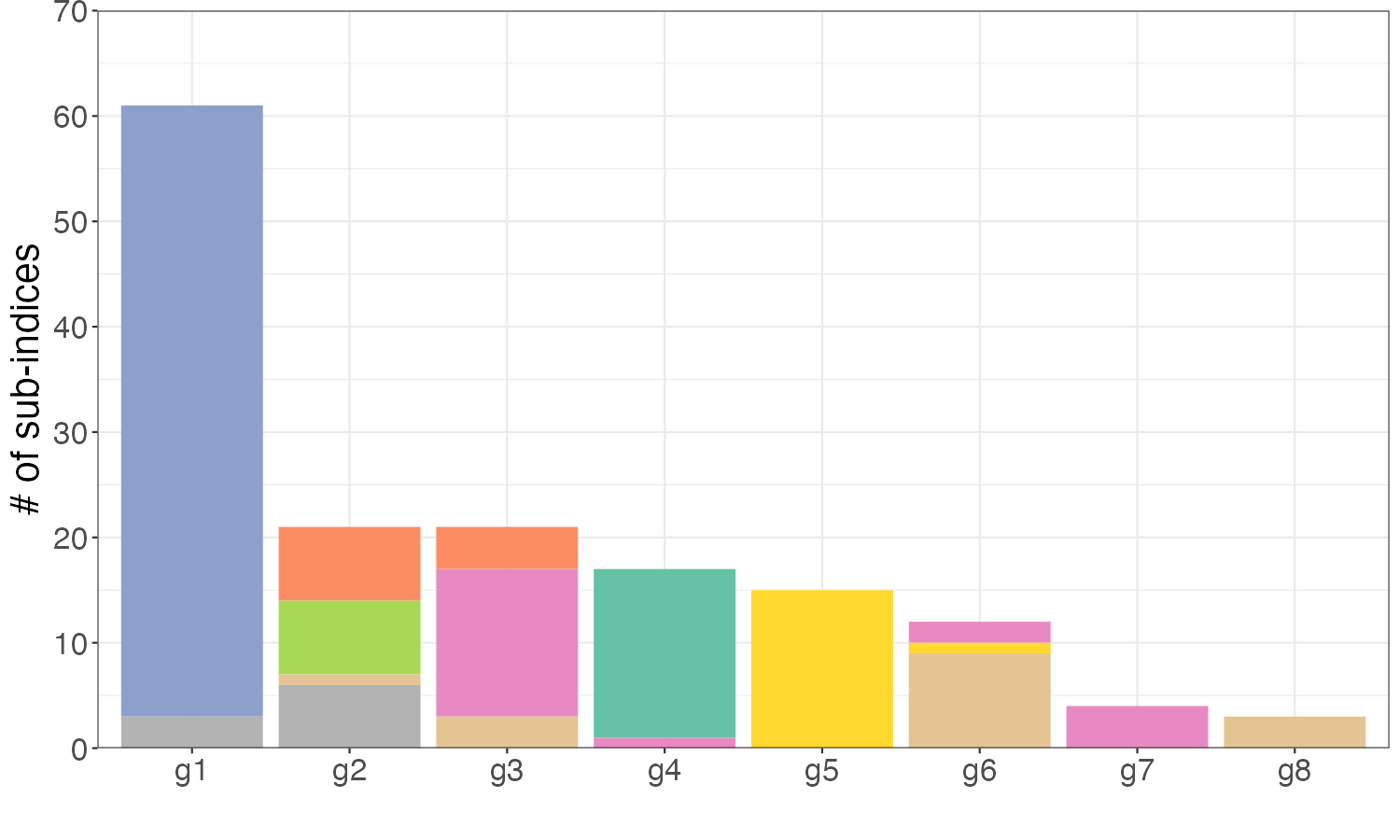}
	\end{subfigure}
\end{figure}

Figure \ref{fig:app_cpi_strength_ML} presents the point estimates of the group structure with two different levels of accuracy. The ``weaker" PL constraints with $\psi^{PL}_{ij} = 0.55$, as shown in panel (a), demonstrate a limited influence of prior knowledge on the group structure. The eleven groups are composites of CPI sub-indices from the various categories, which is diverse from the official spending categories. Panel (b), on the other hand, illustrates the group structure with ``stronger" PL constraints. By setting a high level of accuracy for PL constraints, such as $\psi^{PL}_{ij} = 0.95$, the prior knowledge dominates and pushes the group structure towards the official expenditure categories. As anticipated, panel (b) shows fewer groups, and the majority of CPI sub-indices within each group belong to the same category, bringing the group structure closer to that of the prior.

\subsection{Income and Democracy}

It is well-known in the literature that income per capita is strongly correlated with the level of democracy across countries. This strong empirical regularity is often known as "modernization theory" or Lipset hypothesis \citep{lipset1959}. The theory claims a causal relation: democratic regimes are created and consolidated in affluent societies \citep{lipset1959, przeworski1995, barro1999, epstein2006}. 

In an influential paper, \citet{acemoglu2008} challenge  the casual effect of countries' income on the level of democracy. They argue that it is essential to take into account other factors that affect both economic and political development simultaneously. Their analysis, based on panel data, indicates that the positive relationship between income and democracy disappears when fixed effects are included in the regression. They suggest that this finding is due to historical events, such as the end of feudalism, industrialization, or colonization, which have led countries to follow distinct paths of development. The fixed effects are meant to capture these persistent events. The finding is robust, as it holds for different measures of democracy, various econometric specifications, and additional covariates. Another study by \citet{bonhomme2015} uses a different econometric model but arrives at the same conclusion. Their analysis highlights the presence of diverse group-specific paths of democratization in the data, consistent with the observation that regime types and transitions tend to cluster in time and space \citep{gleditsch2006, ahlquist2012}.

The seminal work of \citet{acemoglu2008} has been subject to critical scrutiny by several recent works, including \citet{moral2012}, \citet{benhabib2013}, and \citet{cervellati2014}. \citet{moral2012} contend that a nonlinear relationship between income and democracy exists, even after accounting for country-specific effects. They show that a positive income-democracy relationship holds only in countries with low levels of income. \citet{benhabib2013} use panel estimation methods that adjust for the censoring of democracy measures at their upper and lower bounds and find that the positive relationship between income and democracy withstands the inclusion of country fixed effects. \citet{cervellati2014} extend the linear estimation framework of \citet{acemoglu2008} and unveil the presence of significant heterogeneity in the income effect on democracy across different subsamples. Specifically, they demonstrate that this effect exhibits an opposite sign for colonies and non-colonies, is substantially different from zero, and is of considerable magnitude. They also argue that the existence of a heterogeneous effect of income suggests that results from a linear framework, such as the finding of a zero effect, may lack robustness since they depend on the composition of the sample.

In this section, we contribute to the literature on the relationship between income and democracy by employing a novel grouped fixed-effects approach. Specifically, we expand on the econometric model proposed by \citet{bonhomme2015} to incorporate group structure not only in time fixed-effects, but also in slope coefficients and the variance of errors. This more complex model allows for a more detailed analysis of the heterogeneous effects of income on democracy across different groups of countries. To identify these groups, we incorporate prior knowledge about the latent group structure by clustering countries based on either geographic location or initial levels of democracy score. By leveraging this information, we are able to identify a moderate number of groups, each of which exhibits a distinct path to democracy.

Our results indicate that the effect of income on democracy is highly varied across countries, a finding which is consistent with previous research by \citet{cervellati2014}. Furthermore, we find that the positive cumulative effect of income on democracy exists in groups of countries with a medium or relative low level of income, in line with the findings of \citet{moral2012}. However, the effect could be relatively small for some groups, suggesting that other factors beyond income may also play important roles in democratic development.

%The dependent variable is a country's democracy index and the explanatory variables are the first-order lagged democracy index and the countries' income.

\subsubsection{Model Specification and Data}

\noindent{\bf Model:} To accommodate richer assumptions on models for the real-world applications, we extend the baseline model in Chapter 1 either by adding common regressors and allowing for time-variation in the fixed-effects. Time-variation are essential to this analysis as they capture highly persistent historical shocks. Following \citet{bonhomme2015}, we introduce group-specific time patterns of heterogeneity $\alpha_{g_i t}$ and consider the following two specifications:

\begin{itemize}
	
	%	\item [SP0] GFE + grouped slope (baseline)
	%	\begin{align*}
		%		y_{i t} = \alpha_{g_{i}} + \rho_{g_i} y_{i t-1} + \beta_{g_i} x_{i t-1} +  \varepsilon_{i t}, \; \varepsilon_{i t} \sim N(0, \sigma_{g_i}^2)
		%	\end{align*}

	\item [SP1:] \emph{Time-varying GFE + grouped slope coefficients}
	\begin{align} \label{eq:emp_app_2_sp2}
		y_{i t} = \alpha_{g_{i}t} + \rho_{g_i} y_{i t-1} + \beta_{g_i} x_{i t-1} +  \varepsilon_{i t}, \; \varepsilon_{i t} \sim N(0, \sigma_{g_i}^2),
	\end{align}
	
	\item [SP2:] \emph{Time-varying GFE}
	\begin{align} \label{eq:emp_app_2_sp1}
		y_{i t} = \alpha_{g_{i}t} + \rho y_{i t-1} + \beta x_{i t-1} +  \varepsilon_{i t}, \; \varepsilon_{i t} \sim N(0, \sigma_{g_i}^2),
	\end{align}

	%	\item [SP3] GFE+ time FE + grouped slope
	%	\begin{align*}
		%		y_{i t} = \alpha_{g_{i}} + \delta_t + \rho_{g_i} y_{i t-1} + \beta_{g_i} x_{i t-1} +  \varepsilon_{i t}, \; \varepsilon_{i t} \sim N(0, \sigma_{g_i}^2)
		%	\end{align*}
	%	
	%	\begin{itemize}
		%		\item This is an intermediate model.
		%	\end{itemize}
	
\end{itemize}
where $y_{it}$ is the democracy score of country $i$ at time $t$. The lagged value of democracy score $y_{it-1}$ is included to capture persistence in democracy and also potentially mean-reverting dynamics (i.e., the tendency of the democracy score to return to some equilibrium value for the country). The coefficient of main interest is $\beta_{g_i}$ and reflects the effect of the lagged value of $\log$ income per capita $x_{it-1}$ on democracy. In addition, $\alpha_{g_{i} t}$ denote a set of group-specific time fixed-effects; $\varepsilon_{i t}$ is an error term with grouped variance $\sigma_{g_i}^2$, capturing additional transitory shocks to democracy and other omitted factors. We use the conjugate prior for all parameters, see details in Appendix \ref{appendix:prior}.

Specification 2 in (\ref{eq:emp_app_2_sp1}) nests the linear dynamic panel data model in BM as a special case. If we assume homoskedasticity, it is the Equation (22) in BM. This specification enables us to reproduce BM's results and provide fresh insight into their framework. Specification 1 in (\ref{eq:emp_app_2_sp2}), on the other hand, generalizes specification 2 by introducing group-dependent slope coefficients. As we shall demonstrate in the following section, specification 1 yields a more refined group structure and provides a clearer view of the income effects.

\vspace{0.5cm}

\noindent{\bf Data:} 
We use the Freedom House (FH) Political Rights Index as a benchmark for measuring democracy. To standardize the index, we normalize it between 0 and 1, with higher scores indicating higher levels of democracy. FH assesses a country's political rights based on a checklist of questions, such as the presence of free and fair elections, the role of elected officials, the existence of competitive parties or other political groupings, the power of the opposition, and the extent of self-government or participation by minority groups. We measure countries' income using the logarithm of GDP per capita, which is adjusted for purchasing power parity (PPP) in 1996 prices, using data from the Penn World Tables 6.1 \citep{heston2002}. Details on the data can be found in Section 1 of \citet{acemoglu2008}, and all data in this section are from the replication files of \citet{bonhomme2015}.\footnote{ \url{https://www.dropbox.com/s/ssjabvc2hxa5791/Bonhomme_Manresa_codes.zip?dl=0}} Our analysis is based on a five-year panel dataset that includes all independent countries since the postwar period, with observations taken every fifth year from 1970 to 2000. We chose this period for comparability with previous studies. The final dataset consists of a balanced panel of 89 countries.

%\footnote{We remove a few regions that are not considered as countries in a general sense. {\color{red} *** That's a strange sentence. What are those regions? Taiwan would be considered as a country for most other countries (except China...). So what does general sense mean?}} and 7 periods at a five-year interval over 1970-2000. The summary statistics are reported in Table \ref{tab:app_demo_data_summary}

%{\color{blue} *** Do you need both homosk and heterosk? The point from the inflation application also applies here (No, I will drop homosk)}

\vspace{0.5cm}

%{\color{blue} *** This one should come first.(done)} 
\noindent{\bf Prior group structure:}  We propose two prior grouping strategies as specified below and assume all units within the same prior group are presumed to be positive-linked, while units from different prior groups are believed to be negative-linked. 

%We specify equal accuracy for all PL and NL constraints, i.e., $\psi^{PL}_{ij} = 0.70$ and $\psi^{NL}_{ij} = 0.55$, {\color{red} *** why these numbers?} following the method described in Section  \ref{subsec:specify_constraints}. We propose two pre-grouping strategies

\begin{enumerate}[(i)]
	
	\item Given the countries available in the dataset, we form six groups according to their geographic locations:\footnote{The regions are assigned by the \href{https://www.eiu.com/}{Economist Intelligence Unit}, and may slightly differ from conventional classifications.} (1) North America; (2) Europe; (3) Latin America and the Caribbean; (4) Asia and Australasia; (5) Sub-Saharan Africa; (6)  Middle East and North Africa. We refer this prior to \textit{geo-prior}.
	
	\item Alternately, countries could be categorized according to their initial level of democracy in year 1970. As the Freedom House Index has six possible values, we cluster countries into three primary groups with a reasonable number of countries in each: (1) low democracy, $y_{i,1970} = $ 0 or 0.166; (2) medium democracy, $y_{i,1970} = $ 0.333, 0.5, or 0.667; (3) high democracy, $y_{i,1970} = $ 0.833 or 1. We refer this prior to \textit{dem-prior}.
	
\end{enumerate}

Figure \ref{fig:app_dem_2prior} presents the world maps with countries colored differently according to their respective groups. The panel on the left illustrates the geographic groups, while the panel on the right depicts the democratic groups. All gray nations/regions are excluded from the dataset. We concentrate primarily on the first pre-grouping strategy, as it needs no country-specific knowledge beyond geographic information. We then compare the results using different pre-grouping strategies in Section \ref{sec:emp_result_dem_compare}.

\begin{figure}[h]
	
	\caption{Specifications of Prior Grouping}
	\label{fig:app_dem_2prior}
	\centering
	
	\begin{subfigure}{0.48\textwidth}
		\centering
		\caption{Geographic Location (Geo-Prior)}
		\includegraphics[scale= 0.45]{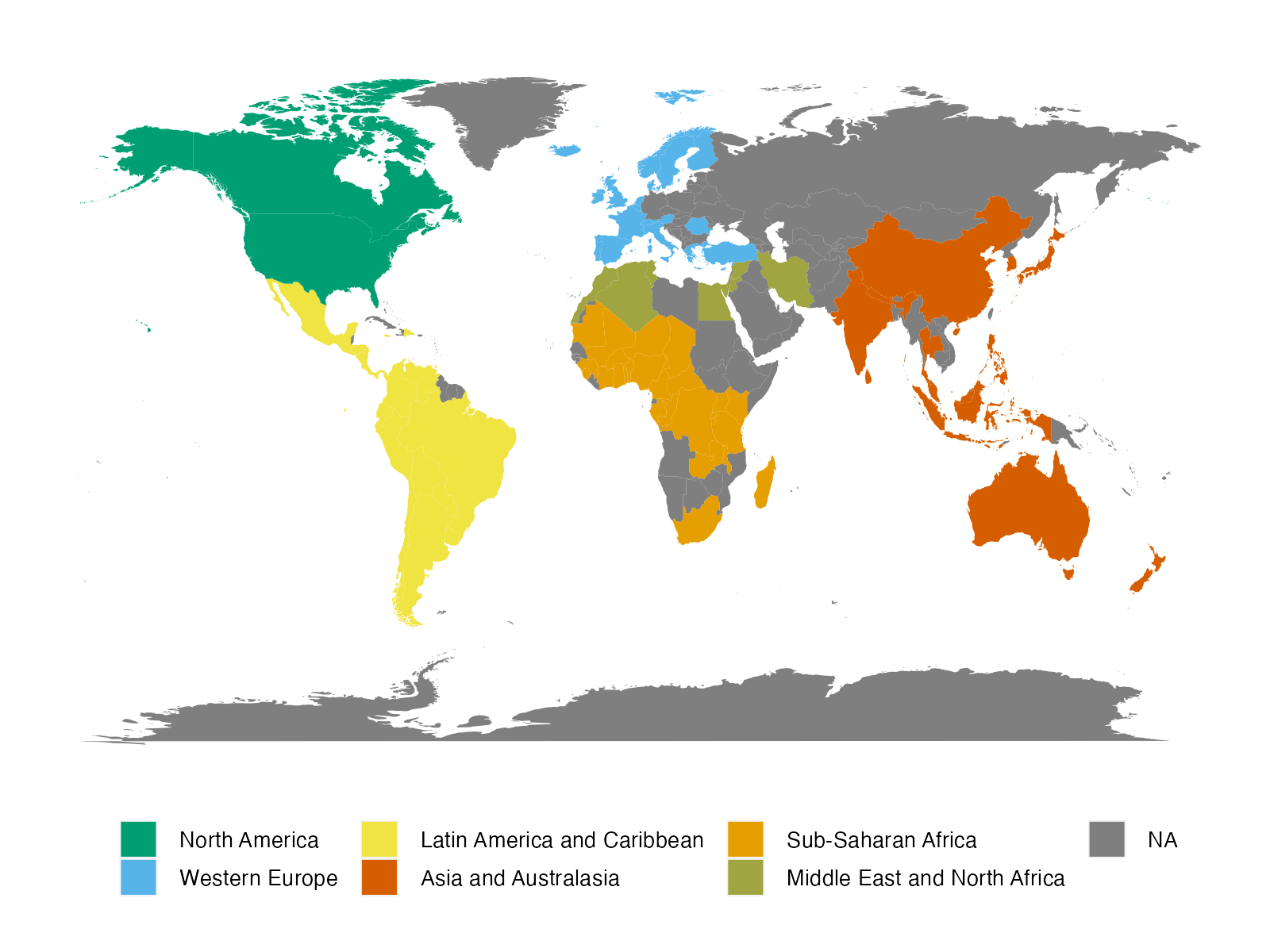}
	\end{subfigure}
	~~
	\begin{subfigure}{0.48\textwidth}
		\caption{Initial Level of Democracy (Dem-Prior)}
		\includegraphics[scale= 0.45]{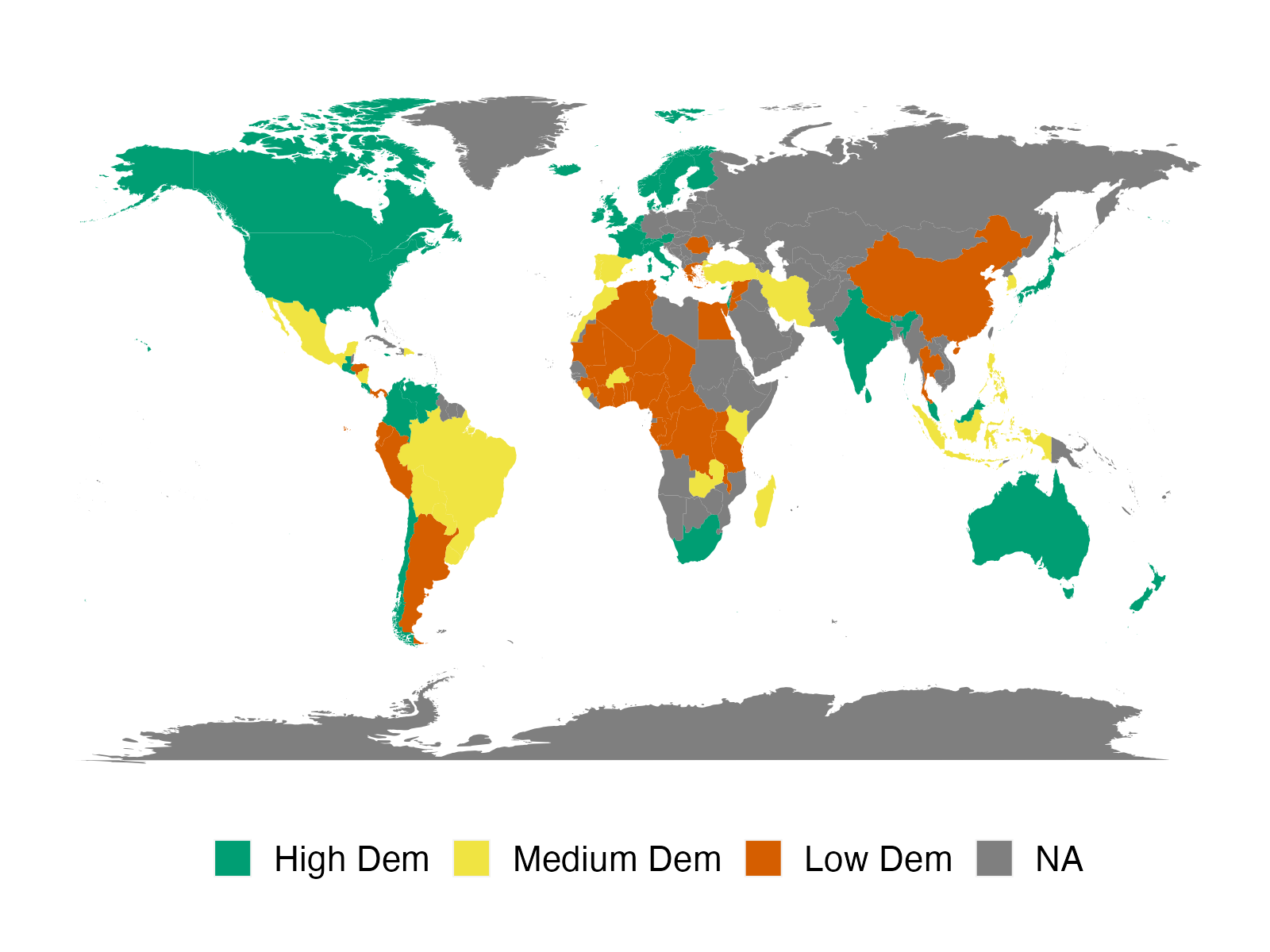}
	\end{subfigure}
\end{figure}

\subsubsection{Results}

\noindent{\bf Specification 1:} We begin with specification 1 where group-specific slope coefficients are allowed and new findings emerge.  Table \ref{tab:emp_app_dem_sp2_ngroup} presents the posterior probability of the number of groups utilizing various estimators. BGFE-ho creates more than 5 groups in all posterior draws. Intriguingly, accounting for heteroskedasticity drastically reduces the number of groups, with BGFE-he identifying four groups. Adding pairwise constraints based on geographic information increases the number of groups to five, whereas six groups are expected in the prior.

%*** It seems the -ho estimator could be dropped based on MDD (agree)}
\begin{table}[h]
\begin{center}
	\caption{Probability for Number of Groups, Geo-Prior {\color{blue}}}
	\label{tab:emp_app_dem_sp2_ngroup}
	\begin{tabular}{l | c c}
		\toprule
		& BGFE-he-cstr & BGFE-he  \\
		\midrule
		$Pr (K < 4)$ & 0.000 & 0.000 \\
		$Pr (K = 4)$ &  0.000 & {\bf 1.000} \\
		$Pr (K = 5)$ &  {\bf 1.000}& 0.000 \\
		$Pr (K > 5)$ & 0.000 & 0.000  \\
		\bottomrule 
	\end{tabular}
	
\end{center}
\end{table}

The marginal data density (MDD) of each estimator in Table \ref{tab:emp_app_dem_sp2_mdd} provides some insight on different models. Among all the estimators, the BGFE-ho estimator has the lowest MDD; it is even lower than that of specification 1. BGFE-he-cstr and BGFE-he, on the other hand, benefit from the introduction of group-specific slope coefficients, since both achieve substantially greater MDD than in specification 1. BGFE-he-cstr has the highest MDD since the pairwise constraints give direction on grouping and identify the ideal group structure, which BGFE-he cannot uncover without our prior knowledge.

\begin{table}[h]
\begin{center}
	\caption{Marginal Data Density, Geo-Prior}
	\label{tab:emp_app_dem_sp2_mdd}
	\begin{tabular}{ c | c c c}
		\toprule
		& BGFE-he-cstr & BGFE-he & BGFE-ho \\
		\midrule
		SP1 & 544.324 &  501.904 & 327.077 \\
		\midrule
		SP2 & 413.476 & 381.218 & 368.918 \\
		\bottomrule 
	\end{tabular}
\end{center}
\end{table}
%{\color{blue} *** You should indicate which prior this is. (true)}

We concentrate on the BGFE-he-cstr estimator and use the approach outlined in Section \ref{subsec:det_partition} to identify the unique group partitioning $\widehat{G}$. The left panel of Figure \ref{fig:app_dem_result_sp2} presents the world map colored by $\widehat{G}$, while the right panel present the group-specific averages of democracy index over time. The estimated group structure $\widehat{G}$ features five distinct groups which we refer to as the ``high-democracy", ``low-democracy", ``flawed-democracy", ``late-transition" and ``progressive-transition" group, respectively. With the exception of the ``flawed-democracy" and  ``progressive-transition" group, the group-specific averages of the democracy index are comparable to those in BM for all other groups.  BGFE-he-cstr does not identify the "early transition" group in comparison to BM but instead produces two new groups. Group 3 ("flawed-democracy") comprises primarily of relatively democratic but not the most democratic nations, including India, Sri Lanka, and Venezuela, among others. Group 5 ("progressive transition") contains 30 countries that have had a steady expansion of democracy, including Argentina, Greece, and Panama. Consequently, by incorporating group-specific slope coefficients, we recover a more refined group structure than that of BM.

\begin{figure}[h]
\caption{Posterior Point Estimate of Group Partitioning and Average Democracy}
\label{fig:app_dem_result_sp2}
\centering
\begin{subfigure}[b]{0.45\textwidth}
	\centering
	%		\caption{Weaker PL $\left(\psi^{PL}_{ij} = 0.55\right)$}
	\includegraphics[scale= 0.4]{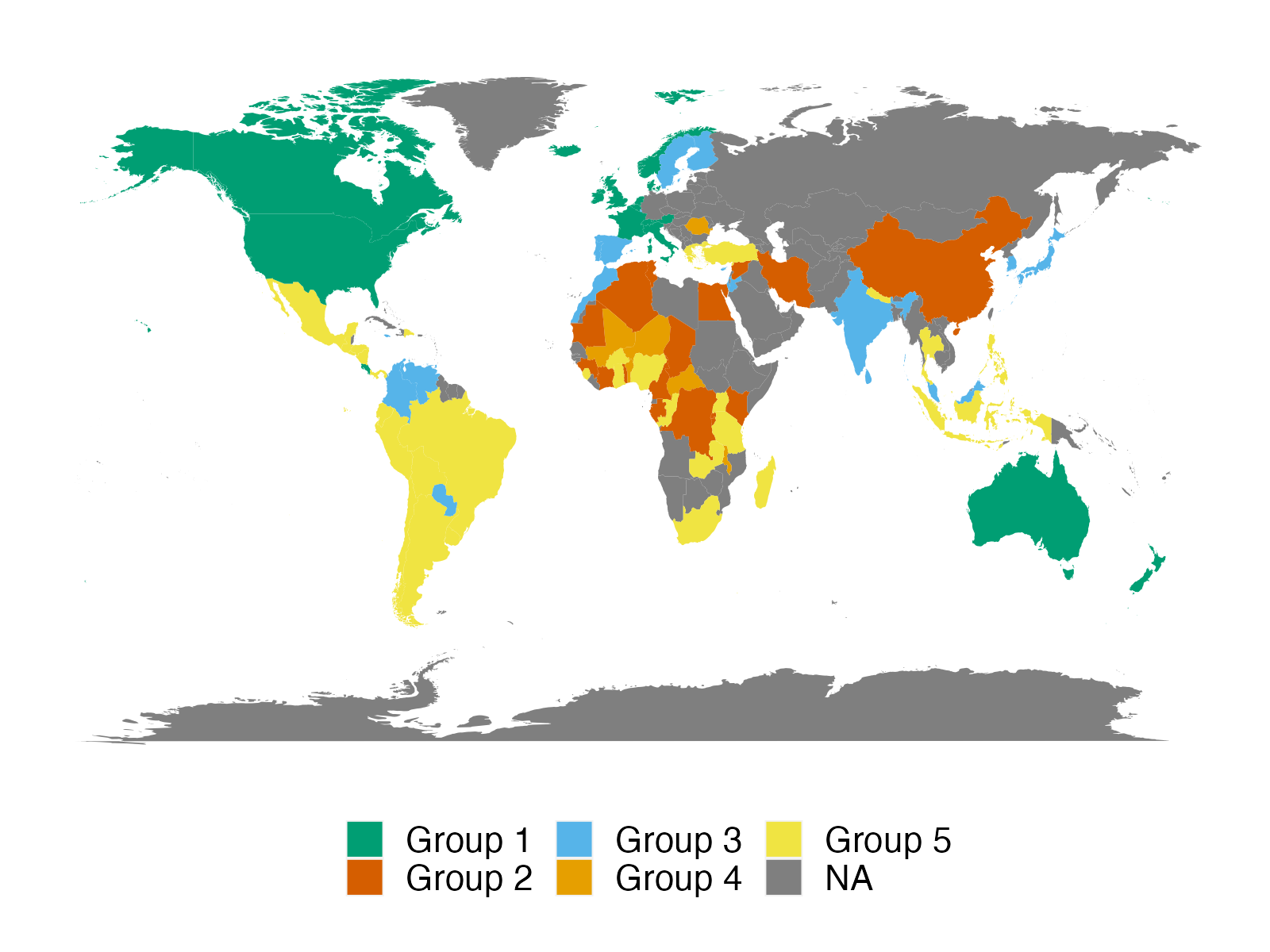}
\end{subfigure}
\begin{subfigure}[b]{0.45\textwidth}
	\centering
	%		\caption{Stronger PL $\left(\psi^{PL}_{ij} = 0.95\right)$}
	\includegraphics[scale= 0.35]{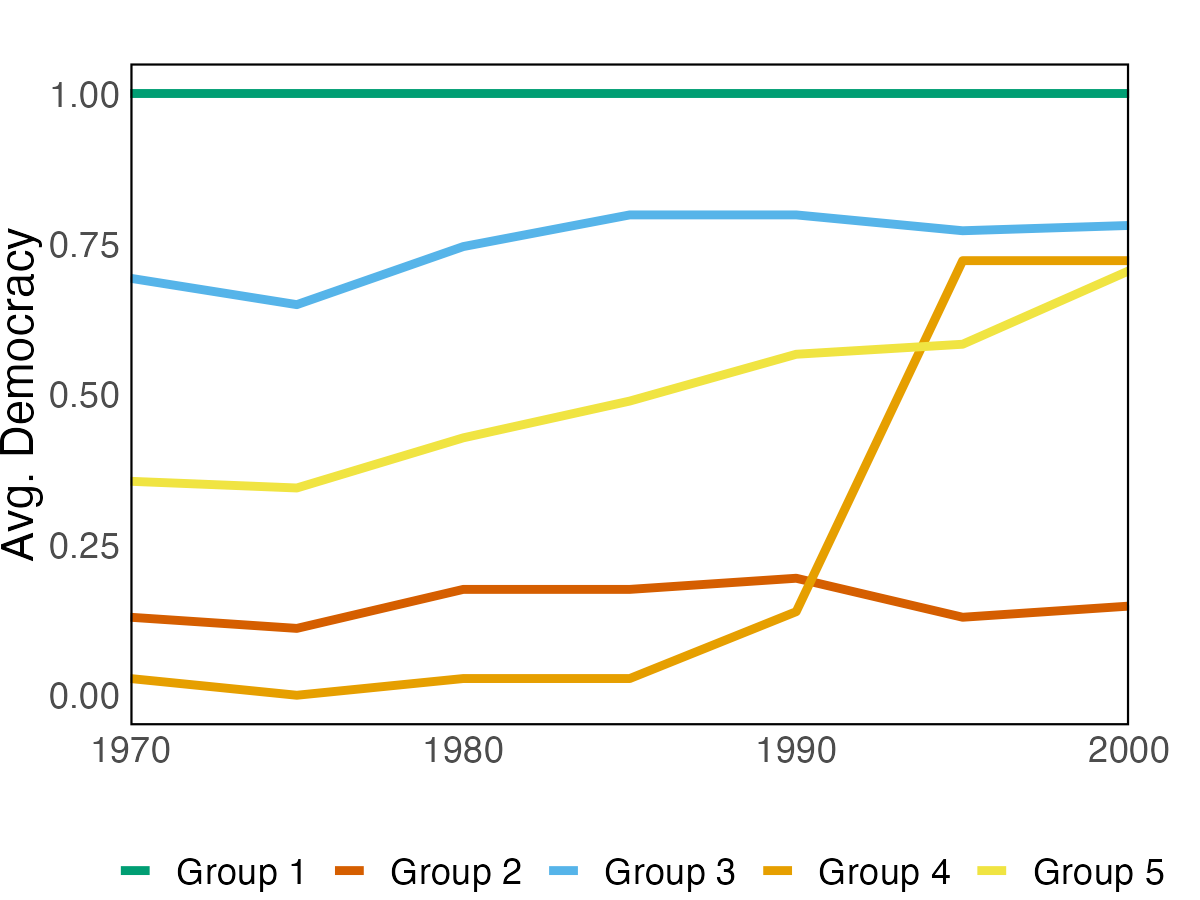}
\end{subfigure}
\end{figure}

Table \ref{tab:app_dem_est_sp2} presents the posterior mean and 90\% credible set for each coefficient across all groups, with $G$ fixed at the point estimate $\widehat{G}$. The key feature of using the specification 1 is that we are able to see distinct (cumulative) income effects across groups as group-specific coefficients are allowed. 

The effect of income on democracy is negligible for group 1 ("full-democracy") and group 4 ("late-transition") as the posterior means of $\beta$ are close to 0 and the associated credible intervals for $\hat{\beta}$ contain 0. Group 1, which we refer to as the "full-democracy" group, mostly contains high-income, high-democracy countries. It includes the United States, Canada, UK, most of European countries, Australia, and New Zealand, but also Costa Rica and Uruguay. These country kept their democracy index at the highest level throughout the sample, demonstrating that income has no effect on democracy. Group 4 is referred to as the "late-transition" group, which consists of Benin, Central African Republic, Mali, Malawi, Niger, and Romania. The transition to democracy for countries in group 4 was primarily driven by historical events in the 90s: Romania began a transition towards democracy after the 1989 Revolution; all other countries involved in the third wave of democratization in sub-Saharan Africa beginning in 1989. The impact of historical events is primarily captured by time fixed-effects, as the credible intervals for $\hat{\beta}$ and $\hat{\rho}$ well cover zero. 

Group 2 ("low-democracy") and group 3 ("flawed-democracy") are two groups that have stable Freedom House scores.  Lagged democracy is highly significant and indicates that there is a considerable degree of persistence in democracy. Log income per capita is also significant and illustrates the well-documented positive relationship between income and democracy. Though statistically significant, the effect of income is quantitatively small. For example, the coefficient of 0.055 for the group 3 implies that a 10 percent increase in GDP per capita is associated with an increase in the Freedom House score of 0.0055, which is very small. Group 2 includes low-democratic countries, China, Singapore, Iran, and a fraction of African countries, whose Freedom House scores remain relatively low throughout the sample. Countries in group 3, however, have Freedom House score stay in the relatively high level. The group covers countries with almost democratic but minor flaws in certain aspects, including India, Japan, South Korea, Finland, Sweden, and Portugal, among others. The cumulative income effects for these two groups are, however, different - it is negligible for group 2 (0.079) and modest for group 3 (0.244).  Group 5 ("progressive-transition"), on the other hand, experiences a continuous increase in Freedom House score from a 0.35 in 1970 to 0.7 in 2020. It has the largest positive income coefficient, although the cumulative income effect is modest (0.156).

\begin{table}[htp]
\begin{center}
	\caption{Coefficient Estimates across Groups, Geo-Prior}
	\label{tab:app_dem_est_sp2}
	\resizebox{\textwidth}{!}{%
		\begin{tabular}{l  p{1.1cm}  p{2.7cm}  p{1.1cm}  p{2.7cm} p{1.1cm}  p{2.7cm}  p{1.1cm}  p{2.4cm} }
			\toprule
			& \multicolumn{2}{c}{Lagged democracy ($\rho$)}  & \multicolumn{2}{c}{Lagged Income ($\beta$)} & \multicolumn{2}{c}{Income Effect ($\beta/(1-\rho)$)} & \multicolumn{2}{c}{Error variance ($\sigma^2$)}  \\
			\cmidrule(lr){2-3} \cmidrule(lr){4-5} \cmidrule(lr){6-7}  \cmidrule(lr){8-9}  \noalign{\smallskip}
			& Coef. & Cred. Set & Coef. & Cred. Set  & Coef. & Cred. Set & Coef. & Cred. Set\\ 
			\midrule
			Group 1 (16) &  0.058 & [-0.263, 0.360] &  0.000 & [-0.012, 0.012] &  0.000 & [-0.013, 0.013] & 0.001 & [0.001, 0.001]  \\ 
			Group 2 (18) &  0.484 & [ 0.354, 0.606] &  0.041 & [ 0.019, 0.062] &  0.079 & [ 0.044, 0.117] & 0.010 & [0.008, 0.011]  \\
			Group 3 (19) & 0.775 & [ 0.703, 0.850] & 0.055 & [ 0.031, 0.078] &  0.249 & [ 0.143, 0.352] & 0.013 & [0.011, 0.016]   \\
			Group 4 (6) &  -0.178 & [-0.468, 0.115] &  -0.025 & [-0.066, 0.017] &  -0.020 & [-0.054, 0.015] & 0.008 & [0.005, 0.011]  \\
			Group 5 (30) &  0.206 & [ 0.091, 0.310] &  0.125 & [ 0.090, 0.163] &  0.157 & [ 0.118, 0.199] & 0.057 & [0.048, 0.066]  \\
			\midrule
			Pooled OLS &  0.667 & [ 0.614, 0.717] &  0.081 & [ 0.064, 0.099] &  0.244 & [ 0.207, 0.281] & 0.039 & [0.035, 0.043]  \\ 
			\bottomrule 
		\end{tabular}
	}
\end{center}
\end{table}

Figure \ref{fig:app_dem_result_income_sp2} depicts the historical average log income for each group of countries between 1970 and 2000. Except for the high income group (group 1) and low income group (group 4), all other three groups with medium or relatively low levels of income reveal positive cumulative income effect, as indicated in Table \ref{tab:app_dem_est_sp2}. This finding is generally in line with the results of \citet{moral2012}, who observe a positive effect in low-income nations. Notice that, their definition of low-income country is quite board, encompassing countries with a GDP per capita below the 80th percentile of the empirical cross-sectional density.. As a result, beside six counties in the group 4 that had the lowest income on average, other countries fit within this definition and confirm that the positive income effect is not prevalent in high-income countries.

\begin{figure}[h]
\caption{Average Log Income by Groups}
\label{fig:app_dem_result_income_sp2}
\begin{center}
	\includegraphics[scale=0.45]{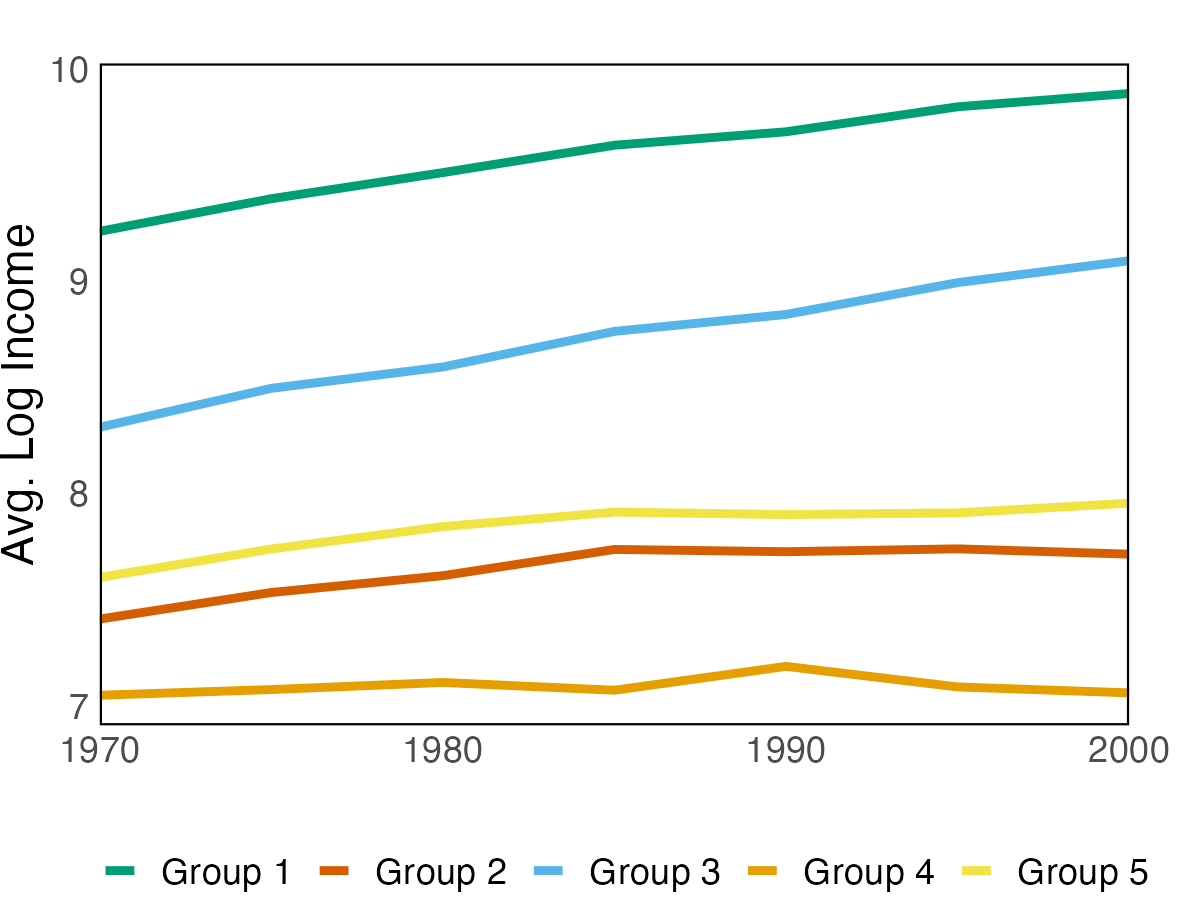}
\end{center}
\end{figure}

%\vspace{0.5cm}
%{\color{blue} *** Why start with results reported in the Appendix. Since they are in the Appendix, they are not as important as some other results.(done, move to the back)}

\noindent{\bf Specification 2:} The results for specification 2 are reported in Appendix \ref{appendix:res_demo}. In short, the results are comparable to the key findings in BM. BGFE-ho in specification 1 is identical to the main model in BM; it produces eight groups, which is consistent with the upper bound on the number of groups in BM based on BIC. BGFE-he-cstr, on the other hand, is more preferable and has the highest marginal data density as shown in Table \ref{tab:emp_app_dem_sp2_mdd}. The point estimate of group partitioning based on BGFE-he-cstr consists of four groups that all have the similar pattern as BM's group structure. This justifies BM's subjective choice of four groups.  Regarding the estimated coefficients, there is moderate persistence and a positive effect of income on democracy, but the cumulative effect of income is quantitatively small: $\beta / (1-\rho) = 0.08$.

\subsubsection{Impacts of Different Pre-Grouping Strategies}  \label{sec:emp_result_dem_compare}
%{\bf How Prior Affects Grouping?}
All results presented thus far are based on pairwise constraints derived from spatial information. We now implement the alternative pre-grouping strategy based on the initial level of democracy. We stick with the BGFE-he-cstr estimator under specification 1.

%{\color{blue} *** You should also report MDDs. (done)}

Different pre-grouping strategies yields different estimates of group patterns. We consider the BGFE-he estimator with three prior group structures: geo-prior, dem-prior, and no prior knowledge, with point estimates of the group partition presented in panel (\subref{fig:app2_geo_prior}), (\subref{fig:app2_dem_prior}), and (\subref{fig:app2_no_prior}) of Figure \ref{fig:app_dem_2prior_post}, respectively. Geo-prior and dem-prior produce comparable group structures, however certain nations are assigned to distinct groups.. They are encircled in the black dashed rectangle, including Portugal, Spain, Romania, Mali, Niger, Central African Republic, Benin, Malawi, and Jordan. Another country is South Korea. As depicted in panel (\subref{fig:app2_no_prior}), without any prior knowledge of groups, the group pattern is quite different, particularly for countries in Asia, Africa, and Latin America. 

The difference in group patterns result in discrepancies in MDD, which are listed in Table \ref{tab:emp_app_dem_diff_prior_mdd}. The BGFE-he estimator with geo-prior has the highest MDD in both specifications, whereas the dem-prior is only informative in specification 1 in comparison to the BGFE-he estimator without prior knowledge.

\begin{table}[h]
	\begin{center}
		\caption{Marginal Data Density, Different Priors}
		\label{tab:emp_app_dem_diff_prior_mdd}
		\begin{tabular}{ c | c c c }
			\toprule
			& geo-prior & dem-prior  & no prior\\
			\midrule
			SP1 & 544.324 &  527.517 & 501.904 \\
			SP2 & 413.476 & 309.369  & 381.218 \\
			\bottomrule 
		\end{tabular}
		
	\end{center}
\end{table}

\begin{figure}[hp]
	
	\caption{Posterior Point Estimate of Group Structure}
	\label{fig:app_dem_2prior_post}
	\centering
	
	\begin{subfigure}{0.48\textwidth}
		\centering
		\caption{Geo-Prior} \vspace{-0.3cm}
		\label{fig:app2_geo_prior}
		\includegraphics[scale= 0.45]{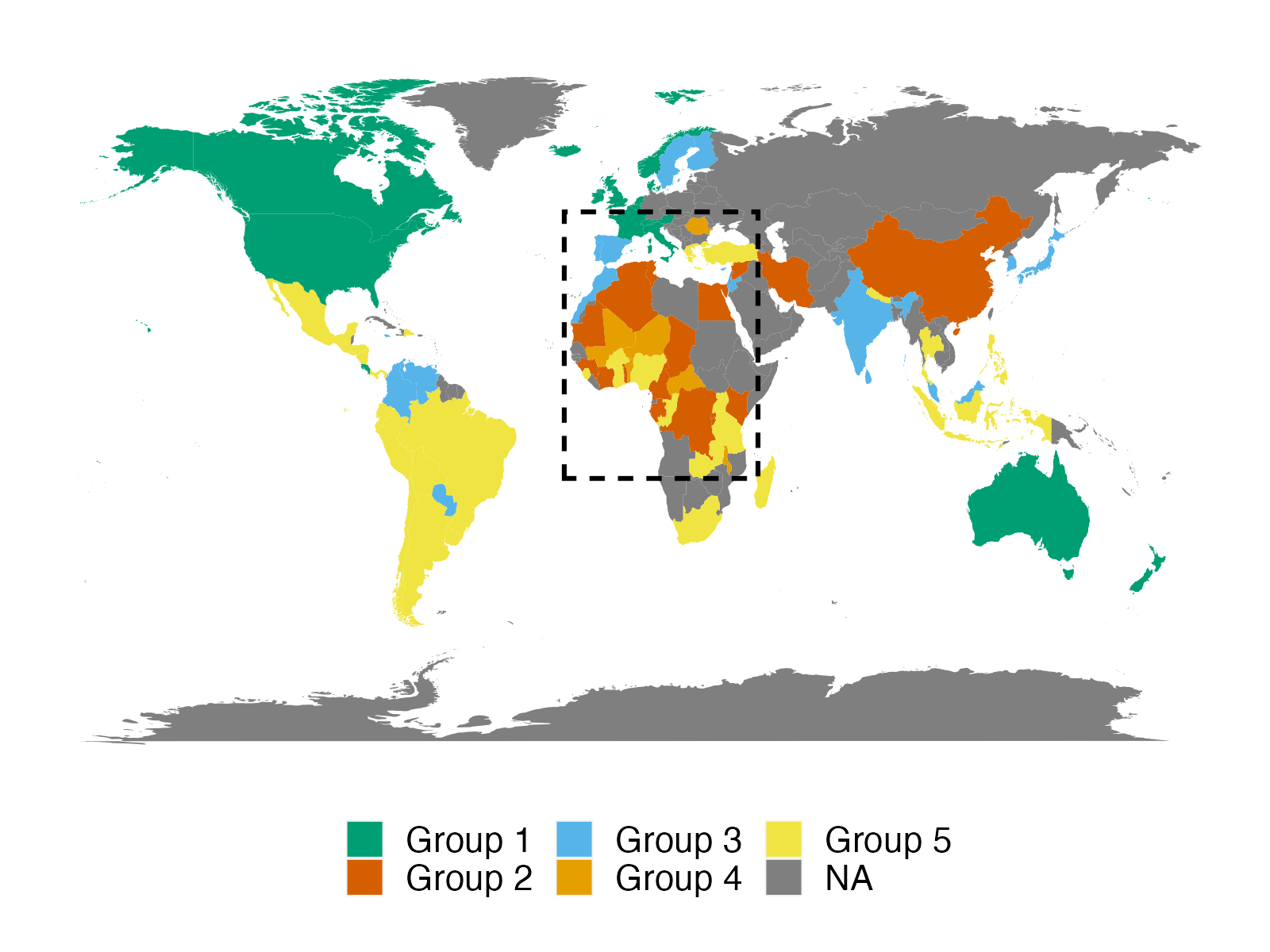}
	\end{subfigure}
	~~  \vspace{0.5cm}
	\begin{subfigure}{0.48\textwidth}
		\caption{Dem-Prior}  \vspace{-0.3cm}
		\label{fig:app2_dem_prior}
		\includegraphics[scale= 0.45]{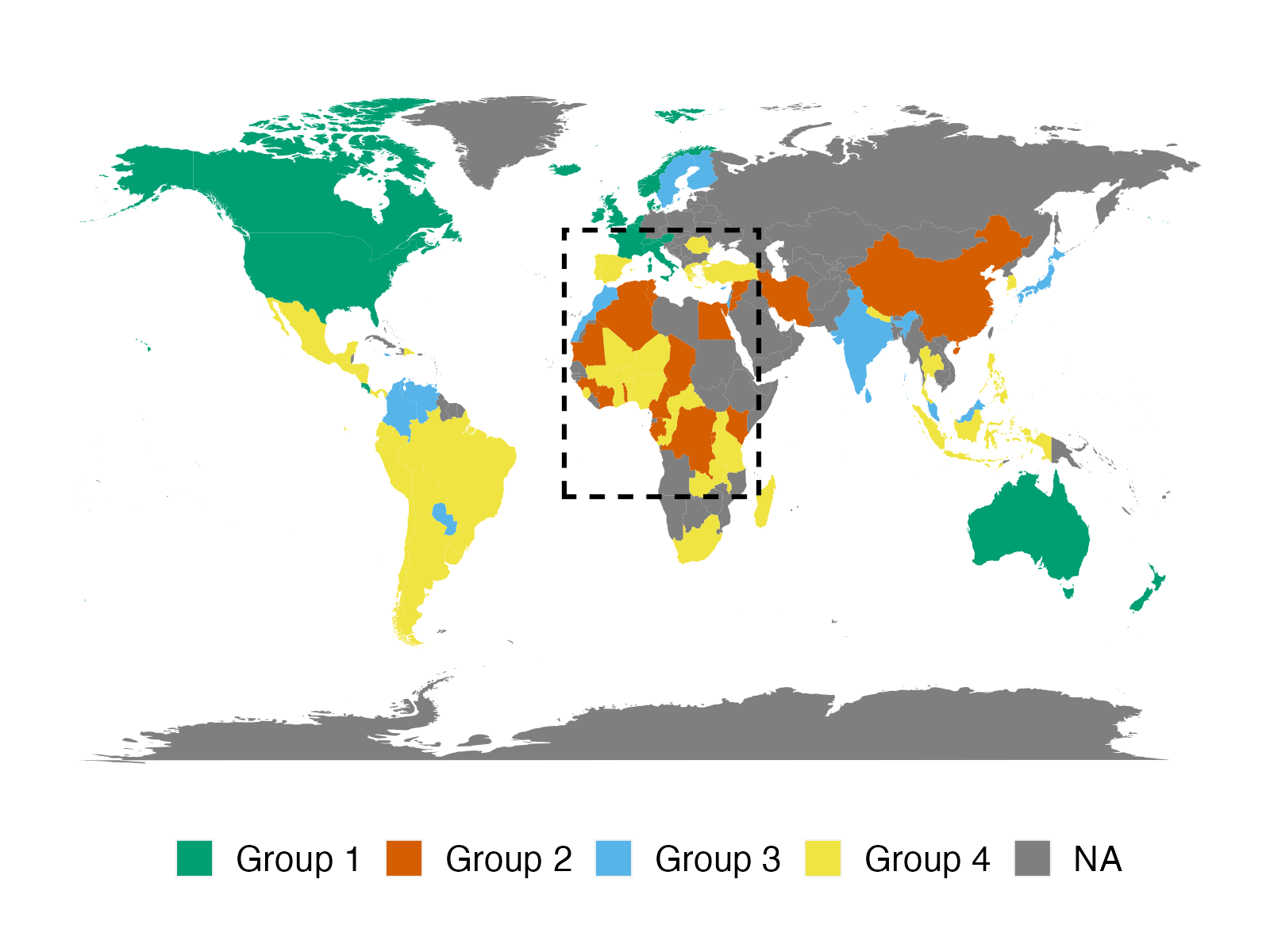}
	\end{subfigure}
	~~
	\begin{subfigure}{0.48\textwidth}
		\caption{No Prior Knowledge}  \vspace{-0.3cm}
		\label{fig:app2_no_prior}
		\includegraphics[scale= 0.45]{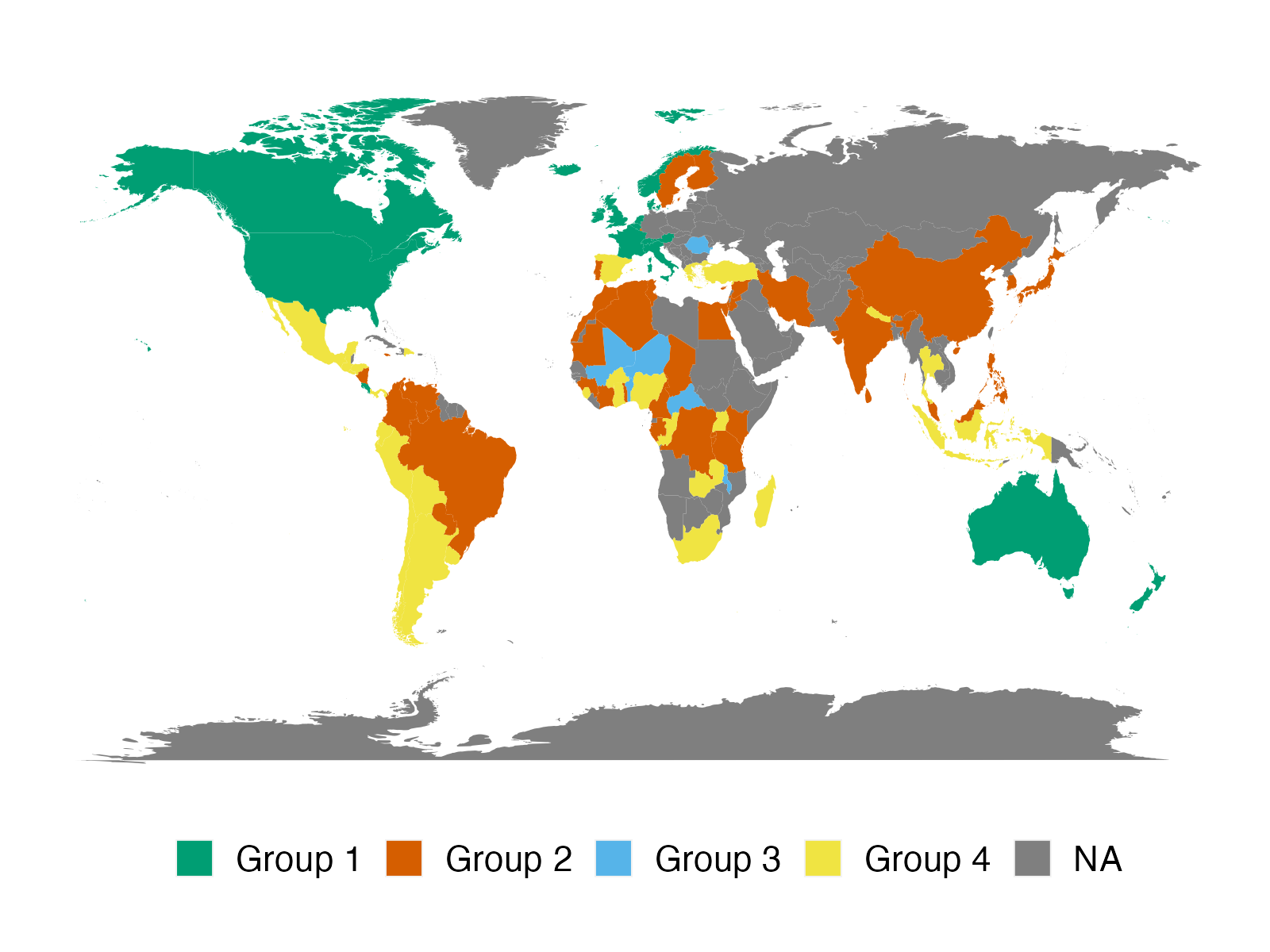}
	\end{subfigure}
\end{figure}

Using the initial level of democracy as prior knowledge results in four groups, as indicated in the panel (\subref{fig:app2_dem_prior}). The dem-prior has two major impacts on the group structure comparing with the geo-prior. It combines the ``late-transition" group (group 4 in geo-prior) with the ``progressive-transition" group (group 5 in geo-prior) to form a bigger and boarder ``progressive-transition" group. Additionally, Portugal and Spain are no longer categorized as ``flawed-democracy" countries, but rather as ``progressive-transition" group in a boarder sense. In terms of the posterior estimates, as shown in Table \ref{tab:app_dem_est_sp2_demo}, we observe similar value for the first three groups because they are merely subject to minor changes in the group structure. However, as ``late-transition" and ``progressive-transition" groups are agglomerated together under the dem-prior, the income effects of countries in the new group becomes larger, forcing some countries to exhibit strong and positive effects even though they are not under the geo-prior.

%{\color{blue} ***What's the effect on the posterior estimates? (done)}.

\begin{table}[htp]
	\begin{center}
		\caption{Coefficient Estimates across Groups, Dem-Prior}
		\label{tab:app_dem_est_sp2_demo}
		\resizebox{\textwidth}{!}{%
			
			\begin{tabular}{l  p{1.1cm}  p{2.7cm}  p{1.1cm}  p{2.7cm} p{1.1cm}  p{2.7cm}  p{1.1cm}  p{2.4cm} }
				\toprule
				& \multicolumn{2}{c}{Lagged democracy ($\rho$)}  & \multicolumn{2}{c}{Lagged Income ($\beta$)} & \multicolumn{2}{c}{Income Effect ($\beta/(1-\rho)$)} & \multicolumn{2}{c}{Error variance ($\sigma^2$)}  \\
				\cmidrule(lr){2-3} \cmidrule(lr){4-5} \cmidrule(lr){6-7}  \cmidrule(lr){8-9}  \noalign{\smallskip}
				& Coef. & Cred. Set & Coef. & Cred. Set  & Coef. & Cred. Set & Coef. & Cred. Set\\ 
				\midrule
				Group 1 (16) &  0.062 & [-0.262, 0.369] &  0.000 & [-0.012, 0.012] &  0.000 & [-0.014, 0.013] & 0.001 & [0.001, 0.001]  \\ 
				Group 2 (19) &  0.513 & [ 0.395, 0.640] &  0.043 & [ 0.022, 0.064] &  0.087 & [ 0.047, 0.126] & 0.010 & [0.008, 0.012]  \\
				Group 3 (15) & 0.802 & [ 0.704, 0.904] & 0.040 & [ 0.016, 0.064] &  0.223 & [ 0.079, 0.337] & 0.012 & [0.009, 0.015]   \\
				Group 4 (39) &  0.302 & [ 0.205, 0.388] &  0.120 & [ 0.092, 0.151] &  0.172 &  [ 0.136, 0.210] & 0.054 & [0.047, 0.062]  \\
				\midrule
				Pooled OLS &  0.667 & [ 0.614, 0.717] &  0.081 & [ 0.064, 0.099] &  0.244 & [ 0.207, 0.281] & 0.039 & [0.035, 0.043]  \\ 
				\bottomrule 
			\end{tabular}
		}
	\end{center}
\end{table}

% ------------------------------------------------------------------------------------------------------------------- %
% Extension
% ------------------------------------------------------------------------------------------------------------------- %

\section{Extensions} \label{subsec:ext}

%{\color{blue} *** As in the presentation, there might be a case for moving this to the end, after the applications. (done)}

Within the domain of panel data models, the proposed constrained-based BGFE framework can be extended in multiple directions to allow for more subtle group structures or more covariates. In addition, the DP prior with soft pairwise constraints also applies to other related topics and models, such as clustering problems, heterogeneous treatment effects, and panel VARs.

\subsection{Subtle Group Structure}

Through the Dirichlet process defines a prior that possesses the clustering property and is flexible enough to incorporate pairwise constraints, the group structure itself is elementary. Aside from our prior belief on the group, the group structure, which is introduced in all $\alpha_{i}$ and $\sigma_i^2$, is entirely governed by the stick-breaking process defined in Equation (\ref{main_text_pi}). The stick length $\xi_k$, on which we have a prior, is independent of any regressors or time. Consequently, each unit is associated with a single group, and the membership remains constant across time.

To create an even more flexible and richer group structure, we provide insight into three possible extensions, each of which requires a set of more distinctive nonparametric priors. (1) overlapping group and (2) time-varying group and (3) dependent group.

Overlapping group structures allow for multi-dimensional grouping. This is a natural extension without having to greatly modify the proposed DP prior. Following \citet{cheng2019}, each of $\alpha_i$'s and $\sigma_i^2$ may have its own group structure and a separate Dirichlet process is specified to each of them. As a result, units simultaneously belong to multiple groups based on the heterogeneous effects among regressors or cross-sectional heteroskedasticity. 
% This multiple clustering has the benefit of borrow strength among units that are homogeneous in one dimension but heterogeneous in other dimensions. 

Time-varying group structures allow the membership of the group to change over time. We could replace the DP by variants of the hierarchical Dirichlet process \citep{Teh2006} to achieve this feature. In short, the hierarchical Dirichlet process (HDP), a nonparametric Bayesian approach to clustering grouped data, is now the foundation of the prior. The time dimension naturally divides the panel data into $T$ groups, and a Dirichlet process is assumed for each group, with all Dirichlet processes having the same base distribution, which is distributed according to a global base distribution. The HDP allows each group to have its own cluster, but most importantly, these clusters are shared across groups. This lays the groundwork for time-varying group structures, as it assumes that the number of clusters remains constant over time, while cluster memberships are subject to change. Variants of the HPD are then proposed to capture the time-persistence in group structures, including dynamic HDP \citep{ren2008} and sticky HDP \citep{fox2008, fox2011}. A closely related area in the frequentists' methods is to identify structure breaks in parameters with grouped patterns, see \citet{okui2021, lumsdaine2022}.

Dependent group structures allow the prior group probability to rely directly on a collection of characteristics. The dependence is introduced through a modification of the stick-breaking representation for DPs, where the group probabilities vary with the characteristics. \citet{rodriguez2011} introduced the probit-stick breaking (PSB) process where the Beta random variables are replaced by normally distributed random variables transformed using the standard normal CDF. The PSB is defined by,
\begin{align}
	\pi_k \left(w_{i}\right) = \Phi\left(\zeta_k\left(w_{i}\right)\right) \prod_{j<k} \left[1-\Phi\left(\zeta_j\left(w_{i}\right)\right)\right],
\end{align}
where stochastic function $\zeta_k$ is drawn from Gaussian process $\zeta_k \sim G P\left(0, V_k\right)$ for $k=1,2, \cdots$ and $w_{i}$ is the set of characteristics that are informative to the latent group. Other forms of dependence are also available, see \citet{quintana2022} for a comprehensive review. A caveat of this approach is that analysis of group structure is confined to $w_{i}$ observed by the researcher. The approach requires researchers to know possible key characteristics, be able to observe them and ensure they are informative. In many cases, however, these characteristics might be hard to justify by researchers.

\subsection{Beyond Panel Data Models}

Although we concentrate on panel data model, our framework of the DP prior with soft pairwise constraints applies to other models where the group structure are crucial.

%The aforementioned framework of the DP with soft pairwise constraints can be applied to not only panel data model, but also other interesting models in Statistics and Economics.

\vspace{0.5cm}

\noindent{\bf Gaussian Mixture Model}

\noindent If we ignore covariates and focus exclusively on group membership, we essentially face a classical clustering problem with an infinite-dimensional mixture model. A typical probabilistic model is the infinite Gaussian mixture model \citep{rasmussen1999}, where the data itself is assumed to be drawn from a mixture of Gaussian components
\begin{align}
	y_i \sim \sum_{k=1}^{\infty} \pi_k N (\mu_k, \Sigma_k),
\end{align}
where $\pi_k$ are the mixture weights. With soft pairwise constraints, observations are clustered in accordance with prior belief.

\vspace{0.5cm}

\noindent{\bf Heterogeneous Treatment Effects}

Following the potential outcomes framework of \citet{rubin1974}, we posit the existence of potential outcomes $y_i(1)$ and $y_i(0)$ corresponding respectively to the response the $i$ th subject would have experienced with and without the treatment, and define the treatment effect at $x$ as
\begin{align} \label{eq:CATE}
	\tau(x) = \mathbb{E}\left[y_i(1) - y_i(0) | x_i = x\right].
\end{align}

%Suppose the binary treatment variable $D$ is randomly assigned conditional on $Z$, we need to assume that there are no hidden confounders
%\begin{align}
%	D \perp(Y(1), Y(0)) \mid X
%\end{align}
Previous works on Bayesian analysis for the treatment effects include \citet{chib2000, chib2002, chib2007a, chib2007b, heckman2014}, etc. Another strand of literature propose to estimate (\ref{eq:CATE}) by using several machine learning algorithms \citep{hill2011, athey2016, wager2018}. These methods are built on the idea that researchers find the subsamples across which the effect of a treatment differs out of all possible subsamples on the basis of the values of $x_i$. Instead of trying to discover valid subsets of the data, \citet{shiraito2016} directly models the outcome as a function of the treatment and pre-treatment covariates $x_i$ and estimate of the distribution of conditional average treatment effects (CATE) across units by employing the Dirichlet process.
\begin{align}
	y_{i} = \tau_{g_i} T_{i} + \gamma'_{g_i} x_{i}+ \varepsilon_{i}, \quad \varepsilon_{i} \stackrel{\text { iid}}{\sim} \mathcal{N}\left(0, \sigma_{g_i}^{2}\right),
\end{align}
where $T$ is the binary treatment variable. Our approaches fit in this methods by adding side information on the treatment groups into the prior.

\vspace{0.5cm}

\noindent{\bf Panel VARs}

\noindent Panel VARs \citep[and references therein.]{holtz1988, canova2013} has been widely used in macroeconomic analysis and policy evaluations to capture the interdependency across sectors, markets, and countries. Nevertheless, the large dimension of panel VARs typically makes the curse of dimensionality a severe problem. \citet{billio2019} propose nonparametric Bayesian priors that cluster the VAR coefficients and induce group-level shrinkage. Our paradigm with the DP prior with soft pairwise constraints is applicable to their method and injects prior information on groups into the underlying Granger causal networks.

Panel VARs have the same structure as VAR models, in the sense that all variables are assumed to be endogenous and interdependent, but a cross-sectional dimension is added to the representation. Thus, let $Y_t$ be the stacked version of $y_{i t}$, the vector of $J$ variables for each unit $i=1, \ldots, N$, i.e., $Y_t=\left(y_{1 t}^{\prime}, y_{2 t}^{\prime}, \ldots y_{N t}^{\prime}\right)^{\prime}$. Then a panel VAR is
\begin{align}
	Y_{t}=A_{0} + A_1  Y_{t-1} + A_2  Y_{t-2} + ...+ A_p  Y_{t-p} + u_{t}, \quad i=1, \ldots, N,
\end{align}
where $u_{t}$ is a $J \times 1$ vector of idiosyncratic errors and $A_{0}$ and $A_j$ are $NJ \times NJ$ matrices of coefficients. 

The main feature of \citet{billio2019} is to specify a prior that blends the DP prior with Lasso prior for each of $A_{0}$ and $A_j$, such that the VAR coefficients are either shrunk toward 0 or clustered at multiple non-zero locations. Our proposed DP prior with soft pairwise constraints, in the meantime, fit into their framework by replacing the original DP prior and permitting richer structure within each coefficient matrix. As the nonzero coefficients form Granger causal networks, equipping with soft pairwise constraints may result in a more plausible network by taking researchers' expertise into account. 

%{\color{red} *** I think you need to compress this section. You are saying any model can be estimated with Bayesian techniques. That's not interesting. Also, by not citing any papers that have worked on the components of these models that are not about group heterogeneity, you are setting yourself up for trouble.}

%\noindent Recently, nonparametric Bayesian priors has been implemented in the panel VAR models. Bayesian sparsity VAR network based on a nonparametric approach. They considered extracting Granger causal networks from sparse VAR models.

% especially when researchers are interested in examining the input-output links of a region or an area, where the time series dimension of the panel is short.

\section{Concluding Remarks} \label{sec:conclusion}

This paper proposes a Bayesian framework for estimating and forecasting in panel data models when prior group knowledge is available and informative for the group pattern. We include prior knowledge in the form of soft pairwise constraints into the Dirichlet process prior. Then, an intuitive and coherent prior is presented. The constrained grouped estimator proposed examines both heteroskedasticity and heterogeneous slope coefficients to endogenously reveal group structure. Our framework immediately estimates the number of groups as opposed to relying on ex-post model selection, and the structure of pairwise restrictions circumvents the computational difficulties and limitations that afflict conventional approaches. In addition, when utilizing small-variance asymptotics, the suggested Gibbs sampler with pairwise constraint contains a clustering procedure comparable to that of the constrained \textit{KMeans} algorithm. In Monte Carlo simulations, we demonstrate that constrained Bayesian grouped estimators outperform conventional estimators even in the presence of incorrect prior knowledge. Our empirical application to forecasting sub-indices of CPI inflation rates demonstrates that incorporating prior knowledge on the latent group structure yields more accurate density predictions. The better forecasting performance is mostly attributable to the key characteristics: nonparametric Bayesian prior and grouped cross-sectional variance. The method proposed in this paper is applicable beyond forecasting. In a second application, we revisit the relationship between a country's income and its democratic transition, where estimation of heterogeneous parameters is the object of interest. We recover a reasonable cluster pattern with a moderate number of groups and identify heterogeneous income effects on democracy.

The current work raises exciting questions for future research. It is desirable to investigate overlapping group structures, in which a unit might belong to many groups. This would allow us to increase the flexibility of a panel data model, potentially enhancing its predictive performance. Second, the assumption that an individual cannot change its group identity for the entire sample time can be amended, resulting in a specification that is even more flexible. Thirdly, our method is applicable to other econometric models, such as panel VARs with latent group structures in macro series.

\clearpage
\setstretch{1}
\bibliography{ref_cbg_new}

\clearpage
\setstretch{1.15}
\appendix
\setcounter{saveeqn}{\value{section}}\renewcommand{\theequation}{\mbox{%
                \Alph{saveeqn}.\arabic{equation}}} \setcounter{saveeqn}{1} %
\setcounter{equation}{0}
\renewcommand*\thepage{A-\arabic{page}}
\setcounter{page}{1}
\renewcommand*\thetable{A-\arabic{table}}
\setcounter{table}{0}
\renewcommand*\thefigure{A-\arabic{figure}}
\setcounter{figure}{0}

\bc

{\Large {\bf Supplemental Appendix to \\
		``Unobserved Grouped Patterns in Panel Data \\and Prior Wisdom'' }}

{\bf Boyuan Zhang}

\ec

%\section{Theoretical Derivations}
%\label{appsec:derivations}

%\section{Appendix}

%\appendix

\numberwithin{equation}{section}
\numberwithin{table}{section}
\numberwithin{figure}{section}

\section{Definitions and Terminology}

\subsection{Dirichlet Process and Related Stochastic Processes} \label{app:DP_and_related}

%from \citet{gorur2010}:

All unknown quantities in a model must be assigned prior distributions in Bayesian inference. A nonparametric prior can be used to reflect uncertainty about the parametric form of the prior distribution. Because of its richness, computational ease, and interpretability, the Dirichlet process (DP) is one of the most often used random probability measures. It can be used to model the uncertainty about the functional form of the prior distribution for parameters in a model.

The DP, which was first established using Kolmogorov consistency conditions \cite{ferguson1973}, can be defined from a number of views. \citet{ferguson1973} shows that the DP can be obtained by normalizing a gamma process. By using exchangeability, the Pólya urn method leads to the GP \cite{blackwell1973}. The Chinese restaurant process (CRP) \cite{aldous1985, pitman1996}, a distribution over partitions, is a similarly related sequential process that produces the DP when each partition is assigned an independent parameter with a common distribution. \citet{sethuraman1994} provided a constructive definition of the DP, which led to the characterization as a stick-breaking prior \cite{ishwaran2001}.

Construction of the DP using a stick-breaking process or a gamma process represents the DP as a countably infinite sum of atomic measures. These approaches suggest that a DPM model can be seen as a mixture model with infinitely many components. The distribution of parameters imposed by a DP can also be obtained as a limiting case of a parametric mixture model \cite{neal1992, rasmussen1999, neal2000}. This approach shows that a DPM can easily be defined as an extension of a parametric mixture model without the need to do model selection for determining the number of components to be used.

%The construction of a nonparametric prior over infinite number of distributions is well-documented in prior literature. The sampling process corresponding to the partition distribution is known as the Dirichlet Process \cite{ferguson1973}. There are several equivalent views of the Dirichlet process, such as the Stick-Breaking Process \cite{sethuraman1994}, and the Chinese restaurant process, or Blackwell–MacQueen urn process \cite{blackwell1973}.

\subsubsection{Dirichlet Process}

\citet{ferguson1973} defines a DP with two parameters, a positive scalar $a$ and a probability measure $B_{0}$, referred to as the concentration parameter and the base measure, respectively. The base distribution $B_{0}$ is the parameter on which the nonparametric distribution is centered, which can be thought of as the prior guess \cite{antoniak1974}. The concentration parameter $a$ expresses the strength of belief in $B_{0}$. For small values of $a$, samples from a DP is likely to be composed of a small number of atomic measures with large weights. For large values, most samples are likely to be distinct, thus concentrated on $B_{0}$.

Technically, a nonparametric prior is a probability distribution on $\mathcal{P}$, the space of all probability measures (say on the real line). Measurable sets in $\mathcal{P}$ are of the form $\{A \colon P(A) < 1\}$. We could specify a prior distribution over $\left(P\left(A_{1}\right), P\left(A_{2}\right), \ldots, P\left(A_{K}\right)\right)$ where $A_1$, $A_2$,..., $A_K$ are measurable finite partition of the measurable set $A$. Denote
\begin{align*}
	P \sim D P\left(a, B_{0}\right)
\end{align*}
for all partition $\left(A_{1}, \cdots, A_{K}\right)$, then,
\begin{align*}
	\left(P\left(A_{1}\right), \cdots, P\left(A_{K}\right)\right) \sim \operatorname{Dir}\left(a B_{0}\left(A_{1}\right), \cdots, a B_{0}\left(A_{K}\right)\right)
\end{align*}
Dir $(\cdot)$ stands for the Dirichlet distribution with probability distribution function being
\begin{align*}
	p \left(x_{1}, \cdots, x_{K} ; \eta_{1}, \cdots, \eta_{K}\right) = \frac{\Gamma\left(\sum_{k=1}^{K} \eta_{k}\right)}{\prod_{k=1}^{K} \Gamma\left(\eta_{k}\right)} \prod_{k=1}^{K} x_{k}^{\eta_{k}-1}
\end{align*}
where $x_{i}\in(0,1)$ and $\sum_{i=1}^{K} x_{i}=1$. This is a multivariate generalization of the Beta distribution and the infinite-dimensional generalization of the Dirichlet distribution is the Dirichlet process.

We next define a typical DP prior for $(\alpha, \sigma^2)$:
\begin{align*} \label{eq:DP_prior}
	(\alpha_i, \sigma_i^2) & \sim B,  \\
	B & \sim \operatorname{DP}\left(a, B_{0}\right), \numberthis
\end{align*} 
where $B$ is a random distribution. There are two parameters. The base distribution $B_{0}$ is a distribution over the same space as $B$. For example, if $B_0$ is a distribution on reals then $B$ must be a distribution on reals too. The concentration parameter $a$ is a positive scalar. One property of the DP is that random distributions $B$ are discrete, and each places its mass on a countably infinite collection of atoms drawn from $B_{0}$.

We adopt an Independent Normal Inverse-Gamma (INIG) distribution for the base distribution $B_0$:		
\begin{align}
	B_0(\phi) :=  INIG \left( \mu_{\alpha}, \Sigma_{\alpha}, \frac{\nu_{\sigma}}{2}, \frac{\delta_{\sigma}}{2} \right),
\end{align} 
with a set of hyperparameters $\phi =  \left( \mu_{\alpha}, \Sigma_{\alpha}, \frac{\nu_{\sigma}}{2}, \frac{\delta_{\sigma}}{2} \right)$.

The form of the base distribution and the value of the concentration parameter are critical aspects of model selection that influence modeling performance. Given a murky prior distribution, the concentration parameter's value can be derived from the data. It is more difficult to choose the base distribution because the model's performance is largely dependent on its parametric form, even if it is constructed hierarchically for robustness. The choice of the base distribution is determined largely by mathematical and practical convenience. For computational ease, conjugate distributions are recommended.

A draw from DP is, by definition, a discrete distribution. In this sense, given the baseline model, imposing a DP prior on the distribution of $\alpha_{g_i}$ entails limiting the intercepts to some discrete values and assuming that intercepts within a group are identical, which may not be appealing for some empirical applications. A natural extension is to suppose that $\alpha_i$ has a continuous parametric distribution $f(\alpha_i ; \theta)$, with $\theta$ as parameters, and to use a DP prior for the distribution of $\theta$. The parameters $\theta$ are then discrete and has group structure, whereas group heterogeneity $\alpha_{g_i}$ has a continuous distribution, i.e., $\alpha_i$ within a group can be different, but they are all derived from the same distribution. This additional layer of mixing is the general idea of the Dirichlet process mixture (DPM) model.

%*** The Dirichlet process defines a distribution over distributions, which is denoted as
%\begin{align}
%	B \sim \operatorname{DP}\left(a, B_{0}\right),
%\end{align}
%where $B$ is a random distribution. There are two parameters. The base distribution $B_{0}$ is a distribution over the same space as $B$. For example, if $B_0$ is a distribution on reals then $B$ must be a distribution on reals too. The concentration parameter $a$ is a positive scalar. One property of the DP is that random distributions $B$ are discrete, and each places its mass on a countably infinite collection of atoms drawn from $B_{0}$.

\subsubsection{Stick Breaking Process}
A nonparametric prior can also be defined as the distribution of a random variable $P$ taking values in $\mathcal{P}$. A construction of $\mathrm{DP}$ follows the stick-breaking process \cite{sethuraman1994},
\begin{align*}
	P(A) &=\sum_{k=1}^{\infty} \pi_{k} \mathbf{1}_{\alpha_k}\left(A\right), \\
	\alpha_k & \sim B_{0}, \quad k=1,2, \cdots, \\
	\pi_{k} &= \begin{cases}\zeta_{1}, & k=1 \\
		\prod_{j<k}\left(1-\zeta_{j}\right) \zeta_{k}, & k=2,3, \cdots\end{cases} \\
	& \text { where } \zeta_{k} \sim \operatorname{Beta}(1, a), \quad k=1,2, \cdots
\end{align*}

The stick breaking process distinguishes the roles of $B_{0}$ and $a$ in that the former governs component value $\alpha_k$ while the latter guides the choice of component probability $\pi_{k}$. Roughly speaking, the DP concentration parameter $a$ is linked to the number of unique components in the mixture density and thus determines and reflects the flexibility of the mixture density. 

%Let $K$ denote the number of unique components. As derived in \citet{antoniak1974}, we have
%\begin{align*}
%	E\left[K | a\right] & \approx a \log \left(\frac{a+N}{a}\right) \\
%	\operatorname{Var}\left[K | a\right] & \approx a\left[\log \left(\frac{a+N}{a}\right)-1\right]
%\end{align*}
%which indicates that a larger $a$ induce more unique components and expected $K$ is increasing in the number of unit $N$. It is straightforward that using stick-breaking process implicitly allows $K$ increasing with $N$.

%\citet{ferguson1983} and others formally prove that the stick‐breaking process is equivalent to Dirichlet Process.

\subsubsection{Chinese Restaurant / Pólya Urn Process} \label{app:CRP}

Another widely used representation of the DP prior is the Chinese restaurant process (CRP). To set the stage, imagine that we have a Chinese restaurant that has infinitely many tables that can each seat infinitely many customers. When a new customer, say the $n$-th, enters the restaurant, the probability of them sitting at the table $k$ with $n_k$ other customers proportional to $n_k$, and the probability of this customer sitting alone at a new table is related to $a$ (the concentration parameter in DP),
\begin{align*}
	p\left((\alpha_i, \sigma_i^2) = (\alpha_k, \sigma_k^2) | \alpha_{1:i-1}, \sigma_{1:i-1}^2, a, B_{0} \right)
	& \; \propto
	\left\{
	\begin{array}{ll}
		n_{k} & \text { if } k \text { is an existing table} \\
		a & \text { if } k \text{ is a new table}.
	\end{array}
	\right.
\end{align*}

Upon marginalizing out the random mixing measure $B$, one obtains the conditional distribution of $(\alpha_i, \sigma_i^2)$ given $(\alpha_{1:i-1}, \sigma_{1:i-1}^2)$, which follows a Polya urn distribution,
\begin{align} \label{eq:Polya_urn}
	\alpha_i, \sigma_i^2 | \alpha_{1:i-1}, \sigma_{1:i-1}^2, a, B_{0}\sim \frac{1}{a+i-1} \sum_{j=1}^{i-1} \delta_{(\alpha_j, \sigma_j^2)}+ \frac{a}{a+i-1} B_{0} (\cdot).
\end{align}

%In fact, the DP prior in (\ref{eq:DP_prior}) exhibits an important \textit{clustering property}, such that the draws $(\alpha_i, \sigma_i^2) $ are generally \textit{not} distinct. 

Equation (\ref{eq:Polya_urn}) reveals the  \textit{clustering property} of the DP prior: there is a positive probability that each $(\alpha_i, \sigma_i^2)$ will take on the value of another $(\alpha_j, \sigma_j^2)$, leading some of the variables to share values. The probability of sharing values is proportional to $n_{g_j}$. This self-reinforcing property is sometimes expressed as \textit{the rich get richer}. This equation also reveals the roles of scaling parameter $a$ and base distribution $B_{0}$. The unique values contained in $(\alpha_{1:N}, \sigma_{1:N}^2)$ are drawn independently from $B_{0}$, and the parameter $a$ determines how likely $\alpha_{i}, \sigma_{i}^2$ is to be a newly drawn value from $B_{0}$ rather than take on one of the values from $\alpha_{1:i-1}, \sigma_{1:i-1}^2$.

Chinese restaurant process shares the same characteristics as the Pólya urn process which can be extend to the two-parameter Pitman–Yor process \cite{pitman1997}. Here is the basic idea of Pólya urn process. Imagine that we have an urn with possibly infinitely many colors. Let $a$ (again, the concentration parameter in DP) be the initial number of balls with each color. The urn evolves in discrete time steps - at each step, one ball is sampled uniformly at random and put it back to the urn; The color of the withdrawn ball is observed, and one additional ball with the same color is returned to the urn. This process is then repeated.

Equation (\ref{eq:Polya_urn}) is also called the Blackwell-MacQueen prediction rule - the conditional distribution of $\theta_{n}$ given previous sampled $\theta_{1:n-1}$ from the Dirichlet process prior. It characterizes the Chinese restaurant process/Pólya urn process and serves as the key component in the Pólya urn Gibbs sampler \cite{ishwaran2001}.

Prior literature shows the equivalence between Chinese restaurant process/Pólya urn process and aforementioned processes. \citet{blackwell1973} present the equivalence between Pólya urn process and Dirichlet process. \citet{miller2019}  formally prove that the Chinese restaurant process is equivalent to the stick breaking process.

%The Chinese restaurant process (Pólya urn process) reveals that the Dirichlet process prior has an important clustering property in terms of the group-dependent model parameters: The probability that $\theta_{n}$ takes the same value of $\theta_{i}$ for $i=1,2, \ldots, n-1$ is proportional to $\sum_{i=1}^{n-1}  \mathbf{1}_{\theta_{i}}(.)$. This self-reinforcing property is sometimes expressed as \textit{the rich get richer}.

\subsection{Exchangeable Partition Probability Function}

A predominant feature of the the exchangeable partition probability function (EPPF) in (\ref{eq:EPPF}) is that it defines a prior distribution over $G$. To obtain the prior from EPPF, we must first identify all possible group partitions of $N$ units. This problem can be recast as a prototypical ``balls and urns" problem: what are the ways of putting $N$ distinguishable balls into $N$ indistinguishable urns if empty urns are allowed? 

\begin{exmp} \label{exmp_dp}
	Consider a simple case in which $N = 3$ and $a = 1$. It is easy to show that there are five ways to group three units. Then the prior distribution over $G$ under Dirichlet process is given by,
	\begin{align*}
		\Pr (g_1 = g_2 = g_3 = 1) & = \frac{\Gamma(1)}{\Gamma(4)}  \Gamma(3)= \frac{1}{3}, \\
		\Pr (g_1 = g_2 = 1, g_3 = 2) & = \frac{\Gamma(1)}{\Gamma(4)}  \Gamma(2) \Gamma(1) = \frac{1}{6}, \\
		\Pr (g_1 = g_3 = 1, g_2 = 2) & = \frac{\Gamma(1)}{\Gamma(4)}  \Gamma(2) \Gamma(1) = \frac{1}{6}, \\
		\Pr (g_2 = g_3 = 1, g_1 = 2)& = \frac{\Gamma(1)}{\Gamma(4)}  \Gamma(2) \Gamma(1) = \frac{1}{6}, \\
		\Pr (g_1 = 1, g_2 = 2, g_3 = 3)& = \frac{\Gamma(1)}{\Gamma(4)}  \Gamma(1) \Gamma(1)  \Gamma(1) = \frac{1}{6}.
	\end{align*} 
\end{exmp}

Finding all solutions to the "balls and urns" problem with big $N$ is computationally impossible in general. For a certain number of groups $K$, the number of ways to assign $N$ unit to $K$ groups is described by the \textit{Stirling number of the second kind},
%\begin{align}
%	\mathcal{S}_{N, K} = \frac{1}{K !} \sum_{j=0}^{K}(-1)^{j}\left(\begin{array}{c}
	%		K \\ j
	%	\end{array}\right)(K-j)^{N}.
%\end{align}
\begin{align}
	\mathcal{S}_{N, K} = \frac{1}{K !} \sum_{j=0}^{K}(-1)^{j} C_{K}^j (K-j)^{N}.
\end{align}

The sum of $\mathcal{S}_{N, K}$ over all possible $K$, also known as the $N$-th Bell number, $\mathcal{B}_{N}=\sum_{K=1}^{N} \mathcal{S}_{N, K}$ describes the number of all possible partitions of $N$ balls. Owing to the rapid growth of the space, listing all feasible partitions becomes computationally impossible. For example from a moderate $N = 12$ to 15, the number of partitions increases from $\mathcal{B}_{12} = 4, 213, 597$ to $\mathcal{B}_{15} = 1,382,958,545$. \citet{sethuraman1994} and \citet{pitman1996} constructively show that group indices/partitions can be drawn from the EPPF for DP using the stick-breaking process defined in (2.8). As a result, the EPPF does not explicitly appear in the posterior analysis in the current setting so long as the priors for the stick lengths are included.

In the example below, we demonstrate how pairwise constraints affect the prior density of the group partition using Equation (\ref{eq:prior_G_soft}):
\begin{align}
	p(G | \psi, T) \; \propto \; p (G) \prod_{i, j}  \left( \fra{\psi_{i j}}{1 - \psi_{i j}} \right)^{c T_{i j} \delta_{ij}(G)}.
\end{align}

\begin{exmp}
	Consider again the three-unit case in Example \ref{exmp_dp}  with $a = 1$. Assume there is a positive-link constraint between units 1 and 2 and that is the only constraint in this example. Then the prior probabilities of the five partitions are adjusted to account for the effect of $W_{12}$:
	\begin{align*}
		\Pr (g_1 = g_2 = g_3 = 1) & = \frac{1}{3} \cdot \frac{2 \exp(4cW_{12})}{\exp(4cW_{12}) + 1} > \frac{1}{3}, \\
		\Pr (g_1 = g_2 = 1, g_3 = 2) & = \frac{1}{3} \cdot \frac{\exp(4cW_{12})}{\exp(4cW_{12}) + 1} > \frac{1}{6}, \\
		\Pr (g_1 = g_3 = 1, g_2 = 2) & = \frac{1}{3} \cdot \frac{1}{\exp(4cW_{12}) + 1} < \frac{1}{6},\\
		\Pr (g_2 = g_3 = 1, g_1 = 2)& = \frac{1}{3} \cdot \frac{1}{\exp(4cW_{12}) + 1} < \frac{1}{6},\\
		\Pr (g_1 = 1, g_2 = 2, g_3 = 3)& = \frac{1}{3} \cdot \frac{1}{\exp(4cW_{12}) + 1} < \frac{1}{6}.
	\end{align*} 
	Note that $c > 0$ and $W_{12} > 0$. Comparing to the results in Example \ref{exmp_dp}, the probabilities of the first two partitions become higher since they all meet the PL constraint between units 1 and 2, while the rest of the partitions violate the constraint and hence the probabilities drop.
\end{exmp}

%\subsubsection{Other Notes}
%
%Benjamin Bloem-Reddy's \href{https://www.stat.ubc.ca/~benbr/assets/notes/partitions_rpm_notes.pdf}{notes} on exchangeable random partitions and random discrete probability measures is amazing.

\subsection{Hierarchical Dirichlet Process} \label{appendix:HDP}

The hierarchical Dirichlet process (HDP) was developed by \citet{Teh2006}. The HDP is a nonparametric Bayesian approach to clustering grouped data, with the known group membership. It equips a Dirichlet process for each group of data, with the Dirichlet processes for all groups sharing a base distribution which is itself drawn from a Dirichlet process. This method allows groups to share statistical strength via sharing of clusters across groups. The base distribution being drawn from a Dirichlet process is important, because draws from a Dirichlet process are atomic probability measures, and the atoms will appear in all group-level Dirichlet processes. Since each atom corresponds to a cluster, clusters are shared across all groups.

The HDP is parameterized by a base distribution $H$ that governs the prior distribution over data items, and a number of concentration parameters that govern the prior number of clusters and amount of sharing across groups. Assume that we have $J$ groups of data, each consisting of $N_j$ data points, $y_{j 1}, \ldots y_{j N_j}$.  The process defines a set of random probability measures $(B_j)_{j=1}^J$, one for each group. The random probability measure $B_j$ for the $j$-th group is distributed as a Dirichlet process:
\begin{align}
	B_j | B_0 \sim \operatorname{DP}\left(\gamma, B_0\right),
\end{align}
where $\gamma$ is the concentration parameter and $B_0$ is the base distribution shared across all groups. The distribution of the global random probability measure $B_0$ is given by,
\begin{align}
	B_0 \sim \operatorname{DP}\left(\alpha_0, H\right),
\end{align}
with concentration parameter $\alpha_0$ and base distribution $H$. 

A hierarchical Dirichlet process can be used as the prior distribution over the parameters for grouped data. For each $j$, let $\left(\phi_{j i}\right)_{i=1}^{n_j}$ be i.i.d. random variables distributed as $B_j$. Each $\phi_{j i}$ is a parameter corresponding to a single observation $y_{j i}$. The likelihood is given by,
\begin{align*}
	&\phi_{j i} | B_j \sim B_j, \\
	&y_{j i} | \phi_{j i} \sim p \left(\phi_{j i}\right). \numberthis
\end{align*}
The resulting model above is called a HDP mixture model, with the HDP referring to the hierarchically linked set of Dirichlet processes, and the mixture model referring to the way the Dirichlet processes are related to the data items.

To understand how the HDP implements a clustering model, and how clusters become shared across groups, recall that draws from a Dirichlet process are atomic probability measures with probability one. The base distribution $B_0$ can be expressed using a stick-breaking representation,
\begin{align}
	B_0=\sum_{k=1}^{\infty} \beta_{k} \delta_{\theta_k},
\end{align}
where there are an infinite number of atoms, $\theta_k \sim H$, $k=1,2, \ldots$. Each atom is associated with a mass $\beta_{k}$ and $\beta=\left(\beta_i\right)_{i=1}^{\infty} \sim \operatorname{Stick}(\gamma)$ are mutually independent. Since $B_0$ is the base distribution for the group specific Dirichlet processes, each $B_j$ has the same atoms as $B_0$ and can be written in the form,
\begin{align}
	B_j = \sum_{k=1}^{\infty} \pi_{j k} \delta_{\theta_k}.
\end{align}

Let $\pi_j=\left(\pi_{j k}\right)_{k=1}^{\infty}$. Note that the weights $\pi_j$ are independent given $\beta$ (since the $B_j$ are independent given $B_0$). It can be shown that the connection between the weights $\pi_j$ and the global weights $\beta$ is
\begin{align}
	\pi_j \mid \alpha_0, \beta \sim \operatorname{DP}\left(\alpha_0, \beta\right).
\end{align}

Thus the set of atoms is shared across all groups, with each group having its own group-specific atom masses. Relating this representation back to the observed data, we see that each data item is described by a mixture model,
\begin{align}
	y_{j i} | B_j \sim \sum_{k=1}^{\infty} \pi_{j k} f\left(\theta_k\right),
\end{align}
where the atoms $\theta_k$ play the role of the mixture component parameters, while the masses $\pi_{j k}$ play the role of the mixing proportions. As a result, each group of data is modeled using a mixture model, with mixture components shared across all groups and group-specific mixing weights. 

\vspace{0.5cm}

\noindent {\bf Chinese Restaurant Franchise} \\
\citet{Teh2006} have also described the marginal probabilities obtained from integrating over the random measures $B_0$ and $(B_j)_{j=1}^J$. They show that these marginals can be described in terms of a Chinese restaurant franchise (CRF) that is an analog of the Chinese restaurant process. 

Recall that $\phi_{j i}$ are random variables with distribution $B_j$. In the following discussion, we will let $\theta_1, \ldots, \theta_K$ denote $K$ i.i.d. random variables distributed according to $H$, and, for each $j$, we let $\psi_{j 1}, \ldots, \psi_{j T_j}$ denote $T_j$ i.i.d. variables distributed according to $B_0$.

Each $\phi_{j i}$ is associated with one $\psi_{j t}$, while each $\psi_{j t}$ (table id) is associated with one $\theta_k$. Let $t_{j i}$ be the index of the $\psi_{j t}$ associated with $\phi_{j i}$, and let $k_{j t}$ (dish id) be the index of $\theta_k$ associated with $\psi_{j t}$. Let $n_{j t}$ be the number of $\phi_{j i}$'s associated with $\psi_{j t}$, while $m_{j k}$ is the number of $\psi_{j t}$'s associated with $\theta_k$. Define $m_k=\sum_j m_{j k}$ as the number of $\psi_{j t}$'s associated with $\theta_k$ over all $j$. Notice that while the values taken on by the $\psi_{j t}$'s need not be distinct, they are distributed according to a discrete random probability measure $B_0 \sim \mathrm{DP}(\gamma, H)$, we are denoting them as distinct random variables.

First consider the conditional distribution for $\phi_{j i}$ given $\phi_{j 1}, \ldots, \phi_{j i-1}$ and $B_0$, where $B_j$ is integrated out, we have,
\begin{align}
	\phi_{j i} | \phi_{j 1}, \ldots, \phi_{j i-1}, \alpha_0, G_0 \sim \sum_{t=1}^{T_j} \frac{n_{j t}}{i-1+\alpha_0} \delta_{\psi_{j t}}+\frac{\alpha_0}{i-1+\alpha_0} G_0,
\end{align}
This is a mixture, and a draw from this mixture can be obtained by drawing from the terms on the right-hand side with probabilities given by the corresponding mixing proportions. If a term in the first summation is chosen, then we set $\phi_{j i}=\psi_{j t}$ and let $t_{j i}=t$ for the chosen $t$. If the second term is chosen, then we increment $T_j$ by one, draw $\psi_{j T_j } \sim B_0$ and set $\phi_{j i}=\psi_{j T_j}$ and $t_{j i}=T_j$. 

Now we proceed to integrate out $B_0$. Notice that $B_0$ appears only in its role as the distribution of the variables $\psi_{j t}$. Since $B_0$ is distributed according to a Dirichlet process, we can integrate it out and writing the conditional distribution of $\psi_{j t}$ directly:
\begin{align}
	\psi_{j t} | \psi_{11}, \psi_{12}, \ldots, \psi_{21}, \ldots, \psi_{j t-1}, \gamma, H \sim \sum_{k=1}^K \frac{m_k}{\sum_k m_k+\gamma} \delta_{\theta_k}+\frac{\gamma}{\sum_k m_k+\gamma} H.
\end{align}
If we draw $\psi_{j t}$ via choosing a term in the summation on the right-hand side of this equation, we set $\psi_{j t}=\theta_k$ and let $k_{j t}=k$ for the chosen $k$. If the second term is chosen, we increment $K$ by one, draw $\theta_K \sim H$ and set $\psi_{j t}=\theta_K, k_{j t}=K$.

In short, the CRF is comprised of $J$ restaurants with a shared menu across the restaurants. Each restaurant corresponds to an HDP group, and an infinite buffet line of dishes common to all restaurants. The process of seating customers at tables, however, is restaurant specific. Each customer is preassigned to a given restaurant determined by that customer's group $j$. Upon entering the $j$ th restaurant in the CRF, customer $y_{j i}$ sits at currently occupied tables $t_{j i}$ with probability proportional to the number of currently seated customers, or starts a new table $T_j+1$ with probability proportional to $\alpha$. The first customer to sit at a table goes to the buffet line and picks a dish $k_{j t}$ for their table, choosing the dish with probability proportional to the number of times that dish has been picked previously, or ordering a new dish $\theta_{K+1}$ with probability proportional to $\gamma$. The intuition behind this predictive distribution is that integrating over the global dish probabilities $\beta$ results in customers making decisions based on the observed popularity of the dishes throughout the entire franchise.

\subsection{Random Effects vs. Fixed-Effects}

Regarding the connection between Bayesian and frequentists' panel data model, according to \citet{koop2003}, if we impose a
\begin{enumerate}[(i)]
	\item non-hierarchical prior (such as Normal prior without hyperpriors) on the intercept $\alpha_{i0}$, the resulting panel data model is equivalent to the frequentist fixed-effects model. This is basically a Bayesian linear regression with standard priors on parameters.
	% To see this, the design matrix includes the explanatory variables $\mathbf{x}_{i t}^{\prime}$ attached to a matrix containing a dummy variable for each individual.
	
	\item hierarchical prior on the intercept $\alpha_{i0}$, the resulting panel data model is equivalent to the frequentists' random effects model. A convenient hierarchical prior assumes that, for $i=1, \ldots, N$,
	$$
	\alpha_{i0} \sim N\left(\mu_{\alpha}, \sigma^2_{\alpha}\right).
	$$
	The hierarchical structure of the prior arises if we treat $\mu_{\alpha}$ and $V_{\alpha}$ as unknown parameters which require their own prior. We assume $\mu_{\alpha}$ and $V_{\alpha}$ to be independent of one another with
	$$
	\mu_{\alpha} \sim N\left(\underline{\mu}_{\alpha}, \underline{\sigma}_{\alpha}^{2}\right),
	$$
	and
	$$
	\sigma^2_{\alpha} \sim IG\left(\underline{\tau}_{\alpha}, \underline{\nu}_{\alpha}\right).
	$$
	This is analogous to the random effects model as $\alpha_{i}$ are essentially assumed to draw from the underlying distribution, and data are used to update our prior on the hyperparameters of the underlying distribution.
\end{enumerate}

The discussion in \citet{koop2003} is in line with the hierarchical models discussed in \citet{smith1973}. The panel data model equipped with a non-hierarchical prior is a two‐stage hierarchical model which results in a fixed effects model, while incorporating a hierarchical prior forms a three‐stage hierarchical model that corresponds to a random effects model.

Back to our settings, if the baseline prior for $\alpha_{0i}$ is a DP (DPM) prior, then our proposed nonparametric Bayesian prior is a type of non-hierarchical (hierarchical) prior with latent group structure in intercepts and hence we call our proposed estimator as the constrained grouped fixed (random) effects estimator.

%Both the hierarchical and non-hierarchical priors allow for every individual to have a different intercept. However, the hierarchical prior places more structure in that it assumes all intercepts are drawn from the same distribution. Hierarchical priors are preferred if $T$ is small relative to $N$, the number of parameters is quite large relative to sample size. This extra structure (if consistent with patterns in the data) allows for more accurate estimation.

\section{Priors} \label{appendix:prior}

%\subsection{Dirichlet Process Priors} \label{app:DP_prior}

%\subsubsection{Prior on Parameters}

\noindent {\bf Prior on Group-Specific Parameters}
\begin{align}
	\left( \alpha_i, \sigma_i^2\right) \sim \sum_{k=1}^{\infty} \pi_k \delta_{\left(\alpha_{k}, \sigma_{k}^2\right)} \text{ with } \left(\alpha_{k}, \sigma_{k}^2\right) \sim B_{0}(\phi).
\end{align}
$B_0$ is an Independent Normal Inverse-Gamma (INIG) distribution:		
\begin{align}
	B_0 :=  INIG \left( \mu_{\alpha}, \Sigma_{\alpha}, \frac{\nu_{\sigma}}{2}, \frac{\delta_{\sigma}}{2} \right),
\end{align} 
with a set of hyperparameters $\phi =  \left( \mu_{\alpha}, \Sigma_{\alpha}, \frac{\nu_{\sigma}}{2}, \frac{\delta_{\sigma}}{2} \right) = \left(0, 1, 6, 5\right)$.

\vspace{0.5cm}
\noindent {\bf Prior on Stick Lengths}
\begin{align}
	\xi_k \sim Beta \left(1, a\right),
\end{align} 
where $a$ is the concentration parameter.

\vspace{0.5cm}
\noindent {\bf Hyper-prior on Concentration Parameter}
\begin{align}
	a \sim Gamma \left(m, n\right),
\end{align} 
with $(m,n) = (0.4, 10)$.

\vspace{0.5cm}
\noindent {\bf Prior on Common and Individual Slope Coefficients (if any)}
Finally, the prior distribution for the common parameter $\rho$ is chosen to be a normal distribution to stay close to the linear regression framework,
\begin{align}
	\rho \sim N (0, \sigma^2_{\rho}) \text{ with }  \sigma^2_{\rho} = 1.
\end{align}
The prior of heterogeneous parameter $\beta_i$ follows,
\begin{align}
	\beta_i \sim N (0, \Sigma_{\beta}) \text{ with }  \Sigma_{\beta} = 1 \times \mathbf{I}_p.
\end{align}

\section{Posterior Distributions and Algorithms} \label{appendix:algo}

\subsection{Blocked Gibbs Sampler and Algorithm} \label{subsec:algo_soft}
Initialization:
\begin{enumerate}[(i)] \vspace{-0.5cm}
	\item Preset the initial number of active groups as $K_0^a = N$.

	\item Set concentration parameter $a$ to its prior mean.

	\item In ignorance of group heterogeneity ($K = 1$) and heteroskedasticity, use \citet{anderson1982} IV approach to get $\hat{\alpha}_{IV}$ and $\hat{\Sigma}_{\alpha, IV}$. These IV estimators serve as the mean and covariance matrix in the related priors.

	\item Generate $K_0^a$ random sample from the distribution $N(\hat{\alpha}_{IV},\hat{\Sigma}_{\alpha, IV})$.
	% \item Run Bayesian OLS in (\ref{post_theta}) to get initial $\theta$.

	\item Initialize group membership $G$ by using assuming no group structure: $G^{(0)} = [1,2,...,N]$.
	%
	% \item Initialize group ``stick length'' by sampling from (\ref{post_Xi}), and then calculate group probability $p_k$ from (\ref{post_p}) and draw $u_k$ from (\ref{post_u}).
\end{enumerate}
For each iteration $s = 1,2,..,N_{sim}$
\begin{enumerate}[(i)]
	\item Number of active groups:
	\begin{align*}
		K^a = \max_{1 \le i \le N} g_i^{(s-1)}.
	\end{align*}
	\item Group ``stick length'': for $k = 1,2,...,K^a$, draw $\xi_k$ from a Beta distribution in (\ref{post_Xi_app}):
	\begin{align*}
		\xi_k | a^{(s-1)}, G^{(s-1)} \sim \textit{Beta} \left( |B_k| + 1, a + \sum_{j=1}^N \mathbf{1}(g_j>k) \right),
	\end{align*}
	and calculate group probability in accordance to (\ref{post_p_app}).
	
	\item Group heterogeneity: for $k = 1,2,...,K^a$, draw $\alpha_k^{(s)}$ from a normal distribution in (\ref{post_alpha_app}):
	\begin{align*}
		\alpha_k |\rho^{(s-1)},\beta^{(s-1)}, \Sigma^{(s-1)}, G^{(s-1)},Y, X \sim \textit{N} \left( \bar{\mu}_{\alpha_k}, \bar{\Sigma}_{\alpha_k} \right).
	\end{align*}
	
	\item Group heteroscedasticity: for $k = 1,2,...,K^a$ and $t = 1,2,...,T$, draw $\sigma_{k}^{2^{(s)}}$ from an inverse Gamma distribution in (\ref{post_Sigma_app}):
	\begin{align*}
		\sigma_{k}^2 |\rho^{(s-1)},\beta^{(s-1)}, \alpha^{(s)}, G^{(s-1)}, Y, X \sim \textit{IG} \left( \frac{\bar{v}_{\sigma,k}}{2}, \frac{\bar{\delta}_{\sigma,k}}{2} \right).
	\end{align*}

	\item Auxiliary variables: for $i = 1,2,...,N$, draw $u_i$ from an uniform distribution in (\ref{post_u_app}):
	\begin{align*}
		u_i | \Xi^{(s)}, G^{(s-1)} \sim Unif (0, p^{(s)}_{g_i}).
	\end{align*}
	Then calculate $u^*$ according to (\ref{def:ustar_app}).

	\item DP concentration parameter:
	\begin{enumerate}[(a)]
		\item Draw latent	variable $\eta$ from a Beta distribution in (\ref{post_eta_app}):
		\begin{align*}
			\eta \sim \textit{Beta} (a+1, N)
		\end{align*}
		\item Draw concentration parameter $a$ from a mixture of Gamma distribution in (\ref{post_a_app}):
		\[
		a | \eta, K^a \sim \left\{
		\begin{array}{ll}
			\textit{Gamma} \left (m+K^a, n - \log(\eta) \right) & \mbox{with prob. } \pi_a \\
			\textit{Gamma} \left (m+K^a-1, n - \log(\eta) \right) & \mbox{with prob. } 1-\pi_a
		\end{array}
		\right. ,
		\]
		and $\pi_a$ is defined as
		\begin{align*}
			\fra{\pi_a}{1-\pi_a} = \frac{m+K^a-1}{N(n-\log(\eta))}.
		\end{align*}
	\end{enumerate}

		\item Potential groups: start with $\tilde{K} = K^a$,
	\begin{enumerate}[(a)]
		\item Group probabilities:
		\begin{enumerate} [(1)]
			\item if $ \sum_{j=1}^{\tilde{K}} \pi_{j}^{(s)} > 1-u^*$, set $K^* = \tilde{K}$ and stop.
			\item otherwise, let $\tilde{K} = \tilde{K}+1$, draw $\xi_{\tilde{K}} \sim Beta \left(1, \alpha^{(s)}\right),$ update $\pi_{\tilde{K}} = \xi_{\tilde{K}} \prod_{j<\tilde{K}}\left(1- \xi_{j} \right)$ and go to step $(1)$.
		\end{enumerate}
		\item Group parameters:  for $k = K+1, \cdots, K^*$, draw $\alpha_{k}^{(s)}$ and $\sigma_{k}^{2(s)}$ from their prior distributions.
	\end{enumerate}
	
%	\item Common parameter: draw $\rho$ from a normal distribution in (\ref{post_rho_app}):
%	\begin{align*}
%		\rho |\beta^{(s-1)}, \alpha^{(s)}, \Sigma^{(s)}, Y, X \sim \textit{N} \left( \bar{\mu}_{\rho}, \bar{\Sigma}_{\rho} \right).
%	\end{align*}
%	
%	\item Heterogeneous parameter: draw $\beta_i$ from a normal distribution in (\ref{post_beta_app}):
%	\begin{align*}
%		\beta_i |\rho^{(s)}, \alpha^{(s)}, \Sigma^{(s)}, Y, X \sim \textit{N} \left( \bar{\mu}_{\beta_i}, \bar{\Sigma}_{\beta_i} \right).
%	\end{align*}

	\item[(xi)] Group membership: for $j = 1,2,...,J$ and $k = 1,2,...,K^a$, draw $g_j$ from a multinomial distribution in (\ref{post_G_app_soft}).
\end{enumerate}

\subsection{Random Coefficients Model with Soft Constraints} \label{appendix:post_soft}

We present the conditional posterior distributions of parameters in the time-invariant random effects model with heteroscedasticity, positive-link constraints and negative-link constraints, which is the most complicated scenarios. For other models, such as its homoscedastic counterparts, adjustment can be easily made by assuming common error variances.

\subsubsection{Derivation}

Model:
\begin{align}
	y_{it} &= \alpha'_{g_{i}} x_{it} + \varepsilon_{i t}, \quad \varepsilon_{i t} \sim N \left(0, \sigma_{g_{i}}^{2}\right).
\end{align}

To facilitate derivation, we stack observations and parameters,
\begin{align*}
	\text{Dependent variable: } Y &= [y_1, y_2, \ldots, y_N ], y_i = [y_{i1},y_{i2}, \ldots, y_{iT}]', \\
	\text{Covariates: } X &= [x_1, x_2, \ldots, x_N ], x_i = [x_{i1},x_{i2}, \ldots, x_{iT}]', \\
	\text{Grouped-specific parameters: }\alpha &= \left[\alpha_{1}, \alpha_{2}, \ldots, \alpha_{N} \right], \\
	\text{Error variance: } \Sigma &= \left[\sigma^2_{1}, \sigma^2_{2}, \ldots, \sigma^2_{N} \right],\\
	\text{Stick length: } \Xi &= \left[\xi_{1}, \xi_{2}, \ldots\right], \\
	\text{Group indices: } G &= \left[g_{1}, \ldots, g_{N}\right], \\
	\text{Auxiliary variable: } u &= \left[  u_1, u_2, ..., u_N \right],\\
	\text{Hyper parameters: } \phi &= \left[ \mu_{\alpha}, \Sigma_{\alpha}, \nu_{\sigma}, \delta_{\sigma}\right ].
\end{align*}

%As regards the choices of prior, we adopt the Dirichlet Process priors for $\left(\alpha_{i}, \sigma^2_{i}\right)$ with the base density $B_0$ being a independent Normal-Inverse-Gamma distribution and Gamma prior for the concentration parameter $a$.

In order to write down the posterior of unknown parameters given a set of pairwise constraints, a probabilistic model of how weights of constraints are obtained must be specified. Inspired by \citet{shental2003}, we have the following assumptions:
\begin{assumption}(Data)\\ \vspace{-1cm}
	\begin{enumerate}[(i)]
		\item Data points are first sampled i.i.d from the full probability distribution conditional on $G$.
		\item From this sample, pairs of points are randomly chosen according to a uniform distribution. In case both points in a pair belong to the same source a positive-link constraint is formed and a negative-link if formed when they belong to different sources.
	\end{enumerate}
\end{assumption}

The posterior of unknown objects in the random coefficients model is,
\begin{align*}
	& p(\alpha, \sigma^2, \Xi, a, G | Y, X, W, \phi) \\
	\propto \; & p(Y | X, \alpha, \sigma^2, G) p(\alpha, \sigma^2|\phi) p(G | \Xi, W) p(\Xi | a) p(a). \numberthis
\end{align*}

All priors have been well-defined except for $p(G | \Xi, W)$ - the prior for group indices $G$ conditional on stick lengths $\Xi$ and the weights of constraints $W$. 

Using the Bayes rule, the modified prior for the group indices is
\begin{align}  \label{eq:G_given_W}
	p(G | \Xi, W) = \frac{p(W | G) p (G | \Xi)}{\sum_{G'} p(W | G') p (G' | \Xi)} \; \propto \; p(W | G) p (G | \Xi),
\end{align}
where the sum in the denominator is taken over all possible group partitioning, $p(W | G)$ is the weighting function of the form:
\begin{align*}
	p(W | G) = \prod_{i, j} \exp  \left( c W_{i j} \delta_{ij} \right),
\end{align*}
and $p (G | \Xi)$ is the density of a categorical distribution with probabilities generated by the stick-breaking process.

From (\ref{eq:G_given_W}), the prior of $g_i$ conditional on the group indices of the other $G^{(-i)}$ is
\begin{align}
	p(g_i | \Xi, W_{i}, G^{(-i)}) \; \propto \; p \left(W_{i} | G \right) p(g_i | \Xi),
\end{align}
where $W_{i} = \{W_{ij} | j = 1,...,N\}$ and
\begin{align}
	p(W_{i} | G) = \prod_{j= 1}^N \exp \left( 2 c W_{i j} \delta_{ij} \right).
\end{align}

Given the expression of $p(g_i | \Xi, W_{i}, G^{(-i)})$ and the DP prior specified in Appendix \ref{appendix:prior}, the posterior of unknown objects in the random coefficients model can be written as,
\begin{align*}
	& p(\alpha, \sigma^2, \Xi, a, G | Y, X, W, \phi) \\
	\propto \; & p(Y | X, \alpha, \sigma^2, G) p(\alpha, \sigma^2|\phi) p(G | \Xi, W) p(\Xi | a) p(a)\\
	\propto \; & \prod_{i=1}^{N} p \left(y_i | x_i, \alpha_{g_{i}}, \sigma^2_{g_{i}} \right) 
	\prod_{j=1}^{\infty} p(\alpha_{j}, \sigma^2_{j} | \phi) 
	\prod_{j=1}^{\infty} p(\xi_j | a) 
	\prod_{i=1}^{N} p \left(g_i | \Xi, W_i, G^{(-i)} \right) 
	p(a) \\
	= \; & \left[ \prod_{i=1}^{N} p \left(y_i | x_i, \alpha_{g_{i}}, \sigma^2_{g_{i}} \right) 
	p \left(g_i | \Xi, W_i, G^{(-i)} \right) \right] \left[ \prod_{j=1}^{\infty}  p(\alpha_{j}, \sigma^2_{j} | \phi) p(\xi_j | a)\right]  p(a) \\
	= \; & \left[ \prod_{i=1}^{N} p \left(y_i | x_i, \alpha_{g_{i}}, \sigma^2_{g_{i}} \right)  p\left(W_{i} | G \right) p(g_i | \Xi) \right] \left[ \prod_{j=1}^{\infty}  p(\alpha_{j}, \sigma^2_{j} | \phi) p(\xi_j | a)\right] p(a). \numberthis
\end{align*}

In the following derivation and algorithm, we adopt the slice sampler \citep{walker2007} that avoids approximation in \citet{ishwaran2001}. \citet{walker2007} augments the posterior distribution with a set of auxiliary variables $u = [u_1, u_2, ..., u_N]$, which are i.i.d. standard uniform random variables, i.e, $u_i \stackrel{iid}{\sim} U(0,1)$. Then the augmented posterior is written as,
\begin{align*}
	&  p(\alpha, \sigma^2, \Xi, a, G, u | Y, X, W, \phi) \\
	\propto \; &  \left[ \prod_{i=1}^{N} p \left(y_i | x_i, \alpha_{g_{i}}, \sigma^2_{g_{i}} \right)  \mathbf{1} (u_i < \pi_{g_i}) p\left(W_{i} | G \right) \right] \left[ \prod_{j=1}^{\infty}  p(\alpha_{j}, \sigma^2_{j} | \phi) p(\xi_j | a)\right]  p(a) \\
	= \; & \left[ \prod_{i=1}^{N} p \left(y_i | x_i, \alpha_{g_{i}}, \sigma^2_{g_{i}} \right)  p(u_i | \pi_{g_i}) \pi_{g_i}  p\left(W_{i} | G \right) \right] \left[ \prod_{j=1}^{\infty}  p(\alpha_{j}, \sigma^2_{j} | \phi) p(\xi_j | a)\right]  p(a), \numberthis
\end{align*}
where $\pi_{g_i} = p(g_i | \Xi)$, $p(u_i | \pi_{g_i})$ is a uniform distribution defined on $[0, \pi_{g_i}]$, and $\mathbf{1}(\cdot)$ is the indicator function, which is equal to zero unless the specific condition is met. The original posterior can be recovered by integrating out $u_i$ for $i = 1,2,...,N$. As we don't limit the upper bound of the number of groups, it is impossible to sample from an infinite-dimensional posterior density. The merit of slice-sampling is that it reduces the dimensions and allows us to solve a manageable problem with finite dimensions, which we will see below.

With a set of auxiliary variables $u = [u_1, u_2, ..., u_N ]$, we define the largest possible number of potential components as
\begin{align} \label{def:kstar_app}
	K^* = \min_k \left\{ u^* > 1 - \sum_{j=1}^k \pi_j \right\},
\end{align}
where
\begin{align} \label{def:ustar_app}
	u^* = \min_{1 \le i \le N} u_i.
\end{align}
Such a specification ensures that for any group $k > K^*$ and any unit $i \in \{1,2,...,N\}$, we have $u_i > \pi_k$.\footnote{See proof in theorem \ref{prop:1}.} This crucial property limits the dimension of $(\alpha_k, \sigma^2_{k})$ to $K^*$ as the densities of $(\alpha_k, \sigma^2_{k})$ and equal 0 for $k > K^*$ due to  $\mathbf{1} (u_i < \pi_k) = 0$, which will be clear in the subsequent posterior derivation. Intuitively, the latent variable $u_i$ has an effect of ``dynamically truncating" the number of groups needed to be sampled.

Next, we define the number of active groups
\begin{align} \label{def:k_a_app}
	K^a = \max_{1 \le i \le N} g_i.
\end{align}
It can be shown that $K^a \le K^*$.\footnote{See proof in theorem \ref{prop:1}.}

As the base distribution $B_0$ is the Independent-Normal-Inverse-Gamma distribution, the prior density of $\alpha_i$ and $\sigma^2_i$ are independent.

\noindent {\bf Conditional posterior of $\alpha$ (grouped coefficients)}. 

\begin{align*}
	p(\alpha | \sigma^2, G, Y, X, \phi)
	\; \propto \; & \left[ \prod_{i = 1}^N  p \left(y_i  | x_i, \alpha_{g_{i}}, \sigma^2_{g_{i}} \right)  \right] \left[ \prod_{j=1}^{\infty}  p(\alpha_{j}, \sigma^2_{j} | \phi) \right].
\end{align*}

\noindent For $k = 1,2,...,K^a$, define a set of units that belong to the group $k$,
\begin{align} \label{def:Sk_app}
	B_k = \left \{i | g_i = k, i \in \{1,2,...,N\} \right \},
\end{align}
then the posterior density for $\alpha_k$ read as
\begin{align*}
	& p(\alpha_k | \sigma_k^2, G, Y, X, \phi) \\
	\propto \; & \left[ \prod_{i \in B_k} p(y_i |  x_i, \alpha_{g_{i}}, \sigma^2_{g_{i}} ) \right] p(\alpha_{k}| \phi) \\
	\propto \; & \exp \left[ - \frac{1}{2\sigma^2_k}\sum_{i \in B_k} \left( y_i - x_i \alpha_k \right)' \left( y_i - x_i \alpha_k \right) \right] \exp \left[ - \frac{1}{2} \left( \alpha_k - \mu_\alpha \right)'\Sigma_\alpha^{-1} \left( \alpha_k - \mu_\alpha \right) \right] \\
	\propto \; & \exp \left[ - \frac{1}{2} \left( \alpha_k - \bar{\mu}_{\alpha_k} \right)' \bar{\Sigma}_{\alpha_k}^{-1} \left( \alpha_k - \bar{\mu}_{\alpha_k} \right) \right].
\end{align*}

Assuming an independent normal conjugate prior for $\alpha_k$, the posterior for $\alpha_k$ is given by
\begin{align} \label{post_alpha_app}
	\alpha_k |  \sigma_k^2, G, Y, X, \phi \sim N \left( \bar{\mu}_{\alpha_k}, \bar{\Sigma}_{\alpha_k} \right).
\end{align}
where
\begin{align*}
	\bar{\Sigma}_{\alpha_k} &= \left( \Sigma_\alpha^{-1} + \sigma_k^{-2} \sum_{i \in B_k} x_i' x_i \right)^{-1}, \\
	\bar{\mu}_{\alpha_k} &= \bar{\Sigma}_{\alpha_k} \left(\Sigma_\alpha^{-1} \mu_\alpha + \sigma_k^{-2} \sum_{i \in B_k} x_i'  y^{\alpha}_i \right), \\
	y^{\alpha}_i &= y_i -x_i \alpha_{g_i}.
\end{align*}

\noindent {\bf Conditional posterior of $\sigma^2$ (grouped variance)}.
Under the assumption of cross-sectional independence, for $k = 1,2,...,K^a$,
\begin{align*}
	p(\sigma^2_k| \alpha_k, G, Y, X, \phi)
	\; \propto \; & \left[ \prod_{i \in B_k}  p \left(y_i  | x_i,\alpha_{g_i}, \sigma^2_{g_i} \right)  \right] p(\sigma_{k}^2 | \phi).
\end{align*}
With a inverse-gamma prior $\sigma_{k}^2 \sim IG \left( \frac{v_\sigma}{2}, \frac{\delta_\sigma}{2} \right)$, the posterior distribution of $\sigma_{k}^2$ is
\begin{align*}
	& p(\sigma^2_k | \alpha_k, G, Y, X, \phi) \\
	\propto \; & \prod_{i \in B_k} \left[ \left(\sigma_{k}^2 \right)^{-\frac{T}{2}} \exp \left( - \frac{1}{2\sigma^2_k} \left( y_i - x_i \alpha_k \right)' \left( y_i - x_i \alpha_k \right) \right)\right]  \left( \frac{1}{\sigma_{k}^2} \right)^{\frac{v_\sigma}{2}+1} \exp \left( - \frac{\delta_\sigma}{2\sigma_{k}^2} \right) \\
	= \; & \left( \frac{1}{\sigma_{k}^2} \right)^{\frac{v_\sigma + T|B_k|}{2}+1} \exp \left[ - \frac{\delta_\sigma + \sum_{i \in B_k} \left( y_i - x_i \alpha_k \right)' \left( y_i - x_i \alpha_k \right) }{2\sigma_{k}^2} \right].
\end{align*}
This implies
\begin{align} \label{post_Sigma_app}
	\sigma^2_k|\alpha_k, G, Y, X, \phi \sim IG \left( \frac{\bar{v}_{\sigma,k}}{2}, \frac{\bar{\delta}_{\sigma,k}}{2} \right),
\end{align}
where
\begin{align*}
	\bar{v}_{\sigma,k} &= v_\sigma + T|B_k|, \\
	\bar{\delta}_{\sigma,kt} &= \delta_\sigma + \sum_{i \in B_k}  \left( y_i - x_i \alpha_k \right)' \left( y_i - x_i \alpha_k \right), \\
	|B_k| &= \text{\# of units in group } k.
\end{align*}

\noindent {\bf Conditional posterior of $\Xi$ (stick length)}.
\begin{align*}
	& p(\Xi |  a, G) \\
	\propto \; & \left[ \prod_{i=1}^{N} p(u_i | \pi_{g_i}) \pi_{g_i}\right] \left[ \prod_{j=1}^{\infty} p(\xi_j | a)\right] \\
	\propto \; & \left[ \prod_{i=1}^{N} p(u_i | \pi_{g_i}) \xi_{g_i} \prod_{l < g_i} (1- \xi_l) \right] \left[ \prod_{j=1}^{\infty} p(\xi_j | a)\right].
\end{align*}

For $k = 1, 2, ..., K^a$,
\begin{align*}
	p(\Xi |  a, G )\;
	\propto \; & \left( \prod_{i \in B_k} \xi_k \right) (1-\xi_k)^{\sum \limits_{j=1}^N \mathbf{1}(g_j>k)} (1-\xi_k)^{a-1}, \\
	\propto \; & \xi_k^{|B_k|} \left( 1-\xi_k \right)^{a + \sum \limits_{j=1}^N \mathbf{1}(g_j>k) -1}.
\end{align*}
where $B_K$ is the set of units that currently belong to group $k$, see equation (\ref{def:Sk_app}).

Therefore, posterior distribution of $\xi_k$ is
\begin{align} \label{post_Xi_app}
	\xi_k | a, G \sim \textit{Beta} \left( |B_k| + 1, a + \sum_{j=1}^N \mathbf{1}(g_j>k) \right).
\end{align}
%For $k > K^*$, we don't update $\xi_k$ since the density equals 0. If group $k$ is empty, $\xi_k$ is sampled from the prior distribution.

Give $\Xi = [\xi_1, \xi_2, ..., \xi_{K^a}]$, update group probabilities $\pi_1, \pi_2,...,\pi_{K^a}$:

\begin{align} \label{post_p_app}
	\pi_{k} | G,\Xi = \left\{
	\begin{array}{ll}
		{\xi_{1},} & {k = 1} \\
		{\xi_{k} \prod_{j<k}\left(1-\xi_{j}\right),} & {k = 2, \ldots, K^a}
	\end{array}
	\right..
\end{align}

\noindent {\bf Conditional posterior of $a$ (concentration parameter)}.
Regarding the DP concentration parameter, the standard posterior derivation doesn't work due to the unrestricted number of components in the current sampler. Instead, we implement the 2-step procedure proposed by \citet{escobar1995} (p.8-9). Following their approach, we first draw a latent variable $\eta$,
\begin{align} \label{post_eta_app}
	\eta \sim \textit{Beta } (a+1, J).
\end{align}
Then, conditional on $\eta$ and $K^a$, we draw $a$ from a mixture of two Gamma distribution:
\begin{align} \label{post_a_app}
	p(a | \eta, K^a) = \pi_a \textit{Gamma } (m+K^a, n-\log(\eta)) + (1 - \pi_a) \textit{Gamma } (m+K^a-1, n-\log(\eta)),
\end{align}
with the weights $\pi_a$ defined by
\begin{align*}
	\fra{\pi_a}{1-\pi_a} = \frac{m+K^a-1}{N[n-\log(\eta)]}.
\end{align*}

\noindent {\bf Conditional posterior of $u$ (auxiliary variable)}.
Conditional on the group ``stick lengths'' $\xi_k$ and group indices $G$, it is straightforward to show that the posterior density of $u_i$ is a uniform distribution defined on $(0, \pi_{g_i})$, that is
\begin{align} \label{post_u_app}
	u_i | \Xi, G \sim \textit{Unif }(0, \pi_{g_i}),
\end{align}
where $\pi_{g_i} = \xi_{g_i} \prod_{j<g_j} (1-\xi_j)$. Moreover, it is worth noting that the values for $K^*$ and $u^*$ need to be updated according to equation (\ref{def:kstar_app}) and  (\ref{def:ustar_app}) after this step.\\

\noindent {\bf Conditional posterior of $G$ (group indices)}.
We derive the posterior distribution of $g_i$ consider on $G^{(-i)}$, where $G^{(-i)}$ is a set including all member indices except for $g_i$, i.e., $G^{(-i)} = G \char`\\ g_i$. As a result, for $k = 1, 2, ..., K^*$,
\begin{align*} \label{post_G_app_soft}
	& p \left(g_i = k | y_i, x_i, \alpha_{k}, \sigma^2_{k}, G^{(-i)}, u_i\right) \\
	\propto \; & p \left(y_i | x_i, \alpha_{k}, \sigma^2_{k}\right)  \mathbf{1} (u_i < \pi_{k})  p \left(W_{i} | G \right)  \\
	= \; & p \left(y_i | x_i, \alpha_{k}, \sigma^2_{k} \right)  \mathbf{1} (u_i < \pi_{k}) \prod_{j = 1}^N \exp \left( 2 c W_{i j} \delta_{i j} \right) . \numberthis
\end{align*}
Finally, we normalize the point mass to get a valid distribution.

\section{Technical Proofs}

\subsection{Slice Sampling}
\begin{theorem} \label{prop:1}
	Suppose that we have a model with posterior as given in Appendix \ref{appendix:post_soft}. Given the definition of the number of potential component $K^*$ in (\ref{def:kstar_app}), the minimum of auxiliary variables $u^*$ in (\ref{def:ustar_app}) and the number of active group $K$ in (\ref{def:k_a_app}), we have
	\begin{enumerate}[(i)]
		\item $u_i > \pi_k$ for $\forall i = 1,2,...,n$ and $\forall k > K^*$;
		\item $K < K^*$.
	\end{enumerate}
\end{theorem}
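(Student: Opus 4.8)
The plan is to derive both parts from the telescoping structure of the stick-breaking weights together with the defining property of $K^*$. The one elementary fact I will use throughout is that the residual stick length $R_k := 1 - \sum_{j=1}^k \pi_j = \sum_{j > k}\pi_j$ is non-increasing in $k$ (each $\pi_j \ge 0$), and that $\sum_{j=1}^{\infty}\pi_j = 1$ almost surely, so that $R_{k-1} = \sum_{j \ge k}\pi_j \ge \pi_k$.

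For part (i), I would first note that $u_i \ge u^* = \min_{1 \le i \le N} u_i$ for every $i$, so it suffices to bound $u^*$ below by $\pi_k$. By definition $K^*$ is the smallest index at which $u^* > R_k$ holds; since $R_k$ is non-increasing in $k$, this inequality then persists for all $k \ge K^*$. Now fix any $k > K^*$, so that $k-1 \ge K^*$ and hence $u^* > R_{k-1} = \sum_{j \ge k}\pi_j \ge \pi_k$. Chaining the two bounds gives $u_i \ge u^* > \pi_k$ for every $i$, which is precisely the claim.

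For part (ii), I would use that in the augmented (slice) posterior of Appendix \ref{appendix:post_soft}, unit $i$ can be assigned to group $k$ with positive probability only when the indicator $\mathbf{1}(u_i \le \pi_k)$ is active, i.e. only when $u_i \le \pi_k$. Part (i) shows that for every $k > K^*$ we have $u_i > \pi_k$ for all $i$, so no unit can be allocated to any group index exceeding $K^*$; consequently $g_i \le K^*$ for every $i$, and therefore $K^a = \max_{1 \le i \le N} g_i \le K^*$. This reproduces the inline claim $K^a \le K^*$; I would flag that the displayed strict inequality $K < K^*$ should read $K^a \le K^*$, since a unit may legitimately occupy group $K^*$ itself whenever $u_i \le \pi_{K^*}$.

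The main obstacle is bookkeeping with strict versus weak inequalities and matching the terse definition of $K^*$ as a minimizer: I must confirm that the defining inequality points in the direction that lets it propagate to all larger $k$, and that the assignment indicator is used in its ``only if'' direction. Once the telescoping identity $R_{k-1} \ge \pi_k$ is in hand, the rest is immediate.
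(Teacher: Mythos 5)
Your proof is correct and follows essentially the same route as the paper's: part (i) chains $u_i \ge u^*$ with the defining inequality of $K^*$ and the telescoping identity $1-\sum_{j\le K^*}\pi_j=\sum_{j>K^*}\pi_j\ge\pi_k$, and part (ii) uses the indicator $\mathbf{1}(u_i\le\pi_{g_i})$ to rule out group indices above $K^*$ (the paper phrases this contrapositively via a unit attaining the maximal index, but the content is identical). Your flag that the stated strict inequality should be $K^a\le K^*$ is well taken --- the paper's own proof likewise concludes only $K\le K^*$, matching the inline claim in Appendix C.
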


\begin{proof}
	\begin{enumerate}[(i)]
		\item By definition, $u^* = \min \limits_{1\le i \le N} u_i$ for $i = 1,2,...,n$, then,
		\begin{align*}
			u_i \ge u^* > 1 - \sum_{j=1}^{K^*} \pi_j =  \sum_{j=K^*}^{\infty} \pi_j \ge \pi_k, \forall k > K^*.
		\end{align*}
		\item Let $i'$ be an unit $i$ such that $g_{i'} = K$. According to the posterior of $G$, the group $K$ exists if $u_{i'} < \pi_{K}$, otherwise $p(g_i = K | \cdot) = 0$. Then by definition,
		\begin{align*}
			u^* \le u_{i'} < \pi_{K} \Rightarrow 1 - u^* > 1 - \pi_{K} = \sum_{j = 1}^{K-1} \pi_j.
		\end{align*}
		Since $K^*$ is the smallest number $s.t.$ $1 - u^* < \sum \limits_{j = 1}^{K^*} \pi_j$, then $K \le K^*$.
	\end{enumerate}
\end{proof}

\subsection{Connection to \citet{lu2004} and \citet{lu2007_2}} \label{appendix:LuAndLeen}

In this section, we will first show the close connection between the modified prior in the presence of soft constraints defined in (\ref{eq:prior_G_soft}) and the framework of penalized probabilistic clustering proposed by \citet{lu2004} and \citet{lu2007_2}. Then we will discuss the properties of the weights $W_{ij}$.

We start with joint prior odds in (\ref{eq:prior_G_soft}):
\begin{align*} \label{eq:appendix_prior_soft}
	\prod_{i, j}  \left( \fra{\psi_{i j}}{1 - \psi_{i j}} \right)^{c T_{i j} \delta_{ij}} = \prod_{i, j}  \exp \left[ c \delta_{ij} \log \left( \fra{\psi_{i j}}{1 - \psi_{i j}} \right)^ {T_{i j}}\right]. \numberthis
\end{align*}
Define the weight as $W_{i j} = \log \left( \fra{\psi_{i j}}{1 - \psi_{i j}} \right)^ {T_{i j} }$. Then when $T_{ij} = 1$, we have
\begin{align} \label{eq:lu_ml}
	W_{i j} = \log \left( \fra{\psi_{i j}}{1 - \psi_{i j}} \right) \Leftrightarrow \psi_{i j} = \frac{\exp \left( W_{i j} \right)}{1 + \exp \left( W_{i j} \right)}.
\end{align}

When $T_{ij} = -1$, we get
\begin{align} \label{eq:lu_cl}
	W_{i j} = \log \left( \fra{1- \psi_{i j}}{\psi_{i j}} \right) \Leftrightarrow 1- \psi_{i j} = \frac{\exp \left( W_{i j} \right)}{1 + \exp \left( W_{i j} \right)}.
\end{align}

Combining (\ref{eq:lu_ml}) and (\ref{eq:lu_cl}) yields that
\begin{align} \label{eq:lu_combine}
	 \frac{\exp \left( W_{i j} \right)}{1 + \exp \left( W_{i j} \right)} = \psi_{i j}^{\frac{1}{2}(1+T_{i j})} (1- \psi_{i j})^{\frac{1}{2}(1 -T_{i j})}.
\end{align}
This is exactly the equation (7) in \citet{lu2007_2} with $\gamma_{i j} = \psi_{i j}$ and $L_{i j} = \frac{1}{2} \left(T_{i j} + 1\right)$, which uniquely defines the expression for the weights associated with each pairwise constraint given $\gamma_{i j}$ and $L_{i j}$. Since both $L_{i j}$ and $T_{i j}$ are indicators for the type of constraints, they don't affect the formula for $W_{i j}$, thus the following formula weights coincides with the one used in \citet{lu2007_2} and the both frameworks converge,
\begin{align} \label{eq:append_weight}
	W_{ij} & =
	\left\{
	\begin{array}{ll}
		\log \left( \frac{\psi_{ij}}{1-\psi_{ij}}\right)  & \text { if } T_{i j} = 1\\
		\log \left( \frac{1- \psi_{ij}}{\psi_{ij}}\right)  & \text { if } T_{i j} = -1
		\vspace{0.1cm}\\
		0 & \text { if } T_{i j} = 0.
	\end{array}
	\right.
\end{align}

Accordingly, the prior defined in (\ref{eq:appendix_prior_soft}) can be rewritten in term of $W_{i j}$ as 
\begin{align}
	 \prod_{i, j}  \left( \fra{\psi_{i j}}{1 - \psi_{i j}} \right)^{c T_{i j} \delta_{ij}} = \prod_{i, j}  \exp \left( c W_{ij} \delta_{ij}\right).
\end{align}

The weight $W_{i j}$ associated with the constraint between unit $i$ and $j$ as in (\ref{eq:append_weight}) has the following properties:
\begin{enumerate}[(a)]
	\item Unboundedness: $W_{i j} \in (-\infty, \infty)$;
	\item Symmetry: $W_{i j} = W_{j i}$;
	\item Sign reflects constraint's type: If $(i,j) \in \mathcal{P}$ or $L_{i j} = 1$, then $W_{i j} = \log \left( \frac{\psi_{ij}}{1-\psi_{ij}}\right) > 0$; If $(i,j) \in \mathcal{N}$ or $L_{i j} = -1$, then $W_{i j} = \log\left( \frac{1- \psi_{ij}}{\psi_{ij}}\right)  < 0$; If $(i,j)$ doesn't involve in any constraint or $L_{i j} = 0$, then $W_{i j} = 0$.
	\item Absolute value reflects constraint's accuracy: 
	\begin{align*}
		\frac{e^{\left|W_{i j}\right|}}{1+e^{\left|W_{i j}\right|}} = \psi_{i j}.
	\end{align*}
	It is straightforward to show that $\left|W_{i j}\right|$ is monotonically increasing in $ \psi_{i j}$.
\end{enumerate}

\subsection{\textit{Prior Similarity Matrix}}

\begin{proof}[Proof of Theorem \ref{thm:equal_prior_prob}]
	Given equation (\ref{eq:EPPF}) and (\ref{eq:prior_G_soft2}), the prior probability of unit $i$ and $j$ being in the same group is
	\begin{align*}
		& \Pr (g_i = g_j | W ) \\
		= \; & \sum_{G \in \mathcal{G}_{ij}} \frac{1}{M} p (G)  \exp  \left( c \sum_{m, n} W_{m n} \delta_{m n}\right) \\
		= \; &  \sum_{G \in \mathcal{G}_{ij}} \frac{1}{M}  \frac{\Gamma(a)}{\Gamma(a+N)} \left[ \prod_{k=1}^{K} a \Gamma\left( |B_k| \right) \right] \exp  \left( c \sum_{m, n} W_{m n} \delta_{m n} \right) \\
		= \; &  \sum_{G \in \mathcal{G}_{ij}} A(G) \exp  \left( c \sum_{m, n} W_{m n} \delta_{m n} \right) \numberthis
	\end{align*}
	where $\mathcal{G}_{ij}$ is the set of all possible group indices that satisfies $g_i = g_j$ and $M$ is the normalization constant in (\ref{eq:prior_G_soft2}).
	
	$\mathcal{G}_{ij}$ and $\mathcal{G}_{ik}$ are closed related. It is straightforward to see that the numbers of element in $\mathcal{G}_{ij}$ and $\mathcal{G}_{ik}$ are equal since they are all equal to the number of permutation of other $N-2$ units. Moreover, as unit $j$ and $k$ are exchangeable, $\mathcal{G}_{ik}$ can be constructed from $\mathcal{G}_{ij}$ by swapping the group index of unit $j$ and $k$. 
	
	As a result, we can find an one-on-one mapping between $\mathcal{G}_{ij}$ and $\mathcal{G}_{ik}$. That is, for any $G \in \mathcal{G}_{ij}$, if we swap the group index of unit $j$ and $k$, the resulting partition $s_{jk}(G)$ belongs to $\mathcal{G}_{ik}$, and vice versa. As the constant $A(G)$ depends only on the size of partitions, we have $A(G) = A(s_{jk}(G))$.
	
	The properties between $\mathcal{G}_{ij}$ and $\mathcal{G}_{ik}$ enable we to compare each summand in $\Pr (g_i = g_j | W)$ and $\Pr (g_i = g_k | W)$. The difference between these two probabilities is 
	\begin{align*}
		& \Pr (g_i = g_j | W) - \Pr (g_i = g_k | W) \\
		\;=\; & \sum_{G \in \mathcal{G}_{ij}} A(G) \exp  \left( c \sum_{m, n} W_{m n} \delta_{m n} \right) - \sum_{G \in \mathcal{G}_{ik}} A(G) \exp  \left( c \sum_{m, n} W_{m n} \delta_{m n} \right)  \\
		\;=\; & \sum_{G \in \mathcal{G}_{ij}} A(G) \exp  \left( c \sum_{m, n} W_{m n} \delta_{m n} \right) - A(s_{jk}(G)) \exp  \left( c \sum_{m, n} W_{m n} \delta'_{m n} \right) \\
		\;=\; & \sum_{G \in \mathcal{G}_{ij}} A(G) \left[\exp \left( c \sum_{m, n} W_{m n} \delta_{m n} \right) - \exp  \left( c \sum_{m, n} W_{m n} \delta'_{m n} \right) \right]. \numberthis
	\end{align*}
	where $\delta'_{m n}$ is evaluated at $s_{jk}(G)$.
	
	\noindent We can classify a group partitioning $G$ into two cases:
	\begin{enumerate}[(i)]
		\item $G = s_{jk}(G)$. This happens when units $j$ and $k$ are assigned to the same group. Swapping them doesn't affect the group partitioning, which indicates that $\sum_{m, n} W_{m n} \delta_{m n}  = \sum_{m, n} W_{m n} \delta'_{m n}$.
		
		\item $G \ne s_{jk}(G)$. These are the more common cases. We again compare $\sum_{m, n} W_{m n} \delta_{m n}$ with $\sum_{m, n} W_{m n} \delta'_{m n}$. $W_{m n} \delta_{m n}$ and $W_{m n} \delta'_{m n}$ are equal when $m \ne j, k$ and $n \ne j, k$ as these terms remain unchanged regardless of the group indices of units $j$ and $k$. For $m = j,k$, note that $\delta_{jn} = \delta'_{kn}$ and $\delta_{kn} = \delta'_{jn}$ for all $n = 1,2,..,N$. Therefore, under the assumption that $W_{jn} = W_{kn}$ for $\forall n$ , we have,
		\begin{align} \label{eq:proof_eq_prior}
			\sum_{n = 1}^N W_{j n} \delta_{j n} + \sum_{n = 1}^N W_{k n} \delta_{k n} 
			= \sum_{n = 1}^N W_{j n} \delta'_{k n} + \sum_{n = 1}^N W_{k n} \delta'_{jn}  
			= \sum_{n = 1}^N W_{k n} \delta'_{k n} + \sum_{n = 1}^N W_{j n} \delta'_{jn},
		\end{align}
		and hence
		\begin{align*}
			& \sum_{m, n} W_{m n} \delta_{m n}  \\
%			\; = \; & \sum_{m, n \not \in (j,k)} W_{m n} \delta_{m n} + \sum_{m, n \in (j,k)} W_{m n} \delta_{m n} \\
			\; = \; & \sum_{m, n \not \in (j,k)} W_{m n} \delta_{m n} + 2 \left( \sum_{n = 1}^N W_{j n} \delta_{j n} + \sum_{n = 1}^N W_{k n} \delta_{k n} \right) \\
			\; = \; & \sum_{m, n \not \in (j,k)} W_{m n} \delta'_{m n} + 2 \left( \sum_{n = 1}^N W_{j n} \delta'_{j n} + \sum_{n = 1}^N W_{k n} \delta'_{k n} \right) \\
%			\; = \; & \sum_{m, n \not \in (j,k)} W_{m n} \delta'_{m n} + \sum_{m, n \in (j,k)} W_{m n} \delta'_{m n} \\
			\; = \; & \sum_{m, n} W_{m n} \delta'_{m n},
		\end{align*}
		where the first and third equalities use facts that $W_{mn} = W_{nm}$, $\delta_{mn} = \delta_{nm}$, and $W_{nn} = 0$ for $\forall n,m$. The second equality follows the result in (\ref{eq:proof_eq_prior}).
		
%		\item For some units $i$, unit $j$ and $k$ are not equivalent, which means their PL and NL constraints are not exchangeable. In this sense, for this particular $G$, $ \sum_n W_{jn} \delta_{jn} \ne \sum_n W_{kn} \delta_{kn}$. However, we can also find another $G'$ which can be generated by swap unit $j$ and $k$ in $G$. Comparing to $G$, $G'$ has identical group sizes as $G$ as we swap the label for unit $j$ and $k$. Therefore, the constant would be the same. As we sum up all possible $G$ with $g_i = g_j$ and $g_i = g_k$, for each $G$, we can always find another $G'$ that share the same prior probability. Therefore,$\Pr (g_i = g_j) = \Pr (g_i = g_k)$.
		
	\end{enumerate}

	In both cases, we have $\sum_{m, n} W_{m n} \delta_{m n}  = \sum_{m, n} W_{m n} \delta'_{m n}$ for all $G \in \mathcal{G}_{ij}$ and therefore
	\begin{align}
		\Pr (g_i = g_j | W) - \Pr (g_i = g_k | W) = 0.
	\end{align}

\end{proof}

\subsection{\textit{PC-KMeans}}

\begin{proof}[Proof of Theorem \ref{thm:pc_kmean}]
	We start with a brief discussion of \textit{PC-KMeans} algorithm \cite{basu2004}. Given a set of observations $\left(y_{1}, y_{2}, \ldots, y_{N}\right)$, a set of positive-link constraints $\mathcal{P}$, a set of negative-link constraints $\mathcal{N}$, the cost of violating constraints $w = \{w^p_{i j}, w^n_{i j} \}$ and the number of groups $K$, the \textit{PC-KMeans} algorithm aims to partition the $N$ units into $K$ groups so as to minimize the following objective function,
	\begin{align}\label{eq:obj_pckmeans}
		L(G) = & \underbrace{ \frac{1}{2}\sum_{k=1}^{K} \sum_{i \in B_k} \left\|z_i - \mu_{k}\right\|^{2}}_{\text{within-cluster sum of squares}} + \underbrace{\sum_{\left(i,j\right) \in \mathcal{P}} \omega^m_{ij} \mathbf{1}\left( g_{i} \neq g_j\right) + \sum_{\left(i,j\right) \in \mathcal{N}} \omega^c_{i j} \mathbf{1}\left( g_{i}=g_j\right) }_{\text{cost of violation}},
	\end{align}
	where $\mu_{k}$ is the centroid of group $k$, i.e., $\mu_{k}=\frac{1}{\left|B_k\right|} \sum_{i \in B_k} y_i$, $B_k$ is the set of units that are assigned to group $k$, and $\left|B_k\right|$ is the size of group $k$. Equation (\ref{eq:obj_pckmeans}) can be rewritten as
	\begin{align}\label{eq:obj_pckmeans2}
		L(G) = \;& \frac{1}{2}\sum_{i=1}^{N} \left\|y_i - \mu_{g_i}\right\|^{2} - \sum_{i, j} c  W_{i j} \delta_{ij} + Const,
	\end{align}
	where $Const = c \left( \sum_{\left(i,j\right) \in \mathcal{P}} W_{i j} -  \sum_{\left(i,j\right) \in \mathcal{N}} W_{i j}\right)$ is a constant, $c$ is the scaling constant introduced in (\ref{eq:prior_G_soft}), and
	\begin{align*} \label{eq:weight_kmeans_soft}
		W_{ij}
		\; = \; &
		\begin{cases}
			\frac{\omega^m_{ij}}{2c}& \text{ if } \left(i,j\right) \in \mathcal{P}\\
			-\frac{\omega^c_{ij}}{2c}& \text{ if } \left(i,j\right) \in \mathcal{N}\\
			0 &\text{ otherwise}.
		\end{cases} \numberthis
	\end{align*}
	The clustering process includes minimizing the objective function over both group partition $G$ (assignment step) and the model parameters $\mu=\left\{\mu_{1}, \mu_{2}, \ldots, \mu_{K}\right\}$ (update step). Next, we will show that the \textit{PC-KMeans} algorithm is embodied in our proposed Gibbs sampler with soft constraints.

	Under assumption (\ref{soft_as0}), we can rewrite the baseline model with a set of variables $z_{it}$ that don't have grouped heterogeneous effects on $y_{it}$, 
	\begin{align*}
		y_{it} & = \alpha'_{g_i} x_{it} + \beta'_{i} z_{it} + \varepsilon_{i t} =  \alpha_{g_i} + \beta'_{i} z_{it} + \varepsilon_{i t},
	\end{align*}
	where the second equality holds due to the assumption of $x_{it} = 1$. $\beta_{i}$ can be equal across units, i.e., $\beta_{i} = \beta$.
	
	Under assumption (\ref{soft_as1}), we fix the number of groups upfront and thus we don't rely on slice sampling in which $K$ is unknown and determined dynamically. Hereinafter, we focus on posterior distribution without the auxiliary variable $u_i$. Notice that the indicator function $\mathbf{1} (u_i < \pi_{g_i})$ in the posterior density reduces to $\pi_{g_i}$.
	
	\vspace{0.5cm}
	
	{\bf Part 1: Assignment Step}
	
	Assume we have soft pairwise constraints and weights are specified in (\ref{eq:weight_kmeans_soft}). Under the assumptions (\ref{soft_as2}) and (\ref{soft_as3}), the posterior density of the group membership indicators $G$ is,
	\begin{align*} \label{eq:post_G_soft_thm}
		& p(G | \alpha,\beta,  \sigma^2, Y,X, Z, W) \\
		\; = \; & \frac{1}{Z_S} \prod_{i=1}^N \left[p(y_i | \beta_i, \alpha_{g_i}, \sigma^2_{g_i}, x_i, z_i) \pi_{g_i} \right] p(W | G)  \\
		= \; & \frac{1}{Z_S} \prod_{i =1}^N  p(y_i | \beta_i, \alpha_{g_i}, \sigma^2_{g_i}, x_i, z_i)\pi_{g_i}\prod_{i, j = 1}^N \exp \left( \frac{cW_{i j}}{\sigma^2} \delta_{ij} \right) \\
		= \;  & \frac{1}{Z_S} \prod_{i =1}^N (2\pi \sigma^2)^{-\frac{T}{2}}  \pi_{g_i}  \exp \left[ - \frac{1}{2\sigma^2} \left\| \tilde{y}_i - \alpha_{g_i} \right\|^{2} \right]  \prod_{i, j = 1}^N \exp \left( \frac{cW_{i j}}{\sigma^2} \delta_{ij}\right), \numberthis
	\end{align*}
	where $\tilde{y}_i = y_i - \beta_{i}' z_i$, $z_i = [z_{i1} \; z_{i2} \ldots \; z_{iT}]'$ and $Z_S$ is the normalization constant.
	
	Next, we define the optimal group partition $G^*$ that minimizes the objective function of \textit{PC-KMeans} defined in (\ref{eq:obj_pckmeans2}) with $x_i = \tilde{y}_i$ and $\mu_{k} = \alpha_{k}$, that is,
	\begin{align*} \label{eq:sol_kmean_soft}
		G^* \equiv & \arg \min_{G} L(G) \\
		= & \arg \min_{G}  \frac{1}{2}\sum_{i=1}^{N} \left\| \tilde{y}_i - \alpha_{g_i} \right\|^{2} - \sum_{i, j} c W_{i j} \delta_{ij}. \numberthis
	\end{align*}

	Now we consider the asymptotic behavior of the posterior probability in (\ref{eq:post_G_soft_thm}). We will show that as $\sigma^2$ goes to 0, the posterior probability of $G$ approaches 0 for all group partitions except for $G^*$:
	\begin{align*}
		\lim_{\sigma^2 \to 0} p(G | \rho, \beta, \alpha, \Sigma, Y,X, W) \to
		\begin{cases}
			1 & \text{ if } G = G^*; \\
			0 & otherwise. \\
		\end{cases}
	\end{align*}
	
	We start with the log posterior density of $G$ in (\ref{eq:post_G_soft_thm}),
	\begin{align*} \label{eq:log_post_G}
		l (G)
		\equiv  \; & \log p (G | \rho, \beta, \alpha, \Sigma, Y,X, W) \\
		= \;  & -\frac{1}{2\sigma^2} \sum_{i=1}^{N} \left\| \tilde{y}_i - \alpha_{g_i} \right\|^{2} +  \sum_{i, j = 1}^N \left( \frac{cW_{i j}}{\sigma^2} \delta_{ij}\right) \\
		& -\frac{NT}{2} \log (2\pi \sigma^2) + \sum_{i=1}^N \log( \pi_{g_i}) - \log Z_S. \numberthis
	\end{align*}
	
	The difference between two log posterior probabilities evaluated at $G^*$ and any other $G$ is
	\begin{align*}
		& l (G^*) - l (G) \\
		= \; & \frac{1}{\sigma^2}\left[ \left( \frac{1}{2} \sum_{i=1}^{N} \left\| \tilde{y}_i - \alpha_{g_i} \right\|^{2} - \sum_{i, j = 1}^N c W_{ij} \delta_{ij} \right) - \left( \frac{1}{2} \sum_{i=1}^{N} \left\| \tilde{y}_i - \alpha_{g^*_i} \right\|^{2} - \sum_{i, j = 1}^N c W_{ij} \delta^*_{ij} \right) \right]\\
		& + \sum_{i=1}^{N} \left[ \log( \pi_{g^*_i}) - \log( \pi_{g_i}) \right]. \numberthis
	\end{align*}
	The first term is strictly positive according the definition of $G^*$ in (\ref{eq:sol_kmean_soft}). For simplicity, we denote the expression within the first square brace as $V$ and $V>0$. The second term is finite since
	\begin{align*}
		\left| \sum_{i=1}^{N} \left[ \log( \pi_{g^*_i}) - \log( \pi_{g_i}) \right] \right| \le  N \left| \max(\log (\pi_j)) - \min( \log (\pi_j))\right| < +\infty
	\end{align*}
	
	Thus, for any $G \ne G^*$, in the limit as $\sigma^2 \to 0$, we have
	\begin{align*}
		& \lim_{\sigma^2 \to 0} l (G^*) - l (G) = \lim_{\sigma^2 \to 0} \frac{V}{\sigma^2} + \sum_{i=1}^{N} \left[ \log( \pi_{g^*_i}) - \log( \pi_{g_i}) \right] = +\infty. \numberthis
	\end{align*}
	This indicates that
	\begin{align*}
		\lim_{\sigma^2 \to 0} \frac{ p(G | \alpha, \sigma^2, Y, X, Z, W)}{ p(G^* | \alpha, \sigma^2, Y, X, Z, W)} =  \lim_{\sigma^2 \to 0} \exp \left[l (G) - l (G^*) \right] = \exp (-\infty) = 0.
	\end{align*}
	We take the sum over all possible group partitions and get,
	\begin{align*}
		& \lim_{\sigma^2 \to 0} \frac{ \sum_{G'} p(G' |  \alpha, \sigma^2, Y, X, Z, W)}{ p(G^* |  \alpha, \sigma^2, Y, X, Z, W)} \\
		= \; & \lim_{\sigma^2 \to 0} \frac{ \sum_{G' \ne G} p(G' | \alpha, \sigma^2, Y, X, Z, W) + p(G^* |  \alpha, \sigma^2, Y, X, Z, W)}{ p(G^* |  \alpha, \sigma^2, Y, X, Z, W)} \\
		= \; & 1.
	\end{align*}
	Since $\sum_{G'} p(G' | \alpha, \sigma^2, Y, X, Z, W) = 1$, we have
	\begin{align}
		\lim_{\sigma^2 \to 0}  p(G^* | \alpha, \sigma^2, Y, X, Z, W) = 1.
	\end{align}
	
	Therefore, when $\sigma^2 \to 0$, every posterior draw of $G$ from the proposed Gibbs sampler is the solution to the assignment step of the \textit{PC-KMeans} algorithm, conditional on  the posterior draws of other parameters.
	
	\vspace{0.5cm}
	
	{\bf Part 2: Update Step}
	
	During Gibbs sampling, once we have performed one complete set of Gibbs moves on the group assignments and non-group-specific parameters including $\beta_i$ and $\sigma^2$, we need to sample the $\alpha_{k}$ conditioned on all assignments and observations.
	
	Let $|B_k|$ be the number of units assigned to group $k$, then the posterior density for $\alpha_k$ read as
	\begin{align*}
		p(\alpha_k | \beta, \sigma^2, Y, X, Z) \; \propto \;  \exp \left[ - \frac{1}{2}\left( \alpha_k - \bar{\mu}_{\alpha_k} \right)' \bar{\Sigma}_{\alpha_k}^{-1} \left( \alpha_k - \bar{\mu}_{\alpha_k} \right) \right], \numberthis
	\end{align*}
	where
	\begin{align*}
		\bar{\Sigma}_{\alpha_k} &= \left( \Sigma_\alpha^{-1} + |B_k| \sigma^{-2} I_T \right)^{-1}, \\
		\bar{\mu}_{\alpha_k} &= \bar{\Sigma}_{\alpha_k} \left( \Sigma_\alpha^{-1} \mu_\alpha + \sigma^{-2} \sum_{i \in B_k}  \tilde{y}_i \right), \\
		\tilde{y}_i &= y_i - \rho y_{-1,i} - x_i\beta_i.
	\end{align*}
	We can see that the mass of the posterior distribution becomes concentrated around the posterior group mean $\bar{\mu}_{\alpha_k}$ as $\sigma^2 \to 0$. Meanwhile, the posterior group mean $\bar{\mu}_{\alpha_k}$ equals the group ``sample'' mean in the limit:
	\begin{align*}
		\lim_{\sigma^2 \to 0} \bar{\mu}_{\alpha_k} & = \lim_{\sigma^2 \to 0}  \left( \Sigma_\alpha^{-1} + |B_k| \sigma^{-2} I_T \right)^{-1} \left( \Sigma_\alpha^{-1} \mu_\alpha + \sigma^{-2} \sum_{i \in B_k}  \tilde{y}_i \right) \\
		& = \lim_{\sigma^2 \to 0}  \left( \sigma^{2} \Sigma_\alpha^{-1} + |B_k| I_T \right)^{-1} \left( \sigma^{2}\Sigma_\alpha^{-1} \mu_\alpha + \sum_{i \in B_k}  \tilde{y}_i \right) \\
		& = |B_k|^{-1}  \sum_{i \in B_k}  \tilde{y}_i.
	\end{align*}
	In other words, after we determine the assignments of units to groups, we update the means as the ``sample" mean of the units in each group. This is equivalent to the standard \textit{KMeans} cluster update step in general. Of course, we need additional steps to draw $\beta_i$ and $\sigma^2$ before updating group means.
	% $|B_k|^{-1} \sum_{i \in B_k}  \widetilde{y}_i$
\end{proof}

%====================================================================================================
% Monte Carlo Simulation
%====================================================================================================

\section{Monte Carlo Simulation} \label{sec:mcmc}

In this section, we conducted Monte Carlo simulations to examine the performance of various constrained BGFE estimators under different data generating processes (DGPs) and prior belief on $G$. Two sets of DGPs are considered. For the simple DGPs, we introduce various group pattern in the fixed-effects only. The general DGPs, on the other hand, include more covariates with group-specific slope coefficients. Such designs enable us to investigate not only how our proposed estimators perform under various DGPs with specific features, but also the accuracy of estimating the number of groups.

We consider a short-panel environment in which the sample size is $N = 200$ and the time span is $T = 11$. As we focus on one-step ahead forecasts, the last observation of each unit serves as the hold-out sample for evaluation. A similar framework can be applied to $H$-step ahead forecasts by generating additional $H$ observations. The true number of groups is set to $K_0 = 4$. Given $N$ and $K^0$, we divide the entire sample into $K^0$ balanced blocks with $N/K_0$ units in each block.\footnote{If $N/K_0$ is not an integer, we assign $\lfloor N/K_0 \rfloor$ units for group 1,2,..,$K_0-1$ and the last group contains the remainder.} For each DGP, 1,000 datasets are generated, and we run the block Gibbs samplers for each data set with $M = 5,000$ iterations after a burn-in of 5,000 draws.

%***To summarize our simulation results, we make two general conclusions regarding when the proposed CBG estimators performs well.

%*** In practice, the small sample distributions may differ greatly from their asymptotic counterparts, making comparisons of estimators based on these asymptotic distributions misleading.

%***(study this) Of course, it should be borne in mind that consistency is not the most important criterion to satisfy here and how a method handles bias-variance tradeoffs may be more important to forecasting performance.

\subsection{Data Generating Processes} \label{sec:dgp}

\subsubsection{Simple DGPs}
We begin with a simple dynamic panel data model with group pattern in the fixed-effects and no covariates or heteroskedasticity.

\noindent{\bf DGP 1 \& 2:}
\begin{align}
	y_{it} &= \alpha_{g_{i}} + \rho y_{it-1} + \varepsilon_{i t},
\end{align}
where $\rho = 0.7$ and $ \varepsilon_{i t} \sim N \left(0, 1 \right)$. The distributions of initial values are selected to ensure the simulation paths are stationary. Idiosyncratic error $\varepsilon_{i t}$ are independent across $i$ and $t$, and mutually independent. $\varepsilon_{i t}$ is also independent of all regressors. 

We assume $\alpha_{k}$ has zero mean and takes the form $\alpha_{k}= m(k-2.5)$, where $m$ controls the cross-sectional variance of $\alpha_{i}$. Two sets of $\alpha_{k}$ are specified: $m = 1.79$ such that $\operatorname{var}\left(\alpha_{k}\right) = 1/4$ in DGP 1 and $m = 0.51$ such that $\operatorname{var}\left(\alpha_{k}\right) = 1/50$ in DGP 2, see details in Appendix \ref{sec:simple_dgp_detail}. The difference in $\alpha_k$ between these two DGPs distinguishes their properties. As depicted in Figure \ref{fig:MC_DGP}, the group pattern is readily apparent in DGP 1. Different groups of units are perfectly divided, and the simulated paths are pretty flat. DGP 2 has a less visible group structure than DGP 1 because the difference between group means of $\alpha_{k}$ is smaller. The simulated pathways are considerably noisier and fluctuate around the unconditional mean.

\begin{figure}[htp]
	\caption{Simulated Paths for Units in Different DGPs}
	\label{fig:MC_DGP}
	\centering
	\begin{subfigure}[b]{0.35\textwidth}
		\centering
		\caption{DGP 1: Sharp Group}
		\includegraphics[width=\textwidth]{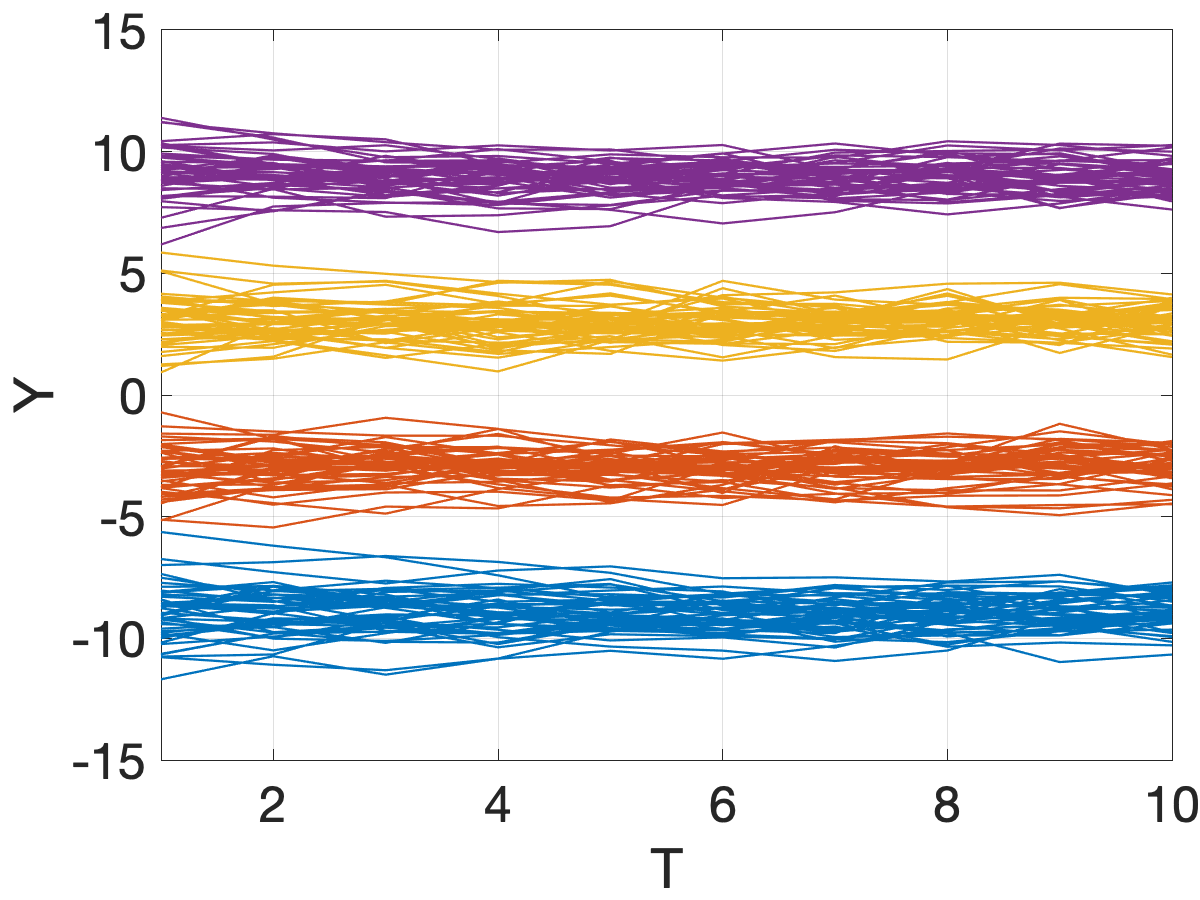}
	\end{subfigure}
	\begin{subfigure}[b]{0.35\textwidth}
		\centering
		\caption{DGP 2: Noisy Group}
		\includegraphics[width=\textwidth]{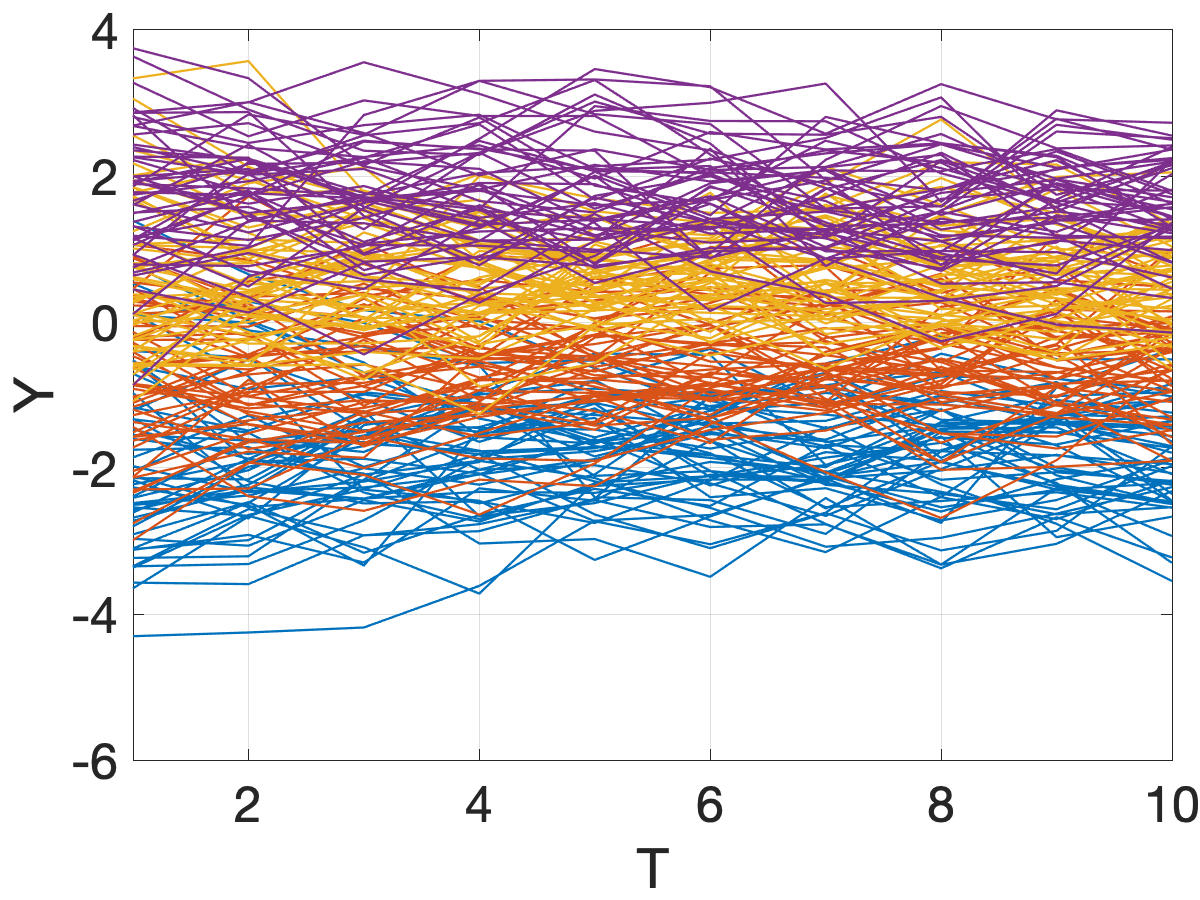}
	\end{subfigure}
\end{figure}

\subsubsection{Details of the Simple DGPs}  \label{sec:simple_dgp_detail}

We start with the mean of $\alpha_{k}$. Assume the values of $\alpha_{k}$ for group $k$ take the form of $\alpha_{k} = m(k - c)$, where $c$ is a shifting constant and $m$ is a scaling constant. With loss of generality, we fix the mean of $\alpha_{k}$ to 0, 
\begin{align}
	\sum_{k=1}^{K_0} \alpha_{k} = m\sum_{k=1}^{K_0} \left( k - c \right) = 0
\end{align}
It immediately follows that $c = \frac{K_0 + 1}{2}$ and 
\begin{align} \label{eq:DGP_a_k_diff_G_app}
	\alpha_{k}  = m \left( k - \frac{K_0 + 1}{2} \right), \quad \text{ for } k = 1,2,.., K_0.
\end{align}
%Notice that $m$ is the distance between adjacent $\alpha_{k}$ regardless of the choice of $K_0$, e.g.,
%\begin{align*}
%	\alpha_{k} - \alpha_{k-1} = m \left( k - \frac{K_0 + 1}{2} \right) - m \left( k -1 - \frac{K_0 + 1}{2} \right) = m.
%\end{align*}
%
%To make the DGPs more comparable as more groups are considered, we fix the value for $m$. As a result, with increasing $K_0$, $\alpha_{k}$ fans out with equal distance of $m$ and still have zero mean. For example, when $K_0 = 4$ and $m=1$, $\alpha_{k}$ take values of $\{-3/2, -1/2, 1/2, 3/2\}$ and when $K_0 = 6$, $\alpha_{k} = \{-5/2, -3/2, -1/2, 1/2, 3/2, 5/2\}$. The values are different for odd $K_0$ but the logic is the same.

Next,  $m$ is the only unknown coefficient in the DGP. To find a reasonable value for $m$, we connect it to the variance of $\alpha_i$. As $\alpha_{i}$ are assume to be identical within a group, the sample variance of $\alpha_{i}$ is given by
\begin{align*}
	V(\alpha_{i})  = \frac{1}{N} \sum_{i=1}^{N} \alpha_i^2 = \frac{1}{N} \frac{N}{K_0} \sum_{k=1}^{K_0} \alpha_k^2 = \frac{1}{K_0} \sum_{k=1}^{K_0} \alpha_k^2
\end{align*}

Plugging in the expression of $\alpha_{k}$ in (\ref{eq:DGP_a_k_diff_G_app}), we have
\begin{align}  \label{eq:DGP_V_diff_G_app}
	V(\alpha_{i}; m, K_0)  =  \frac{m^2}{K_0} \sum_{k=1}^{K_0} \left( k - \frac{K_0 + 1}{2} \right)^2.
\end{align}

To make the DGPs more comparable as more groups are considered, we assume $V(\alpha_{i}; m, K_0)$ is monotonically increasing in $K^2_0$, e.g., $V(\alpha_{i}; m, K_0) = V_0 K^2_0$ for some constant $V_0$. As a result, we can deduct the value of $m$ from (\ref{eq:DGP_V_diff_G_app}),
\begin{align} \label{eq:DGP_m_diff_G_app}
	m(K_0, V_0) =  \left[ \frac{V_0 K_0}{ \sum_{k=1}^{K_0} \left( k - \frac{K_0 + 1}{2} \right)^2} \right]^{\frac{1}{2}}.
\end{align}

It is straightforward to find $V_0$ controls the dispersion of the underlying DGP. A larger $V_0$ indicate $\alpha_{k}$ are more separated and hence the group pattern become sharper, and vice versa.

\subsubsection{General DGPs}

The general DGP is based on the dynamic panel data model specified in (\ref{simple_model}) with an exogenous predictor $z_{it}$ that has common effect for all units. This DGP incorporates group heterogeneity in the fixed-effects, the lagged term $ x^{(1)}_{it} = y_{it-1}$ and an exogenous predictor $x^{(2)}_{it}$, as well as error variance $\sigma^2_{g_i}$.

\noindent{\bf DGP 3:}
\begin{align}
	y_{it} & = \alpha_{g_i}' x_{it} + \gamma z_{it} + \sigma_{g_i} \varepsilon_{it},
\end{align}
where $x_{it} = [1, x^{(1)}_{it}, x^{(2)}_{it}]'$, $\gamma = 1.5$, $y_{i0} \sim N(0,1)$ and $\varepsilon_{it} \sim N (0,1)$. For each $i$, the initial value is specified to guarantee that the time series $(y_{i0}, y_{i1},..., y_{iT})$ is strictly stationary. We assume there are $K^0 = 4$ balanced groups, with the true grouped coefficients summarized in Table (\ref{fig:MC_general_DGP}). The AR(1) coefficients represent different degree of persistence persistence. The exogenous variable $ x^{(2)}_{it} $ is drawn from $N(0,1)$ and $z_{it}$ is drawn from $Gamma(1,1)$, capped by 10.

\begin{table}[htp]
	\begin{center}
		\caption{True Grouped Coefficients in the General DGP}
		\label{fig:MC_general_DGP}
		\begin{tabular}{x{2cm} x{2cm} x{2cm} x{2cm} x{2cm}}
			\toprule
			& $\alpha_{0, k}$ & $\alpha_{1,k}$ & $\alpha_{2,k}$ & $\sigma_k^2$ \\ 
			\cmidrule(lr){2-5} 
			& (FE) & (lagged) & (exo var.) & (variance) \\
			\midrule 
			Group 1 & -0.15 & 0.4 & 0.16 & 0.500  \\ 
			Group 2 & -0.05 & 0.8 & 0.14 & 0.375 \\ 
			Group 3 & \textit{ }0.05 & 0.5 & 0.12 & 0.250 \\ 
			Group 4 & \textit{ }0.15 & 0.7 & 0.10 & 0.125 \\
			\bottomrule 
		\end{tabular}
	\end{center}
\end{table}

\subsection{Construction of Soft Pairwise Constraints} \label{sec:pair_cstr_doc}

\noindent {\bf Number of constraints:} We set the number of constraints $N_{PL}$ and $N_{NL}$ as a function of $N$ and $K_0$ to facilitate performance comparisons across different settings and to ensure that the information of constraints does not vanish as $N$ increases. Specifically, $N_{PL}$ and $N_{NL}$ are a predetermined proportion of the total number of correct constraints for each type which are given by,
\begin{align}
	N^*_{PL}(N,K) & = K C^2_{N/K} = K \frac{N/K(N/K-1)}{2} = \frac{N(N-K)}{2K}, \\
	N^*_{NL}(N,K ) & = \left( \frac{N}{K}\right)^2 C^2_{K} =  \left( \frac{N}{K}\right)^2 \frac{K(K-1)}{2} = \frac{N^2(K-1)}{2K}.
\end{align}

In the setting with $(N, K_0) = (200,4)$, we have $N^*_{PL}(200,4) = 4,900$ and $N^*_{NL}(200,4) = 15,000$. We choose randomly select 5\% of these constraints, leading to $N_{PL} = 245$  and $N_{NL} = 750$.

\noindent {\bf Type of pairwise constraints:} The pairwise constraints are generated randomly. Given the number of PL constraints $N_{PL}$, each PL constraint is generated by randomly selecting a group and uniformly selecting two units within that group to be positive-linked. Similarly, for each of $N_{NL}$ NL constraints, two unique groups are chosen at random and one unit is randomly selected from each. Regarding the remaining unselected units, we assume they are not restricted and have no prior belief on them.

\noindent {\bf Accuracy of pairwise constraints:} Each constraint is annotated with a level of accuracy $\psi$ generating from a transformed Beta distribution defined on $[0.5, 1]$. We begin by drawing $\nu$ from a Beta distribution: if the constraint is correct, $\nu \sim \operatorname{Beta}(3, 2)$ for some $\alpha>1$; otherwise, $\nu \sim \operatorname{Beta}(2, 3)$. Then the level of confidence is $\psi = \frac{\nu}{2} + 0.5$ so that its domain is $[0.5,1]$. We derive $\psi$ in this manner to reflect the assumption that an expert should have less certainty in erroneous constraints than in correct ones.

\noindent {\bf Perturbation in pairwise constraints:} To examine the performance with soft constraints under inaccurate prior belief, we artificially add errors to the randomly generated constraints. A fraction $e$ of the constraints are mislabeled -- a positive-link would be mislabeled as a negative-link and vice versa. We turn $eN_{PL}$ true PL into NL and $eN_{NL}$ true NL into PL with $e = 20\%$.

All DGPs are equipped with the same set of pairwise constraints, e.g., we only draw pairwise constraints and construct weights once.

\subsection{Alternative Estimators}
We explore various types of estimators that differ in the prior belief on $G$.
\begin{enumerate}[(i)]
	\item \textit{BGFE}: The baseline Bayesian grouped fixed-effects (BGFE) estimator are correctly-specified, i.e. assuming that the true model exhibits time-invariant grouped heterogeneity and that variance of error term is constant (varying) across units in the simple (general) DGPs. No prior belief on $G$ is available for this estimator.
	
	\item \textit{BGFE-cstr}: The baseline BGFE estimator that takes pairwise constraints into consideration.
	
	\item \textit{BGFE-oracle}: This estimator is a variant of the BGFE estimator equipped with \textit{known} true G.
\end{enumerate}

We also evaluate the other Bayesian estimators with different prior assumptions on $\alpha_i$ that don't model group structure.
\begin{enumerate}[(i)]
	\setcounter{enumi}{3}
	\item \textit{Pooled}: Bayesian pooled estimator views $\alpha_i$ as a common parameter and, consequently, all units have the same prior level of $\alpha_i$.
	
	\item \textit{Flat}: flat-prior estimator assumes $p(\alpha_i) \; \propto \; 1$. There is no pooling across units in this case and $\alpha_i$'s are individually estimated using their own history. This also amounts to sampling from a posterior whose mode is the MLE estimate. 
	
\end{enumerate}

\subsection{Posterior Predictive Densities and Performance Evaluation}

\subsubsection{Posterior Predictive Densities}

Given $S$ posterior draws, the posterior predictive distribution estimated from the MCMC draws is
\begin{align}
	\hat{p}(y_{i T+1} | Y,X) = \frac{1}{S} \sum_{j=1}^{S} \left[ \sum_{k=1}^{K^{(j)}(G)} \mathbf{1}(g_i = k) p\left(y_{i T+1}| Y, X, \Theta^{(j)} \right) \right].
\end{align}

We can therefore draw samples from $\hat{p}(y_{i T+1} | Y,X)$ by simulating (\ref{simple_model}) forward conditional on the posterior draws of $\Theta $ and observations.

\subsubsection{Point Forecasts}

The optimal posterior forecast under quadratic loss function is obtain by minimizing the posterior risk, with is the posterior mean. Conditional on posterior draws of parameters, the mean forecast can be approximated by the Monte Carlo averaging,
\begin{align}
	\hat{y}_{i, T+1 | T}  \approx \frac{1}{S}\sum_{j=1}^{S} \hat{y}_{i T+1 | T}^{(j)} = \frac{1}{S}\sum_{j=1}^{S} \hat{\alpha}^{(j)'}_{g_i} x_{i T+1},
\end{align}
and the RMSFE across units is given by
\begin{align}
	RMSFE_{T+1} = \sqrt{ \frac{1}{N} \sum_{i=1}^{N} \left(y_{i, T+1} -\hat{y}_{i, T+1}\right)^2 }.
\end{align}

\subsubsection{Set Forecasts}
We construct set forecasts $CS_{i T+1}$ from the posterior predictive distribution of each unit. In particular, we adopt a Bayesian approach and report the highest posterior density interval (HPDI), which is the narrowest connected interval with coverage probability of $1-\alpha$. In other words, it requires that the probability of $y_{i T+1} \in CS_{iT+1}$ conditional on having observed the history $Y$ be at least $1-\alpha$, i.e.,
\begin{align}
	P ( y_{i T+1} \in CS_{i T+1} ) \geq 1-\alpha, \quad \text {for all } i,
\end{align}
and this interval is the shortest among all possible single connected candidate sets. Let $\delta^{l}$ be the lower bound and $\delta_{u}$ be the upper bound, then $CS_{i T+1} = \left[\delta_{i}^{l}, \delta_{i}^{u}\right]$.

The assessment of set forecasts in simulation studies and empirical applications is based on two metrics: (1) the cross-sectional coverage frequency,
\begin{align}
	Cov_{T+1} = \frac{1}{N} \sum_{i=1}^{N} \mathbf{1} \left( y_{i T+1} \in CS_{i T+1} \right),
\end{align}
and (2) the average length of the sets $C_{i T+1}$,
\begin{align}
	AvgL_{T+1} = \frac{1}{N} \sum_{i=1}^{N} \left(\delta_{i}^{u} - \delta_{i}^{l} \right).
\end{align}

\subsubsection{Density Forecasts}
To compare the performance of density forecasts for various estimators, we examine the continuous ranked probability score \citep{matheson1976, hersbach2000} across units. The continuous ranked probability score (CRPS)  is frequently used to assess the respective accuracy of two probabilistic forecasting models. It is a quadratic measure of the difference between the predictive cumulative distribution function, $F^{T+1|T}_i(y)$, and the empirical CDF of the observation with the formula as follows,
\begin{align*}
	CRPS_{T+1} &= \frac{1}{N} \sum_{i=1}^{N} CRPS(F^{T+1|T}_i, y_{i T+1}) \\
	&= \frac{1}{N} \sum_{i=1}^{N} \int_{0}^{\infty} \left[ F^{T+1|T}_{i}(y)-\mathbf{1}\left( y_{i T+1} \leq y \right)\right]^{2} dy, \numberthis
\end{align*}
where $y_{i T+1}$ is the realization at $T+1$.

In practice, the true predictive cumulative distribution function $F^{T+1 | T}_i(y)$ or the PIT of $y_{i T+1}$ is not available. We approximate it via the empirical distribution function for each unit based on the posterior draws from the predictive density,
\begin{align}
	\hat{F}^{T+1 | T}_i (y) = \frac{1}{S} \sum_{j=1}^{S} \mathbf{1} \left( y^{(j)}_{i T+1 | T} \leq y \right),
\end{align}
Based on sorted posterior draws $\widetilde{y}_{iT+1}^{(j)}$, we can calculate CRPS using the below representation by \citet{laio2007},
%\begin{align} \label{eq:CRPS}
%	CRPS\left(\hat{F}^{T+1}_i, y_{iT+1}\right)=\frac{1}{S} \sum_{j=1}^{S} \left| y_{iT+1}^{(j)}-y_{iT+1} \right| - \frac{1}{2S^{2}} \sum^{S}_{j,k =  1} \left | y_{iT+1}^{(j)} - y_{iT+1}^{(k)} \right |.
%\end{align}
\begin{align}  \label{eq:CRPS}
	CRPS\left(\hat{F}^{T+1 | T}_i, y_{iT+1}\right)=\frac{2}{S^{2}} \sum_{j = 1}^{S}\left( \widetilde{y}_{i T+1| T}^{(j)} - y_{i T+1}\right)\left( 1\left\{y_{i T+1} < \widetilde{y}_{i T+1 | T}^{(j)} \right\} s - i + \frac{1}{2}\right).
\end{align}

%where $y^{[j]}_{i T+h}$ denotes the $j$-th element of the sorted sequence  $\{y^{(j)}_{i T+h}\}_{j=1}^{M^*}$. The merit of using the empirical distribution function is that we can express the integral as a Riemann sum as both terms inside the integral  are step functions.

Moreover, we report the average log predictive scores (LPS) to assess the performance of the density forecast from the view of the probability distribution function. As suggested in \citet{geweke2010}, the LPS for a panel reads as,
\begin{align}
	LPS_{T+1}  =  &  - \frac{1}{N} \sum_{i=1}^{N} \ln \int p\left(y_{i T+1} | Y, X, \Theta\right) p(\Theta | Y, X) d \Theta ,
\end{align}
where the expectation can be approximated using posterior draws,
\begin{align}
	\int p\left(y_{i T+1} | Y, X, \Theta\right) p(\Theta | Y, X) d \Theta &\approx \frac{1}{S} \sum_{j = 1} ^{S} p \left(y_{i T+1} | Y, X, \Theta^{(j)} \right).
\end{align}

\subsection{Simulation Results}

%We demonstrate that proper constraints provide useful information on the group structure in the simulations with soft pairwise constraints. These constraints make it easier for the estimator to identify groups than the BGFE estimator, which improves both estimation and forecasting even in the presence of erroneous prior knowledge.

\subsubsection{Simple Dynamics Panel Data}

To evaluate the advantage of pooling units into groups, we report the RMSE, bias, standard deviation, average length of 95\% credible set, and frequentist coverage of the posterior estimate of $\rho$ across Monte Carlo repetitions. For the fixed effects $\alpha$, we only present the average bias as it may not be of importance for most empirical study.

The comparison across alternative estimators is shown in Table \ref{tab:MC_est_soft_NL80_CL40}. In DGP 1,the BGFE-cstr and BGFE estimators are equally accurate as the oracle estimator. This is not surprising because the units are well-separated by design, and the data provide sufficient information for the BGFE estimator to determine the group pattern. In this situation, prior knowledge of $G$ or the true group indices has quite marginal influence. The pooled estimator, on the other hand, erroneously pools all groups together, resulting in inaccurate estimates of $\alpha_i$ and $\rho$. Despite the fact that the flat estimator treats units separately, it is still inferior to the BGFE-type estimators. This is because it cannot utilize cross-sectional information to estimates parameters in this short panel and hence bears much larger bias.

In DGP 2, where the group pattern is less apparent, the BGFE-cstr estimator is arguably the most accurate. In contrast to the standard BGFE estimator, it uses cross-sectional data and pairwise constraints to determine the group pattern. These properties substantially reduce the biases of $\hat{\beta}$ and $\hat{\alpha}_i$, enabling the BGFE-cstr estimator to outperform the unconstrained estimator by a significant margin and to perform comparable to the oracle estimator. Remember that we manually add 20\% incorrect constraints into the prior knowledge. Despite the presence of these misspecified constraints, the BGFE-cstr estimator is still able to extract relevant information from constraints in order to enhance the overall performance. The BGFE estimator, however, is unable to correctly reconstruct the group structure due to the noisy data, which results in the algorithm improperly grouping the units and hence generating inaccurate estimates.

\begin{table}[h]
	\begin{center}
		\caption{Monte Carlo: Estimates, Soft Constraint}
		\label{tab:MC_est_soft_NL80_CL40}
		\scalebox{0.8}{
			\begin{tabular}{ll|rrrrr|r|l}
				\toprule
				& & \multicolumn{5}{c}{$\hat{\rho}$} & \multicolumn{1}{c}{$\hat{\alpha}_i$} & Group \\
				\cmidrule(lr){3-7} \cmidrule(lr){8-8} \cmidrule(lr){9-9} \noalign{\smallskip}
				& & \multicolumn{1}{c}{RMSE } & \multicolumn{1}{c}{Bias } & \multicolumn{1}{c}{Std } & \multicolumn{1}{c}{AvgL } & \multicolumn{1}{c}{Cov } & \multicolumn{1}{c}{Bias } & \multicolumn{1}{c}{Avg K} \\
				
				\midrule
				\multirow{5}{*}{\shortstack{DGP 1}}
				& BGFE-oracle & 0.0104 & 0.0037 & 0.0072 & 0.0276 & 0.92 & 0.0371 & 4 \\ 
				& BGFE-cstr & 0.0102 & 0.0030 & 0.0072 & 0.0282 & 0.94 & 0.0369 & 4.92 \\ 
				& BGFE & 0.0103 & 0.0037 & 0.0071 & 0.0274 & 0.92 & 0.0377 & 4.4 \\ 
				& Pooled & 0.3543 & 0.3543 & 0.0032 & 0.0125 & 0 & 1.7889 & - \\ 
				& Flat & 0.1713 & 0.1711 & 0.0073 & 0.0283 & 0 & 0.8668 & - \\ 
				
				\midrule
				\multirow{5}{*}{\shortstack{DGP 2}}
				& BGFE-oracle & 0.0186 & 0.0030 & 0.0137 & 0.0527 & 0.95 & 0.0235 & 4 \\ 
				& BGFE-cstr & 0.0202 & 0.0058 & 0.0143 & 0.0557 & 0.93 & 0.0373 & 5.06 \\ 
				& BGFE & 0.0546 & 0.0443 & 0.0212 & 0.0809 & 0.66 & 0.1357 & 4.78 \\ 
				& Pooled & 0.2920 & 0.2919 & 0.0077 & 0.0298 & 0 & 0.5060 & - \\ 
				& Flat & 0.1170 & 0.0834 & 0.0131 & 0.0509 & 0.14 & 0.2344 & - \\ 
				
				\bottomrule
			\end{tabular}
		}
	\end{center}
	%	{\footnotesize \textit{Notes:} xxx.\par}
\end{table}

Table \ref{tab:MC_fcst_soft_NL80_CL40} provides a summary of the prediction performance of each estimator. In general, the conclusions of the one-step-ahead forecast agree with those of the estimation. In DGP 1, the performance of the three BGFE estimators are quite similar, followed by the flat and pooled estimators. In DGP 2, the BGFE-cstr estimator, which utilizes prior belief on $G$, beats the other feasible estimators in point, set, and density forecast and is comparable to the oracle estimator.

\begin{table}[h]
	\begin{center}
		\caption{Monte Carlo: Forecast, Soft Constraint}
		\label{tab:MC_fcst_soft_NL80_CL40}
		\scalebox{0.8}{
			\begin{tabular}{ll|rrr|rr|rr}
				\toprule
				& & \multicolumn{3}{c}{Point Forecast}  & \multicolumn{2}{c}{Set Forecast} & \multicolumn{2}{c}{Density Forecast} \\
				\cmidrule(lr){3-5} \cmidrule(lr){6-7} \cmidrule(lr){8-9} \noalign{\smallskip}
				& & \multicolumn{1}{c}{RMSFE } & \multicolumn{1}{c}{Error } & \multicolumn{1}{c}{Std } & \multicolumn{1}{c}{AvgL } & \multicolumn{1}{c}{Cov } & \multicolumn{1}{c}{LPS } & \multicolumn{1}{c}{CRPS } \\
				
				\midrule
				\multirow{5}{*}{\shortstack{DGP 1}}
				& BGFE-oracle & 0.4989 & 0.0001 & 0.4989 & 1.9627 & 0.95 & 0.7254 & 0.2818 \\ 
				& BGFE-cstr & 0.4991 & 0.0004 & 0.4990 & 1.9666 & 0.95 & 0.7256 & 0.2819 \\ 
				& BGFE & 0.4990 & 0.0001 & 0.4990 & 1.9616 & 0.95 & 0.7255 & 0.2818 \\ 
				& Pooled & 0.6401 & 0.0006 & 0.6404 & 3.0114 & 0.98 & 1.0064 & 0.3657 \\ 
				& Flat & 0.5620 & 0.0003 & 0.5622 & 2.4265 & 0.97 & 0.8544 & 0.3184 \\ 
				
				\midrule
				\multirow{5}{*}{\shortstack{DGP 2}}
				& BGFE-oracle & 0.4990 & 0.0001 & 0.4989 & 1.9629 & 0.95 & 0.7254 & 0.2819 \\ 
				& BGFE-cstr & 0.5021 & 0.0001 & 0.5021 & 1.9790 & 0.95 & 0.7314 & 0.2836 \\
				& BGFE & 0.5186 & 0.0002 & 0.5187 & 2.0546 & 0.95 & 0.7633 & 0.2930 \\ 
				& Pooled & 0.5396 & 0.0005 & 0.5395 & 2.2444 & 0.96 & 0.8079 & 0.3052 \\ 
				& Flat & 0.5286 & 0.0002 & 0.5287 & 2.1165 & 0.95 & 0.7841 & 0.2987 \\ 
				
				\bottomrule
			\end{tabular}
		}
	\end{center}
	%	{\footnotesize \textit{Notes:}  xxx.\par}
\end{table}

\subsubsection{General Panel Data} 

As the number of parameters increases for DGP 3, we present the RMSE and absolute bias of $\alpha_{g_i} = [\alpha_{1, g_i} \; \alpha_{2, g_i} \; \alpha_{3, g_i}]'$ and $\gamma$, as well as metrics for point and density prediction. In addition, all BGFE estimators now account for heteroskedasticity because the cross-sectional variance in DGP 3 is informative to group structure. As a result, we have the \textit{BGFE-he-oracle}, \textit{BGFE-he-cstr} and \textit{BGFE-he} estimators in this exercise, where "\textit{he}" denotes heteroskedasticity. For comparison, we also offer the \textit{BGFE-ho-cstr} estimator, which assumes homoskedasticity, and \textit{Flat-he} estimator, which is the heteroskedastic flat estimator.

\begin{table}[h]
	\begin{center}
		\caption{Results for Estimation, Point Forecast and Estimated $K$}
		\label{tab:MC_general_DGP}
		\resizebox{0.99\textwidth}{!}{%
			\begin{tabular}{l c c c c c c c c c c l l}
				\toprule
				& \multicolumn{8}{c}{Estimates}  & \multicolumn{2}{c}{Forecast} & \multicolumn{2}{c}{Group} \\
				\cmidrule(lr){2-9} \cmidrule(lr){10-11} \cmidrule(lr){12-13} \noalign{\smallskip}
				& $R(\hat{\alpha}_0)$ & $B(\hat{\alpha}_0)$ & $R(\hat{\alpha}_1)$ & $B(\hat{\alpha}_1)$ & $R(\hat{\alpha}_2)$ & $B(\hat{\alpha}_2)$ & $R(\hat{\gamma})$ & $B(\hat{\gamma})$ & RMSFE & LPS & AvgK & PctK\\ 
				\midrule 
				%				BGFE-oracle & 0.11 & 0.12 & 0.09 & 0.10 & 0.10 & 0.11 & 0.80 & 0.15 & 0.83 & 0.81 & 4 \\ 
				%				BGFE-crst & 0.19 & 0.17 & 0.31 & 0.21 & 0.11 & 0.12 & 0.81 & 0.15 & 0.84 & 0.82 & 4.26 \\ 
				%				BGFE & 0.36 & 0.41 & 0.61 & 0.61 & 0.12 & 0.14 & 0.84 & 0.16 & 0.87 & 0.86 & 2.78 \\ 
				%				Pooled & 0.42 & 0.47 & 1.36 & 1.61 & 0.12 & 0.14 & 0.95 & 0.17 & 0.99 & 1.02 & 1 \\
				Flat-he & 0.258 & 0.199 & 0.131 & 0.098 & 0.200 & 0.149 & 0.026 & 0.014 & 0.667 & 1.108 & - & - \\ 
				\midrule
				BGFE-he-oracle & 0.126 & 0.137 & 0.092 & 0.101 & 0.119 & 0.132 & 0.569 & 0.602 & 0.840 & 0.706 & 4 & 1 \\ 
				BGFE-he-cstr & 0.171 & 0.164 & 0.290 & 0.179 & 0.125 & 0.135 & 0.566 & 0.608 & 0.847 & 0.716 & 4.087 & 0.914 \\ 
				BGFE-ho-cstr & 0.218 & 0.198 & 0.328 & 0.220 & 0.140 & 0.144 & 0.625 & 0.671 & 0.850 & 0.769 & 4.342 & 0.682 \\ 
				BGFE-he & 0.303 & 0.317 & 0.560 & 0.482 & 0.137 & 0.147 & 0.601 & 0.640 & 0.871 & 0.756 & 3.575 & 0.534 \\ 
				Pooled & 0.444 & 0.503 & 1.262 & 1.527 & 0.131 & 0.148 & 0.734 & 0.805 & 0.993 & 0.910 & - & - \\ 
				\bottomrule 
			\end{tabular}
			
		}
	\end{center}
	{\footnotesize \textit{Notes: The first line gives the levels of the each metrics based on the \textit{Flat-he} estimator, which is the benchmark model, and the following lines in the columns head "Estimates" and "Forecast" present ratios of the respective method relative those based on the flat-he estimator. In the columns head "Group", we show the average of number of groups (AvgK) and the percentage of iterations that the posterior sampler selects $K_0$ (PctK) averaged over 1,000 runs of algorithm. $R(\cdot)$ is RMSE of the posterior mean estimator. $B(\cdot)$ is the absolute bias of the posterior mean estimator. }\par}
\end{table}

Table \ref{tab:MC_general_DGP} presents the relative performance of estimation and forecasting for DGP 3. The benchmark model is \textit{Flat-he}. Several findings arise. First, the gain from incorporating pairwise constraints is evident. It reduces the RMSE and bias for all parameters and improve both point and density forecast, when comparing BGFE-he-cstr to BGFE-he. The percentage of the Gibbs sampler that visits the true number of groups $K_0$ grows considerably from 53.4\% to 91.4\%. Even when pairwise constraints are taken into account, this percentage is just 68.2\% if heteroskedasticity is ignored. Second, when we include prior belief on $G$, the improvement in $\alpha_{1, g_i}$, the AR coefficient, is the greatest among all three grouped coefficients with a bias reduction of more than 60\%. This also suggests that the AR coefficient may be more sensitive to the estimated group structure. Thirdly, BGFE-he-cstr and BGFE-ho-cstr have comparable RMSFE values, but BGFE-he-cstr has a significantly lower LPS, showing that modeling heteroskedasticity in the current setting is favorable for the density forecast. The empirical results below also confirm this finding. Lastly, all BGFE-type estimators generate similar estimates for the exogenous variables that don't have group effects on $y_{it}$ as the improvement in $\alpha_{2, g_i}$ and $\gamma$ are marginal when prior belief on group is included or when true group is imposed.

\subsubsection{Computational Time}

We compare the running times of each estimator using the simulation with the general DGPs. For each estimator, we run the block Gibbs samplers with 5,000 iterations after a burn-in of 5,000 draws. We calculate the running time by averaging over 1,000 repetitions. All programs, including this exercise, the simulations above, and the empirical analysis in Section \ref{sec:emp_result} were executed in Matlab 2020a running on a server with Intel Skylake CPUs and 32 GB of RAM.

Table $\ref{tab:MC_speed}$ shows the running time in seconds. It suggests that modeling group structure does not considerably increase computational load. The running time of the BGFE-he estimator is comparable to that of the Flat estimator when the number of observations is small. As $T$ increases, the BGFE-he estimators take slightly more time to finish than the flat estimator. The BGFE-he-cstr estimator requires approximately 20\% more time to complete the calculations than the BGFE-he estimator. This is because BGFE-he-cstr adapts pairwise constraints and has additional term to evaluate in the posterior distribution of the group indices. Notice that we fix the scaling constant $c$ in this experiment. If we find the optimal $c$ using grid search as proposed in Section \ref{app:det_c}, the running time of BGFE-he-cstr will be $m$ times as great as what is shown below, where $m$ is the size of the grid.

\begin{table}[h]
	\begin{center}
		\caption{Running Time in Seconds}
		\label{tab:MC_speed}
		\begin{tabular}{c c | x{2.5cm}  x{2.5cm}  x{2.5cm}  x{2.5cm}}
			\toprule
			N & T & BGFE-he-cstr & BGFE-he & Flat & Pooled \\ 
			\midrule
			\multirow{3}{*}{\shortstack{100}}
			&  5 & 74.9 & 62.1 & 1.2 & 61.7 \\ 
			& 10 & 82.5 & 69.8 & 1.3 & 64.7 \\ 
			& 20 & 86.1 & 75.7 & 1.3 & 62.4 \\ 
			\midrule
			\multirow{3}{*}{\shortstack{200}}
			&  5 & 145.0 & 111.9 & 1.2 & 114.3 \\ 
			& 10 & 146.2 & 118.8 & 1.3 & 113.0 \\ 
			& 20 & 159.4 & 133.2 & 1.5 & 117.0 \\ 
			\bottomrule 
		\end{tabular}
	\end{center}
\end{table}

It is possible to speed the algorithm up especially when we select the optimal value for $c$. \citet{wang2011} propose the sequential updating and greedy search (SUGS) algorithm, which essentially simplifies the algorithm \ref{algo:RC_GH} with approximation. The SUGS algorithm allows fast yet accurate approximate Bayes inferences under DP priors with just a single cycle of simple calculations for each unit, while also producing marginal likelihood estimates to be used in selecting the constant $c$. This will be left for future study.

\section{Data Description} \label{appendix:data}

\subsection{Inflation of the U.S. CPI Sub-Indices}  \label{appendix:data_cpi}
The seasonally adjusted series of CPI for All Urban Consumers (CPI-U) for subcategories at all display level are obtained from the BLS Online Databases.\footnote{\url{https://data.bls.gov/PDQWeb/cu}} The raw data contains 318 series, which are recorded on a monthly basis and spanned the period from January 1947 to August 2022.  Notice that the raw dataset doesn't include an indicator for the expenditure categories. We manually merge the raw dataset with the table of content of CPI entry level items\footnote{\url{https://www.bls.gov/cpi/additional-resources/entry-level-item-descriptions.xlsx}} by entry level item (ELI) code, the series description, and universal classification codes (UCC), if necessary.\footnote{Some series are labeled by UCC rather than ELI. The concordance provided by the BLS can be found here: \url{https://www.bls.gov/cpi/additional-resources/ce-cpi-concordance.htm}.} 

Series can enter and exit the sample. The BLS discontinued and launched series on a regular basis owing to changes in source data and methodology, for example, see the \href{https://www.bls.gov/cpi/additional-resources/recent-upcoming-methodology-changes.htm}{Post} for the updates on series since 2017. The measure of certain subcategories was impacted by the Pandemic and hence missing. Since the Pandemic, the related activities and venues (sports events, bars, schools) were canceled and close temporarily, such as admission to sporting events (SS62032), distilled spirits away from home (SS20053), food at elementary and secondary schools (SSFV031A), etc. We chose to not impute the missing values since there was no clear benchmark to compare with, especially given the depressed economic conditions.

The CPI-U consists of eight major expenditure categories (1) Apparel; (2) Education and Communication; (3) Food and beverages; (4) Housing; (5) Medical Care; (6) Recreation; (7) Transportation; (8) Other goods and services. Each major category contains multiple subcategories, resulting in a hierarchy of categories with increasing specificity. BLS provides a detailed table\footnote{ \url{https://download.bls.gov/pub/time.series/cu/cu.item}.} that records the series code, series name, and display level. We resort to the display level to build the tree structure of the CPI sub-indices and eliminate those parent nodes, as illustrated in Figure \ref{fig_app:CPI_tree}.

\begin{figure}[H]
	\centering
	\caption{Hierarchical Structure of CPI: Eliminating Parent Nodes} \label{fig_app:CPI_tree}
	\begin{center}
		\resizebox*{.99\linewidth}{!}{%
			\begin{forest}
				forked edges,
				for tree={
					grow = east,
					align = center,
					%					draw,
					font=\sffamily,
				}
				[\cancel{All Items}
				[\cancel{Housing}
				[\cancel{Shelter}
				[Rent of primary residence]
				[Owners' equivalent rent of residences]
				]
				[\cancel{Fuels and utilities}
				[Water sewer and trash]
				[\cancel{Household energy}
				[\cancel{Fuel oil and other fuels}
				[Fuel oil]
				[Propane kerosene and firewood]]
				[\cancel{Energy services}
				[Electricity]
				[Utility gas service]]]
				]
				]
				[\cancel{Transportation}
				[...]
				]
				[\cancel{Food and beverages}
				[...]
				]
				]
			\end{forest}
		}
	\end{center}
\end{figure}
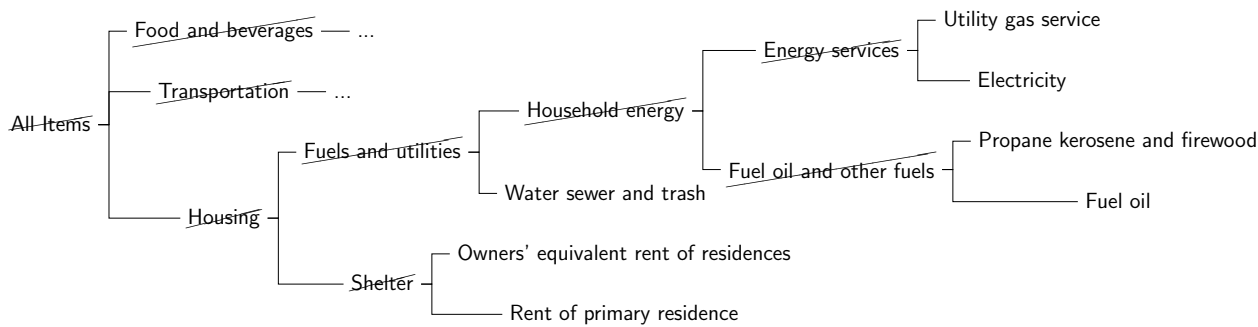

Because all CPI data is available on a monthly basis, we use the unemployment gap as labor market slack measures in the Phillips curve model. We use the seasonally adjusted unemployment rate\footnote{\url{https://fred.stlouisfed.org/series/UNRATE}} from FRED and construct the ‘‘gap’’ measures using the Hamilton filter \citep{hamilton2018}. The Hamilton filter has two parameters: number of lags $p$ and number of lookahead periods $h$. We follow Hamilton's suggestion and set $h=24$ and $p=12$, or an AR(12) process, additionally lagged by 24 lookahead periods for the monthly time series.

%8 main categories and weight is \href{https://www.bls.gov/cpi/tables/relative-importance/2021.htm}{here}.

%Format of series ID is \href{https://www.bls.gov/help/hlpforma.htm#CU}{here}. 

%A useful WSJ article, \href{https://www.wsj.com/articles/will-inflation-fall-any-pullback-depends-on-these-sectors-11646649003}{link}.

\subsection{Income and Democracy}

All data in this section are taken from the replication files of BM.\footnote{ \url{https://www.dropbox.com/s/ssjabvc2hxa5791/Bonhomme_Manresa_codes.zip?dl=0}} The data set contains a balanced panel of 89 countries and 7 periods at a five-year interval over 1970-2000. The main measure of democracy is the Freedom House Political Rights Index. A country receives the highest score if political rights come closest to the ideals suggested by a checklist of questions, beginning with whether there are free and fair elections, whether those who are elected rule, whether there are competitive parties or other political groupings, whether the opposition plays an important role and has actual power, and whether minority groups have reasonable self-government or can participate in the government through informal consensus. See more details in \citet{acemoglu2008}, Section 1.

Table \ref{tab:app_demo_data_summary} contains descriptive statistics for the main variables. The sample period is 1970–2000, and each observation corresponds to five-year intervals. The table shows these statistics for all countries and also for high- and low-income countries, split according to the median of the countries' averaged income. The comparison of high- and low-income countries in the medium and lower panels reveals the pattern that richer countries tend to be more democratic.

\begin{table}[htp]
	\begin{center}
		\caption{Summary Statistics for the Democracy Data Set}
		\label{tab:app_demo_data_summary}
		\begin{tabular}{l  r r r r r }
			\toprule
			& Mean & Median & S.E. & Min & Max  \\
			\midrule
			Full Sample & & & & & \\
			\cmidrule(lr){1-1} 
			Democracy index & 0.5535  &  0.5000  &   0.3727  & 0  &  1.0000 \\
			GDP per capita (in log) & 8.2534   &  8.2444  &  1.0763   &  5.7739  & 10.4450 \\
			
			\midrule
			High-Income & & & & & \\
			\cmidrule(lr){1-1} 
			Democracy index & 0.7852  &  1.0000   &   0.2934  &  0   &  1.0000 \\
			GDP per capita (in log) & 9.1490  &  9.1975  &  0.6079  &  7.4970  &  10.4450 \\
			
			\midrule
			Low-Income & & & & & \\
			\cmidrule(lr){1-1} 
			Democracy index & 0.3247  &   0.1667   &  0.2973   &  0   &  1.0000 \\
			GDP per capita (in log) & 7.3576   &  7.3208   &  0.6051  &  5.7739  &  8.81969 \\
			
			\bottomrule
		\end{tabular}
	\end{center}
	%	\vspace{1mm}
	%	{\footnotesize \textit{Notes: The coding of colonies is according to the CEPII dataset. }\par}
\end{table}

%Many developed countries, such as the United States or United Kingdom, kept their democracy index at the highest level throughout the time. Due to the lack of within-group variation in these countries, we scale-normalize each variable by its pooled standard deviation. This standardization makes sure that the parameter $y_{i t-1}$ can still be interpreted as the autoregressive coefficient, and the magnitude is comparable with the income coefficient.

\section{Additional Empirical Results} \label{appendix:extra_empical}

\subsection{Inflation of the U.S. CPI Sub-Indices} \label{appendix:res_cpi}

Figure \ref{fig_app:app_cpi_ngroup} shows the posterior mean of the number of active groups (red) along with total number of available CPI sub-indices (dark blue). Note that we choose not to show the credible set or the distribution of $K$ because the distributions are concentrated around a single number in most samples.
 
\begin{figure}[h]
	\caption{Number of Active Groups, BGFE-he-cstr}
	\label{fig_app:app_cpi_ngroup}
	\begin{center}
		\includegraphics[scale=0.4]{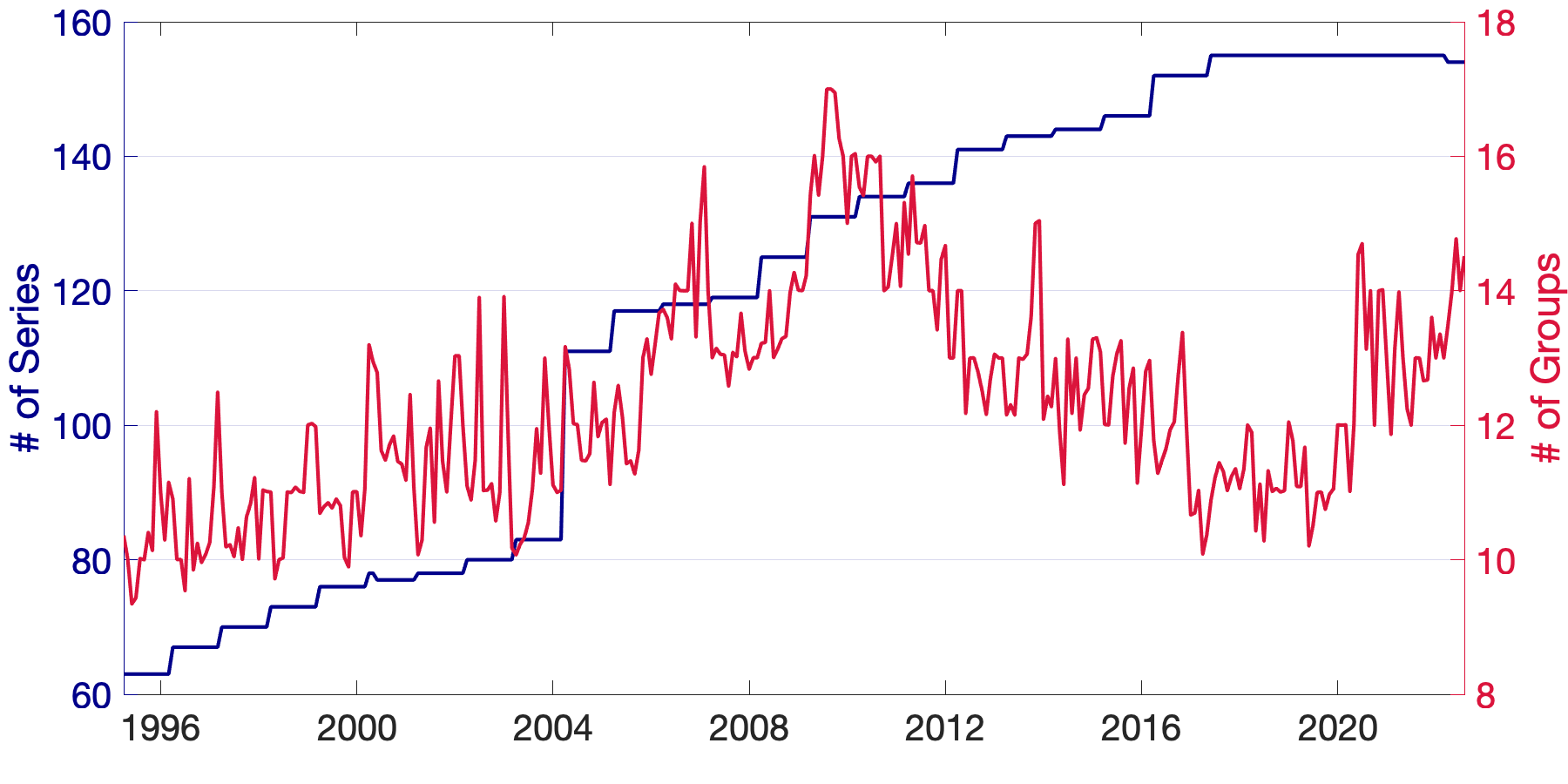}
	\end{center}
\end{figure}

Figure \ref{fig_app:app_cpi_c} shows the selected scaling constant $c$ over time.
\begin{figure}[h]
	\caption{Scaling Constant, BGFE-he-cstr}
		\label{fig_app:app_cpi_c}
	\begin{center}
		\includegraphics[scale=0.4]{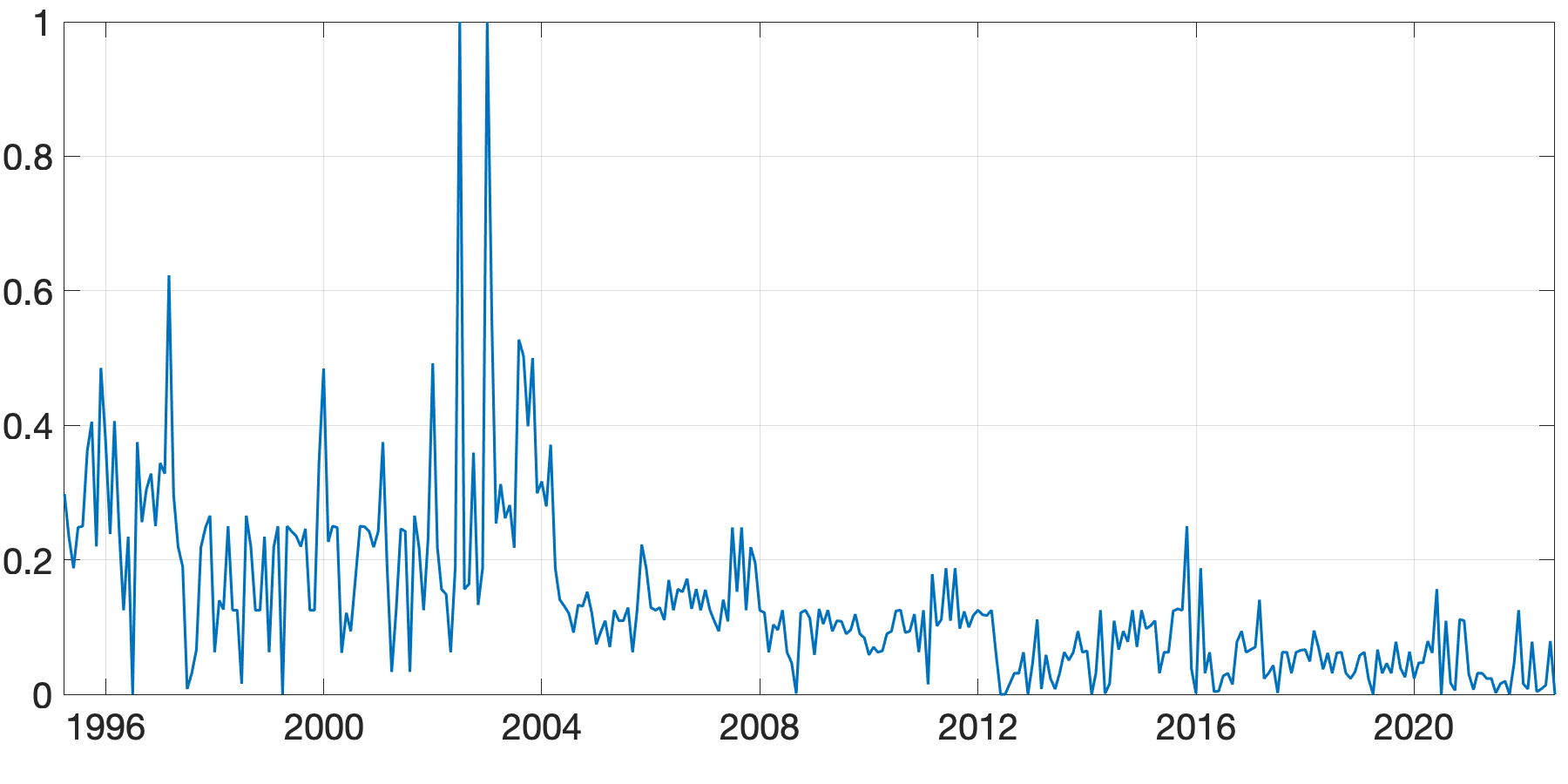}
	\end{center}
\end{figure}

Table \ref{tab_app:cpi_sub_rmse1} and \ref{tab_app:cpi_sub_rmse2} present the ratio of RMSE between each estimator and AR-he for each expenditure category in five periods: 1995-1999, 2000-2004, 2005-2009, 2010-2014, 2015-2019, 2020-2022. Similar results for the difference of LPS are shown in Table \ref{tab_app:cpi_sub_lps1} and \ref{tab_app:cpi_sub_lps2}.

\begin{table}[htp]
	\begin{center}
		\caption{Relative RMSE by Expenditure Category and Period}
		\label{tab_app:cpi_sub_rmse1}
		\begin{tabular}{lSSSSSSS}
			\toprule
			& \text{Full} & \text{95 - 99} & \text{00 - 04} & \text{05 - 09} & \text{10 - 14} & \text{15 - 19} & \text{20 - 22} \\ 
			\midrule 
			& \multicolumn{7}{l}{Average of All Series} \\
			\cmidrule(lr){2-8} 
			BGFE-he-cstr & 0.97 & 0.97 & 0.97 & 0.99 & 0.96 & 0.97 & 0.94 \\ 
			BGFE-he & 0.98 & 0.98 & 0.98 & 1.00 & 0.97 & 0.98 & 0.95 \\ 
			BGFE-ho & 1.00 & 1.04 & 0.98 & 1.00 & 0.98 & 0.99 & 1.00 \\ 
			AR-he-PC & 1.01 & 1.02 & 1.03 & 0.99 & 1.00 & 1.01 & 1.04 \\ 
			Pooled & 1.00 & 1.04 & 0.98 & 1.03 & 1.01 & 0.99 & 0.94 \\ 
			\midrule 
			& \multicolumn{7}{l}{Category 1: Apparel} \\
			\cmidrule(lr){2-8} 
			BGFE-he-cstr & 0.97 & 0.98 & 0.99 & 0.97 & 0.96 & 0.98 & 0.91 \\ 
			BGFE-he & 0.97 & 0.98 & 0.99 & 0.98 & 0.97 & 0.98 & 0.92 \\ 
			BGFE-ho & 0.98 & 0.99 & 1.00 & 0.98 & 0.98 & 0.99 & 0.93 \\ 
			AR-he-PC & 1.02 & 1.02 & 1.00 & 1.03 & 1.01 & 1.01 & 1.08 \\ 
			Pooled  & 0.99 & 1.00 & 0.97 & 0.99 & 1.03 & 1.03 & 0.90 \\ 
			\midrule 
			& \multicolumn{7}{l}{Category 2: Education and Communication} \\
			\cmidrule(lr){2-8} 
			BGFE-he-cstr & 0.96 & 0.93 & 0.99 & 0.95 & 0.94 & 0.97 & 0.93 \\ 
			BGFE-he & 0.95 & 0.95 & 1.00 & 0.94 & 0.95 & 0.97 & 0.92 \\ 
			BGFE-ho & 1.08 & 2.18 & 1.02 & 1.00 & 1.00 & 0.97 & 0.91 \\ 
			AR-he-PC & 1.01 & 1.03 & 0.99 & 1.00 & 0.95 & 1.01 & 1.08 \\ 
			Pooled  & 1.16 & 2.24 & 1.21 & 1.05 & 1.12 & 1.03 & 0.97 \\ 
			\midrule 
			& \multicolumn{7}{l}{Category 3: Food and Beverages} \\
			\cmidrule(lr){2-8} 
			BGFE-he-cstr & 0.98 & 0.99 & 0.99 & 1.00 & 0.97 & 0.98 & 0.95 \\ 
			BGFE-he & 0.98 & 0.99 & 1.00 & 1.00 & 0.97 & 0.98 & 0.95 \\ 
			BGFE-ho & 0.99 & 1.03 & 0.99 & 0.97 & 0.97 & 0.98 & 0.95 \\ 
			AR-he-PC & 1.00 & 1.01 & 1.02 & 0.99 & 1.00 & 1.01 & 0.98 \\ 
			Pooled  & 1.01 & 1.03 & 1.01 & 1.05 & 1.01 & 0.99 & 0.93 \\ 
			\midrule 
			& \multicolumn{7}{l}{Category 4: Housing} \\
			\cmidrule(lr){2-8} 
			BGFE-he-cstr & 0.97 & 0.96 & 0.95 & 1.03 & 0.94 & 0.96 & 0.96 \\ 
			BGFE-he & 0.96 & 0.97 & 0.93 & 1.03 & 0.94 & 0.96 & 0.96 \\ 
			BGFE-ho & 0.98 & 1.14 & 0.93 & 1.02 & 0.96 & 0.97 & 0.95 \\ 
			AR-he-PC & 1.03 & 1.03 & 1.04 & 1.04 & 0.99 & 1.01 & 1.07 \\ 
			Pooled & 0.99 & 1.14 & 0.92 & 1.06 & 0.97 & 0.97 & 0.94 \\ 
			\bottomrule 
		\end{tabular}
	\end{center}
	{\footnotesize \textit{Notes: Benchmark model = AR-he. }\par}
\end{table}

\begin{table}[htp]
	\begin{center}
		\caption{Relative RMSE by Expenditure Category and Period, \textit{cont.}}
		\label{tab_app:cpi_sub_rmse2}
		\begin{tabular}{lSSSSSSS}
			\toprule
			& \text{Full} & \text{95 - 99} & \text{00 - 04} & \text{05 - 09} & \text{10 - 14} & \text{15 - 19} & \text{20 - 22} \\ 
			\midrule 
			& \multicolumn{7}{l}{Category 5: Medical Care} \\
			\cmidrule(lr){2-8} 
			BGFE-he-cstr & 0.95 & 0.93 & 0.98 & 0.95 & 1.01 & 0.96 & 0.87 \\ 
			BGFE-he & 0.95 & 0.93 & 0.98 & 0.95 & 1.01 & 0.96 & 0.85 \\ 
			BGFE-ho & 1.06 & 1.32 & 1.15 & 1.07 & 1.09 & 0.99 & 0.86 \\ 
			AR-he-PC & 1.03 & 1.04 & 0.99 & 1.02 & 1.03 & 1.02 & 1.07 \\ 
			Pooled  & 1.15 & 1.30 & 1.38 & 1.23 & 1.15 & 1.04 & 0.92 \\ 
			\midrule 
			& \multicolumn{7}{l}{Category 6: Recreation} \\
			\cmidrule(lr){2-8} 
			BGFE-he-cstr & 0.97 & 0.97 & 0.99 & 0.99 & 0.98 & 0.96 & 0.94 \\ 
			BGFE-he & 0.97 & 0.97 & 1.00 & 0.99 & 0.99 & 0.96 & 0.92 \\ 
			BGFE-ho & 1.02 & 1.18 & 1.09 & 1.05 & 1.00 & 0.97 & 0.94 \\ 
			AR-he-PC & 1.03 & 1.03 & 1.03 & 1.02 & 1.01 & 1.01 & 1.08 \\ 
			Pooled  & 1.12 & 1.17 & 1.24 & 1.31 & 1.17 & 0.99 & 0.96 \\ 
			\midrule 
			& \multicolumn{7}{l}{Category 7: Transportation} \\
			\cmidrule(lr){2-8} 
			BGFE-he-cstr & 0.99 & 0.97 & 1.01 & 0.99 & 0.99 & 1.00 & 0.96 \\ 
			BGFE-he & 0.99 & 0.97 & 1.01 & 0.98 & 0.99 & 1.00 & 0.96 \\ 
			BGFE-ho & 1.03 & 1.07 & 1.02 & 1.00 & 1.01 & 1.02 & 1.07 \\ 
			AR-he-PC & 1.02 & 1.04 & 1.05 & 0.97 & 1.01 & 0.99 & 1.07 \\ 
			Pooled  & 0.99 & 1.07 & 1.02 & 0.99 & 0.97 & 0.96 & 0.97 \\ 
			\midrule 
			& \multicolumn{7}{l}{Category 8: Other Goods and Services} \\
			\cmidrule(lr){2-8} 
			BGFE-he-cstr & 0.97 & 0.98 & 0.99 & 1.02 & 0.89 & 0.96 & 0.93 \\ 
			BGFE-he & 0.95 & 0.94 & 0.95 & 1.05 & 0.89 & 0.97 & 0.91 \\ 
			BGFE-ho & 0.96 & 0.99 & 0.97 & 0.99 & 0.89 & 0.97 & 0.91 \\ 
			AR-he-PC & 1.02 & 1.01 & 1.02 & 1.02 & 0.97 & 1.00 & 1.11 \\ 
			Pooled  & 0.98 & 0.99 & 1.04 & 0.99 & 0.89 & 0.99 & 0.94 \\ 
			\bottomrule 
		\end{tabular}
	\end{center}
	{\footnotesize \textit{Notes: Benchmark model = AR-he. }\par}
\end{table}

\begin{table}[htp]
	\begin{center}
		\caption{Relative LPS, by Expenditure Category and Period}
		\label{tab_app:cpi_sub_lps1}
		\begin{tabular}{lSSSSSSS}
			\toprule
			& \text{Full} & \text{95 - 99} & \text{00 - 04} & \text{05 - 09} & \text{10 - 14} & \text{15 - 19} & \text{20 - 22} \\ 
			\midrule 
			& \multicolumn{7}{l}{Average of All Series} \\
			\cmidrule(lr){2-8} 
			BGFE-he-cstr & -0.08 & -0.08 & -0.07 & -0.08 & -0.09 & -0.07 & -0.08 \\ 
			BGFE-he & -0.06 & -0.06 & -0.05 & -0.06 & -0.07 & -0.05 & -0.08 \\ 
			BGFE-ho & 0.64 & 0.77 & 0.72 & 0.77 & 0.48 & 0.53 & 0.45 \\ 
			AR-he-PC & 0.01 & 0.02 & 0.02 & 0.01 & 0.01 & 0.01 & 0.01 \\ 
			Pooled OLS & 0.66 & 0.84 & 0.73 & 0.79 & 0.53 & 0.52 & 0.44 \\ 
			\midrule 
			& \multicolumn{7}{l}{Category 1: Apparel} \\
			\cmidrule(lr){2-8} 
			BGFE-he-cstr & -0.05 & -0.03 & -0.02 & -0.05 & -0.05 & -0.02 & -0.14 \\ 
			BGFE-he & -0.04 & -0.02 & -0.02 & -0.04 & -0.05 & -0.01 & -0.14 \\ 
			BGFE-ho & 0.10 & 0.25 & 0.07 & 0.07 & 0.11 & 0.06 & 0.03 \\ 
			AR-he-PC & 0.03 & 0.03 & 0.01 & 0.03 & 0.02 & 0.01 & 0.09 \\ 
			Pooled & 0.12 & 0.25 & 0.07 & 0.10 & 0.18 & 0.11 & -0.05 \\ 
			\midrule 
			& \multicolumn{7}{l}{Category 2: Education and Communication} \\
			\cmidrule(lr){2-8} 
			BGFE-he-cstr & -0.12 & -0.06 & -0.07 & -0.19 & -0.17 & -0.10 & -0.11 \\ 
			BGFE-he & -0.13 & -0.08 & -0.11 & -0.20 & -0.16 & -0.09 & -0.12 \\ 
			BGFE-ho & 0.92 & 1.44 & 1.04 & 0.87 & 0.91 & 0.56 & 0.56 \\ 
			AR-he-PC & 0.01 & 0.00 & 0.01 & 0.02 & 0.00 & 0.01 & 0.07 \\ 
			Pooled & 0.98 & 1.45 & 1.09 & 0.95 & 0.99 & 0.62 & 0.62 \\ 
			\midrule 
			& \multicolumn{7}{l}{Category 3: Food and Beverages} \\
			\cmidrule(lr){2-8} 
			BGFE-he-cstr & -0.04 & -0.03 & -0.02 & -0.06 & -0.05 & -0.05 & -0.07 \\ 
			BGFE-he & -0.04 & -0.04 & -0.02 & -0.05 & -0.05 & -0.05 & -0.08 \\ 
			BGFE-ho & 0.37 & 0.66 & 0.36 & 0.34 & 0.27 & 0.39 & 0.03 \\ 
			AR-he-PC & 0.00 & 0.01 & 0.03 & -0.01 & 0.01 & 0.00 & -0.08 \\ 
			Pooled & 0.44 & 0.84 & 0.41 & 0.44 & 0.33 & 0.41 & 0.03 \\ 
			\midrule 
			& \multicolumn{7}{l}{Category 4: Housing} \\
			\cmidrule(lr){2-8} 
			BGFE-he-cstr & -0.12 & -0.13 & -0.14 & -0.10 & -0.11 & -0.12 & -0.07 \\ 
			BGFE-he & -0.11 & -0.12 & -0.13 & -0.10 & -0.11 & -0.12 & -0.07 \\ 
			BGFE-ho & 0.81 & 0.96 & 1.13 & 0.83 & 0.60 & 0.60 & 0.70 \\ 
			AR-he-PC & 0.02 & 0.00 & 0.03 & 0.04 & 0.01 & 0.01 & 0.06 \\ 
			Pooled & 0.82 & 0.97 & 1.13 & 0.88 & 0.63 & 0.60 & 0.68 \\ 
			\bottomrule 
		\end{tabular}
	\end{center}
	{\footnotesize \textit{Notes: Benchmark model = AR-he. }\par}
\end{table}

\begin{table}[htp]
	\begin{center}
		\caption{Relative LPS by Expenditure Category and Period, \textit{cont.}}
		\label{tab_app:cpi_sub_lps2}
		\begin{tabular}{lSSSSSSS}
			\toprule
			& \text{Full} & \text{95 - 99} & \text{00 - 04} & \text{05 - 09} & \text{10 - 14} & \text{15 - 19} & \text{20 - 22} \\ 
			\midrule 
			& \multicolumn{7}{l}{Category 5: Medical Care} \\
			\cmidrule(lr){2-8} 
			BGFE-he-cstr & -0.15 & -0.29 & -0.14 & -0.14 & -0.16 & -0.05 & -0.09 \\ 
			BGFE-he & -0.15 & -0.29 & -0.13 & -0.14 & -0.16 & -0.05 & -0.09 \\ 
			BGFE-ho & 1.16 & 1.64 & 1.34 & 1.11 & 1.14 & 0.78 & 0.78 \\ 
			AR-he-PC & 0.02 & 0.02 & 0.01 & 0.03 & 0.02 & 0.01 & 0.04 \\ 
			Pooled & 1.21 & 1.64 & 1.39 & 1.20 & 1.20 & 0.84 & 0.84 \\ 
			\midrule 
			& \multicolumn{7}{l}{Category 6: Recreation} \\
			\cmidrule(lr){2-8} 
			BGFE-he-cstr & -0.04 & -0.04 & -0.05 & -0.02 & -0.02 & -0.05 & -0.05 \\ 
			BGFE-he & -0.03 & -0.03 & -0.05 & -0.01 & -0.01 & -0.04 & -0.07 \\ 
			BGFE-ho & 0.55 & 0.91 & 0.79 & 0.58 & 0.49 & 0.25 & 0.12 \\ 
			AR-he-PC & 0.02 & 0.03 & 0.03 & 0.02 & 0.01 & 0.01 & 0.06 \\ 
			Pooled & 0.61 & 0.91 & 0.84 & 0.70 & 0.58 & 0.29 & 0.13 \\ 
			\midrule 
			& \multicolumn{7}{l}{Category 7: Transportation} \\
			\cmidrule(lr){2-8} 
			BGFE-he-cstr & -0.04 & -0.10 & -0.03 & -0.05 & -0.07 & 0.00 & 0.00 \\ 
			BGFE-he & -0.03 & -0.10 & 0.00 & -0.04 & -0.06 & 0.00 & 0.00 \\ 
			BGFE-ho & 1.59 & 0.98 & 1.56 & 2.83 & 0.79 & 1.51 & 2.11 \\ 
			AR-he-PC & 0.01 & 0.03 & 0.03 & -0.02 & -0.01 & 0.00 & 0.07 \\ 
			Pooled & 1.42 & 0.98 & 1.41 & 2.38 & 0.75 & 1.24 & 1.98 \\ 
			\midrule 
			& \multicolumn{7}{l}{Category 8: Other Goods and Services} \\
			\cmidrule(lr){2-8} 
			BGFE-he-cstr & -0.03 & 0.22 & -0.09 & -0.07 & -0.14 & -0.06 & -0.05 \\ 
			BGFE-he & -0.03 & 0.25 & -0.09 & -0.06 & -0.13 & -0.06 & -0.06 \\ 
			BGFE-ho & 0.64 & 0.59 & 0.66 & 0.69 & 0.83 & 0.53 & 0.49 \\ 
			AR-he-PC & 0.01 & 0.02 & 0.00 & 0.03 & 0.00 & 0.01 & 0.04 \\ 
			Pooled & 0.69 & 0.61 & 0.71 & 0.77 & 0.89 & 0.57 & 0.53 \\ 
			\bottomrule 
		\end{tabular}
	\end{center}
	{\footnotesize \textit{Notes: Benchmark model = AR-he. }\par}
\end{table}

\subsubsection{Network Visualization of Posterior Similarity Matrix}
In our empirical work, we estimate the posterior similarity matrix (PSM) among 154 series for the last sample (August 2022). Presenting and examining $154 \times 154 = 23,716$ estimates pairwise posterior probabilities in PSM would be thoroughly uninformative. Hence we characterize the estimated PSM graphically as network graphs, which contain node names, node color, and link size (one per link since the network is undirected).
\begin{itemize} \itemsep0em 
%	\item \textit{Node name} shows the full name of each units (the item names of CPI sub-indices and country names).
%	
%	\item \textit{Node color} indicates the group structure used in the prior (expenditure category and geographic information).
	
	\item \textit{Node name} shows the item names of CPI sub-indices.
	
	\item \textit{Node color} indicates the group structure used in the prior, e.g, expenditure category.
	
	\item \textit{Link size} represents the pairwise probabilities in the PSM.
\end{itemize}

We use the \textit{qgraph} package in R for network visualization. Node locations are determined by a modified version of a force-embedded algorithm proposed by \citet{fruchterman1991}. Figure \ref{fig:app_cpi_nework_plot} show the full-sample CPI sub-index network graphs.

\begin{landscape}
	\begin{figure}[h]
		\centering
		\caption{Individual CPI Sub-Index Network Graph based on Posterior Similarity Matrix, August 2022}
		\label{fig:app_cpi_nework_plot}
		\includepdf[landscape=true, scale=0.83]{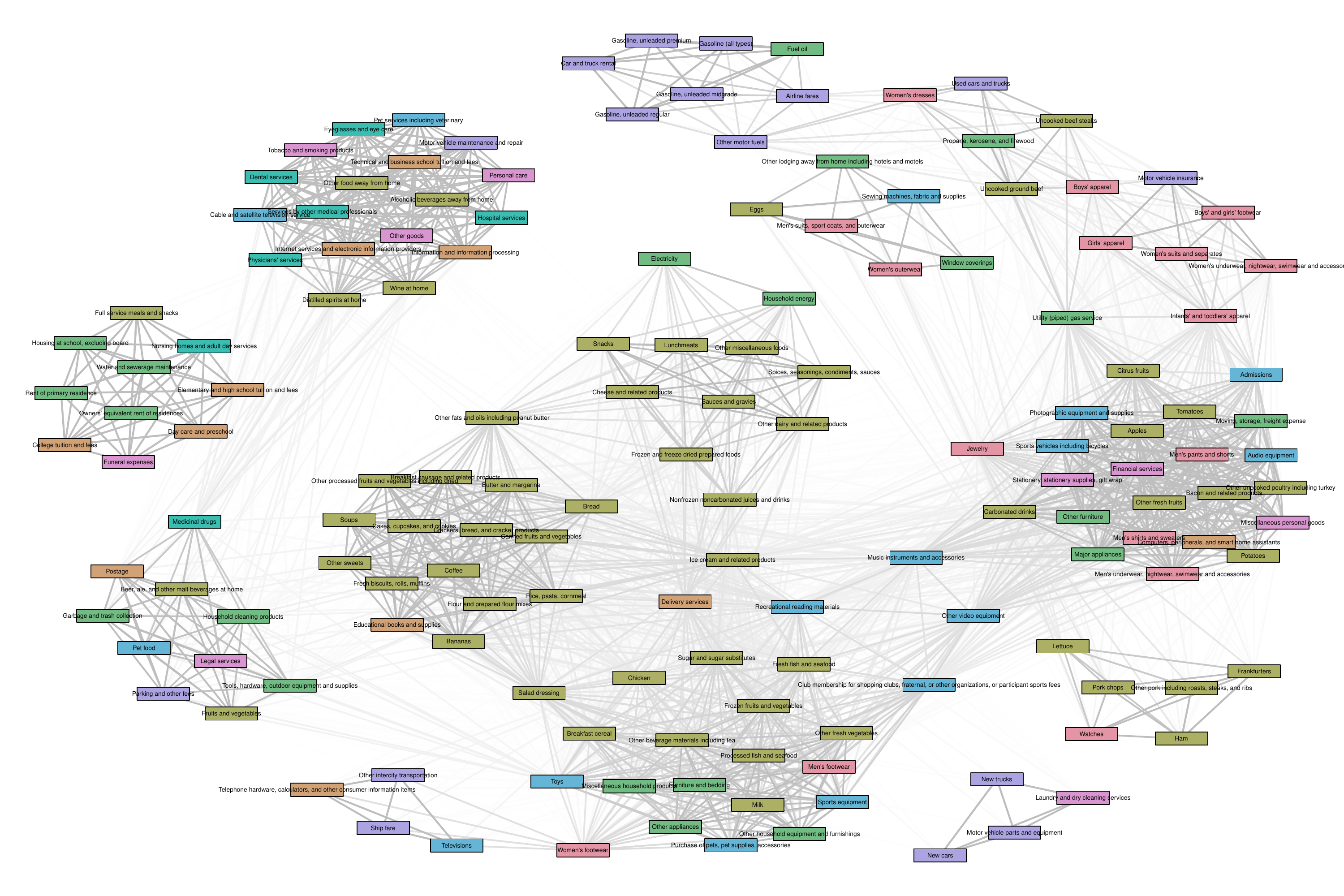}
	\end{figure}
\end{landscape}

\subsubsection{Heteroskedasticity vs. Homoskedasticity} \label{subsec:hetero_vs_homo}

We conclude by examining how grouped heteroskedasticity impacts forecast accuracy and why this is important. For illustrative purposes, we focus on the two BGFE estimators, BGFE-he and BGFE-ho, that do not involve pairwise constraints.

%{\color{blue} *** The heterosk versus homosk comparison seems to be more of a sideshow. I think once you have established that the heterosk model works better, then run with it. We had a comparison of heterosk versus homosk in the panel Tobit paper, but referees were not excited about it and told us to cut it down. Same here, I think. It's a bit of a distraction from the main message, which is that the prior information sharpens inference and leads to better forecasts. (done)}

A distinguishing characteristic between BGFE-he and BGFE-ho is the estimated number of groups. Figure \ref{fig:app_cpi_hetsk_vs_homsk_K} depicts the number of groups over samples. BGFE-he estimator forms 9 groups for the beginning of the sample, and increase it during the Great Recession and the Pandemic. However, the estimated number of groups for BGFE-ho is rather low in the 1990s, and progressively increases to around seven by the end of the sample. It is noticeable that when heteroskedasticity is allowed, there are more groups than when it is not. This is intuitive. Two groups can be expected to have comparable estimates of grouped fixed-effects and slope coefficients, but vastly different error variances. As a result, allowing for heteroskedasticity would result in a more refined group structure and increase the overall number of groups. 

\begin{figure}[h]
	\centering
	\caption{Number of Groups}
	\label{fig:app_cpi_hetsk_vs_homsk_K}
	\includegraphics[scale= 0.4]{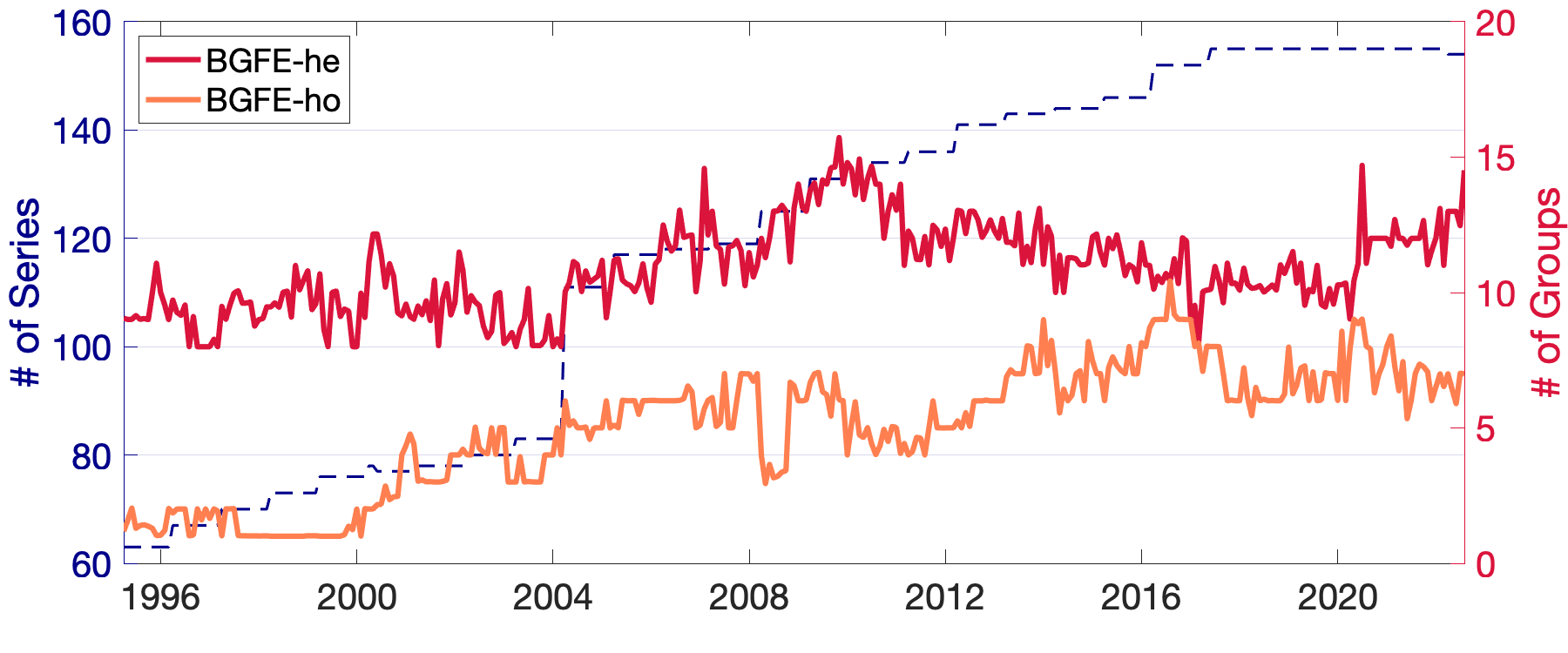}
\end{figure}

As seen in Figure \ref{fig:app_cpi_RMSE} and \ref{fig:app_cpi_LPS}, the grouped heteroskedasticity doesn't improve the point forecast but the density forecast. Figure \ref{fig:app_cpi_hetsk_vs_homsk_fcst} depicts a clear perspective of it and demonstrates the performance of point and density forecasts through time. In panel (a), we observe that the ratio of RMSE is generally around one over the whole sample, meaning that heteroskedasticity cannot improve the point forecast in general.  In panel (b), the difference in LPS is consistently negative. This demonstrates that the improved density prediction performance is not a fluke and that enabling heteroskedasticity improves the density forecast regardless of sample. This is actually in line with the simulation results presented in Table \ref{tab:MC_general_DGP}.

\begin{figure}[h]
	\caption{Results of BGFE-he vs. BGFE-ho}
	\label{fig:app_cpi_hetsk_vs_homsk_fcst}
	\centering
	\begin{subfigure}[b]{\textwidth}
		\centering
		\caption{$\text{RMSE}_{he}$ / $\text{RMSE}_{ho}$}
		\includegraphics[scale= 0.4]{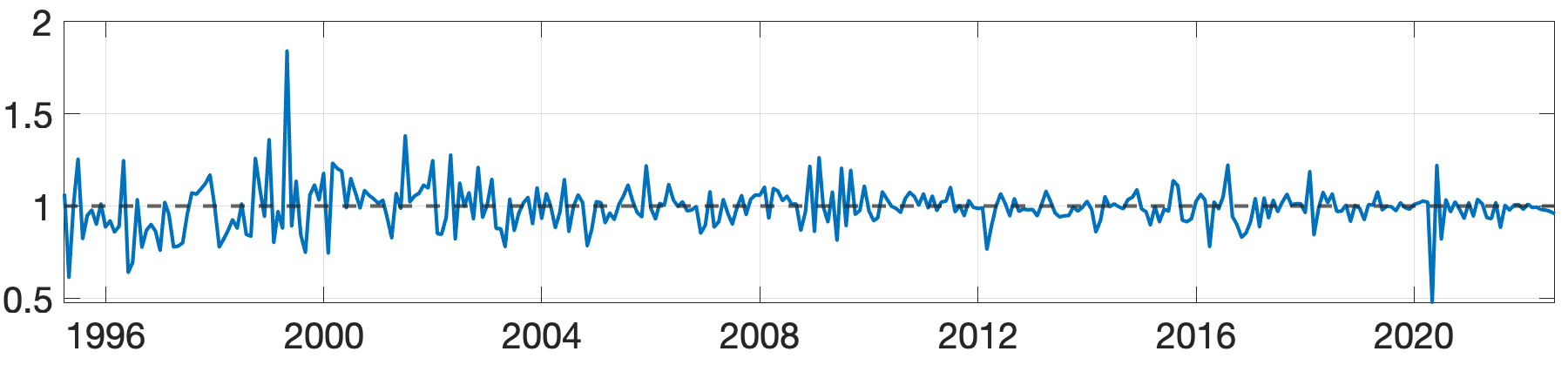}
	\end{subfigure}
	~
	\begin{subfigure}[b]{\textwidth}
		\centering
		\caption{$\text{LPS}_{he}$ - $\text{LPS}_{ho}$}
		\includegraphics[scale= 0.4]{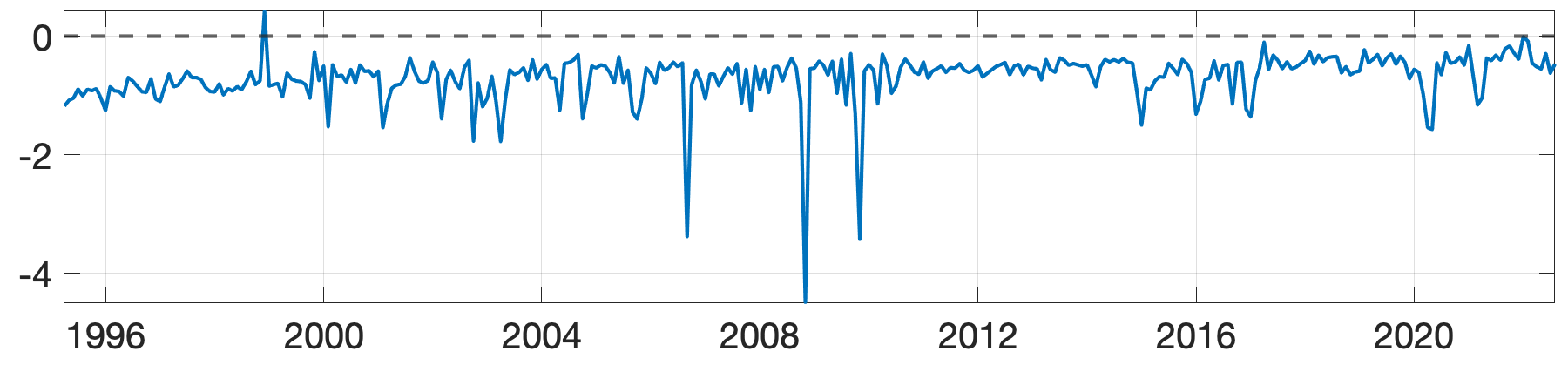}
	\end{subfigure}
\end{figure}

Density forecasts vary substantially across categories. We pick three typical sub-categories and plot their posterior predictive densities of August 2022 in Figure \ref{fig:app_cpi_hetsk_vs_homsk_example}. The vertical dashed black lines represent the actual values. Several insights emerge while comparing these three subcategories. First, BGFE-he and BGFE-ho provide comparable posterior means for all three subcategories - the posterior predictive densities concentrate around the similar price levels. This explains why BGFE-he and BGFE-ho have comparable results in point forecasting. Second, BGFE-he reveals different predictive variance. As the rolling sample size is set to 4 years, all observations throughout the Pandemic are included, and it is anticipated that the price levels of elementary and high school (basic education) tuition and fees, major appliances, and airline prices would respond differently to the shock. Intuitively, education tuition and fees should not fluctuate as much as other prices, while airline fares have been strongly influenced by the fluctuating oil prices since the beginning of the Pandemic.  Consequently, accounting for heteroskedasticity successfully captures this characteristic, such that college tuition and fees have a smaller predictive variance than that of BGFE-ho, but airline fares have a greater predictive variance.

Combing these two observations together reveals why BGFE-he has a better density forecast: the capacity to optimally cluster units according to the error variance and accommodate heteroskedasticity. For elementary and high tuition and fees, providing that both BGFE-he and  BGFE-ho yields accurate posterior mean, BGFE-he yields much lower predictive variance, decreasing the LPS dramatically. Both BGFE-he and BGFE-ho underestimate the inflation rate for airline fares, but BGFE-he subtly creates a greater predicted variance to account for the wild probable shift in this sub-category and hence reduces the LPS significantly. Major appliances is an example to show that BGFE-he and BGFE-ho generate comparable density forecasts for some sub-categories.

\begin{figure}[htp]
	\caption{Predictive Posteriors for Selected Series: BGFE-he vs. BGFE-ho}
	\label{fig:app_cpi_hetsk_vs_homsk_example}
	\centering
	\begin{subfigure}[b]{0.32\textwidth}
		\centering
		\caption{Basic Edu. Tuition and Fees}
		\includegraphics[scale= 0.35]{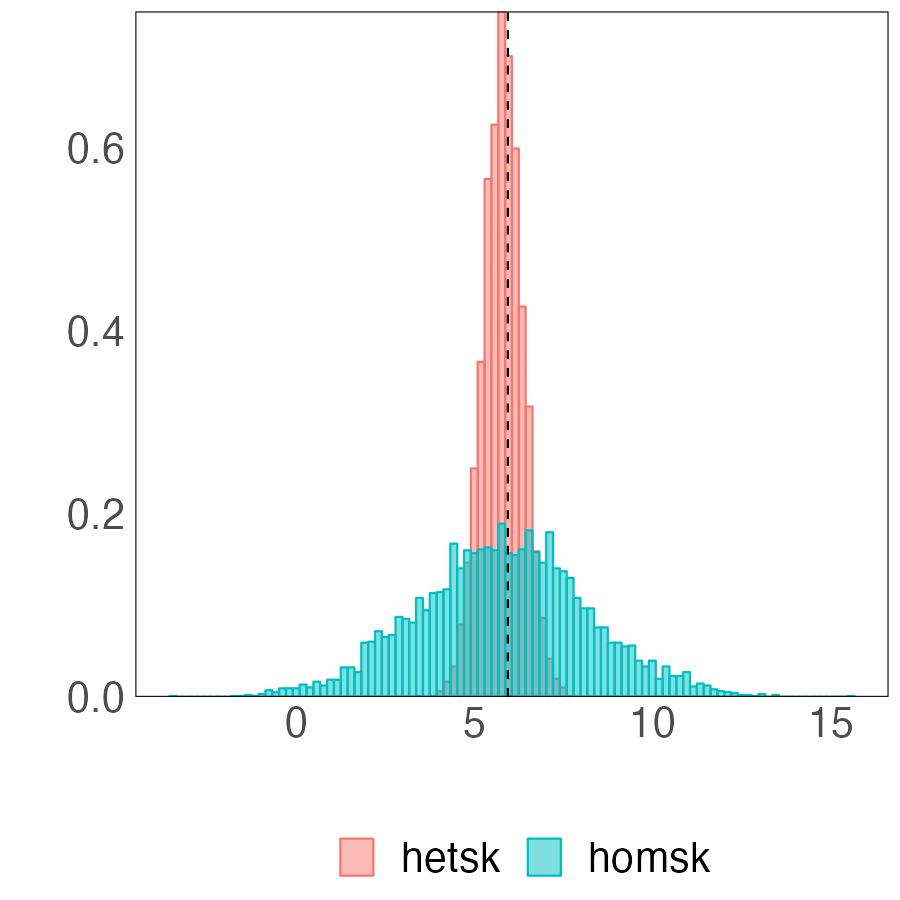}
	\end{subfigure}
	~
	\begin{subfigure}[b]{0.32\textwidth}
		\centering
		\caption{Major Appliances}
		\includegraphics[scale= 0.35]{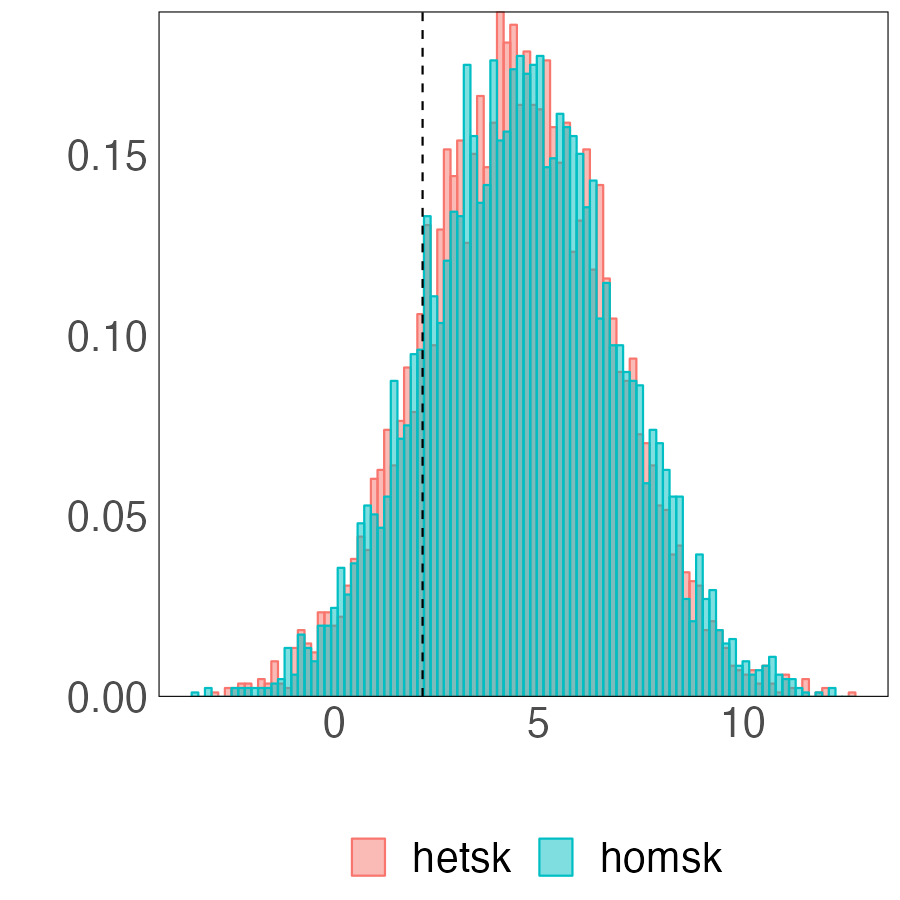}
	\end{subfigure}
	~
	\begin{subfigure}[b]{0.32\textwidth}
		\centering
		\caption{Airline Fares}
		\includegraphics[scale= 0.35]{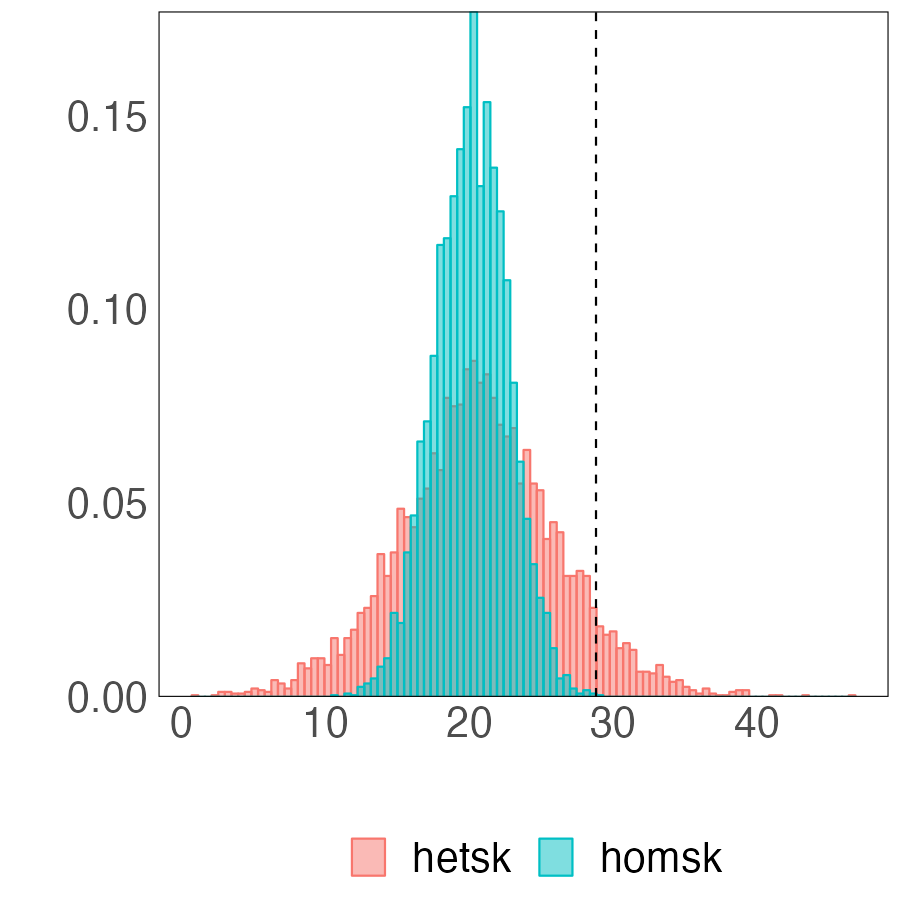}
	\end{subfigure}
\end{figure}

\newpage
\subsection{Income and Democracy} \label{appendix:res_demo}

%Background:
%\begin{itemize}
%	\item Acemoglu et al. (2008): 
%	\begin{itemize}
%		\item Baseline model:
%		\begin{align*}
%			\text {democracy}_{i t} = \theta_{1} \text {democracy}_{i t-1}+\theta_{2} \log GDP p c_{i t-1} + \eta_{i} + \delta_t + v_{i t}
%		\end{align*}
%		
%		\item Using panel data, they documented that the positive effect of income on democracy disappears when including country fixed-effects in the regression.
%	\end{itemize}
%	
%	\item Bonhomme and Manresa (2015):
%	\begin{itemize}
%		\item Model:
%		\begin{align*}
%			\text {democracy}_{i t} = \theta_{1} \text {democracy}_{i t-1} + \theta_{2} \log GDP p c_{i t-1} + \alpha_{g_i,t} + v_{i t}
%		\end{align*}
%		
%		\item Though statistically significant, the cumulative income effect is quantitatively small.
%		\item Association between income and democracy is positive.
%		\item The Association disappears in a specification that combines both time-varying grouped effects and time-invariant country specific effects, which nests the one in Acemoglu et al. (2008).
%		\item The conclusion remains similar using heterogeneous coefficients (see Table S16 and Figure S7 in Appendix)
%	\end{itemize}
%\end{itemize}

\subsubsection{Results of Specification 1}
%\noindent{\bf Specification 1:} 

We start our analysis with the specification 1 in (\ref{eq:emp_app_2_sp1}). Table \ref{tab:emp_app_dem_sp1_ngroup} demonstrates the posterior probability of the number of groups utilizing various estimators. Notably, the BGFE-ho in this specification is identical to the primary model in BM, allowing us to evaluate the optimal number of groups. BGFE-ho creates 8 groups in all posterior draws, which is consistent with BM's conclusion of using BIC: the upper bound of the true number of groups is 10. Despite the fact that BM is unable to validate the ideal number of groups for their study, our BGFE-ho estimator provides an accurate estimate of it. Intriguingly, accounting for heteroskedasticity drastically reduces the number of groups, with BGFE-he identifying three groups in 92.9\% of posterior draws. Adding pairwise constraints based on geographic information increase the number groups. Two-third of posterior draws from BGFE-he-cstr generate 5 group.

\begin{table}[ht]
	\begin{center}
		\caption{Probability for the number of groups}
		\label{tab:emp_app_dem_sp1_ngroup}
		\begin{tabular}{l | c c c}
			\toprule
			& BGFE-he-cstr & BGFE-he & BGFE-ho \\
			\midrule
			$Pr (K < 3)$ & 0.000 & 0.000 & 0.000 \\
			$Pr (K = 3)$ & 0.000 & {\bf 0.929} & 0.000 \\
			$Pr (K = 4)$ & 0.344 & 0.071 & 0.000 \\
			$Pr (K = 5)$ & {\bf 0.656} & 0.000 & 0.000 \\
			$Pr (K > 5)$ & 0.000 & 0.000 & {\bf 1.000} \\
			\bottomrule 
		\end{tabular}
		
	\end{center}
\end{table}

The marginal data density (MDD) of each estimators in Table \ref{tab:emp_app_dem_sp1_mdd} provides some insight on different models. Even while BGFE-ho produces eight groups and has a tendency to overfit, its MDD is the lowest of the three estimators. BGFE-he with fewer groups is superior to BGFE-ho with higher MDD. BGFE-he-cstr has the highest MDD because the pairwise constraints give direction on grouping and identify the ideal group structure, which BGFE-he cannot uncover without our prior knowledge.

\begin{table}[h]
	\begin{center}
		\caption{Marginal Data Density}
		\label{tab:emp_app_dem_sp1_mdd}
		\begin{tabular}{ x{2.5cm} x{2.5cm} x{2.5cm}}
			\toprule
			BGFE-he-cstr & BGFE-he & BGFE-ho \\
			\midrule
			425.690 & 381.218 & 368.918 \\
			\bottomrule 
		\end{tabular}
		
	\end{center}
\end{table}

We focus on the BGFE-he-cstr estimator and use the approach outlined in Section \ref{subsec:det_partition} to identify the unique group partitioning $\widehat{G}$. The left panel of Figure \ref{fig:app_dem_result_sp1} presents the world map colored by $\widehat{G}$, while the right panel present the group-specific averages of democracy index over time. The estimated group structure $\widehat{G}$ features four distinct groups, which is coincident to the choice of BM. As described in BM, we refer to groups 1-4 as the ``high-democracy", ``low-democracy", ``early transition",  and ``late transition" group, respectively. With the exception of the ``early transition" group that is slight at odd with the counterpart in BM, the group-specific averages of the democracy index for all other groups are relatively similar to those in BM.  Notice that BM manually sets the number of groups to four, but we discover that four is the optimal number. Consequently, by employing model specification 1 and accounting for heteroskedasticity, we find the support for BM's main results.

\begin{figure}[h]
	\caption{Point Estimation of Group Partitioning and Average Democracy}
	\label{fig:app_dem_result_sp1}
	\centering
	\begin{subfigure}[b]{0.45\textwidth}
		\centering
		%		\caption{Weaker PL $\left(\psi^{PL}_{ij} = 0.55\right)$}
		\includegraphics[scale= 0.4]{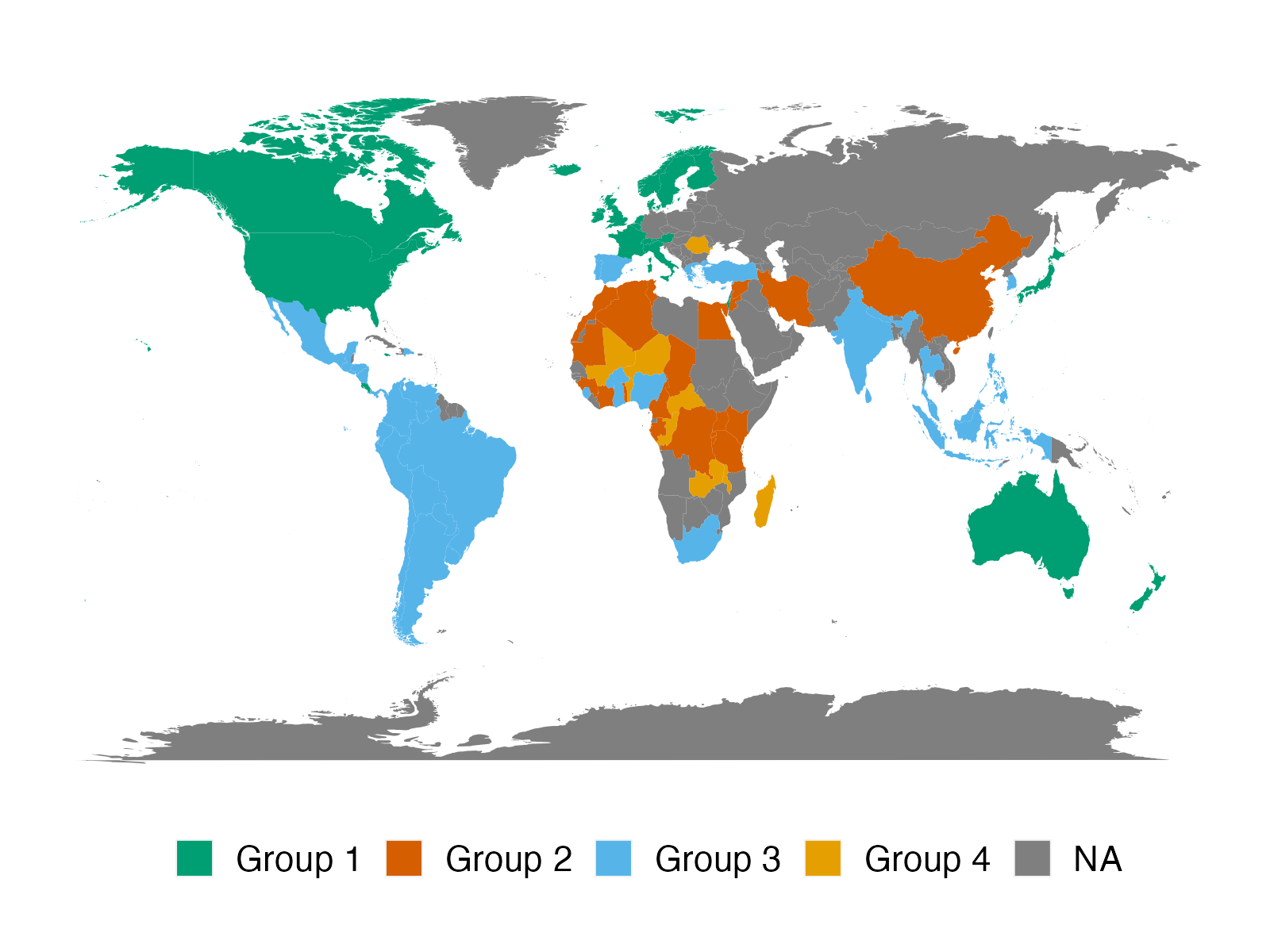}
	\end{subfigure}
	\begin{subfigure}[b]{0.45\textwidth}
		\centering
		%		\caption{Stronger PL $\left(\psi^{PL}_{ij} = 0.95\right)$}
		\includegraphics[scale= 0.35]{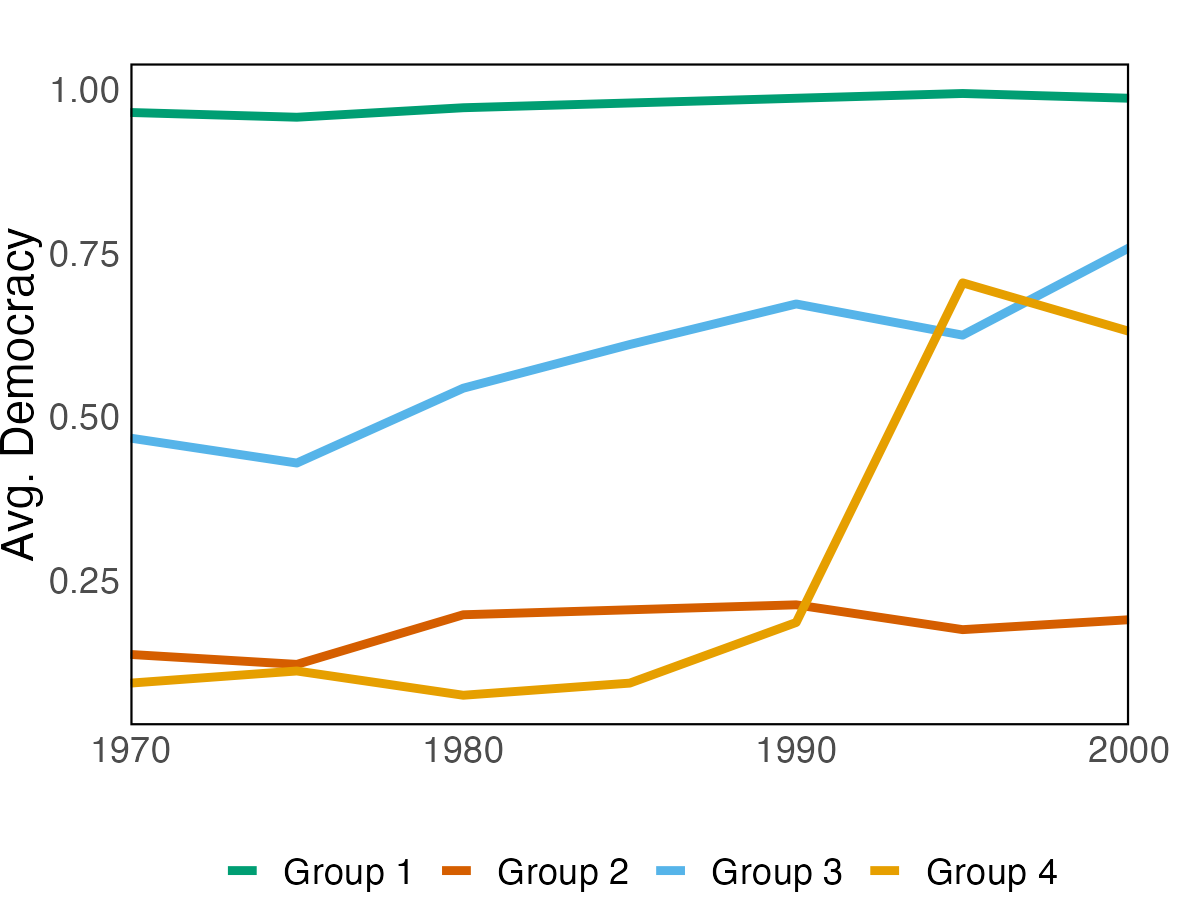}
	\end{subfigure}
\end{figure}

Table \ref{fig:app_dem_est_sp1} shows the posterior mean and 90\% credible set for each coefficient, with $G$ fixing at the point estimate $\widehat{G}$. Comparing to the pooled OLS, $\hat{\rho}$ and $\hat{\beta}$ once we incorporate the group-specific time patterns. The results are essentially consistent with the conclusion in BM: there is modest persistence and a positive effect of income on democracy, but the cumulative income effect $\beta / (1-\rho) = 0.08$ is quantitatively small.

\begin{table}[h]
	\begin{center}
		\caption{Coefficient estimates across groups}
		\label{fig:app_dem_est_sp1}
		\resizebox{0.6\textwidth}{!}{%
			
			\begin{tabular}{l  p{1.1cm}  p{3cm}  p{1.1cm}  p{3cm} }
				\toprule
				& \multicolumn{2}{c}{Lagged democracy ($\rho$)}  & \multicolumn{2}{c}{Lagged Income ($\beta$)} \\
				\cmidrule(lr){2-3} \cmidrule(lr){4-5}   \noalign{\smallskip}
				& Coef. & Cred. Set & Coef. & Cred. Set \\ 
				\midrule
				BGFE-he-cstr & 0.499 & [0.438, 0.558] & 0.040 & [0.027, 0.053]  \\ 
				\midrule
				Pooled OLS & 0.665 & [0.616, 0.718] & 0.082 & [0.065, 0.100]  \\ 
				\bottomrule 
			\end{tabular}
		}
	\end{center}
\end{table}

%\subsubsection{Homohrnous Version}
%
%\begin{table}[h]
%	\begin{center}
%		\caption{Probability for Number of Groups {\color{red} *** It seems the -ho estimator could be dropped based on MDD (agree)}}
%%		\label{tab:emp_app_dem_sp2_ngroup}
%		\begin{tabular}{l | c c c}
%			\toprule
%			& BGFE-he-cstr & BGFE-he & BGFE-ho \\
%			\midrule
%			$Pr (K < 4)$ & 0.000 & 0.000 & 0.000 \\
%			$Pr (K = 4)$ &  0.000 & {\bf 1.000} & 0.000 \\
%			$Pr (K = 5)$ &  {\bf 1.000}& 0.000 & 0.000 \\
%			$Pr (K > 5)$ & 0.000 & 0.000 & {\bf 1.000}  \\
%			\bottomrule 
%		\end{tabular}
%		
%	\end{center}
%\end{table}

\subsubsection{Network Visualization of Posterior Similarity Matrix}

Figure \ref{fig:app_dem_sp1_nework_plot} and \ref{fig:app_dem_sp2_nework_plot} show the full-sample country network graphs, for specification 1 and 2 respectively. Node color reflects the geographic information used to construct the group structure in the prior.

\begin{landscape}
	\begin{figure}[h]
		\centering
		\caption{Individual Country Network Graph based on Posterior Similarity Matrix, Specification 1}
		\label{fig:app_dem_sp1_nework_plot}
		\includepdf[landscape=true, scale=0.83]{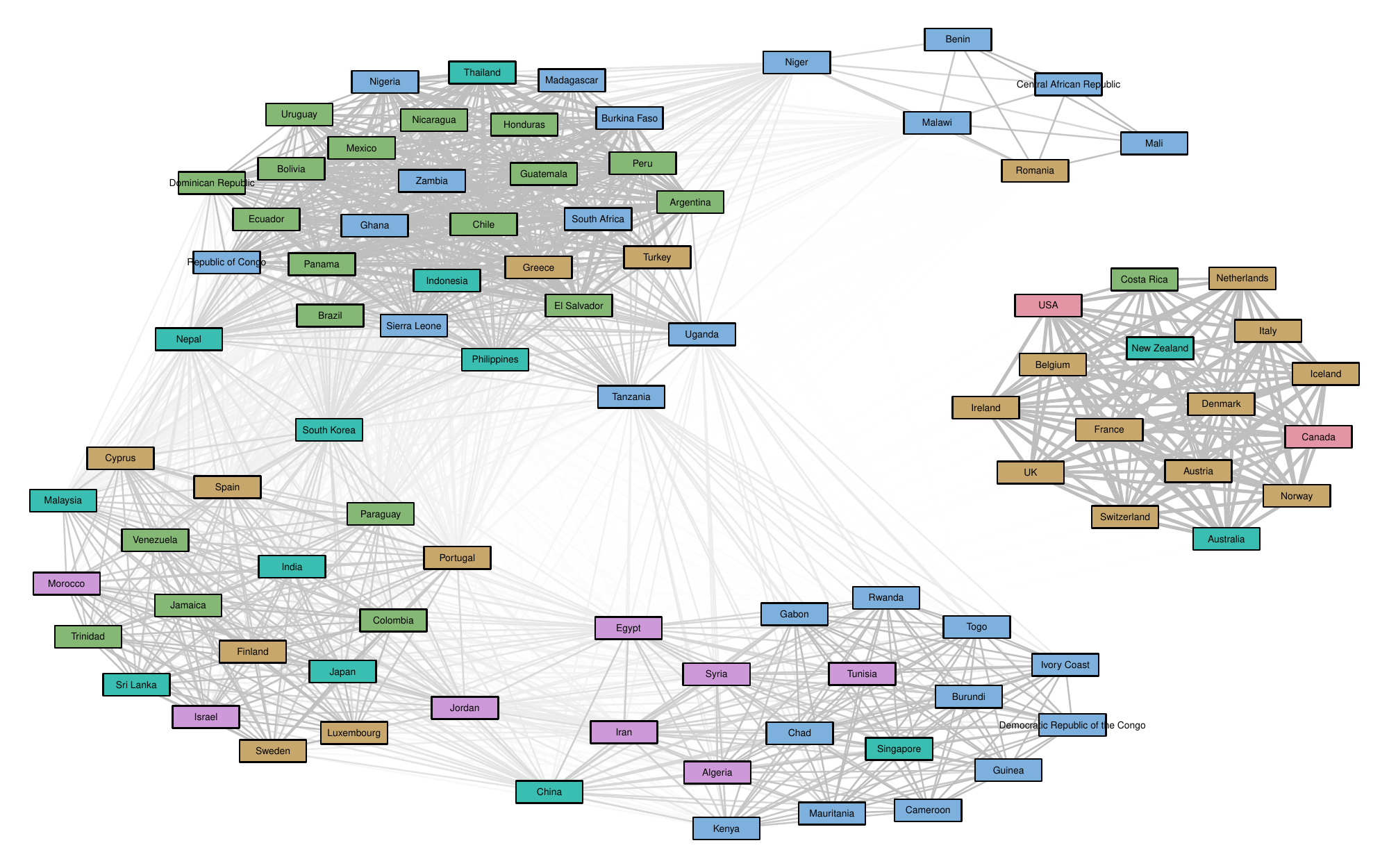}
	\end{figure}
\end{landscape}

\begin{landscape}
	\begin{figure}[h]
		\centering
		\caption{Individual Country Network Graph based on Posterior Similarity Matrix, Specification 2}
		\label{fig:app_dem_sp2_nework_plot}
		\includepdf[landscape=true, scale=0.83]{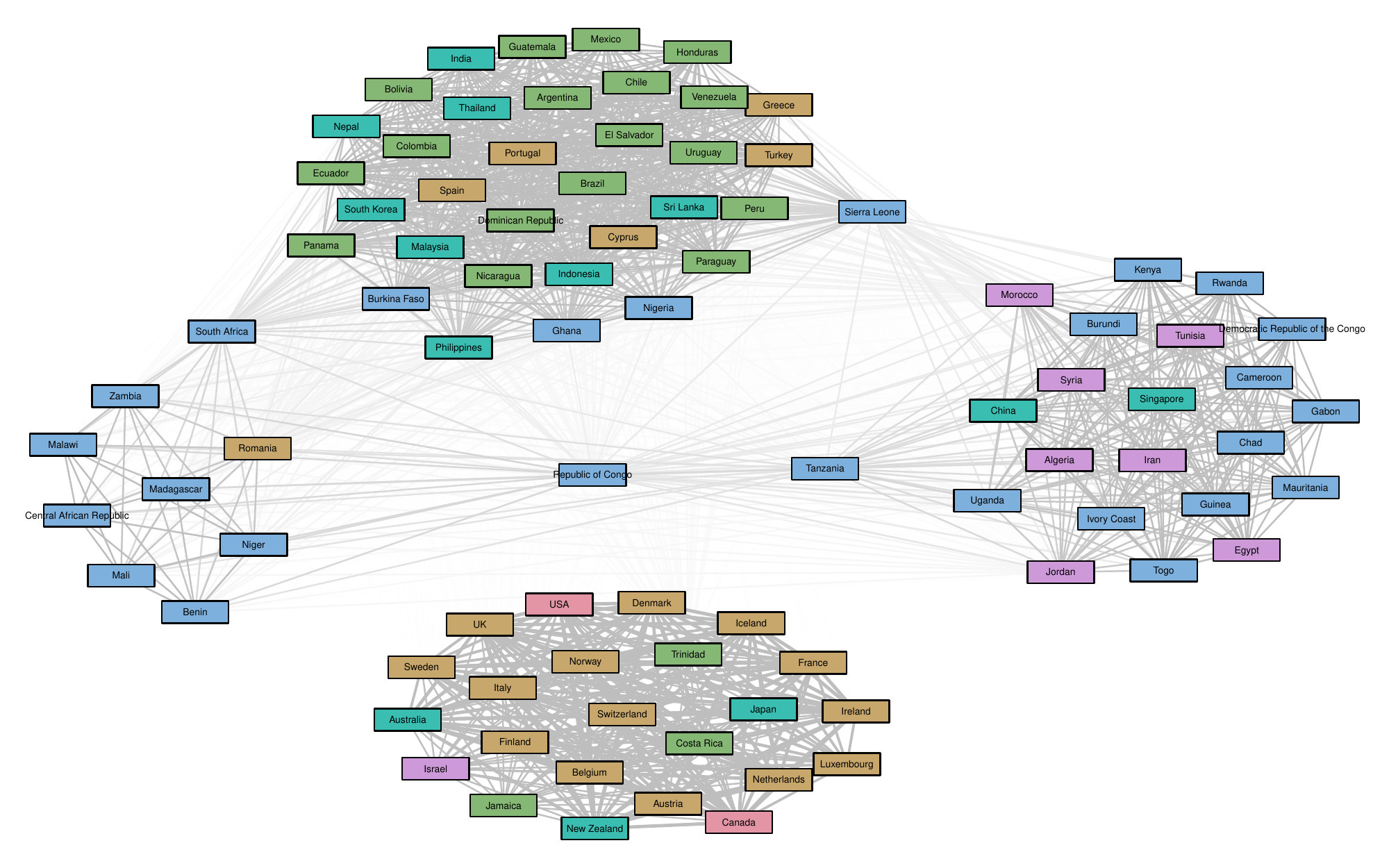}
	\end{figure}
\end{landscape}

\end{document}